\numberwithin{equation}{section}
\title[Linear Dynamics of Wave Functions]{The Linear Dynamics of Wave Functions \\
in Causal Fermion Systems}
\author[F.\ Finster]{Felix Finster}
\address{Fakult\"at f\"ur Mathematik \\ Universit\"at Regensburg \\ D-93040 Regensburg \\ Germany}
\email{finster@ur.de}
\author[N.\ Kamran]{Niky Kamran}
\address{Department of Mathematics and Statistics \\ McGill University \\ Montr{\'e}al \\ Canada}
\email{nkamran@math.mcgill.ca}
\author[M.\ Oppio]{Marco Oppio  \\ \\ January 2021}
\address{Fakult\"at f\"ur Mathematik \\ Universit\"at Regensburg \\ D-93040 Regensburg \\ Germany}
\email{marco.oppio@ur.de}
\newtheorem{Def}{Definition}[section]
\newtheorem{Thm}[Def]{Theorem}
\newtheorem{Prp}[Def]{Proposition}
\newtheorem{Lemma}[Def]{Lemma}
\newtheorem{Remark}[Def]{Remark}
\newtheorem{Corollary}[Def]{Corollary}
\newcommand{\Thanks}{\vspace*{.5em} \noindent \thanks}
\newcommand{\beq}{\begin{equation}}
\newcommand{\eeq}{\end{equation}}
\newcommand{\Proof}{\begin{proof}}
\newcommand{\QED}{\end{proof} \noindent}
\newcommand{\QEDrem}{\ \hfill $\Diamond$}
\newcommand{\la}{\langle}
\newcommand{\ra}{\rangle}
\newcommand{\lla}{\langle\!\langle}
\newcommand{\rra}{\rangle\!\rangle}
\newcommand{\llla}{\langle\!\langle\!\langle}
\newcommand{\rrra}{\rangle\!\rangle\!\rangle}
\newcommand{\bra}{\mathopen{<}}
\newcommand{\ket}{\mathclose{>}}
\newcommand{\Sl}{\mathopen{\prec}}
\newcommand{\Sr}{\mathclose{\succ}}
\newcommand{\C}{\mathbb{C}}
\newcommand{\R}{\mathbb{R}}
\newcommand{\1}{\mbox{\rm 1 \hspace{-1.05 em} 1}}
\newcommand{\N}{\mathbb{N}}
\newcommand{\Pdd}{\mbox{$\partial$ \hspace{-1.2 em} $/$}}
\newcommand{\slsh}{\mbox{ \hspace{-1.13 em} $/$}}
\renewcommand{\H}{\mathscr{H}}
\newcommand{\U}{{\rm{U}}}
\newcommand{\hT}{\hat{\mathscr{T}}}
\newcommand{\bep}{\begin{pmatrix}}
\newcommand{\enp}{\end{pmatrix}}
\renewcommand{\O}{\mathscr{O}}
\newcommand{\F}{{\mathscr{F}}}
\newcommand{\reg}{{\text{\rm{reg}}}}
\newcommand{\sing}{{\text{\rm{sing}}}}
\newcommand{\dyn}{{\text{\rm{dyn}}}}
\newcommand{\loc}{\text{\rm{loc}}}
\newcommand{\D}{{\mathscr{D}}}
\newcommand{\K}{{\mathscr{K}}}
\renewcommand{\O}{{\mathscr{O}}}
\renewcommand{\L}{{\mathcal{L}}}
\newcommand{\Sact}{{\mathcal{S}}}
\newcommand{\s}{{\mathfrak{s}}}
\newcommand{\Lin}{\text{\rm{L}}}
\newcommand{\T}{{\mathscr{T}}}
\newcommand{\fermi}{{\mathrm{f}}}
\newcommand{\scrM}{\myscr M}
\newcommand{\J}{\mathfrak{J}}
\newcommand{\Jlin}{\mathfrak{J}^\text{\rm{\tiny{lin}}}}
\newcommand{\Jtest}{\mathfrak{J}^\text{\rm{\tiny{test}}}}
\newcommand{\gen}{\text{\rm{\tiny{gen}}}}
\newcommand{\Gtest}{\Gamma^\text{\rm{\tiny{test}}}}
\newcommand{\Gdiff}{\Gamma^\text{\rm{\tiny{diff}}}}
\newcommand{\Glin}{\Gamma^\text{\rm{\tiny{lin}}}}
\newcommand{\Gfermi}{\Gamma^\text{\rm{\tiny{f}}}}
\newcommand{\GComm}{\Gamma^\mathscr{C}}
\newcommand{\Ctest}{C^\text{\rm{\tiny{test}}}}
\newcommand{\vary}{\text{\rm{\tiny{vary}}}}
\newcommand{\JComm}{\mathfrak{J}^{\mathscr{C}}}
\newcommand{\Jdiff}{\mathfrak{J}^\text{\rm{\tiny{diff}}}}
\newcommand{\scrU}{{\mathscr{U}}}
\newcommand{\scrW}{\mathscr{W}}
\newcommand{\scrA}{{\mathscr{A}}}
\newcommand{\scrB}{{\mathscr{B}}}
\renewcommand{\div}{{\rm{div}}\,}
\DeclareMathOperator{\norm}{| \hspace*{-0.1em}| \hspace*{-0.1em}|}
\newcommand{\Comm}{{\mathscr{C}}}
\newcommand{\itemD}{\item[{\raisebox{0.125em}{\tiny $\blacktriangleright$}}]}
\newcommand{\tmax}{{t_{\max}}}
\DeclareFontFamily{OT1}{rsfso}{}
\DeclareFontShape{OT1}{rsfso}{m}{n}{ <-7> rsfso5 <7-10> rsfso7 <10-> rsfso10}{}
\DeclareMathAlphabet{\myscr}{OT1}{rsfso}{m}{n}
\DeclareMathOperator{\re}{Re}
\DeclareMathOperator{\Tr}{Tr}
\DeclareMathOperator{\tr}{tr}
\DeclareMathOperator{\supp}{supp}
\newcommand{\Sig}{\mathscr{S}}
\renewcommand{\u}{\mathfrak{u}}
\renewcommand{\v}{\mathfrak{v}}
\newcommand{\bu}{{\mathbf{u}}}
\newcommand{\bv}{\mathbf{v}}
\newcommand{\bitem}{\begin{itemize}[leftmargin=2em]}
\newcommand{\eitem}{\end{itemize}}
\renewcommand{\sc}{\text{\rm{sc}}}
\begin{document}

\maketitle

\begin{abstract}
The dynamics of spinorial wave functions in a causal fermion system is studied.
A so-called dynamical wave equation is derived. Its solutions form a Hilbert space,
whose scalar product is represented by a conserved surface layer integral.
We prove under general assumptions that the initial value problem for the dynamical
wave equation admits a unique global solution. Causal Green's operators are constructed
and analyzed.
Our findings are illustrated in the example of the regularized Minkowski vacuum.
\end{abstract}

\tableofcontents

\section{Introduction}
The theory of {\em{causal fermion systems}} is a recent approach to fundamental physics
(see the basics in Section~\ref{secprelim}, the reviews~\cite{ nrstg, review, dice2014}, the textbook~\cite{cfs}
or the website~\cite{cfsweblink}).
In this approach, spacetime and all objects therein are described by a measure~$\rho$
on a set~$\F$ of linear operators of a Hilbert space~$(\H, \la .|. \ra_\H)$. 
The physical equations are formulated via the so-called {\em{causal action principle}},
a nonlinear variational principle where an action~$\Sact$ is minimized under variations of the measure~$\rho$.

It is a basic feature of a causal fermion system that it incorporates an
ensemble of spinorial wave functions in spacetime.
Indeed, defining {\em{spacetime}}~$M$ as the support of the measure~$\rho$,
\[ M := \supp \rho \subset \F \:, \]
to every spacetime point represented by an operator~$x \in M$ one associates the {\em{spin space}} $S_x=x(\H)$
defined as the image of this operator (for more details see the preliminaries in Section~\ref{seccfs}).
Then every vector~$u \in \H$ gives rise to a
wave function~$\psi^u(x)$ defined by
\[ \psi^u(x) = \pi_x u \in S_x \:, \]
where~$\pi_x : \H \rightarrow S_x$ is the orthogonal projection on the spin space.
This wave function is referred to as the {\em{physical wave function}} of~$u$.
The ensemble of all physical wave function plays a central role because, apart from
describing quantum mechanical matter, it also encodes the causal and geometric structures of
spacetime. The causal action principle can be understood as a variational principle
which aims at bringing all physical wave functions collectively into an ``optimal'' configuration. This
minimizing configuration then encodes all spacetime structures.

The goal of the present paper is to unveil the dynamics of the physical wave functions
of a causal fermion system. One obvious difficulty is that, due to the nonlinearity of the causal action 
principle, all physical wave functions interact with each other.
This can be understood in similar terms as the back reaction of a quantum field in curved
spacetime on the metric.
Even more, it is a specific feature of causal fermion systems that all the spacetime structures are
encoded in and derived from the physical wave functions. Therefore, modifying a physical wave function
also changes all the other structures (like the integration measure, the form of the scalar product, etc.).
As a consequence, it is a difficult task to disentangle the variation of the physical wave function
from the resulting changes of all the other structures,
so as to obtain a separate dynamics of a single physical wave function.
The idea for making this concept precise is to perturb the system linearly by varying only one
wave function while leaving all the other wave functions and the geometry of spacetime unchanged.
This single wave function should then obey a linear wave equation, in the same way as 
the wave function of a ``test particle'' in relativistic quantum mechanics satisfies the Dirac equation.
Generally speaking, our goal is to derive a linear wave equation which generalizes the Dirac
equation to spacetimes which could have a nontrivial,
not necessarily smooth microscopic structure. Clearly, in such generality this wave equation
cannot be formulated as a PDE. Instead, our linear wave equation must be derived from the the
dynamics as described by the causal action principle.

Linearizations have already been used in the context of causal fermion systems
in the derivation of the so-called {\em{linearized field equations}}
(see~\cite{jet} and the analysis in~\cite{linhyp}). However, this procedure does not
immediately apply to the spinorial wave functions.
The basic difficulty is that it is not obvious what a ``linear variation of a wave function'' should be.
This difficulty can already be understood in the analogy to a Dirac wave function coupled
to a classical electromagnetic field. A first order perturbation of this Dirac-Maxwell system
consists of a linear perturbation of the Dirac wave function complemented by a
linear perturbation of the Maxwell field generated by the Dirac wave function.
Here we need to take into account the Maxwell field generated by the linear perturbation of
the Dirac wave function. Varying only the Dirac wave function in general does not preserve the
coupled Dirac-Maxwell equations. Similarly, when perturbing the causal fermion system,
a linear variation of a physical wave function may give rise to
additional linear perturbations of the system.
These additional perturbations may affect the microscopic form of the
conserved quantities of the causal fermion system. In this interacting situation, the problem arises how to
distinguish the perturbation of the physical wave function from all the other linear perturbations of the system.
This is the reason why the analysis in the present paper goes beyond the methods in~\cite{linhyp}
and proceeds in a different direction.

More specifically, we proceed as follows. The first step is to represent the Hilbert space
scalar product as a surface layer integral. Recall that a {\em{surface layer integral}} generalizes the
concept of a surface integral to causal fermion systems. It is a double integral of the form
\[ 
\int_\Omega \bigg( \int_{M \setminus \Omega} (\cdots)\: \L_\kappa(x,y)\: d\rho(y) \bigg)\, d\rho(x) \:, \]
where $(\cdots)$ stands for a differential operator acting on the Lagrangian~$\L_\kappa$
(for details see Section~\ref{secosi} in the preliminaries). Here the set~$\Omega$ should be thought of
as the past of a spacelike hypersurface~$\partial \Omega$ (for details see Section~\ref{secpastset}).
A surface layer integral is {\em{conserved}} if the above double integral does not depend on
the choice of~$\Omega$. We make use of the conservation law obtained from a Noether-like theorem
for symmetry transformations given by unitary transformations. This conservation law was
first obtained in~\cite{noether}, where it was also shown that it generalizes the conservation of the Dirac
current to the setting of causal fermion systems. We here recover this conservation law
as a special case of a more general conservation law derived in~\cite{osi} when evaluated for
the infinitesimal generators of the symmetry, the so-called {\em{commutator jets}}.
We always assume that the resulting conserved sesquilinear form
gives back the Hilbert space scalar product. We thus obtain for any~$u, v \in \H$ the equation
\beq \label{OSIintro}
\la u|v \ra^\Omega_\rho = -2i \,\bigg( \int_{\Omega} \!d\rho(x) \int_{M \setminus \Omega} \!\!\!\!\!\!\!d\rho(y) 
- \int_{M \setminus \Omega} \!\!\!\!\!\!\!d\rho(x) \int_{\Omega} \!d\rho(y) \bigg)\:
\Sl \psi^u(x) \:|\: Q(x,y)\, \psi^v(y) \Sr_x \:,
\eeq
where~$\psi^u(x)$ is the physical wave function of~$u$, and the kernel~$Q(x,y)$ describes first variations of the Lagrangian (for details see~\eqref{delLdef}
and Proposition~\ref{prpcomm}). This kernel also appears in the EL equations for the physical wave functions,
which can be written as (see~\eqref{ELQ} in the preliminaries)
\beq \label{linintro}
\int_M Q(x,y)\, \psi^u(y) \:d\rho(y) = \mathfrak{r}\, \psi^u(x) \qquad \text{for all~$x \in M$}\:,
\eeq
where~$\mathfrak{r}$ is a real-valued Lagrange parameter.
The basic idea is to consider~\eqref{linintro} as the linear equation describing the dynamics of the
spinorial wave functions (which should generalize the Dirac equation), and~\eqref{OSIintro} as the
corresponding conserved scalar product (generalizing the scalar product on Dirac wave functions
as used in quantum mechanics). However, this idea cannot be implemented directly, because
the equation~\eqref{linintro} is not satisfied if~$\psi^u$ is replaced by a perturbation of
a physical wave function (in the analogy to the Dirac equation, \eqref{linintro} holds in the vacuum
only for the solutions of negative energy, but not for the solutions of positive energy).

In order to extend~\eqref{OSIintro} to more general wave functions, we consider
finite perturbations of the system which preserve the surface layer integral in~\eqref{OSIintro}.
Denoting the perturbed objects with a tilde, we thus obtain
\beq \label{OSItildeintro}
\begin{split}
\la u|v \ra^\Omega_\rho = -2i \,\bigg( \int_{\tilde{\Omega}} d\tilde{\rho}(x)
\int_{\tilde{M} \setminus \tilde{\Omega}} d\tilde{\rho}(y) 
&- \int_{\tilde{M} \setminus \tilde{\Omega}} d\tilde{\rho}(x) \int_{\tilde{\Omega}} d\tilde{\rho}(y) \bigg) \\
&\times \Sl \tilde{\psi}^u(x) \:|\: \tilde{Q}^\reg(x,y)\, \tilde{\psi}^v(y) \Sr_x \:,
\end{split}
\eeq
One should keep in mind that both the physical wave functions and the form of the inner product change
under this perturbation.
Using spectral methods in Hilbert spaces, we transform the perturbed inner product back to the
unperturbed form. This also transforms the perturbed wave functions, giving rise to new wave functions
which again satisfy the conservation law~\eqref{linintro}. Extending~$\H$ by these new wave functions, we obtain
the {\em{extended Hilbert space}}~$(\H^{\fermi, \Omega}_\rho, \la .|. \ra^\Omega_\rho)$
(again in the analogy to the Dirac equation, this Hilbert space can be understood as the whole solution
space, including the positive-energy solutions).

Having extended the conservation law, we next consider the question whether also the
dynamical equation~\eqref{linintro} can be extended to~$\H^{\fermi, \Omega}_\rho$.
We show that this can indeed be done, if we restrict attention to a special class of linear
perturbations of the system, which must satisfy suitable compatibility conditions.
The detailed analysis leads us to modify the kernel~$Q(x,y)$ to the ``dynamical'' kernel~$Q^\dyn(x,y)$.
This modification does not change the surface layer integral in~\eqref{OSItildeintro}.
We thus end up with the {\em{dynamical wave equation}}
\[ \int_M Q^\dyn(x,y)\: \psi(y)\: d\rho(y) = 0 \]
which comes with a corresponding conserved scalar product
\[ \la \psi | \phi \ra^\Omega_\rho = -2i \,\bigg( \int_{\Omega} \!d\rho(x) \int_{M \setminus \Omega} \!\!\!\!\!\!\!d\rho(y) 
- \int_{M \setminus \Omega} \!\!\!\!\!\!\!d\rho(x) \int_{\Omega} \!d\rho(y) \bigg)\:
\Sl \psi(x) \:|\: Q^\dyn(x,y)\, \phi(y) \Sr_x \:. \]

We also analyze the Cauchy problem for the inhomogeneous dynamical wave equation.
Using energy methods similar to those employed in~\cite{linhyp}, we prove under general assumptions that
this Cauchy problem is well-posed and admits global solutions which propagate with finite speed.
Based on these results, we introduce the advanced and retarded Green's operators
and the causal fundamental solution. The properties of the solution space are expressed
by an exact sequence involving the operators~$Q^\dyn$ and the causal fundamental solution.
All our constructions are illustrated in the example of the regularized Dirac sea vacuum in Minkowski space.

The paper is organized as follows. Section~\ref{secprelim} provides the necessary preliminaries
on causal fermion systems and the causal action principle.
In Section~\ref{seccomminner} we introduce commutator jets and use them for representing
the Hilbert space scalar product as a surface layer integral.
In Section~\ref{secHextend} the extended Hilbert space is constructed.
Section~\ref{secdynamics} is devoted to the derivation of the dynamical wave equation.
In Section~\ref{secQdyn}, it is shown that, under general assumptions,
the Cauchy problem for the dynamical wave
equation admits unique global solutions. Moreover, Green's operators are introduced and analyzed.
Two appendices clarify the role of commutator jets and inner solutions in various surface layer integrals.

\section{Preliminaries} \label{secprelim}
In this section we provide the necessary mathematical background.
More details can be found in the textbook~\cite{cfs} or in the research articles~\cite{jet, fockbosonic}.

\subsection{Causal Fermion Systems and the Causal Action Principle} \label{seccfs}
We now recall the basic setup.
\begin{Def} \label{defcfs} (causal fermion systems) {\em{ 
Given a separable complex Hilbert space~$\H$ with scalar product~$\la .|. \ra_\H$
and a parameter~$n \in \N$ (the {\em{``spin dimension''}}), we let~$\F \subset \Lin(\H)$ be the set of all
selfadjoint operators on~$\H$ of finite rank, which (counting multiplicities) have
at most~$n$ positive and at most~$n$ negative eigenvalues. On~$\F$ we are given
a positive measure~$\rho$ (defined on a $\sigma$-algebra of subsets of~$\F$), the so-called
{\em{universal measure}}. We refer to~$(\H, \F, \rho)$ as a {\em{causal fermion system}}.
}}
\end{Def} \noindent
A causal fermion system describes a spacetime together
with all structures and objects therein.
In order to single out the physically admissible
causal fermion systems, one must formulate physical equations. To this end, we impose that
the universal measure should be a minimizer of the causal action principle,
which we now introduce. For any~$x, y \in \F$, the product~$x y$ is an operator of rank at most~$2n$. 
However, in general it is no longer a selfadjoint operator because~$(xy)^* = yx$,
and this is different from~$xy$ unless~$x$ and~$y$ commute.
As a consequence, the eigenvalues of the operator~$xy$ are in general complex.
We denote these eigenvalues counting algebraic multiplicities
by~$\lambda^{xy}_1, \ldots, \lambda^{xy}_{2n} \in \C$
(more specifically,
denoting the rank of~$xy$ by~$k \leq 2n$, we choose~$\lambda^{xy}_1, \ldots, \lambda^{xy}_{k}$ as all
the non-zero eigenvalues and set~$\lambda^{xy}_{k+1}, \ldots, \lambda^{xy}_{2n}=0$).
We introduce the Lagrangian and the causal action by
\begin{align}
\text{\em{Lagrangian:}} && \L(x,y) &= \frac{1}{4n} \sum_{i,j=1}^{2n} \Big( \big|\lambda^{xy}_i \big|
- \big|\lambda^{xy}_j \big| \Big)^2 \label{Lagrange} \\
\text{\em{causal action:}} && \Sact(\rho) &= \iint_{\F \times \F} \L(x,y)\: d\rho(x)\, d\rho(y) \:. \label{Sdef}
\end{align}
The {\em{causal action principle}} is to minimize~$\Sact$ by varying the measure~$\rho$
under the following constraints:
\begin{align}
\text{\em{volume constraint:}} && \rho(\F) = \text{const} \quad\;\; & \label{volconstraint} \\
\text{\em{trace constraint:}} && \int_\F \tr(x)\: d\rho(x) = \text{const}& \label{trconstraint} \\
\text{\em{boundedness constraint:}} && \iint_{\F \times \F} 
|xy|^2
\: d\rho(x)\, d\rho(y) &\leq C \:, \label{Tdef}
\end{align}
where~$C$ is a given parameter, $\tr$ denotes the trace of a linear operator on~$\H$, and
the absolute value of~$xy$ is the so-called spectral weight,
\beq \label{sw}
|xy| := \sum_{j=1}^{2n} \big|\lambda^{xy}_j \big| \:.
\eeq
This variational principle is mathematically well-posed if~$\H$ is finite-dimensional,
in which case also the constants in~\eqref{volconstraint}--\eqref{Tdef} are finite.
For the existence theory and the analysis of general properties of minimizing measures
we refer to~\cite{discrete, continuum, lagrange}.
In the existence theory one varies in the class of regular Borel measures
(with respect to the topology on~$\Lin(\H)$ induced by the operator norm),
and the minimizing measure is again in this class. With this in mind, here we always assume that
\[ 
\text{$\rho$ is a regular Borel measure}\:. \]

We note for clarity that the above variational principle also makes mathematical sense
if the constants in~\eqref{volconstraint}--\eqref{Tdef} are
infinite. In this case, the constraints can still be given a mathematical meaning by
restricting attention to variations for which the differences of the left sides in~\eqref{volconstraint}--\eqref{Tdef}
before and after the variation are well-defined and finite.
This makes it possible to implement the constraints by demanding
that these finite numbers be zero and non-positive, respectively.
This procedure is explained in detail in the context of causal variational principles in~\cite[Section~2.1]{jet};
see also corresponding existence results in infinite volume in~\cite{noncompact}.

\subsection{Spacetime and Physical Wave Functions}
Let~$\rho$ be a {\em{minimizing}} measure. {\em{Spacetime}}
is defined as the support of this measure,
\[ 
M := \supp \rho \:. \]
Thus the spacetime points are selfadjoint linear operators on~$\H$.
On~$M$ we consider the topology induced by~$\F$ (generated by the $\sup$-norm
on~$\Lin(\H)$). Moreover, the universal measure~$\rho|_M$ restricted to~$M$ can be regarded as a volume
measure on spacetime. This makes spacetime into a {\em{topological measure space}}.

The operators in~$M$ contain a lot of information which, if interpreted correctly,
gives rise to spacetime structures like causal and metric structures, spinors
and interacting fields (for details see~\cite[Chapter~1]{cfs}).
He we restrict attention to those structures needed in what follows.
We begin with a basic notion of causality:

\begin{Def} (causal structure) \label{def2}
{\em{ For any~$x, y \in \F$, the product~$x y$ is an operator
of rank at most~$2n$. We denote its non-trivial eigenvalues (counting algebraic multiplicities)
by~$\lambda^{xy}_1, \ldots, \lambda^{xy}_{2n}$.
The points~$x$ and~$y$ are
called {\em{spacelike}} separated if all the~$\lambda^{xy}_j$ have the same absolute value.
They are said to be {\em{timelike}} separated if the~$\lambda^{xy}_j$ are all real and do not all 
have the same absolute value.
In all other cases (i.e.\ if the~$\lambda^{xy}_j$ are not all real and do not all 
have the same absolute value),
the points~$x$ and~$y$ are said to be {\em{lightlike}} separated. }}
\end{Def} \noindent
Restricting the causal structure of~$\F$ to~$M$, we get causal relations in spacetime.

Next, for every~$x \in \F$ we define the {\em{spin space}}~$S_x$ by~$S_x = x(\H)$;
it is a subspace of~$\H$ of dimension at most~$2n$.
It is endowed with the {\em{spin inner product}} $\Sl .|. \Sr_x$ defined by
\[ 
\Sl u | v \Sr_x = -\la u | x v \ra_\H \qquad \text{(for all $u,v \in S_x$)}\:. \]
A {\em{wave function}}~$\psi$ is defined as a function
which to every~$x \in M$ associates a vector of the corresponding spin space,
\[ 
\psi \::\: M \rightarrow \H \qquad \text{with} \qquad \psi(x) \in S_x \quad \text{for all~$x \in M$}\:. \]
In order to introduce a notion of continuity of a wave function,
we need to compare the wave function at different spacetime points.
Noting that the natural norm on the spin space~$(S_x, \Sl .|. \Sr_x)$ is given by
\beq \label{spinnorm}
\big| \psi(x) \big|_x^2 := \big\la \psi(x) \,\big|\, |x|\, \psi(x) \big\ra_\H = \Big\| \sqrt{|x|} \,\psi(x) \Big\|_\H^2
\eeq
(where~$|x|$ is the absolute value of the symmetric operator~$x$ on~$\H$, and~$\sqrt{|x|}$
is the square root thereof), we say that
the wave function~$\psi$ is {\em{continuous}} at~$x$ if
for every~$\varepsilon>0$ there is~$\delta>0$ such that
\beq \label{wavecontinuous}
\big\| \sqrt{|y|} \,\psi(y) -  \sqrt{|x|}\, \psi(x) \big\|_\H < \varepsilon
\qquad \text{for all~$y \in M$ with~$\|y-x\| \leq \delta$} \:.
\eeq
Likewise, $\psi$ is said to be continuous on~$M$ if it is continuous at every~$x \in M$.
We denote the set of continuous wave functions by~$C^0(M, SM)$.

It is an important observation that every vector~$u \in \H$ of the Hilbert space gives rise to a unique
wave function. To obtain this wave function, denoted by~$\psi^u$, we simply project the vector~$u$
to the corresponding spin spaces,
\[ 
\psi^u \::\: M \rightarrow \H\:,\qquad \psi^u(x) = \pi_x u \in S_x \:. \]
We refer to~$\psi^u$ as the {\em{physical wave function}} of~$u \in \H$.
A direct computation shows that the physical wave functions are continuous
(in the sense~\eqref{wavecontinuous}). Associating to every vector~$u \in \H$
the corresponding physical wave function gives rise to the {\em{wave evaluation operator}}
\[ 
\Psi \::\: \H \rightarrow C^0(M, SM)\:, \qquad u \mapsto \psi^u \:. \]
Every~$x \in M$ can be written as (for the derivation see~\cite[Lemma~1.1.3]{cfs})
\beq
x = - \Psi(x)^* \,\Psi(x) \label{Fid} \::\: \H \rightarrow \H \:.
\eeq
In words, every spacetime point operator is the local correlation operator of the wave evaluation operator
at this point. This formula is very useful when varying the system, as will be explained in Section~\ref{secELCFS} below.

\subsection{Connection to the Setting of Causal Variational Principles} \label{seccfscvp}
For the analysis of the causal action principle it is most convenient to get into the
simpler setting of causal variational principles. In this setting, $\F$ is a (possibly non-compact)
smooth manifold of dimension~$m \geq 1$ and~$\rho$ a positive Borel measure on~$\F$
(the {\em{universal measure}}).
Moreover, we are given a non-negative function~$\L : \F \times \F \rightarrow \R^+_0$
(the {\em{Lagrangian}}) with the following properties:
\bitem
\item[(i)] $\L$ is symmetric: $\L(x,y) = \L(y,x)$ for all~$x,y \in \F$.\label{Cond1}
\item[(ii)] $\L$ is lower semi-continuous, i.e.\ for all \label{Cond2}
sequences~$x_n \rightarrow x$ and~$y_{n'} \rightarrow y$,
\[ \L(x,y) \leq \liminf_{n,n' \rightarrow \infty} \L(x_n, y_{n'})\:. \]
\eitem
The {\em{causal variational principle}} is to minimize the action
\beq \label{Sact} 
\Sact (\rho) = \int_\F d\rho(x) \int_\F d\rho(y)\: \L(x,y) 
\eeq
under variations of the measure~$\rho$, keeping the total volume~$\rho(\F)$ fixed
({\em{volume constraint}}).
If the total volume~$\rho(\F)$ is finite, one minimizes~\eqref{Sact}
over all regular Borel measures with the same total volume.
If the total volume~$\rho(\F)$ is infinite, however, it is not obvious how to implement the volume constraint,
making it necessary to proceed as follows.
We need the following additional assumptions:
\bitem
\item[(iii)] The measure~$\rho$ is {\em{locally finite}}
(meaning that any~$x \in \F$ has an open neighborhood~$U$ with~$\rho(U)< \infty$).\label{Cond3}
\item[(iv)] The function~$\L(x,.)$ is $\rho$-integrable for all~$x \in \F$, giving
a lower semi-continuous and bounded function on~$\F$. \label{Cond4}
\eitem
Given a regular Borel measure~$\rho$ on~$\F$, we then vary over all
regular Borel measures~$\tilde{\rho}$ with
\[ 
\big| \tilde{\rho} - \rho \big|(\F) < \infty \qquad \text{and} \qquad
\big( \tilde{\rho} - \rho \big) (\F) = 0 \]
(where~$|.|$ denotes the total variation of a measure).
These variations of the causal action are well-defined.
The existence theory for minimizers is developed in~\cite{noncompact}.

There are several ways to get from the causal action principle to causal variational principles,
as we now explain in detail. If the Hilbert space~$\H$ is {\em{finite-dimensional}} and the total volume~$\rho(\F)$
is finite, one can proceed as follows:
As a consequence of the trace constraint~\eqref{trconstraint}, for any minimizing measure~$\rho$
the local trace is constant in spacetime, i.e.\
there is a real constant~$c \neq 0$ such that (see~\cite[Theorem~1.3]{lagrange} or~\cite[Proposition~1.4.1]{cfs})
\[ 
\tr x = c \qquad \text{for all~$x \in M$} \:. \]
Restricting attention to operators with fixed trace, the trace constraint~\eqref{trconstraint}
is equivalent to the volume constraint~\eqref{volconstraint} and may be disregarded.
The boundedness constraint, on the other hand, can be treated with a Lagrange multiplier.
More precisely, in~\cite[Theorem~1.3]{lagrange} it is shown that for every minimizing measure~$\rho$, 
there is a Lagrange multiplier~$\kappa>0$ such that~$\rho$ is a critical point of the causal action
with the Lagrangian replaced by
\beq \label{Lkappa}
\L_\kappa(x,y) := \L(x,y) + \kappa\, |xy|^2 \:,
\eeq
leaving out the boundedness constraint.
Having treated the constraints, the difference to causal variational principles is that
in the setting of causal fermion systems, the set of operators~$\F \subset \Lin(\H)$ does not have the
structure of a manifold. In order to give this set a manifold structure,
we assume that a given minimizing measure~$\rho$ is {\em{regular}} in the sense that all operators in its support
have exactly~$n$ positive and exactly~$n$ negative eigenvalues.
This leads us to introduce the set~$\F^\reg$ 
as the set of all operators~$F$ on~$\H$ with the following properties:
\begin{itemize}[leftmargin=2em]
\item[(i)] $F$ is selfadjoint, has finite rank and (counting multiplicities) has
exactly~$n$ positive and~$n$ negative eigenvalues. \\[-0.8em]
\item[(ii)] The trace is constant, i.e
\beq \tr(F) = c>0 \:. \label{trconst2}
\eeq
\end{itemize}
The set~$\F^\reg$ has a smooth manifold structure
(see the concept of a flag manifold in~\cite{helgason} or the detailed construction
in~\cite[Section~3]{gaugefix}). In this way, the causal action principle becomes an
example of a causal variational principle.

This finite-dimensional setting has the drawback that the total volume~$\rho(\F)$ of spacetime
is finite, which is not suitable for describing asymptotically flat spacetimes or spacetimes of
infinite lifetime like Minkowski space. Therefore, it is important to also consider
the {\em{infinite-dimensional setting}} where~$\dim \H=\infty$ and consequently also~$\rho(\F) = \infty$ 
(see~\cite[Exercise~1.3]{cfs}). In this case, the set~$\F^\reg$
has the structure of an infinite-dimensional Banach manifold (for details see~\cite{banach}).
Here we shall not enter the subtleties of infinite-dimensional analysis. Instead,
we get by with the following simple method: Given a minimizing measure~$\rho$,
we choose~$\F^\reg$ as a finite-dimensional manifold which contains~$M:=\supp \rho$.
We then restrict attention to variations of~$\rho$ in the class of regular Borel measures on~$\F^\reg$.
In this way, we again get into the setting of causal variational principles.
We refer to this method by saying that we {\em{restrict attention to locally compact variations}}.
Keeping in mind that the dimension of~$\F^\reg$ can be chosen arbitrarily large, this
method seems a sensible technical simplification. In situations when it is important to
work in infinite dimensions (for example for getting the connection
to the renormalization program in quantum field theory), it may be necessary to analyze
the limit when the dimension of~$\F^\reg$ tends to infinity, or alternatively it may be suitable to work in
the infinite-dimensional setting as developed in~\cite{banach}. However, this is not a concern of the present paper,
where we try to keep the mathematical setup as simple as possible.

For ease of notation, in what follows we will omit the superscript ``$\reg$.'' Thus~$\F$ stands
for a smooth (in general non-compact) manifold which contains the support~$M$ of a given minimizing
measure~$\rho$.

\subsection{The Euler-Lagrange Equations and Jet Spaces} \label{secEL}
A minimizer of a causal variational principle
satisfies the following {\em{Euler-Lagrange (EL) equations}}:
For a suitable value of the parameter~$\s>0$,
the lower semi-continuous function~$\ell : \F \rightarrow \R_0^+$ defined by
\[ 
\ell(x) := \int_M \L(x,y)\: d\rho(y) - \s \]
is minimal and vanishes on spacetime~$M:= \supp \rho$,
\beq \label{EL}
\ell|_M \equiv \inf_\F \ell = 0 \:.
\eeq
The parameter~$\s$ can be understood as the Lagrange parameter
corresponding to the volume constraint. For the derivation and further details we refer to~\cite[Section~2]{jet}.

The EL equations~\eqref{EL} are nonlocal in the sense that
they make a statement on the function~$\ell$ even for points~$x \in \F$ which
are far away from spacetime~$M$.
It turns out that for the applications we have in mind, it is preferable to
evaluate the EL equations only locally in a neighborhood of~$M$.
This leads to the {\em{weak EL equations}} introduced in~\cite[Section~4]{jet}.
Here we give a slightly less general version of these equations which
is sufficient for our purposes. In order to explain how the weak EL equations come about,
we begin with the simplified situation that the function~$\ell$ is smooth.
In this case, the minimality of~$\ell$ implies that the derivative of~$\ell$
vanishes on~$M$, i.e.\
\beq \label{ELweak}
\ell|_M \equiv 0 \qquad \text{and} \qquad D \ell|_M \equiv 0
\eeq
(where~$D \ell(p) : T_p \F \rightarrow \R$ is the derivative).
In order to combine these two equations in a compact form,
it is convenient to consider a pair~$\u := (a, \bu)$
consisting of a real-valued function~$a$ on~$M$ and a vector field~$\bu$
on~$T\F$ along~$M$, and to denote the combination of 
multiplication of directional derivative by
\beq \label{Djet}
\nabla_{\u} \ell(x) := a(x)\, \ell(x) + \big(D_\bu \ell \big)(x) \:.
\eeq
Then the equations~\eqref{ELweak} imply that~$\nabla_{\u} \ell(x)$
vanishes for all~$x \in M$.
The pair~$\u=(a,\bu)$ is referred to as a {\em{jet}}.

In the general lower-continuous setting, one must be careful because
the directional derivative~$D_\bu \ell$ in~\eqref{Djet} need not exist.
Our method for dealing with this problem is to restrict attention to vector fields
for which the directional derivative is well-defined.
Moreover, we must specify the regularity assumptions on~$a$ and~$\bu$.
To begin with, we always assume that~$a$ and~$\bu$ are {\em{smooth}} in the sense that they
have a smooth extension to the manifold~$\F$
(for more details see~\cite[Section~2.2]{fockbosonic}). Thus the jet~$\u$ should be
an element of the jet space
\[ 
\J_\rho := \big\{ \u = (a,\bu) \text{ with } a \in C^\infty(M, \R) \text{ and } \bu \in \Gamma(M, T\F) \big\} \:, \]
where~$C^\infty(M, \R)$ and~$\Gamma(M,T\F)$ denote the space of smooth real-valued functions and
smooth vector fields on~$M$, respectively.

Clearly, the fact that a jet~$\u$ is smooth does not imply that the functions~$\ell$
or~$\L$ are differentiable in the direction of~$\u$. This must be ensured by additional
conditions which are satisfied by suitable subspaces of~$\J_\rho$
which we now introduce.
First, we let~$\Gdiff_\rho$ be those vector fields for which the
directional derivative of the function~$\ell$ exists,
\[ \Gdiff_\rho = \big\{ \bu \in C^\infty(M, T\F) \;\big|\; \text{$D_{\bu} \ell(x)$ exists for all~$x \in M$} \big\} \:. \]
This gives rise to the jet space
\[ \Jdiff_\rho := C^\infty(M, \R) \oplus \Gdiff_\rho \;\subset\; \J_\rho \:. \]
For the jets in~$\Jdiff_\rho$, the combination of multiplication and directional derivative
in~\eqref{Djet} is well-defined. 
We choose a linear subspace~$\Jtest_\rho \subset \Jdiff_\rho$ with the property
that its scalar and vector components are both vector spaces,
\[ \Jtest_\rho = \Ctest(M, \R) \oplus \Gtest_\rho \;\subseteq\; \Jdiff_\rho \:, \]
and the scalar component is nowhere trivial in the sense that
\beq \label{Cnontriv}
\text{for all~$x \in M$ there is~$a \in \Ctest(M, \R)$ with~$a(x) \neq 0$}\:.
\eeq
We shall also impose conditions on the vector components (see for example~\eqref{fermiassume} below).
Finally, compactly supported jets are denoted by a subscript zero, like for example
\[ 
\Jtest_{\rho,0} := \{ \u \in \Jtest_\rho \:|\: \text{$\u$ has compact support} \} \:. \]
Then the {\em{weak EL equations}} read (for details cf.~\cite[(eq.~(4.10)]{jet})
\beq \label{ELtest}
\nabla_{\u} \ell|_M = 0 \qquad \text{for all~$\u \in \Jtest_\rho$}\:.
\eeq
Before going on, we point out that the weak EL equations~\eqref{ELtest}
do not hold only for minimizers, but also for critical points of
the causal action. With this in mind, all methods and results of this paper 
do not apply only to
minimizers, but more generally to critical points of the causal variational principle.
For brevity, we also refer to a measure with satisfies the weak EL equations~\eqref{ELtest}
as a {\em{critical measure}}.

When taking higher jet derivatives, we always take the partial derivatives computed in distinguished
charts (for details see~\cite[Section~5.2]{banach}).
Here and throughout this paper, we use the following conventions for partial derivatives and jet derivatives:
\bitem
\itemD Partial and jet derivatives with an index $i \in \{ 1,2 \}$ only act on the respective variable of the function $\L$.
This implies, for example, that the derivatives commute,
\[ 
\nabla_{1,\v} \nabla_{1,\u} \L(x,y) = \nabla_{1,\u} \nabla_{1,\v} \L(x,y) \:. \]
\itemD The partial or jet derivatives which do not carry an index act as partial derivatives
on the corresponding argument of the Lagrangian. This implies, for example, that
\[ \nabla_\u \int_\F \nabla_{1,\v} \, \L(x,y) \: d\rho(y) =  \int_\F \nabla_{1,\u} \nabla_{1,\v}\, \L(x,y) \: d\rho(y) \:. \]
\eitem
We point out that, with these conventions, {\em{jets are never differentiated}}.

\subsection{The Euler-Lagrange Equations for the Physical Wave Functions} \label{secELCFS}
For causal fermion systems, the EL equations can be expressed in terms
of the physical wave functions, as we now recall. These equations were first derived
in~\cite[\S1.4.1]{cfs} (based on a weaker version in~\cite[Section~3.5]{pfp}),
even before the jet formalism was developed. We now make the connection between the
different formulations, in a way most convenient for our constructions.
Our starting point is the formula~\eqref{Fid} expressing the spacetime point operator
as a local correlation operator. Varying the wave evaluation operator gives a vector field~$\bu$ on~$\F$
along~$M$,
\beq \label{ufermi}
\bu(x) = -\delta \Psi(x)^*\, \Psi(x) - \Psi(x)^*\, \delta \Psi(x) \:.
\eeq
In order to make mathematical sense of this formula in agreement with the concept of restricting
attention to locally compact variations, we choose a finite-dimensional subspace~$\H^\fermi \subset \H$, i.e.\
\[ f^\fermi := \dim \H^\fermi  < \infty \]
and impose the following assumptions on~$\delta \Psi$
(similar variations were first considered in~\cite[Section~7]{perturb}):
\begin{itemize}[leftmargin=2em]
\item[\rm{(a)}] 
The variation is trivial on the orthogonal complement of~$\H^\fermi$,
\[ \delta \Psi |_{(\H^\fermi)^\perp} = 0 \:. \]
\item[\rm{(b)}] The variations of all physical wave functions are continuous and compactly supported, i.e.
\[ \delta \Psi : \H \rightarrow C^0_0(M, SM) \:. \]
\end{itemize}
We choose~$\Gfermi_{\rho,0}$ as a space of vector fields of the form~\eqref{ufermi}
for~$\delta \Psi$ satisfying the above conditions~(a) and~(b).
For convenience, we identify the vector field with the first variation~$\delta \Psi$ and
write~$\delta \Psi \in \Gfermi_{\rho,0}$ (this representation of~$\bu$ in terms of~$\delta \Psi$
may not be unique, but this is of no relevance for what follows).
Choosing trivial scalar components, we obtain a corresponding space of jets~$\J^\fermi_{\rho,0}$,
referred to as the {\em{fermionic jets}}. We always assume that the fermionic jets are
admissible for testing, i.e.\
\beq \label{fermiassume}
\J^\fermi_{\rho,0} := \{0\} \oplus \Gfermi_{\rho,0} \;\subset\; \Jtest_\rho \:.
\eeq
Moreover, in analogy to the condition~\eqref{Cnontriv} for the scalar components of the test jets,
we assume that the variation can have arbitrary values at any spacetime point, i.e.\
\beq \label{Gnontriv}
\text{for all~$x \in M, \chi \in S_x$ and~$\phi \in \H^\fermi$ there is~$\delta \Psi \in \Gfermi_{\rho,0}$ with~$\delta \Psi(x)\, \phi = \chi$}\:.
\eeq

For the computation of the variation of the Lagrangian, one can make use of the fact
that for any $p \times q$-matrix~$A$ and any~$q \times p$-matrix~$B$,
the matrix products~$AB$ and~$BA$ have the same non-zero eigenvalues, with the same
algebraic multiplicities. As a consequence, applying again~\eqref{Fid},
\beq
x y 
= \Psi(x)^* \,\big( \Psi(x)\, \Psi(y)^* \Psi(y) \big)
\simeq \big( \Psi(x)\, \Psi(y)^* \Psi(y) \big)\,\Psi(x)^* \:, \label{isospectral}
\eeq
where $\simeq$ means that the operators have the same non-trivial eigenvalues
with the same algebraic multiplicities. Introducing the {\em{kernel of the fermionic projector}} $P(x,y)$ by
\[ P(x,y) := \Psi(x)\, \Psi(y)^* \::\: S_y \rightarrow S_x \:, \]
we can write~\eqref{isospectral} as
\[ x y \simeq P(x,y)\, P(y,x) \::\: S_x \rightarrow S_x \:. \]
In this way, the eigenvalues of the operator product~$xy$ as needed for the computation of
the Lagrangian~\eqref{Lagrange} and the spectral weight~\eqref{sw} are recovered as
the eigenvalues of a $2n \times 2n$-matrix. Since~$P(y,x) = P(x,y)^*$,
the Lagrangian~$\L_\kappa(x,y)$ in~\eqref{Lkappa} can be expressed in terms of~$P(x,y)$.
Consequently, the first variation of the Lagrangian can be expressed in terms
of the first variation of this kernel. Being real-valued and real-linear in~$\delta P(x,y)$,
it can be written as
\beq \label{delLdef}
\delta \L_\kappa(x,y) = 2 \re \Tr_{S_x} \!\big( Q(x,y)\, \delta P(x,y)^* \big)
\eeq
with a kernel~$Q(x,y)$ which is again symmetric (with respect to the spin inner product), i.e.
\[ Q(x,y) \::\: S_y \rightarrow S_x \qquad \text{and} \qquad Q(x,y)^* = Q(y,x) \]
(more details on this method and many computations can be found in~\cite[Sections~1.4 and~2.6
as well as Chapters~3-5]{cfs}). Expressing the variation of~$P(x,y)$ in terms of~$\delta \Psi$,
the variations of the Lagrangian can be written as
\begin{align*}
D_{1,\bu} \L_\kappa(x,y) = 2\, \re \tr \big( \delta \Psi(x)^* \, Q(x,y)\, \Psi(y) \big) \\
D_{2,\bu} \L_\kappa(x,y) = 2\, \re \tr \big( \Psi(x)^* \, Q(x,y)\, \delta \Psi(y) \big)
\end{align*}
(where~$\tr$ denotes the trace of a finite-rank operator on~$\H$). Likewise, the
variation of~$\ell$ becomes
\[ D_\bu \ell(x) = 2\, \re \int_M \tr \big( \delta \Psi(x)^* \, Q(x,y)\, \Psi(y) \big)\: d\rho(y) \:. \]
The weak EL equations~\eqref{ELtest} imply that this expression vanishes for any~$\bu \in \Gfermi_{\rho,0}$.
Using that the variation can be arbitrary at every spacetime point (see~\eqref{Gnontriv}), one may be tempted
to conclude that
\[ 
\int_M Q(x,y)\, \Psi(y) \:\phi\: d\rho(y) = 0 \qquad \text{for all~$x \in M$ and~$\phi \in \H^\fermi$}\:. \]
However, we must take into account that the local trace must be
preserved in the variation~\eqref{trconst2}. This can be arranged by rescaling the
operator~$x$ in the variation (for details see~\cite[Section~6.2]{perturb}) or, equivalently,
by treating it with a Lagrange multiplier term (see~\cite[\S1.4.1]{cfs}). We thus obtain
the {\em{EL equation for the physical wave functions}}
\beq \label{ELQ}
\int_M Q(x,y)\, \Psi^\fermi(y) \:d\rho(y) = \mathfrak{r}\, \Psi^\fermi(x) \qquad \text{for all~$x \in M$}\:,
\eeq
where~$\mathfrak{r} \in \R$ is the Lagrange parameter of the trace constraint,
and~$\Psi^\fermi:=\Psi|_{\H^\fermi}$ denotes the restriction of the wave evaluation operator
to the finite-dimensional subspace~$\H^\fermi$.

Let us briefly discuss the structure of the obtained EL equation~\eqref{ELQ}.
Being a linear equation for every physical wave function, it has similarity with the Dirac equation.
The interaction is taken into
account because the kernel~$Q(x,y)$ also depends on the ensemble of wave functions.
However, as a major difference to the Dirac equation, the EL equation~\eqref{ELQ}
only describes the occupied states of the system.
More concretely, in the example of the Dirac sea vacuum
(see for example~\cite{oppio, neumann}), the physical wave functions correspond to
the negative-energy solutions of the Dirac equation. But the solutions of positive energy
are not described by~\eqref{ELQ}. It is the main goal of the present paper to extend~\eqref{ELQ}
in such a way that the solutions of positive energy are included.

We finally comment on the significance of the subspace~$\H^\fermi \subset \H$.
Choosing a finite-dimensional subspace is a technical simplification, made in agreement with the
method of restricting attention to locally compact variations discussed in Section~\ref{seccfscvp}.
The strategy is to choose~$\H^\fermi$ large enough to capture all the relevant physical effects.
Our physical picture is that~$\H^\fermi$ should contain all physical wave functions whose energies
are much smaller than the Planck energy and which are therefore accessible to measurements
as describing particle or anti-particle states.
If necessary, one could analyze the limit where the dimension~$f^\fermi$ of~$\H^\fermi$ tends to infinity.
In what follows, we leave~$\H^\fermi$ unspecified as being any finite-dimensional subspace of~$\H$.

\subsection{The Linearized Field Equations} \label{seclinfield}
In simple terms, the linearized field equations
describe variations of the universal measure which preserve the EL equations.
More precisely, we consider variations where we multiply~$\rho$ by a non-negative smooth
function and take the push-forward with respect to a smooth mapping from~$M$ to~$\F$.
Thus we consider families of measures~$(\tilde{\rho}_\tau)_{\tau \in (-\delta, \delta)}$ 
of the form
\beq \label{rhotau}
\tilde{\rho}_\tau = (F_\tau)_* \big( f_\tau \, \rho \big)
\eeq
with functions
\beq \label{fFsmooth}
f_\tau \in C^\infty\big(M, \R^+ \big) \qquad \text{and} \qquad
F_\tau \in C^\infty\big(M, \F \big) \:,
\eeq
which also depend smoothly on the parameter~$\tau$
and have the properties~$f_0(x)=1$ and~$F_0(x) = x$ for all~$x \in M$
(here the push-forward measure is defined
for a Borel subset~$\Omega \subset \F$ by~$((F_\tau)_*\mu)(\Omega)
= \mu ( F_\tau^{-1} (\Omega))$; see for example~\cite[Section~3.6]{bogachev}).
If we demand that~$(\tilde{\rho}_\tau)_{\tau \in (-\delta, \delta)}$ is a family of minimizers,
the EL equations~\eqref{EL} hold for all~$\tau$, i.e.
\beq \label{elltau}
\tilde{\ell}_\tau|_{M_\tau} \equiv \inf_\F \ell_\tau = 0 
\qquad \text{with} \qquad \tilde{\ell}_\tau(x) := \int_\F \L_\kappa(x,y)\: d\tilde{\rho}_\tau(y) - \s \:,
\eeq
where~$M_\tau$ is the support of the varied measure,
\[ M_\tau := \supp \tilde{\rho}_\tau = \overline{F_\tau(M)} \:. \]
In~\eqref{elltau} we can express~$\tilde{\rho}$ in terms of~$\rho$. Moreover,
it is convenient to rewrite this equation as an equation on~$M$ and to multiply it
by~$f_\tau(x)$. We thus obtain the equivalent equation
\[ \ell_\tau|_M \equiv \inf_\F \ell_\tau = 0 \]
with
\[ \ell_\tau(x) :=
\int_\F f_\tau(x) \,\L_\kappa\big(F_\tau(x),F_\tau(y) \big)\: f_\tau(y)\: d\tilde{\rho}_\tau(y) - f_\tau(x)\: \s \]
In analogy to~\eqref{ELtest} we write the corresponding weak EL equations as
\[ 
\nabla_{\u} \ell_\tau|_M = 0 \qquad \text{for all~$\u \in \Jtest_\rho$} \]
(for details on why the jet space does not depend on~$\tau$ we refer to~\cite[Section~4.1]{perturb}).
Since this equation holds by assumption for all~$\tau$, we can differentiate it with respect to~$\tau$.
Denoting the infinitesimal generator of the variation by~$\v$, i.e.
\[ \label{vvary}
\v(x) := \frac{d}{d\tau} \big( f_\tau(x), F_\tau(x) \big) \Big|_{\tau=0} \:, \]
we obtain the {\em{linearized field equations}}
\[ 
0 = \la \u, \Delta \v \ra(x) := 
\nabla_\u \bigg( \int_M \big( \nabla_{1, \v} + \nabla_{2, \v} \big) \L_\kappa(x,y)\: d\rho(y) - \nabla_\v \,\s \bigg) \:, \]
which are to be satisfied for all~$\u \in \Jtest_\rho$ and all~$x \in M$
(for details see~\cite[Section~3.3]{perturb}).
We denote the vector space of all solutions of the linearized field equations by~$\Jlin_\rho$.

\subsection{Surface Layer Integrals} \label{secosi}
{\em{Surface layer integrals}} were first introduced in~\cite{noether}
as double integrals of the general form
\beq \label{osi}
\int_\Omega \bigg( \int_{M \setminus \Omega} (\cdots)\: \L_\kappa(x,y)\: d\rho(y) \bigg)\, d\rho(x) \:,
\eeq
where $(\cdots)$ stands for a suitable differential operator formed of jets, and~$\Omega$
is a Borel subset of~$M$.
A surface layer integral generalizes the concept of a surface integral over~$\partial \Omega$
to the setting of causal fermion systems.
The connection can be understood most easily in the case when~$\L_\kappa(x,y)$ vanishes
unless~$x$ and~$y$ are close together. In this case, we only get a contribution to~\eqref{osi}
if both~$x$ and~$y$ are close to the boundary of~$\Omega$.
A more detailed explanation of the idea of a surface layer integrals is given in~\cite[Section~2.3]{noether}.

In~\cite{noether, jet, osi}, {\em{conservation laws}} for surface layer integrals were derived.
The statement is that if~$\v$ describes a symmetry of the system or if~$\v$ satisfies the
linearized field equations, then suitable surface layer integrals~\eqref{osi} vanish for every
compact~$\Omega \subset M$. The conserved surface layer integrals of relevance here are
\begin{align}
\gamma^\Omega_\rho \::\:\Jlin_\rho \cap& \Jtest_\rho \rightarrow \R \qquad \text{(conserved one-form)} \notag \\
\gamma^\Omega_\rho(\v) &:= \int_\Omega d\rho(x) \int_{M \setminus \Omega} d\rho(y)\:
\big( \nabla_{1,\v} - \nabla_{2,\v} \big) \L_\kappa(x,y) \label{Ilin0} \\
\sigma^\Omega_\rho \::\: \Jlin_\rho \cap& \Jtest_\rho \times \Jlin_\rho \cap \Jtest_\rho \rightarrow \R  \qquad \text{(symplectic form)} \notag \\
\sigma^\Omega_\rho(\u,\v) &:= \int_\Omega d\rho(x) \int_{M \setminus \Omega} d\rho(y)\:
\big( \nabla_{1, \u} \nabla_{2, \v} - \nabla_{1, \v} \nabla_{2, \u} \big) \, \L_\kappa(x,y) \:. \label{I2asymm}
\end{align}
For the conserved one-form, the proof of the conservation law will be repeated in
Lemma~\ref{lemmaequivalent}.

\section{The Commutator Inner Product} \label{seccomminner}
In~\cite[Section~5]{noether} it was shown that the unitary invariance of the causal action
gives rise to a conservation law which generalizes current conservation to the setting of causal fermion systems.
We now present this conservation law from a new perspective, which will serve as the
starting point for the constructions in Section~\ref{secHextend}.

\subsection{Unitary Invariance and Commutator Jets} \label{seccommute}
The causal action principle is {\em{unitarily invariant}} in the following sense. Let~$\scrU \in \U(\H)$ be a
unitary transformation. Given a measure~$\rho$ on~$\F$, we can unitarily transform the measure by
setting
\beq \label{Urhodef}
(\scrU \rho)(\Omega) := \rho \big( \scrU^{-1} \,\Omega\, \scrU \big) 
\qquad \text{for} \qquad \Omega \subset \F \:.
\eeq
Since the eigenvalues of an operator are invariant under unitary transformations,
the measure~$\rho$ is a minimizer or critical point of the causal action principle if and only
if~$\scrU \rho$ is. This makes it possible to construct solutions of the linearized field equations, 
as we now explain.
Let~$\rho$ be a critical measure. Moreover, let~$(\scrU_\tau)_{\tau \in [0, \tau_{\max}]}$ be a
smooth family of unitary transformations with generator
\beq \label{Agen}
\scrA := -i \frac{d}{d\tau}\, \scrU_\tau \big|_{\tau=0} \:,
\eeq
which we assume to have finite rank.
According to~\eqref{Urhodef}, the support of the measures~$\tilde{\rho}_\tau:= \scrU_\tau \rho$ is
given by
\beq \label{Munit}
\tilde{M}_\tau := \supp \tilde{\rho}_\tau = \scrU_\tau\, M \, \scrU_\tau^{-1} \:.
\eeq
Due to the unitary invariance of the Lagrangian, the measures~$\tilde{\rho}_\tau$ all satisfy the EL equations. 
As a consequence, the infinitesimal generator
of the family is a solution of the linearized field equations:
\begin{Lemma} Assume that the jet
\beq \label{jvdef}
\mathfrak{C} :=(0,\Comm) \qquad \text{with} \qquad \Comm(x) := i \big[\scrA, x \big]
\eeq
has the property
\beq \label{DuComm}
D_u \Comm \in \Jtest \qquad \text{for all~$\u \in \Jtest$}
\eeq
(where the directional derivatives are computed in the distinguished charts mentioned
in Section~\ref{secEL}). Then the jet~$\mathfrak{C}$ is a solution of the linearized field equations, i.e.
\beq \label{lincomm}
\nabla_\u \int_M (D_{1,\Comm} + D_{2,\Comm} )\, \L_\kappa(x,y)\: d\rho(y) = 0 \qquad \text{for all~$\u \in \Jtest_\rho$}\:.
\eeq
\end{Lemma}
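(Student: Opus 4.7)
The plan is to derive the linearized field equation for $\mathfrak{C}$ as the infinitesimal form of the unitary invariance of the Lagrangian. Since the eigenvalues of $xy$ and the spectral weight $|xy|$ in~\eqref{sw} are preserved under simultaneous unitary conjugation, the Lagrangian $\L_\kappa$ in~\eqref{Lkappa} satisfies the identity
\[
\L_\kappa\bigl(\scrU_\tau\, x\, \scrU_\tau^{-1},\, \scrU_\tau\, y\, \scrU_\tau^{-1}\bigr) \;=\; \L_\kappa(x,y)
\]
for all $x,y\in\F$ and all $\tau$. This is the only substantive input.

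First I would differentiate this identity in $\tau$ at $\tau=0$. Since $\scrA$ is selfadjoint of finite rank, the curves $\tau\mapsto \scrU_\tau\, z\, \scrU_\tau^{-1}$ are smooth in $\F$ (and preserve~$\F^\reg$, as unitary conjugation preserves the number of positive and negative eigenvalues), with initial velocity $i[\scrA,z]=\Comm(z)$. The chain rule then yields the pointwise identity
\[
\bigl( D_{1,\Comm} + D_{2,\Comm}\bigr) \L_\kappa(x,y) \;=\; 0 \qquad \text{on $\F\times\F$.}
\]
Integrating over $y\in M$ with respect to $\rho$ gives the function
\[
g(x) \;:=\; \int_M \bigl( D_{1,\Comm} + D_{2,\Comm}\bigr) \L_\kappa(x,y)\, d\rho(y) \;\equiv\; 0
\]
on the domain in $\F$ where the integral is finite, which contains a neighborhood of $M$. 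Applying $\nabla_\u$ to the identically vanishing function $g$ then yields~\eqref{lincomm}.

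The only delicate point is to justify that $\nabla_\u g$ really is the derivative of the zero function, as opposed to an expression that might be ill-defined. When $\nabla_\u$ is expanded by the chain and Leibniz rules inside the integrand, one picks up an additional term of the form $D_{1,D_\u\Comm}\L_\kappa(x,y)$, and the hypothesis~\eqref{DuComm} is exactly what is needed so that this term has an unambiguous meaning as a test-jet derivative of the Lagrangian and is $\rho$-integrable in $y$, allowing the interchange of $\nabla_\u$ with $\int_M d\rho(y)$. Equivalently, this regularity ensures that the weak EL equation remains meaningful for the varied measures $\scrU_\tau\rho$ uniformly in $\tau$, so that the standard linearization procedure of Section~\ref{seclinfield} applies to the unitary family~\eqref{Munit}. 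With this in place, $g\equiv 0$ immediately gives $\nabla_\u g\equiv 0$, and the lemma follows.
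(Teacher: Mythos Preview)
Your approach is the same as the paper's: use unitary invariance of~$\L_\kappa$ to obtain the pointwise identity $(D_{1,\Comm}+D_{2,\Comm})\L_\kappa(x,y)=0$, integrate in~$y$ to get~$g\equiv 0$, and then differentiate. However, you misidentify the role of the hypothesis~\eqref{DuComm}, and this is where your argument has a gap.

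Recall the convention (end of Section~\ref{secEL}) that jet derivatives act only on the Lagrangian and \emph{jets are never differentiated}. Consequently the left side of~\eqref{lincomm} is \emph{not} the honest directional derivative of the function~$g$ at~$x$: since~$\Comm(x)=i[\scrA,x]$ itself depends on~$x$, the honest derivative of~$g$ in the direction of~$\bu$ picks up the extra term
\[
\int_M D_{1,D_\bu\Comm}\,\L_\kappa(x,y)\,d\rho(y) \;=\; D_{D_\bu\Comm}\,\ell(x)\,,
\]
and one has $\nabla_\u g(x) = (\text{honest derivative of }g) - D_{D_\bu\Comm}\ell(x)$. From $g\equiv 0$ you get only that the honest derivative vanishes; to conclude~\eqref{lincomm} you must show that the extra term vanishes as well. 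This is precisely where~\eqref{DuComm} is used: since $D_\bu\Comm\in\Jtest$, the weak EL equations~\eqref{ELtest} give $D_{D_\bu\Comm}\ell|_M=0$. You frame~\eqref{DuComm} as a regularity condition needed to ``allow the interchange of~$\nabla_\u$ with $\int_M d\rho(y)$,'' but that is not the point: the point is that the extra term is killed by the weak EL equations, which you never invoke. Once you add this step, your proof coincides with the paper's.
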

\Proof One method of proof would be to differentiate through the EL equations.
However, this would involve a transformation of the space of test jets
(similar as explained for example in~\cite[Section~3.1]{osi}).
Here we prefer to show that the integrand of~\eqref{lincomm} vanishes identically. Indeed,
due to the unitary invariance of the Lagrangian,
\[ \L_\kappa\big( \scrU_\tau x \scrU_\tau^{-1},\: \scrU_\tau y \scrU_\tau^{-1} \big) = \L_\kappa(x,y)\:. \]
Differentiating with respect to~$\tau$ gives
\beq \label{D12zero}
(D_{1,\Comm} + D_{2,\Comm} )\, \L_\kappa(x,y)\: d\rho(y) = 0 \:.
\eeq
Hence the integrand in~\eqref{lincomm} vanishes for all~$x,y \in \F$. As a consequence,
the integral in~\eqref{lincomm} vanishes for all~$x \in \F$.
Consequently, also its derivative in the direction of~$u$ vanishes.
Using our convention that the
jet derivatives act only on the Lagrangian (see Section~\ref{secEL}),
the directional derivative differs from the derivative by the term~$D_{D_u \Comm} \ell_\kappa(x)$.
This term vanishes in view of~\eqref{DuComm} and the weak EL equations~\eqref{ELtest}.
\QED
Due to the commutator in~\eqref{jvdef}, we refer to jets of this form
as the {\bf{commutator jets}}~$\JComm_\rho \subset \Jlin_\rho$.
Restricting attention to the vector component, we also write~$\GComm_\rho \subset \Glin_\rho$.
We also point out that for the argument~$x$ in~\eqref{jvdef} we can choose any operator~$x \in \F$.
Therefore, every commutator jet extends to a vector field on~$\F$,
\[ \Comm \in \Gamma(\F, T\F)\:. \]

\subsection{Time Orientation and Past Sets} \label{secpastset}
In what follows, we shall restrict attention to  commutator jets
of the form~\eqref{lincomm} with~$\scrA$ a linear operator on~$\H^\fermi$, i.e.
\beq \label{Af}
\scrA|_{\H^\fermi} \::\: \H^\fermi \rightarrow \H^\fermi \qquad \text{and} \qquad 
\scrA|_{(\H^\fermi)^\perp} = 0 \:.
\eeq
We refer to these jets as the {\em{commutator jets on~$\H^\fermi$}}
and denote the space of all these jets by~$\J^{\fermi, \Comm}_\rho \subset \J^\Comm_\rho$,
and the corresponding vector fields by~$\Gamma^{\fermi, \Comm}_\rho \subset \Gamma^\Comm_\rho$.
Before beginning, we specify regularity assumptions for these commutator jets, which can
be understood as implicit assumptions on the fermionic subspace~$\H^\fermi$.
\begin{Def} The causal fermion system is {\bf{commutator regular}} if~$\J^{\fermi,\Comm}_\rho \subset \Jtest_\rho$
and if
\beq \label{commreg}
D_\Comm \int_M \L(x,y)\: d\rho(y) = \int_M D_{1,\Comm} \L(x,y)\: d\rho(y) \qquad \text{for
all~$\Comm \in \Gamma^{\fermi, \Comm}_\rho$}\:.
\eeq
\end{Def}

We first specify the subsets of spacetime for which the conserved one-form~\eqref{Ilin0}
is well-defined for all commutator jets on~$\H^\fermi$.
\begin{Def} A Borel subset~$\Omega \subset M$ is called {\bf{surface layer finite}}
if for all commutator jets~$\Comm \in \Gamma^{\fermi, \Comm}_\rho$,
\beq \label{slf}
\int_\Omega d\rho(x) \int_{M \setminus \Omega} d\rho(y)\:
\Big| (D_{1,\Comm} - D_{2,\Comm} )\, \L_\kappa(x,y) \Big| < \infty \:.
\eeq
\end{Def} \noindent
We note for clarity that in the last double integral the
order of integration may be interchanged in view of Tonelli's theorem.

For the conservation law to hold, we need an additional assumption:
\begin{Def} \label{defcequi} Two surface layer finite subsets~$\Omega, \tilde{\Omega} \subset M$ are
called {\bf{causally equivalent}} if for all commutator jets~$\Comm \in \Gamma^{\fermi, \Comm}_\rho$,
\beq \label{cequi}
\int_L d\rho(x) \int_L d\rho(y)\: \Big| (D_{1,\Comm} - D_{2,\Comm} )\, \L_\kappa(x,y) \Big| < \infty \:,
\eeq
where~$L:= (\Omega \setminus \Omega') \cup (\Omega' \setminus \Omega)$.
\end{Def}
The next lemma shows that the commutator inner product is conserved
for causally equivalent sets; the proof follows the idea in~\cite[proof of Theorem~3.3]{noether}.
\begin{Lemma} \label{lemmaequivalent}
Assume that the causal fermion system is commutator regular. Then
for two causally equivalent subsets~$\Omega, \Omega' \subset M$,
the conserved one-forms coincide for all commutator jets in~$\H^\fermi$, i.e.\
\[ \gamma^\Omega_\rho(\Comm) = \gamma^{\Omega'}_\rho(\Comm) \qquad \text{for all~$(0, \Comm) \in \J^{\fermi, \Comm}_\rho$} \:. \]
\end{Lemma}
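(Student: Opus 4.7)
The strategy relies on two facts about commutator jets from above. First, equation~\eqref{D12zero} in the preceding lemma is a \emph{pointwise} identity $(D_{1,\Comm}+D_{2,\Comm})\L_\kappa(x,y)=0$ on $\F\times\F$; this reduces the integrand of~\eqref{Ilin0} to $f(x,y):=(D_{1,\Comm}-D_{2,\Comm})\L_\kappa(x,y)=2\,D_{1,\Comm}\L_\kappa(x,y)$, which together with the symmetry $\L_\kappa(x,y)=\L_\kappa(y,x)$ is antisymmetric under $x\leftrightarrow y$. Second, commutator regularity forces $(0,\Comm)\in\Jtest_\rho$, so the weak EL equations~\eqref{ELtest} give $D_\Comm\ell\equiv 0$ on $M$; the interchange supplied by~\eqref{commreg} then yields the pointwise identity
\[ \int_M f(x,y)\,d\rho(y) \;=\; 2\,D_\Comm\ell(x) \;=\; 0 \qquad\text{for every }x\in M. \]

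I partition $M$ into the four disjoint pieces $A:=\Omega\cap\Omega'$, $B:=\Omega\setminus\Omega'$, $C:=\Omega'\setminus\Omega$, $D:=M\setminus(\Omega\cup\Omega')$, so that $L=B\cup C$, and write $I_{ST}:=\int_S d\rho(x)\int_T d\rho(y)\,f(x,y)$. Expanding the two surface layer integrals by linearity, the difference reads
\[ \gamma^\Omega_\rho(\Comm) - \gamma^{\Omega'}_\rho(\Comm) \;=\; (I_{AC}-I_{AB}) + (I_{BC}-I_{CB}) + (I_{BD}-I_{CD}). \]
Absolute convergence of each $I_{ST}$ that will enter the calculation is then checked piece by piece: the subsets of $\Omega\times(M\setminus\Omega)=(A\cup B)\times(C\cup D)$ and of $\Omega'\times(M\setminus\Omega')=(A\cup C)\times(B\cup D)$ are controlled by~\eqref{slf}, which between them dispose of $I_{AC},I_{AB},I_{BC},I_{CB},I_{BD},I_{CD}$; the causal-equivalence hypothesis~\eqref{cequi} supplies integrability on $L\times L$, which covers in particular $I_{BB}$ and $I_{CC}$; and the reversed-order integrals $I_{BA},I_{CA}$ come for free by Fubini applied to the nonnegative integrand $|f|$.

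With these justifications in place, antisymmetry of $f$ combined with Fubini gives $I_{TS}=-I_{ST}$ on each convergent product region, and in particular $I_{BB}=I_{CC}=0$. Integrating the key identity $\int_M f(x,\cdot)\,d\rho=0$ over $x\in B$ and then over $x\in C$, and decomposing the inner integral along $M=A\sqcup B\sqcup C\sqcup D$, produces
\[ I_{BD} \;=\; -I_{BA}-I_{BC},\qquad I_{CD}\;=\;-I_{CA}-I_{CB}. \]
Substituting these into the three parenthesized differences above and collapsing everything via $I_{TS}=-I_{ST}$ makes all terms cancel, proving the lemma. The main obstacle throughout is the convergence bookkeeping of the previous paragraph: each of the six mixed product domains must be matched with exactly one of the available integrability hypotheses, and the definitions of surface-layer-finiteness~\eqref{slf} and causal equivalence~\eqref{cequi} have been tailored precisely to make this matching work.
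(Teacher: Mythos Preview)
Your proof is correct and follows essentially the same approach as the paper: both partition~$M$ into four pieces according to~$\Omega,\Omega'$, expand the difference~$\gamma^\Omega_\rho(\Comm)-\gamma^{\Omega'}_\rho(\Comm)$ by linearity, use~\eqref{cequi} to control the diagonal blocks on~$L\times L$, and then combine~\eqref{D12zero}, commutator regularity~\eqref{commreg}, and the weak EL equations~\eqref{ELtest} to kill the remaining terms via $\int_M f(x,\cdot)\,d\rho=2D_\Comm\ell(x)=0$. Your convergence bookkeeping is somewhat more explicit than the paper's, and you organize the cancellation through the antisymmetry relation $I_{TS}=-I_{ST}$ rather than the paper's equivalent device of adding the (vanishing) integrals over $L_1\times L_1$ and $L_2\times L_2$, but these are cosmetic differences.
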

\Proof Setting
\[ L_1 = \Omega \setminus \Omega' \:,\quad
L_2 = \Omega' \setminus \Omega\:,\quad
A=\Omega \setminus L_1\:,\quad
B=M \setminus (\Omega \cup L_2) \:, \]
we decompose the sets~$\Omega$, $\Omega'$ and their complements as
\[ \Omega = A \dot{\cup} L_1 \:,\qquad M \setminus \Omega = B \dot{\cup} L_2 \:,\quad
\Omega' = A \dot{\cup} L_2 \:,\qquad M \setminus \Omega' = B \dot{\cup} L_1 \:. \]
In view of~\eqref{slf}, the double integrals of the surface layer integrals are well-defined in the
Lebesgue sense. Using the above decompositions and using linearity of the integrals,
a straightforward computation shows that
\begin{align*}
\gamma^\Omega_\rho(\Comm) - \gamma^{\Omega'}_\rho(\Comm)
&= \int_{L_1} d\rho(x) \int_{M \setminus L_1} d\rho(y)\: (D_{1,\Comm} - D_{2,\Comm} )\, \L_\kappa(x,y) \\
&\quad\;
- \int_{L_2} d\rho(x) \int_{M \setminus L_2} d\rho(y)\: (D_{1,\Comm} - D_{2,\Comm} )\, \L_\kappa(x,y) \:,
\end{align*}
where all integrals are again well-defined in the Lebesgue sense.
Using~\eqref{cequi}, we can add the integrals over~$L_1 \times L_1$ and~$L_2 \times L_2$ to obtain
\begin{align*}
\gamma^\Omega_\rho(\Comm) - \gamma^{\Omega'}_\rho(\Comm)
&= \int_{L_1} d\rho(x) \int_{M} d\rho(y)\: (D_{1,\Comm} - D_{2,\Comm} )\, \L_\kappa(x,y) \\
&\quad\;
- \int_{L_2} d\rho(x) \int_{M} d\rho(y)\: (D_{1,\Comm} - D_{2,\Comm} )\, \L_\kappa(x,y) \:.
\end{align*}
Using~\eqref{D12zero} together with~\eqref{commreg}, we get
\begin{align*}
\gamma^\Omega_\rho(\Comm) - \gamma^{\Omega'}_\rho(\Comm)
&= 2 \int_{L_1} D_\Comm \ell(x) \:d\rho(x) - 2 \int_{L_2} d\rho(x) D_\Comm \ell(x) \:d\rho(x) \:,
\end{align*}
and applying the weak EL equation~\eqref{ELtest} gives the result.
\QED

We next show that causal equivalence is indeed an equivalence relation.
\begin{Lemma} \label{lemmaequiv}
The notion of causal equivalence defines an equivalence relation on the Borel subsets of~$M$.
\end{Lemma}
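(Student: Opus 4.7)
The plan is to verify the three axioms of an equivalence relation for the notion of causal equivalence. Reflexivity and symmetry are immediate from the definition: for any surface layer finite $\Omega$, the set $L=(\Omega\setminus\Omega)\cup(\Omega\setminus\Omega)=\emptyset$ renders the integral in~\eqref{cequi} trivially zero, and the definition of $L$ is manifestly symmetric under interchange of $\Omega$ and $\Omega'$. So I would devote essentially the whole proof to transitivity.

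For transitivity, suppose $\Omega_1\sim\Omega_2$ and $\Omega_2\sim\Omega_3$ with $L_{ij}:=\Omega_i\triangle\Omega_j$. The key observation I would invoke first is the triangle inequality for symmetric differences,
\[
L_{13}=\Omega_1\triangle\Omega_3\;\subseteq\;(\Omega_1\triangle\Omega_2)\cup(\Omega_2\triangle\Omega_3)=L_{12}\cup L_{23}\:,
\]
so that
\[
L_{13}\times L_{13}\;\subseteq\;(L_{12}\times L_{12})\cup(L_{12}\times L_{23})\cup(L_{23}\times L_{12})\cup(L_{23}\times L_{23})\:.
\]
The two diagonal pieces are finite by the hypothesis that $\Omega_1\sim\Omega_2$ and $\Omega_2\sim\Omega_3$. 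The remaining task, and the main obstacle, is to bound the two cross integrals over $L_{12}\times L_{23}$ and $L_{23}\times L_{12}$; by the symmetry of $|(D_{1,\Comm}-D_{2,\Comm})\L_\kappa(x,y)|$ under interchange of $x$ and $y$, it suffices to treat one of them.

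To handle $L_{12}\times L_{23}$, I would split it as
\[
L_{12}\times L_{23}=\bigl((L_{12}\cap L_{23})\times L_{23}\bigr)\cup\bigl((L_{12}\setminus L_{23})\times(L_{12}\cap L_{23})\bigr)\cup\bigl((L_{12}\setminus L_{23})\times(L_{23}\setminus L_{12})\bigr)\:.
\]
The first piece sits inside $L_{23}\times L_{23}$ and the second inside $L_{12}\times L_{12}$, so both are already finite. For the remaining piece, I would run a short case analysis: if $x\in L_{12}\setminus L_{23}$ then either $x\in\Omega_1\setminus(\Omega_2\cup\Omega_3)$ or $x\in(\Omega_2\cap\Omega_3)\setminus\Omega_1$, and analogously for $y\in L_{23}\setminus L_{12}$ one has $y\in(\Omega_1\cap\Omega_2)\setminus\Omega_3$ or $y\in\Omega_3\setminus(\Omega_1\cup\Omega_2)$. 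Checking the four resulting combinations, in each case there exists $i\in\{1,2,3\}$ with $x\in\Omega_i$ and $y\in M\setminus\Omega_i$ (for instance, the combination $x\in\Omega_1\setminus(\Omega_2\cup\Omega_3)$, $y\in(\Omega_1\cap\Omega_2)\setminus\Omega_3$ gives $(x,y)\in(M\setminus\Omega_2)\times\Omega_2$).

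Therefore the integral over this last piece is dominated by the surface layer integral defining surface layer finiteness of one of the sets $\Omega_1,\Omega_2,\Omega_3$, which is finite by Definition of surface layer finiteness (applying Tonelli to swap $x$ and $y$ where needed, using the symmetry of the integrand in absolute value). Combining all the pieces, the integral over $L_{13}\times L_{13}$ is finite for every commutator jet $\Comm\in\Gamma^{\fermi,\Comm}_\rho$, proving that $\Omega_1\sim\Omega_3$. The only substantive content is the combinatorial case check in the last paragraph; once that triangle-inequality bookkeeping is organized, the rest of the argument is formal.
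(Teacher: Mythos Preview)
Your proof is correct and follows essentially the same strategy as the paper: use the triangle inequality for symmetric differences $L_{13}\subseteq L_{12}\cup L_{23}$, handle the diagonal pieces $L_{12}\times L_{12}$ and $L_{23}\times L_{23}$ via the causal-equivalence hypotheses, and control the cross pieces using surface layer finiteness of the $\Omega_i$. The paper compresses the cross-term estimate by directly invoking integrability on sets of the form $L\times(M\setminus L)$ (which follows from surface layer finiteness of the two sets whose symmetric difference is $L$), whereas you carry out the underlying four-case check explicitly; your version is more detailed but otherwise the same argument. One minor wording point: the integral over your last piece $(L_{12}\setminus L_{23})\times(L_{23}\setminus L_{12})$ is dominated by the \emph{sum} of the relevant surface layer integrals (different combinations use different $\Omega_i$), not by a single one.
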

\Proof Since symmetry is obvious, it remains to prove transitivity.
Thus assume that~$\Omega$ is causally equivalent to~$\Omega'$, and that~$\Omega'$ is
causally equivalent to~$\Omega''$. Setting
\[ L = (\Omega' \setminus \Omega) \cup (\Omega \setminus \Omega') \:,\quad
\tilde{L} = (\Omega'' \setminus \Omega') \cup (\Omega' \setminus \Omega'') \:,\quad
\hat{L} = (\Omega'' \setminus \Omega) \cup (\Omega \setminus \Omega'') \:, \]
our task is to show that the function~$g:=| (D_{1,\Comm} - D_{2,\Comm} )\, \L_\kappa(.,.)|$
is integrable on~$\hat{L} \times \hat{L}$. Noting that~$\hat{L} \subset L \cup \tilde{L}$ and
\[ \hat{L} \times \hat{L} \;\subset\; (L \times L) \cup (\tilde{L} \times \tilde{L})
\cup \big(L \times (M \setminus L)\big) \cup \big(\hat{L} \times (M \setminus \hat{L})\big) \:, \]
the function~$g$ is integrable on~$L \times L$ because~$\Omega \sim \Omega'$
and on~$\tilde{L} \times \tilde{L}$ because~$\Omega' \sim \Omega''$.
On the other hand, it is is integrable on~$L \times (M \setminus L)$ because~$\Omega$ and~$\Omega'$
are commutator regular, and it is integrable on~$\hat{L} \times (M \setminus \hat{L})$ because~$\Omega'$ and~$\Omega''$ are commutator regular. This concludes the proof.
\QED
We denote this equivalence relation by~$\Omega \sim \tilde{\Omega}$.
Clearly, the condition~\eqref{cequi} is satisfied if the sets~$\Omega$ and~$\Omega'$
differ by a compact set,
\[ \Omega \sim \Omega \cup K \qquad \text{for all~$\Omega \in {\mathfrak{B}}(M)$ and compact~$K \subset M$}\:. \]
Therefore, the equivalence classes~$[\Omega]$ with~$\Omega \in {\mathfrak{B}}(M)$
give information on the non-compact causal structure of~$M$.
In general, this structure can be quite complicated.
In what follows, we restrict attention to causal fermion systems
for which the equivalence classes have a particularly simple form,
corresponding to the usual assumptions that spacetime is connected
and time-orientable, so that there are (up to global reversals of the time orientation)
unique notions of future and past.
This concept is implemented in the following definition:
\begin{Def} The causal fermion system~$(\H, \F, \rho)$ is called
{\bf{time-orientable}} if there are precisely four equivalence classes
\[ [\varnothing], \quad [M], \quad [\Omega] \quad \text{and} \quad  [M \setminus \Omega] \]
(where~$\Omega \subset M$ is a Borel subset representing a nontrivial equivalence class).
\end{Def} \noindent
Clearly, this definition involves the freedom of reversing the time direction by
replacing~$\Omega$ by~$M \setminus \Omega$.
Since this replacement also changes the sign of~$\gamma^\Omega_\rho(\Comm)$, one
could fix the direction for example by demanding that~$\gamma^\Omega_\rho(\Comm[\scrA])$
is non-negative for every positive operator~$\scrA$. But this procedure is not compelling.
Therefore, we prefer to fix the time orientation by distinguishing~$[\Omega]$.
\begin{Def} \label{deffuturepast}
A time-orientable causal fermion system together with a choice of~$[\Omega]$ is
called {\bf{time-oriented}}.
The sets in~$[\Omega]$ are referred to as {\bf{past sets}}, whereas the sets in~$[M \setminus \Omega]$
are {\bf{future sets}}.
\end{Def}

\subsection{Representing the Scalar Product~$\la .|. \ra_\H$ by a Surface Layer Integral}
In what follows, we always let~$\Omega$ be a past set.
Evaluating the surface layer integral~$\gamma^\Omega_\rho$ for commutator jets makes it possible
to represent the Hilbert space scalar product~$\la .|. \ra_\H$ as a surface layer integral,
as we now explain. To this end, we consider more specifically families of unitary
transformations with generators~$\scrA$ of rank one. Namely, given
a non-zero vector~$\psi \in \H$, we form the symmetric linear operator~$\scrA \in \Lin(\H)$ of rank at one by
\beq \label{Adef}
\scrA \psi := \la u | \psi\ra_\H \: u
\eeq
(thus in bra/ket notation, $\scrA = | u \ra \la u |$).
By exponentiating we obtain a corresponding family of unitary operators~$(\scrU_\tau)_{\tau \in \R}$,
\[ 
\scrU_\tau := \exp(i \tau \scrA) \:. \]
We again denote the corresponding commutator jet in~\eqref{jvdef} by~${\mathfrak{C}}=(0, \Comm)$.
It is usually most convenient to only consider the vector component, giving rise to a mapping
\[ \Comm \::\: \H \rightarrow \Glin_\rho \cap \Gtest_\rho \:. \]
In view of~\eqref{Adef} and~\eqref{jvdef}, this mapping
is {\em{positive homogeneous of rank two}} in the sense that
\[ {\mathscr{C}} \big( \alpha u \big) = |\alpha|^2\: {\mathscr{C}}(u) \qquad \text{for all~$\alpha \in \C$}\:. \]
Moreover, from Lemma~\ref{lemmaequivalent} we know that the surface layer
in\-te\-gral~$\gamma^\Omega({\mathscr{C}}(u))$ is conserved in the sense that it does not depend
on the choice of the past set~$\Omega$.
This surface layer integral
defines a functional on~$\H$ which is again positive homogeneous of degree two, i.e.
\[ \gamma^\Omega_\rho \big( {\mathscr{C}}(\alpha u) \big) = |\alpha|^2\: \gamma^\Omega_\rho \big( {\mathscr{C}}(u) \big)
\qquad \text{for all~$u \in \H$ and~$\alpha \in \C$}\:. \]
Therefore, we can use the polarization formula to define a sesquilinear form
\begin{align}
\la .|. \ra^\Omega_\rho &: \H \times \H \rightarrow \C \:, \label{Commosidef} \\
\la u | v \ra^\Omega_\rho &:=
\frac{1}{4} \:\Big( \gamma^\Omega_\rho \big( {\mathscr{C}}(u+v)\big) - \gamma^\Omega_\rho \big( {\mathscr{C}}(u-v) \big) \Big) \notag \\
&\quad\; -\frac{i}{4} \:\Big( \gamma^\Omega_\rho \big( {\mathscr{C}}(u+iv) \big) - \gamma^\Omega_\rho \big( {\mathscr{C}}(u-iv) \big) \Big) \:, \notag
\end{align}
referred to as the {\bf{commutator inner product}}.
In~\cite[Section~5]{noether} it was shown that, taking the continuum limit of the vacuum in Minkowski space,
this sesquilinear form coincides, up to a constant, with the scalar product~$\la u|v\ra_\H$.
We now give this property a useful name.
For the sake of larger generality, we only assume that this property holds for all vectors in the
finite-dimensional subspace~$\H^\fermi \subset \H$.
\begin{Def} \label{defSLrep}
Given a critical measure~$\rho$ and a past set~$\Omega \subset M$, the commutator inner product
is said to {\bf{represent the scalar product}} if
\beq \label{Ccond}
\la u|v \ra^\Omega_\rho = c\, \la u|v \ra_\H \qquad \text{for all~$u,v \in \H^\fermi$}
\eeq
with a suitable positive constant~$c$.
\end{Def} \noindent
Due to the conservation of the surface layer integral (see Lemma~\ref{lemmaequivalent}),
this condition does not depend on the choice of the past set~$\Omega$.
In physical applications, one chooses~$\Omega$ as the
past of a Cauchy surface. Then the condition holds automatically, provided that it holds one specific time.
Examples for such a choosing this specific time would be shortly after the big bang
before particles were created or any other time when the particle density was so small that the Minkowski
vacuum is a good approximation.

In what follows, we always assume that the commutator inner product represents the scalar product.
Moreover, by a rescaling of the Hilbert space scalar product 
and possibly a time reversal we always arrange that the constant~$c$ in~\eqref{Ccond}
is equal to one. Thus we always assume that
\[ 
\la u|v \ra^\Omega_\rho = \la u | v \ra_\H \qquad \text{for all~$u,v \in \H^\fermi$}\:. \]

We now rewrite this condition in an equivalent way which is sometimes more useful.
First, in view of the polarization formula, it suffices to satisfy~\eqref{Ccond}
in the case~$u=v=:\psi$. We thus obtain the equivalent condition
\[ \gamma^\Omega_\rho\big( {\mathscr{C}}(\psi) \big) =  \|\psi\|^2_\H \qquad \text{for all~$\psi \in \H^\fermi$}\:. \]
Second, in view of the definition of the operator~$\scrA$ in~\eqref{Adef}, its trace
is given by~$\tr \scrA = \|\psi\|^2_\H$. We have thus proven the following result:
\begin{Lemma} \label{lemmaSLrep}
The commutator inner product represents the
scalar product with constant~$c>0$
if and only if for every symmetric operator~$\scrA \in \Lin(\H)$ on~$\H^\fermi$ (see~\eqref{Af}), the
corresponding commutator jet~$(0, \Comm)$ with
\[ \Comm(x) = i [\scrA, x] \]
satisfies the relation
\beq \label{repc}
\gamma^\Omega_\rho \big( {\mathscr{C}}(u) \big) = c\, \tr \scrA \:.
\eeq
\end{Lemma}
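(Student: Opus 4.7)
The idea is to build on the rank-one calculation carried out in the discussion immediately before the lemma, and to extend the equivalence to arbitrary symmetric operators via spectral decomposition and the linearity of $\gamma^\Omega_\rho$ in the jet.

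For the direction ($\Leftarrow$), I would specialize \eqref{repc} to the rank-one operators $\scrA = |u\ra\la u|$ from \eqref{Adef}; since $\tr\scrA = \|u\|_\H^2$, the hypothesis reads $\gamma^\Omega_\rho({\mathscr{C}}(u)) = c\,\|u\|_\H^2$ for every $u\in\H^\fermi$. Inserting this identity into the polarization formula \eqref{Commosidef} immediately gives $\la u|v\ra^\Omega_\rho = c\,\la u|v\ra_\H$ for all $u,v\in\H^\fermi$, which is exactly \eqref{Ccond}, so the commutator inner product represents the scalar product.

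For the direction ($\Rightarrow$), I would fix a symmetric $\scrA\in\Lin(\H)$ of the form \eqref{Af}. Since $\H^\fermi$ is finite-dimensional, $\scrA$ has finite rank, and the spectral theorem produces an orthonormal eigenbasis $(\psi_i)$ of $\H^\fermi$ together with real eigenvalues $\lambda_i$ with $\scrA = \sum_i \lambda_i\,|\psi_i\ra\la\psi_i|$. Since the commutator $\Comm(x)=i[\scrA,x]$ depends $\R$-linearly on $\scrA$, we obtain $\Comm = \sum_i \lambda_i\,{\mathscr{C}}(\psi_i)$, where each ${\mathscr{C}}(\psi_i)$ is the jet associated via \eqref{Adef}--\eqref{jvdef} to the rank-one operator $|\psi_i\ra\la\psi_i|$ and hence lies in $\Gamma^{\fermi,\Comm}_\rho$. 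Because $\gamma^\Omega_\rho$ in \eqref{Ilin0} is linear in its vector-field argument (directional derivatives being linear in the direction), this decomposition gives
\[ \gamma^\Omega_\rho(\Comm) = \sum_i \lambda_i\,\gamma^\Omega_\rho\big({\mathscr{C}}(\psi_i)\big) \:. \]
Under the standing assumption that the commutator inner product represents the scalar product with constant $c$, the polarization-based reduction carried out in the text preceding the lemma yields $\gamma^\Omega_\rho({\mathscr{C}}(\psi_i)) = c\,\|\psi_i\|_\H^2 = c$ for each $i$, and therefore $\gamma^\Omega_\rho(\Comm) = c\sum_i\lambda_i = c\,\tr\scrA$, which is \eqref{repc}.

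There is essentially no obstacle; the only points worth verifying are that each rank-one summand $|\psi_i\ra\la\psi_i|$ is again of the form \eqref{Af} (immediate from $\psi_i\in\H^\fermi$), and that the commutator-regularity hypothesis ensures the finite splitting $\Comm = \sum_i\lambda_i\,{\mathscr{C}}(\psi_i)$ can be pulled through $\gamma^\Omega_\rho$ summand by summand — both of which follow at once from the definitions \eqref{Af}, \eqref{jvdef} and \eqref{Ilin0} together with the finiteness of the sum.
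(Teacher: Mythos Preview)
Your proposal is correct and follows essentially the same route as the paper. The paper's argument (given in the paragraph immediately preceding the lemma) reduces \eqref{Ccond} by polarization to the rank-one identity $\gamma^\Omega_\rho({\mathscr{C}}(\psi)) = c\,\|\psi\|_\H^2$ and then notes $\tr\scrA = \|\psi\|_\H^2$; it leaves the passage from rank-one $\scrA$ to general symmetric $\scrA$ implicit, whereas you spell out this step explicitly via the spectral decomposition and linearity of $\gamma^\Omega_\rho$---a harmless but useful elaboration.
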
 \noindent
By a rescaling, we shall always arrange that~$c=1$.

We finally bring the commutator inner product into a more explicit and convenient form:
\begin{Prp} \label{prpcomm}
Using the kernel~$Q(x,y)$ as defined by~\eqref{delLdef},
the commutator inner product can be written as
\[ \la u|v \ra^\Omega_\rho = -2i \,\bigg( \int_{\Omega} \!d\rho(x) \int_{M \setminus \Omega} \!\!\!\!\!\!\!d\rho(y) 
- \int_{M \setminus \Omega} \!\!\!\!\!\!\!d\rho(x) \int_{\Omega} \!d\rho(y) \bigg)\:
\Sl \psi^u(x) \:|\: Q(x,y)\, \psi^v(y) \Sr_x \:. \]
\end{Prp}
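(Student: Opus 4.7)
The plan is to compute $\gamma^\Omega_\rho(\Comm_w)$ explicitly for the rank-one commutator jet associated to $\scrA_w = |w\rangle\langle w|$, and then recover $\la u|v\ra^\Omega_\rho$ via the polarization formula built into the definition~\eqref{Commosidef}. The first step is a symmetry manipulation of the surface layer integral. Starting from $\gamma^\Omega_\rho(\Comm) = \int_\Omega d\rho(x)\int_{M\setminus\Omega}d\rho(y)(D_{1,\Comm} - D_{2,\Comm})\L_\kappa(x,y)$, I relabel $x \leftrightarrow y$ in the $D_{2,\Comm}$-contribution and apply the symmetry $\L_\kappa(x,y)=\L_\kappa(y,x)$, which converts $D_{2,\Comm}\L_\kappa(y,x)$ into $D_{1,\Comm}\L_\kappa(x,y)$. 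This turns the integration domain $\Omega\times(M\setminus\Omega)$ into $(M\setminus\Omega)\times\Omega$ and rewrites the surface layer integral in the manifestly antisymmetric form $\gamma^\Omega_\rho(\Comm) = \bigl(\int_\Omega d\rho(x)\int_{M\setminus\Omega}d\rho(y) - \int_{M\setminus\Omega}d\rho(x)\int_\Omega d\rho(y)\bigr) D_{1,\Comm}\L_\kappa(x,y)$.

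The main computation is to evaluate $D_{1,\Comm_w}\L_\kappa(x,y)$. The unitary flow $\scrU_\tau = \exp(i\tau\scrA_w)$ induces, when only the first argument is moved, the variation $\delta_1 P(x,y) = i[\scrA_w,\Psi(x)]\Psi(y)^*$, which however does not have the standard form of a fermionic variation since its image does not lie in $S_x$. The key algebraic observation is the identity $\delta_1(P(x,y)P(y,x)) = i[\scrA_w, P(x,y)P(y,x)] + i\Psi(x)[\scrA_w, y]\Psi(x)^*$, whose first term is a pure commutator that does not affect the eigenvalues and hence drops out of $\L_\kappa$. Rewriting the remaining term in Leibniz form $\delta_1^{\text{eff}}P(x,y)\,P(y,x) + P(x,y)\,\delta_1^{\text{eff}}P(y,x)$ by means of the identity $\Psi(y)^*\Psi(y) = -y$ identifies the effective variation as $\delta_1^{\text{eff}}P(x,y) = -i\Psi(x)\scrA_w\Psi(y)^*$. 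Substituting into $\delta\L_\kappa(x,y) = 2\re\Tr_{S_x}(Q(x,y)\delta P(x,y)^*)$, using the spin-adjoint relation $\Psi(x)^*\eta = -x\eta$ for $\eta\in S_x$, and converting the resulting Hilbert-space inner product through $\Sl a|b\Sr_x = -\langle a,xb\rangle_\H$ produces the clean identity $D_{1,\Comm_w}\L_\kappa(x,y) = -2\im\Sl\psi^w(x)\,|\,Q(x,y)\psi^w(y)\Sr_x$.

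Inserting this into the restructured surface layer integral from the first step and using the conjugation relation $\overline{\Sl\psi^w(x)|Q(x,y)\psi^w(y)\Sr_x} = \Sl\psi^w(y)|Q(y,x)\psi^w(x)\Sr_y$ (a consequence of $Q(x,y)^* = Q(y,x)$) together with a further change of variables $x\leftrightarrow y$ in the second piece, the two halves of $-2\im(\cdots)$ combine coherently with the antisymmetric measure to yield $\gamma^\Omega_\rho(\Comm_w) = -2i\bigl(\int_\Omega\int_{M\setminus\Omega} - \int_{M\setminus\Omega}\int_\Omega\bigr)\Sl\psi^w(x)|Q(x,y)\psi^w(y)\Sr_x$. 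Because the right-hand side is the diagonal evaluation of a sesquilinear form in $w$ (antilinear in the $\psi^w(x)$ slot, linear in the $\psi^w(y)$ slot), the polarization formula encoded in~\eqref{Commosidef} immediately produces the claimed sesquilinear expression for $\la u|v\ra^\Omega_\rho$. The hard part is Step 2: since $\delta\Psi(x) = i[\scrA_w,\Psi(x)]$ does not land in $S_x$, one must carefully separate the similarity-transformation piece (which is a commutator and invisible to $\L_\kappa$) from the eigenvalue-relevant effective part, and only then can the $Q$-trace be evaluated using only spin-inner-product adjoints.
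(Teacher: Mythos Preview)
Your argument is correct and arrives at the same formula, but the route differs from the paper's in one essential way: the computation of~$D_{1,\Comm}\L_\kappa(x,y)$. You treat the commutator jet as literally moving the point~$x$, observe that this displaces the spin space~$S_x$, and then split~$\delta_1(P(x,y)P(y,x))$ into a similarity piece~$i[\scrA, P(x,y)P(y,x)]$ (invisible to the eigenvalues and hence to~$\L_\kappa$) plus a remainder~$i\Psi(x)[\scrA,y]\Psi(x)^*$, from which you extract the effective variation~$\delta_1^{\text{eff}}P(x,y)$. The paper bypasses this detour with the single observation that, since~$\L_\kappa$ depends only on the eigenvalues of~$xy$, the transformation~$x\to\scrU_\tau x\,\scrU_\tau^{-1}$ is realized at the level of the wave evaluation operator by~$\Psi(x)\to\Psi(x)\,\scrU_\tau^{-1}$ while keeping~$S_x$ fixed. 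This immediately yields~$D_{1,\Comm}P(x,y)$ and~$D_{2,\Comm}P(x,y)$ in one line, and~\eqref{delLdef} then applies directly---no commutator separation, and no need for your preliminary Step~1 reshuffling of the surface layer integral. In effect your ``effective variation'' and the paper's~$D_{1,\Comm}P$ coincide; you have rediscovered the paper's shortcut by a longer algebraic path. Your approach has the merit of making explicit \emph{why} the naive variation~$i[\scrA,\Psi(x)]\Psi(y)^*$ cannot be inserted into~\eqref{delLdef} directly, whereas the paper's trick simply sidesteps the issue. One caution: the signs in Step~2 and Step~3 must be tracked carefully against the convention chosen for~$P(x,y)$ (the paper itself switches between~$P=\Psi\Psi^*$ and~$P=-\Psi\Psi^*$), so verify the consistency of your intermediate~$-2\,\im(\cdots)$ with the final~$-2i(\cdots)$ once a convention is fixed.
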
 
\Proof A similar computation is given in~\cite[Section~5.2]{noether}.
We repeat it here for completeness, also using the present notation.
In view of the polarization formula, it suffices to consider the case~$u=v$.
In this case, we know from~\eqref{Commosidef} and~\eqref{Ilin0} that
\beq \label{uutosi}
\la u,u \ra^\Omega_\rho = \gamma^\Omega_\rho \big( {\mathscr{C}}(u) \big)
= \int_\Omega d\rho(x) \int_{M \setminus \Omega} d\rho(y)\:
\big( D_{1,\Comm} - D_{2,\Comm} \big) \L_\kappa(x,y) \:.
\eeq

The remaining task is to compute the derivatives of the Lagrangian.
The local correlation operator 
can be expressed in terms of the wave evaluation operator by
(see~\cite[Lemma~1.1.3]{cfs})
\[ x = - \Psi(x)^* \Psi(x) \:. \]
Hence the unitary transformation of the local correlation operators in~\eqref{Munit}
corresponds to the transformation~$\Psi(x) \rightarrow \Psi(x)\, \scrU_\tau^{-1}$.
Consequently, writing the kernel of the fermionic projector
as~$P(x,y) = -\Psi(x)\, \Psi(y)^*$ (see again~\cite[Lemma~1.1.3]{cfs}), its variation is computed by
\begin{align*}
D_{1,\Comm} P(x,y) &= \frac{d}{d\tau} \Big( -\Psi(x)\, \scrU_\tau^{-1}\, \Psi(y)^* \Big) \Big|_{\tau=0}
\overset{\eqref{Agen}}{=} -\Psi(x)\,(-i \scrA)\, \Psi(y)^* \\
D_{2,\Comm} P(x,y) &= -\Psi(x)\,(i \scrA)\, \Psi(y)^* \:.
\end{align*}
Using the form of the operator~$\scrA$ in~\eqref{Adef}, we conclude that
\[ D_{1,\Comm} P(x,y) =  i\, |\psi^u(x)\Sr \Sl \psi^u(y)| \qquad \text{and} \qquad
D_{2,\Comm} P(x,y) =  -i\, |\psi^u(x)\Sr \Sl \psi^u(y)| \]
(where~$\psi^u(x) := \pi_x u$ is the physical wave function of~$u \in \H$).

Using the last relations in~\eqref{delLdef} gives
\begin{align*}
&\big( D_{1,\Comm} - D_{2,\Comm} \big) \L_\kappa(x,y) \\
&= 2 i \:\Big( 
\Tr_{S_y} \big( Q(y,x)\, |\psi^u(x)\Sr \Sl \psi^u(y)| \big) - \Tr_{S_x} \!\big( Q(x,y)\, |\psi^u(y)\Sr \Sl \psi^u(x)| \big) \Big) \\
&= 2 i \:\Big( 
\Sl \psi^u(y) \,|\, Q(y,x)\, \psi^u(x)\Sr -  \Sl \psi^u(x) \,|\, Q(x,y)\, |\psi^u(y)\Sr \Big) \:.
\end{align*}
Substituting these equations into~\eqref{uutosi} gives the result.
\QED

For the following constructions, it is preferable to have the freedom to modify the kernel~$Q(x,y)$
while preserving the EL equations for the physical wave functions~\eqref{ELQ}.
The point is that these EL equations must be satisfied only for all the physical wave functions,
giving us the freedom to modify~$Q$ arbitrarily on the complement of the span of these wave functions.
Therefore, we decompose the kernel~$Q(x,y)$ as
\beq \label{Qreg}
Q(x,y) = Q^\reg(x,y) + Q^\sing(x,y) \:,
\eeq
where we choose~$Q^\sing(x,y)$ as a symmetric kernel (i.e.\ $Q^\sing(x,y)^* = Q^\sing(y,x)$)
such that
\beq \label{Qsing}
\int_M Q^\sing(x,y)\, \psi^u(x)\: d\rho(x) =0 \qquad \text{for all~$u \in \H^\fermi$} \:.
\eeq
In what follows, we always replace~$Q$ by~$Q^\reg$. In particular, the EL equations~\eqref{ELQ}
become
\beq \label{ELQreg}
\int_M Q^\reg(x,y)\, \Psi^\fermi(y) \:d\rho(y) = \mathfrak{r}\, \Psi^\fermi(x) \qquad \text{for all~$x \in M$}\:.
\eeq
These equations are again satisfied in view of~\eqref{Qreg} and~\eqref{Qsing}.
Moreover, the kernel~$Q^\reg$ is again symmetric,
\beq \label{Qregsymm}
Q^\reg(x,y)^* = Q^\reg(y,x) \:.
\eeq
The commutator inner product of Proposition~\ref{prpcomm} is modified to
\beq \label{OSIreg}
\begin{split}
\la u|v \ra^\Omega_\rho = -2i \,\bigg( \int_{\Omega} \!d\rho(x) \int_{M \setminus \Omega} \!\!\!\!\!\!\!d\rho(y) 
&- \int_{M \setminus \Omega} \!\!\!\!\!\!\!d\rho(x) \int_{\Omega} \!d\rho(y) \bigg)\\
&\times\:
\Sl \psi^u(x) \:|\: Q^\reg(x,y)\, \psi^v(y) \Sr_x \:.
\end{split}
\eeq
This is still conserved as a consequence of~\eqref{ELQreg}, but the inclusion of~$Q^\sing$
may change its value. In order to ensure that the commutator inner product remains unchanged,
we also demand that the contribution by~$Q^\sing$ vanish for all~$u, v \in \H^\fermi$,
\beq \label{Qsing2}
\bigg( \int_{\Omega} \!d\rho(x) \int_{M \setminus \Omega} \!\!\!\!\!\!\!d\rho(y) 
- \int_{M \setminus \Omega} \!\!\!\!\!\!\!d\rho(x) \int_{\Omega} \!d\rho(y) \bigg)
\Sl \psi^u(x) \:|\: Q^\sing(x,y)\, \psi^v(y) \Sr_x = 0 \:.
\eeq
Clearly, in view of the conservation law for~\eqref{OSIreg}, it suffices to verify~\eqref{Qsing2} for
a specific past set~$\Omega$.
Apart from the conditions~\eqref{Qsing} and~\eqref{Qsing2}, the kernel~$Q^\sing(x,y)$
can be chosen arbitrarily.

Before explaining this construction in a concrete example, we make a few general remarks.
We first point out that the above ``regularization'' of~$Q$ is optional.
All the constructions and results of this paper are valid no matter if or how this regularization
is performed.
We also point out that modifying~$Q^\reg$ merely amounts to changing the representation
of the physical wave functions, but it has no influence on any physical observables,
nor does it change the interaction or the dynamics of the system.
In order to illustrate the significance of the freedom in modifying~$Q^\reg$, we now
discuss a concrete example.

\subsection{Example: The Regularized Minkowski Vacuum} \label{secexmink}
The following example explains why the above decomposition of~$Q(x,y)$
is of advantage in the applications.
For the regularized Min\-kowski vacuum, the kernel~$Q(x,y)$
was analyzed in the so-called state stability analysis carried out in~\cite[Section~5.6]{pfp},
\cite{vacstab} and~\cite{reg}, as we now briefly recall.
The detailed analysis of the continuum
limit in~\cite[Chapter~3]{cfs} shows that in order to obtain well-defined field equations
in the continuum limit, the number of generations must be equal to three.
Therefore, we now consider an unregularized fermionic projector of the vacuum
involving a sum of three Dirac seas,
\[ 
P(x,y) = \sum_{\beta=1}^3 \int \frac{d^4k}{(2 \pi)^4}\: (\slashed{k}+m_\beta)\:
\delta \big(k^2-m_\beta^2 \big)\: e^{-ik(x-y)} \]
(this configuration is also referred to as three generations in a single sector;
see~\cite[Chapter~3]{cfs}).
The corresponding kernel~$Q(x,y)$ obtained in the continuum limit depends only on
the difference vector~$y-x$ and can thus be written as the Fourier transform
of a distribution~$\hat{Q}(k)$,
\[ Q(x,y) = \int \frac{d^4k}{(2 \pi)^4} \:\hat{Q}(k)\: e^{-ik (x-y)} \:. \]
The state stability analysis in~\cite[Section~5.6]{pfp} implies that the Fourier
transform~$\hat{Q}$ has the following form (cf.~\cite[Definition~5.6.2]{pfp}).
It is is well-defined inside the lower mass cone
\[ \mathcal{C}^\land := \{ k \in \R^4 \,|\, k^i k_i >0 \text{ and } k^0<0 \} \:, \]
where it can be written as
\[ 
\hat{Q} (k) = a\:\frac{k\slsh}{|k|} + b \]
with continuous real functions $a$ and $b$ on $\mathcal{C}^\land$ having
the following properties:
\bitem
\item[(i)] $a$ and $b$ are Lorentz invariant,
\[ a = a(k^2)\:,\qquad b = b(k^2) \:. \]
\item[(ii)] $a$ is non-negative.
\item[(iii)] The function $a+b$ is minimal on the mass shells,
\[ 
(a+b)(m^2_\beta) = \inf_{q \in {\mathcal{C}}^\land} (a+b)(q^2) \quad\mbox{for~$\beta=1,2,3$}\:. \]
\eitem
Making essential use of these properties, in~\cite[Section~5.2]{noether} it
was proven that the commutator inner product reduces to the usual current integral,
proving that the commutator inner product indeed represents the scalar product (see Definition~\ref{defSLrep}).
All the above formulas hold without the need for an ultraviolet regularization.
If an ultraviolet regularization on the scale~$\varepsilon$ is introduced, the formulas are all well-behaved in the limit~$\varepsilon \searrow 0$.

The motivation for the kernel~$Q^\sing$ in~\eqref{Qreg} comes from the fact that~$\hat{Q}$
is {\em{ill-defined outside}} the lower mass cone. This also means that, if a regularization is present,
the regularized kernel~$\hat{Q}^\varepsilon$ will in general diverge inside the upper mass cone
as~$\varepsilon \searrow 0$. This behavior can be understood from the specific structure of~$Q(x,y)$
as being a product in position space of the form
\beq \label{62j}
Q(x,y) = \frac{1}{2}\: {\mathcal{M}}(x,y)\, P(x,y)\:,
\eeq
where from symmetry considerations one knows that the Fourier transform~$\hat{\mathcal{M}}$
of~${\mathcal{M}}$ is supported inside the upper and lower mass cone. Rewriting the product in position
space in~\eqref{62j} as a convolution in momentum space,
\[ \hat{Q}(q) \;=\; \frac{1}{2} \:\int \frac{d^4p}{(2\pi)^4}\:\hat{\mathcal{M}} (p) \:\hat{P}(q-p)\:, \]
the integration range is compact if~$q$ lies in the lower mass shell, but it is unbounded
if~$q$ is in the upper mass shell (see Figure~\ref{fig4}).
\begin{figure}[tb]
\begin{center}
\scalebox{0.9}
{\includegraphics{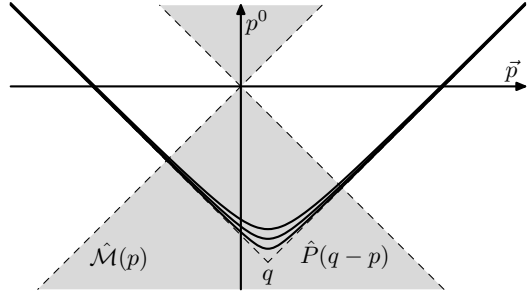}}
\caption{The convolution~${\hat{\mathcal{M}}} * \hat{P}$.}
\label{fig4}
\end{center}
\end{figure}

In position space, this divergence can be understood from the fact that both~${\mathcal{M}}(x,y)$
and~$P(x,y)$ are singular on the light cone, implying that the pointwise product in~\eqref{62j}
cannot be taken in a naive way. This consideration also shows that the problem can be cured
by subtracting suitable counter terms supported on the light cone.
The kernel~$Q^\sing(x,y)$ should precisely consist of these counter terms, thereby
arranging that~$Q^\text{reg}(x,y)$ is regular and well-defined even in the limit~$\varepsilon \searrow 0$.
In the next proposition we work out~$Q^\sing(x,y)$ more explicitly.

\begin{Prp} \label{prpQsing} There is a kernel~$Q^\sing$ of the form
\[ Q^\sing(x,y) = \frac{1}{\varepsilon^2}\: Q^{(2)}(x,y) + \frac{1}{\varepsilon}\: Q^{(1)}(x,y)
+ \log(\varepsilon) \:Q^{(0)}(x,y) \]
with tempered distributions~$Q^{(0)}$, $Q^{(1)}$ and~$Q^{(2)}$ such that the limit
\[ \lim_{\varepsilon \searrow 0} \big( Q^\varepsilon(x,y) - Q^\sing(x,y) \big) \]
exists in the distributional sense. Moreover, these distributions are supported on the light cone,
\beq \label{supp1}
\supp Q^{(0)}, \:\supp Q^{(1)}, \:\supp Q^{(2)} \subset \{ \xi \:|\: \la \xi, \xi \ra = 0 \} \:,
\eeq
and their Fourier transforms vanish inside the lower mass cone,
\beq \label{supp2}
\supp \hat{Q}^{(0)}, \:\supp \hat{Q}^{(1)}, \:\supp \hat{Q}^{(2)} \subset
\big\{ k \:\big|\: \la k,k \ra \leq 0 \text{ or } k^0 \geq 0 \big\} \:.
\eeq
\end{Prp}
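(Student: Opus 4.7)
The plan is to analyze the $\varepsilon \to 0$ behavior of $Q^\varepsilon(x,y)$ in momentum space, exploiting the convolution representation $\hat{Q}^\varepsilon(q) = \frac{1}{2}(\hat{\mathcal{M}}^\varepsilon * \hat{P}^\varepsilon)(q)$ discussed around~\eqref{62j}. The first observation, indicated by Figure~\ref{fig4}, is that for $q$ inside the lower mass cone the overlap of the supports of $\hat{\mathcal{M}}^\varepsilon$ and the translated $\hat{P}^\varepsilon(q-\cdot)$ stays bounded uniformly in $\varepsilon$, so the convolution converges and no counter term is needed. Divergences arise only for $q$ outside the lower mass cone, which will give the Fourier support property~\eqref{supp2} for the counter terms essentially for free.

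For such $q$, I would split the $p$-integration into a bounded infrared region, whose contribution has a finite distributional limit, and an ultraviolet region $|p| \gtrsim 1/\varepsilon$ where all divergences live. A scaling analysis based on the decay of $\hat{\mathcal{M}}^\varepsilon$ and $\hat{P}^\varepsilon$ on momentum scale $1/\varepsilon$, combined with dimensional counting, should yield a leading $\varepsilon^{-2}$ divergence, a subleading $\varepsilon^{-1}$ correction, and a logarithmic remainder of order $\log(\varepsilon)$, with tempered coefficient distributions $F_{2}(q), F_{1}(q), F_{0}(q)$. Taking these (after the adjustment discussed below) as $\hat{Q}^{(2)}, \hat{Q}^{(1)}, \hat{Q}^{(0)}$, the difference $Q^\varepsilon - Q^\sing$ converges as a tempered distribution as $\varepsilon \searrow 0$.

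The position-space support property~\eqref{supp1} should follow from the fact that the divergence reflects the ill-definedness of the pointwise product $\mathcal{M}(x,y)\,P(x,y)$ precisely where both factors are singular, namely on the light cone. Concretely, the high-$|p|$ asymptotics of the integrand that produce the divergent coefficients should be polynomial (or polynomial times a smooth mass-cone cutoff) in $q$, and such symbols Fourier-transform to distributions supported on $\{ \la \xi,\xi \ra = 0 \}$.

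The main obstacle is the simultaneous enforcement of~\eqref{supp1} and~\eqref{supp2}: a distribution supported on the light cone need not automatically have Fourier support outside the lower mass cone. The resolution would be a splitting argument, writing each divergent coefficient $F_k$ as a piece whose inverse Fourier transform is light-cone supported plus a smooth remainder with appropriate Fourier support, and absorbing the smooth remainder into the convergent part of $Q^\varepsilon - Q^\sing$. Carrying this out rigorously, and verifying that exactly the three orders $\varepsilon^{-2}, \varepsilon^{-1}, \log \varepsilon$ appear (no half-integer powers, no higher logarithms), is the technical heart of the proof and will depend on the detailed structure of the chosen regularization scheme, in particular on how $\hat{\mathcal{M}}^\varepsilon$ and $\hat{P}^\varepsilon$ decay at high momenta.
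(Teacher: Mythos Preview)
Your proposal outlines a momentum-space strategy, whereas the paper proceeds in position space using explicit formulas for~$\mathcal{M}(x,y)$ computed in earlier work. Concretely, the paper starts from the known expansion
\[
\mathcal{M}(x,y) = \slashed{\xi}\,\epsilon(\xi^0)\,f(\xi^2)\quad\text{for timelike }\xi,\qquad
f(z) = \frac{c_3 m^3}{z^2} + \frac{c_5 m^5}{z} + \O(\log z)\:,
\]
multiplies pointwise by~$P(x,y)$, and reads off the singular contributions as explicit linear combinations of the causal fundamental solution~$K_0(\xi)$ of the scalar wave equation, weighted by factors~$\varepsilon^{-2}t^{-2}$, $\varepsilon^{-1}t^{-1}$, $\log(\varepsilon t)$, and possibly~$\slashed{\xi}$. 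Since~$K_0$ is manifestly supported on the light cone, \eqref{supp1} is immediate, and the momentum-space property~\eqref{supp2} is obtained by citing explicit Fourier computations of these very distributions carried out elsewhere.

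Your route is not unreasonable in spirit, but there is a genuine gap in the light-cone support step. You assert that the divergent coefficients~$F_k(q)$ arising from the high-$|p|$ asymptotics should be ``polynomial (or polynomial times a smooth mass-cone cutoff) in~$q$'' and that ``such symbols Fourier-transform to distributions supported on~$\{\la\xi,\xi\ra=0\}$.'' This is not correct: a polynomial in~$q$ Fourier-transforms to a derivative of~$\delta$ at the origin, and multiplying by a smooth cutoff gives something with full support in position space, not light-cone support. Light-cone support in~$\xi$ is a very rigid condition---the relevant distributions are built from~$\delta(\xi^2)$ and its derivatives---and it does not emerge from generic scaling or dimensional arguments in momentum space. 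In the paper's approach this property is not derived but is \emph{input}, because the singular terms are identified directly as multiples of~$K_0$. Your splitting argument in the last paragraph would then have to manufacture light-cone supported counter terms out of coefficients that are only known abstractly, and it is not clear how to do this without essentially redoing the position-space computation.

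A secondary issue is that the proposal remains programmatic: the claim that exactly the orders~$\varepsilon^{-2}$, $\varepsilon^{-1}$, $\log\varepsilon$ occur is left to ``scaling analysis'' and ``dimensional counting,'' but in the paper this follows from the explicit~$1/z^2$, $1/z$, $\log z$ terms in~$f$ combined with the known degree-on-the-light-cone expansion of~$P(x,y)$. Without that concrete input, ruling out other divergence orders is not straightforward.
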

\Proof The kernel~${\mathcal{M}}(x,y)$ was computed in~\cite{reg} away from the light cone
to be of the form (see~\cite[eqs~(2.20) and~(3.5)]{reg})
\[ 
{\mathcal{M}}(x,y) = \left\{ \begin{array}{cl} \xi\slsh\: \epsilon(\xi^0)\:f(\xi^2) &
{\mbox{if $\xi$ is timelike}} \\
0 & {\mbox{if $\xi$ is spacelike}} \end{array} \right. \]
where
\[ 
f(z) = \frac{c_3\,m^3}{z^2} + \frac{c_5\,m^5}{z} \:+\: {\mathcal{O}}(\log z) \:. \]
Multiplying pointwise by~$P(x,y)$ gives singularities on the light cone of the form
\[ Q(x,y) \sim c_3\, m^3\, (\deg=3) + c_3\, m^4\, \slashed{\xi}\, (\deg=3) + (\deg < 3) \]
(where~$\deg$ denotes the degree on the light cone as defined in the formalism
of the continuum in~\cite[\S2.4.4]{cfs}). These singular contributions can also be written as
\beq \label{Qform}
\begin{split}
Q(x,y) &\sim \frac{m^3}{\varepsilon^2\, t^2}\, K_0(\xi)
+ \frac{m^4}{\varepsilon^2\, t^2}\, \slashed{\xi}\, K_0(\xi) + 
\frac{m^5}{\varepsilon t}\, K_0(\xi)
+ \frac{m^6}{\varepsilon t}\, \slashed{\xi}\, K_0(\xi) \\
&\quad\: + m^7\, \log (\varepsilon t) \, K_0(\xi)
+ m^8\, \log (\varepsilon t) \, \slashed{\xi}\, K_0(\xi) + \O(\varepsilon^0) \:,
\end{split}
\eeq
where~$K_0$ is the causal fundamental solution of the scalar wave equation defined by
\[ 
K_0(\xi) := \frac{1}{2 \pi i}\: \big( S^\vee_0- S^\wedge_0 \big)
= \frac{i}{4 \pi^2}\: \frac{1}{2 t}\: \delta(|t|-r) \:. \]
These singular contributions have also been found and discussed in~\cite[Section~5]{action}, and we
refer for computational details to this paper, where the distributions in~\eqref{Qform}
were also computed in momentum space. The resulting formulas show explicitly that,
after choosing the distributional contribution at~$x=y$ appropriately, these distributions
vanish inside the lower mass shell (see the functions plotted in~\cite[Figure~3]{action} and note that
their restrictions to the lower mass shell are polynomials).
Therefore, we can compensate all the singular contributions in~\eqref{Qform} by a
kernel~$Q^\sing$ of the required form.
\QED
Clearly, this procedure leaves the freedom to modify~$Q^\sing(x,y)$ by contributions
having the support properties~\eqref{supp1} and~\eqref{supp2}, but which 
are finite in the sense that they do not depend on~$\varepsilon$.
This freedom will be discussed in Section~\ref{secexmink2}.

We close with two short remarks. We first point out that in the above example of Minkowski space,
the commutator inner product can be expressed in terms of the discontinuity of the derivative of~$\hat{Q}$
on the lower mass shell (for details see~\cite[Section~5.2]{noether}).
Therefore, the support property~\eqref{supp2} implies that the conditions~\eqref{Qsing}
and~\eqref{Qsing2} are both satisfied.

Second, for clarity we remark why it is preferable to work with the regularized kernels.
Before proceeding, we point out that the above divergence is unproblematic
in the EL equations~\eqref{ELQ}, because in these equations, $\hat{Q}$ is evaluated only on the
lower mass shell, where it is finite and well-behaved as~$\varepsilon \searrow 0$.
Therefore, at this stage, subtracting the singular contribution as in~\eqref{ELQreg}
is unnecessary. However, the situation becomes more subtle when the system is perturbed,
for example by introducing an external potential or, more generally, by considering variations
as in Section~\ref{secvary} below. Then both~$\Psi^\fermi$ and~$Q$ are perturbed, in such a way
that the EL equations~\eqref{ELQ} are preserved. This means that the singular contribution to~$Q(x,y)$
is perturbed in a fine-tuned way, such that the image of~$\Psi^\fermi$ remains in the kernel
of~$Q^\sing(x,y)$. As a consequence, the relation~\eqref{Qsing} is preserved by the perturbation.
Again, this causes no problem in the EL equations~\eqref{ELQ}, where the integral remains well-defined.
Thus there there is no necessity to modify the EL equations to~\eqref{ELQ}.
However, the perturbation expansion for~$Q$ involves perturbations of the divergent contributions
on the light cone. Analyzing these contributions in detail is a difficult and laborious task,
because the order of the divergence (the so-called degree on the light cone) is lower than
that of the contributions analyzed in the continuum analysis in~\cite[Chapters~4-6]{cfs}.
Working instead with~$Q^\text{reg}$, these divergent contributions no longer appear,
making it unnecessary to analyze them.

\section{Extending the Hilbert Space in a Surface Layer} \label{secHextend}
\subsection{The Adapted $L^2$-Scalar Product in the Surface Layer}
In preparation of extending the surface layer inner product of Proposition~\ref{prpcomm}
to more general wave functions, we now introduce a Hilbert space of wave functions
endowed with an $L^2$-scalar product involving a measure~$\mu$ which can be thought of
as being supported in the surface layer. To this end, similar as in the construction of the
Krein space structures in~\cite[\S1.1.5]{cfs}, on the spin spaces we introduce the scalar product
\beq \label{spinscalar}
\lla .|. \rra_x \::\: S_x \times S_x \rightarrow \C\:,\qquad 
\lla \psi | \phi \rra_x := \la \psi \,|\, |x|\, \phi \ra_\H
\eeq
(where we make use of the fact that~$S_x \subset \H$) and denote the corresponding norm
on $S_x$ by~$\norm . \norm_x$.
This makes it possible to introduce the norm of the kernel~$Q^\reg(x,y)$ in the usual way by
\[ \norm Q^\reg(x,y) \norm = \sup_{\psi \in S_y \text{ with } \norm \!\psi\! \norm_y=1} \norm Q^\reg(x,y) \psi \norm_x \:. \]
Next, we define the {\em{surface layer measure}} $\mu^\Omega_\rho$ as the Borel measure
given by
\beq \label{muOrdef}
\begin{split}
\mu^\Omega_\rho(U) &:= \int_{U \cap \Omega} d\rho(x)\int_{M \setminus \Omega} d\rho(y) \: \norm Q^\reg(x,y) \norm \\
&\qquad + \int_{U \cap (M \setminus \Omega)} d\rho(x) \int_{\Omega} d\rho(y) \: \norm Q^\reg(x,y) \norm \:.
\end{split}
\eeq
Note that~$x$ is integrated over a subset of~$\Omega$, whereas the $y$-integration
is over a subset of~$M \setminus \Omega$. In this sense, the double integrals have the form of
a surface layer integral. But there is the obvious difference that one of the integrals is ``localized''
to the domain~$U$.

In order to ensure that the integrals in~\eqref{muOrdef} are well-defined, in what follows we always assume
that the integral
\beq \label{Qintegrate}
\int_M \norm Q^\reg(x,y) \norm \:d\rho(y) \qquad \text{is finite for all~$x \in M$
and continuous in~$x$}\:.
\eeq
Under this assumption, on the compactly supported wave
functions~$C^0_0(M, SM)$ we can introduce the scalar product
\[ \lla \psi | \phi \rra^\Omega_\rho := \int_M \lla \psi(x) \,|\, \phi(x) \rra_x \: d\mu^\Omega_\rho(x) \:, \]
referred to as the {\em{adapted $L^2$-scalar product in the surface layer}}.
The corresponding norm is denoted by~$\norm . \norm^\Omega_\rho$.
Forming the completion gives the Hilbert space of wave functions~$(\scrW^\Omega_\rho, \lla \psi | \phi \rra^\Omega_\rho)$.

\subsection{Extending the Surface Layer Integral}
The next step in our construction is to extend the commutator inner product
in Proposition~\ref{prpcomm} to more general wave functions.
In this section, we allow for a general class of wave functions for which the surface layer integral is well-defined.
The resulting space is too large for the applications. In order to obtain a space which
can be thought of as being a generalization of the Hilbert space of all Dirac solutions
(also including all the solutions of positive energy), we want to extend~$\H$ only by
those physical wave functions obtained when the physical system is varied while preserving
the Euler-Lagrange equations. This idea will be implemented in Sections~\ref{secvary}--\ref{secextend},
giving rise to the Hilbert space~$\H^{\fermi, \Omega}_\rho$.

In order to extend the commutator inner product to more general wave functions,
on the Hilbert space~$(\scrW^\Omega_\rho, \lla \psi | \phi \rra^\Omega_\rho)$ we introduce the sesquilinear form
\begin{align}
\la \psi | \phi \ra^\Omega_\rho &\::\: \scrW^\Omega_\rho \times \scrW^\Omega_\rho \rightarrow \C \label{Vtinner} \\
\la \psi | \phi \ra^\Omega_\rho &:= -2i \,\bigg( \int_{\Omega} \!d\rho(x) \int_{M \setminus \Omega} \!\!\!\!\!\!\!d\rho(y) 
- \int_{M \setminus \Omega} \!\!\!\!\!\!\!d\rho(x) \int_{\Omega} \!d\rho(y) \bigg)\:
\Sl \psi(x) \:|\: Q^\reg(x,y)\, \phi(y) \Sr_x \:. \notag
\end{align}
In the next lemma it is shown that this sesquilinear form is well-defined.

\begin{Lemma} For any~$\psi, \phi \in \scrW^\Omega_\rho$, the integrals in~\eqref{Vtinner}
are well-defined and
\[ |\la \psi | \phi \ra^\Omega_\rho| \leq 2\, \norm \!\psi\! \norm^\Omega_\rho\: \norm \!\phi\! \norm^\Omega_\rho \:. \]
\end{Lemma}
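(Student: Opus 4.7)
The plan is to apply the Cauchy--Schwarz inequality at three successive levels: first on each spin space, then to the double $\rho$-integral weighted by $\norm Q^\reg(x,y) \norm$, and finally in $\R^2$ to combine the two halves of the surface layer integral with the sharp constant~$2$. Well-definedness of the integrals in~\eqref{Vtinner} will be a by-product of this bound, since it shows that the integrand is absolutely $\rho \otimes \rho$-integrable.

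The first step is the pointwise estimate. From the identity $\Sl u | v \Sr_x = -\la u \,|\, x v \ra_\H$ together with Cauchy--Schwarz in $\H$ and the definition~\eqref{spinnorm} of the adapted spin norm, I obtain $|\Sl u | v \Sr_x| \leq \norm u \norm_x \, \norm v \norm_x$. Combined with the operator-norm bound on $Q^\reg(x,y) : S_y \to S_x$, this yields
$$\big|\Sl \psi(x) \,|\, Q^\reg(x,y)\, \phi(y) \Sr_x \big| \,\leq\, \norm \psi(x) \norm_x \, \norm Q^\reg(x,y) \norm \, \norm \phi(y) \norm_y$$
for all $(x,y) \in M \times M$.

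Next, I split $\la \psi | \phi \ra^\Omega_\rho = -2i\,(I_+ - I_-)$, where $I_\pm$ are the double integrals over $\Omega \times (M \setminus \Omega)$ and $(M \setminus \Omega) \times \Omega$ respectively, and bound each using the Cauchy--Schwarz inequality for the positive measure $\norm Q^\reg(x,y) \norm \, d\rho(x)\, d\rho(y)$. Using Fubini--Tonelli together with the symmetry $\norm Q^\reg(x,y) \norm = \norm Q^\reg(y,x) \norm$ (inherited from~\eqref{Qregsymm}), the resulting two factors in each estimate are precisely the integrals of $\norm \psi \norm_x^2$ and $\norm \phi \norm_y^2$ against the first and second summands in the definition~\eqref{muOrdef} of $\mu^\Omega_\rho$. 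Setting $a_1 := \int_\Omega \norm \psi \norm_x^2 \, d\mu^\Omega_\rho$, $a_2 := \int_{M \setminus \Omega} \norm \psi \norm_x^2 \, d\mu^\Omega_\rho$ and analogously $b_1, b_2$ for $\phi$, I thus obtain $|I_+| \leq \sqrt{a_1 b_2}$ and $|I_-| \leq \sqrt{a_2 b_1}$.

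The final step is a one-line Cauchy--Schwarz in $\R^2$, $\sqrt{a_1 b_2} + \sqrt{a_2 b_1} \leq \sqrt{(a_1+a_2)(b_1+b_2)}$, which together with $a_1 + a_2 = (\norm \psi \norm^\Omega_\rho)^2$ and $b_1 + b_2 = (\norm \phi \norm^\Omega_\rho)^2$ yields $|I_+| + |I_-| \leq \norm \psi \norm^\Omega_\rho \norm \phi \norm^\Omega_\rho$, hence the claimed bound after multiplying by $|-2i|=2$. The main bookkeeping subtlety is the Fubini step, where one must recognize that after swapping the order of integration in the second Cauchy--Schwarz factor, the remaining integral over $M \setminus \Omega$ indeed matches the \emph{second} summand in the definition of $\mu^\Omega_\rho$ (this is exactly where the symmetry of $\norm Q^\reg \norm$ enters). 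The sharp constant $2$, rather than the naive $4$ obtained by estimating $|I_+|$ and $|I_-|$ independently, rests on this last $\R^2$-Cauchy--Schwarz step.
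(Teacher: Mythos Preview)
Your argument is correct. The paper's proof takes a slightly different route: after the same pointwise estimate, it applies the elementary inequality $\norm \psi(x) \norm_x\,\norm \phi(y) \norm_y \leq \tfrac12\big(\norm \psi(x) \norm_x^2 + \norm \phi(y) \norm_y^2\big)$ to obtain the additive bound $|\la \psi | \phi \ra^\Omega_\rho| \leq (\norm \psi \norm^\Omega_\rho)^2 + (\norm \phi \norm^\Omega_\rho)^2$, and then converts this into the product bound by the standard scaling trick $\psi \mapsto \lambda\psi$, $\phi \mapsto \lambda^{-1}\phi$. Your approach instead applies Cauchy--Schwarz to the $\norm Q^\reg \norm$-weighted double integral and then once more in $\R^2$. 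The two are of course equivalent in spirit (AM--GM plus homogeneity is Cauchy--Schwarz in disguise), but your version has two small advantages: it dispenses with the scaling step, and because all intermediate integrals have non-negative integrands and are finite by $a_1+a_2=(\norm\psi\norm^\Omega_\rho)^2<\infty$, $b_1+b_2=(\norm\phi\norm^\Omega_\rho)^2<\infty$, it yields absolute integrability directly without the paper's preliminary reduction to compactly supported wave functions via denseness.
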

\Proof Using a standard denseness argument, it suffices to consider compactly
supported wave functions~$\psi, \phi \in C^0_0(M, SM)$. Then the integral can be estimated by
\begin{align*}
\big| &\la \psi | \phi \ra^\Omega_\rho \big| \\
&\leq 2 \,\bigg( \int_{\Omega} \!d\rho(x) \int_{M \setminus \Omega} \!\!\!\!\!\!\!d\rho(y) 
+ \int_{M \setminus \Omega} \!\!\!\!\!\!\!d\rho(x) \int_{\Omega} \!d\rho(y) \!\bigg)
\norm \!\psi(x)\! \norm_x \, \norm Q^\reg(x,y) \norm\, \norm \!\phi(y)\! \norm_y \:.
\end{align*}
Employing the inequality
\[ \norm \!\psi(x)\! \norm_x \:\norm \!\phi(y)\! \norm_y
\leq \frac{1}{2} \: \Big( \norm \!\psi(x)\! \norm_x^2 + \norm \!\phi(y)\! \norm_y^2 \Big) \]
and using the definition of the surface layer measure~\eqref{muOrdef}, we obtain
\[ \big|\la \psi | \phi \ra^\Omega_\rho \big| \leq \big( \norm \!\psi\! \norm^\Omega_\rho \big)^2 +
\big( \norm \!\phi\! \norm^\Omega_\rho \big)^2 \:. \]
Using the freedom to scale the vectors~$\psi$ and~$\phi$ gives the result.
\QED

The last lemma also shows that the sesquilinear form~$\la .|. \ra^\Omega_\rho$ is bounded
in the Hilbert space~$(\scrW^\Omega_\rho, \lla \psi | \phi \rra^\Omega_\rho)$.
Therefore, applying the Fr{\'e}chet-Riesz theorem, we obtain a 
bounded symmetric operator $\Sig \in \Lin(\scrW^\Omega_\rho)$ with the property that
\beq \label{Sigdef}
\la \psi | \phi \ra^\Omega_\rho = \lla \psi \,|\, \Sig \phi \rra^\Omega_\rho \:.
\eeq
For technical simplicity, we shall assume that the operator~$\Sig$ is injective (i.e.\ that it has
a trivial kernel). Clearly, this is equivalent to assuming that the sesquilinear form~$\la .|. \ra^\Omega_\rho$
is {\em{non-degenerate}} on~$\scrW^\Omega_\rho$.
We remark that all our constructions could be extended
to the case with degeneracies by restricting the operators and sesquilinear
forms to the kernel's orthogonal complement. However, we shall not enter these generalizations here to avoid lengthening the exposition. Thus assuming non-degeneracy, we may use the spectral calculus to form other (in general unbounded)
selfadjoint operators which come with a corresponding dense domain.
More precisely, the spectral theorem for bounded 
selfadjoint operators allows us to represent~$\Sig$ as
\[ \Sig = \int_{\sigma(\Sig)} \lambda\: dE_\lambda \]
with a spectral measure~$E$ on the Borel algebra of the compact set~$\sigma(\Sig) \subset \R$.
Given a real-valued Borel function
\[ 
g \::\: \sigma(\Sig) \setminus \{0\} \rightarrow \R \]
(note that this function is finite a.e.\ with respect to $E$ because~$E_{\{0\}}=0$), the operator~$g(\Sig)$ defined by
\beq \label{gspectral}
\begin{split}
g(\Sig) &= \int_{\sigma(\Sig)} g(\lambda)\: dE_\lambda \::\: \D\big( g(\Sig) \big) \subset \scrW^\Omega_\rho \rightarrow \scrW^\Omega_\rho \qquad \text{with} \\
\D\big(g(\Sig) \big) &= \Big\{ \psi \in \subset \scrW^\Omega_\rho \:\Big|\:
\int_{\sigma(\Sig)} \big|g(\lambda) \big|^2 \: d\lla \psi | E_\lambda \psi \rra^\Omega_\rho < \infty \Big\}
\end{split}
\eeq
is densely defined and selfadjoint (see for example~\cite[Theorem VIII.6]{reed+simon}). In particular,
we can introduce the inverse of~$\Sig$ as a selfadjoint operator
\begin{align*}
\Sig^{-1} &= \int_{\sigma(\Sig)} \lambda^{-1}\: dE_\lambda \::\: \D\big( \Sig^{-1} \big) \subset \scrW^\Omega_\rho \rightarrow \scrW^\Omega_\rho \qquad \text{with} \\
\D\big(\Sig^{-1} \big) &= \Big\{ \psi \in \subset \scrW^\Omega_\rho \:\Big|\:
\int_{\sigma(\Sig)} |\lambda|^{-2} \: d\la \psi | E_\lambda \psi \ra^\Omega_\rho < \infty \Big\} \:.
\end{align*}

\subsection{Varying the Surface Layer Integral} \label{secvary}
In the second step of the construction, we want to describe the situation in a scattering process
where particle/anti-particle pairs are created starting from the fermionic vacuum
(see the left of Figure~\ref{figscatter}). 
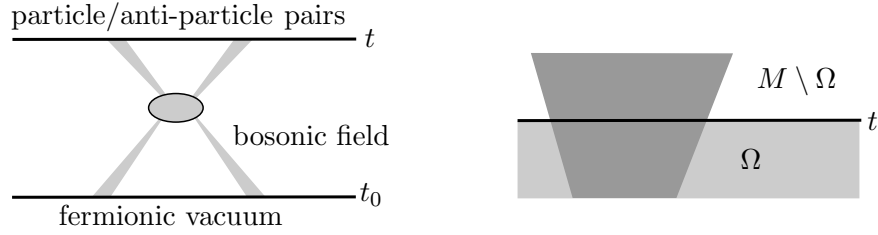
\begin{figure}[tb]
\psscalebox{1.0 1.0} 
{
\begin{pspicture}(-1.5,-1.4659961)(14.455,1.4659961)
\definecolor{colour0}{rgb}{0.8,0.8,0.8}
\definecolor{colour1}{rgb}{0.6,0.6,0.6}
\psframe[linecolor=colour0, linewidth=0.02, fillstyle=solid,fillcolor=colour0, dimen=outer](11.277223,-0.0039474824)(6.732222,-1.0339475)
\pspolygon[linecolor=colour1, linewidth=0.02, fillstyle=solid,fillcolor=colour1](6.922222,0.89105254)(7.4722223,-1.0189475)(8.837222,-1.0189475)(9.582222,0.8860525)
\pspolygon[linecolor=colour0, linewidth=0.02, fillstyle=solid,fillcolor=colour0](3.202222,1.0860525)(2.2522223,0.09605252)(2.9772222,1.0810525)
\pspolygon[linecolor=colour0, linewidth=0.02, fillstyle=solid,fillcolor=colour0](1.2922223,1.0710526)(2.2422223,0.08105252)(1.5172222,1.0660526)
\pspolygon[linecolor=colour0, linewidth=0.02, fillstyle=solid,fillcolor=colour0](3.3722222,-1.0189475)(2.3872223,0.12605251)(3.1472223,-1.0139475)
\pspolygon[linecolor=colour0, linewidth=0.02, fillstyle=solid,fillcolor=colour0](1.0822222,-1.0189475)(2.067222,0.12605251)(1.3072222,-1.0139475)
\psellipse[linecolor=black, linewidth=0.02, fillstyle=solid,fillcolor=colour0, dimen=outer](2.1897223,0.17105252)(0.3725,0.2)
\psline[linecolor=black, linewidth=0.04](0.017222222,1.0810525)(4.587222,1.0860525)
\rput[bl](0.635,-1.3900586){fermionic vacuum}
\rput[bl](0.0,1.1949414){particle/anti-particle pairs}
\psline[linecolor=black, linewidth=0.04](0.012222222,-1.0189475)(4.582222,-1.0139475)
\rput[bl](4.7,0.97494143){$t$}
\rput[bl](4.635,-1.1300586){$t_0$}
\rput[bl](2.945,-0.3300586){bosonic field}
\psline[linecolor=black, linewidth=0.04](6.732222,0.0010525173)(11.302222,0.0060525173)
\rput[bl](11.385,-0.12505859){$t$}
\rput[bl](9.695,-0.6450586){$\Omega$}
\rput[bl](9.915,0.33494142){$M \setminus \Omega$}
\end{pspicture}
}
\caption{A scattering process (left) and a description in terms of linearized solutions (right).}
\label{figscatter}
\end{figure}%
We want to build up the extended Hilbert space~$\H^{\fermi,\Omega}_\rho$
by all physical wave functions which can be generated with this procedure by considering all
possible scattering processes. In this scenario, the pair creation is triggered by bosonic fields
which typically are also present at initial and final times.
However, at least at initial time, the bosonic field should be so weak and/or so spread out that
it has no effect on the commutator inner product. Then, due to the conservation law, the
commutator inner product at time~$t$ again represents the scalar product (see Definition~\ref{defSLrep}).
But we need to take into account that both the physical wave functions and the kernel~$Q^\reg(x,y)$ in Proposition~\ref{prpcomm} will in general change.

This situation can be modelled in a simpler way without referring to an initial state as follows
(see the right of Figure~\ref{figscatter}).
We consider variations~$(\tilde{\rho}_\tau)_{\tau \in [0, \delta]}$
of the measure~$\rho$. For technical simplicity, as in~\cite[Section~2.3]{fockbosonic} 
and in the derivation of the linearized field equations (see Section~\ref{seclinfield}),
we assume that the variation can be written in the form~\eqref{rhotau} with smooth mapping~$f_\tau$
and~$F_\tau$~\eqref{fFsmooth} which depend smoothly on~$\tau$. 
Moreover, we assume that each~$F_\tau$ is injective, closed, and that the inverse on its image
is continuous, implying that
\beq \label{Fbijective}
F_\tau \::\: M \rightarrow \tilde{M}_\tau := \supp \tilde{\rho}_\tau \quad \text{is a homeomorphism}
\eeq
for every~$\tau \in [0, \delta]$.
Next, we assume that the EL equations hold for all~$\tau$. This implies that the jets tangential to the curve
satisfy the linearized field equations, i.e.\
\[ \v_\tau := \frac{d}{d\tau} \big( f_\tau, F_\tau \big) \in \Jlin_{\tilde{\rho}_\tau} \qquad
\text{for all~$\tau \in [0, \delta]$}\:. \]
Again for technical simplicity, we assume that~$\v_\tau$ has spatially compact support, $\v_\tau \in \Jlin_{\tilde{\rho}_\tau, \sc}$.
Moreover, we assume that the commutator inner product does not depend on~$\tau$, i.e.
\beq \label{conscond}
\la u | v \ra^\Omega_{\tilde{\rho}_\tau} = \la u | v \ra^\Omega_\rho \qquad \text{for all~$u,v \in \H^\fermi$ and~$\tau \in [0, \delta]$}\:.
\eeq
This condition is satisfied for a scattering process as discussed on the left of Figure~\ref{figscatter}.

More mathematically, the condition~\ref{conscond} can be understood as follows.
\begin{Lemma} \label{lemmagpreserve}
A variation~$(\tilde{\rho}_\tau)_{\tau \in [0, \delta]}$ of the form~\eqref{rhotau}
with tangential jets~$\v_\tau \in \Jlin_{\tilde{\rho}_\tau, \sc}$ preserves the commutator inner product~\eqref{conscond}
if and only if for every~$u \in \H$, the corresponding commutator jet~$\Comm$
given by~\eqref{jvdef} with~$\scrA$ according to~\eqref{Adef} satisfies the condition
\beq \label{sigmapreserve}
\sigma^\Omega_{\tilde{\rho}_\tau}\big( \Comm, \v_\tau \big) = 0 \qquad \text{for all~$\tau \in [0,\delta]$}\:.
\eeq
\end{Lemma}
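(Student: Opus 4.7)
The plan is to compute $\frac{d}{d\tau}\la u|v \ra^\Omega_{\tilde\rho_\tau}$ explicitly and identify it with a multiple of $\sigma^\Omega_{\tilde\rho_\tau}(\Comm, \v_\tau)$; then $\tau$-constancy is equivalent to vanishing of the symplectic form at every $\tau$. By the polarization formula~\eqref{Commosidef} and the fact that~$\scrA=|u\ra\la u|$ is quadratic in~$u$, it suffices to prove the statement in the diagonal case $u=v$, where we have the identification~\eqref{uutosi}, namely $\la u|u\ra^\Omega_{\tilde\rho_\tau} = \gamma^\Omega_{\tilde\rho_\tau}(\Comm(u))$.

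First I would pull the surface layer integral back to~$M$ via the homeomorphism~$F_\tau$ of~\eqref{Fbijective}. Since~$\v_\tau$ has spatially compact support, $F_\tau$ is the identity outside a compact set, so the image~$F_\tau(\Omega)$ of a past set~$\Omega$ of~$\rho$ lies in the equivalence class of a past set of~$\tilde\rho_\tau$. By Lemma~\ref{lemmaequivalent}, the one-form $\gamma^\Omega_{\tilde\rho_\tau}(\Comm(u))$ is insensitive to this choice within its equivalence class, so with this choice
\[
\gamma^\Omega_{\tilde\rho_\tau}(\Comm(u)) \;=\; \int_\Omega\! d\rho(x)\!\int_{M\setminus\Omega}\!\! d\rho(y)\;(D_{1,\Comm}-D_{2,\Comm})\L_\kappa\!\big(F_\tau(x),F_\tau(y)\big)\,f_\tau(x)\,f_\tau(y).
\]

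Next I would differentiate in~$\tau$ under the integral. Each factor of~$f_\tau$ produces the scalar part of~$\v_\tau$, and each push-forward by~$F_\tau$ produces the vector part, giving
\[
\frac{d}{d\tau}\gamma^\Omega_{\tilde\rho_\tau}(\Comm(u)) \;=\; \int_{\tilde\Omega_\tau}\! d\tilde\rho_\tau(x)\!\int_{\tilde M_\tau\setminus\tilde\Omega_\tau}\!\! d\tilde\rho_\tau(y)\;\big(\nabla_{1,\v_\tau}+\nabla_{2,\v_\tau}\big)(D_{1,\Comm}-D_{2,\Comm})\L_\kappa.
\]
The crucial simplification comes from the commutator identity~\eqref{D12zero}: since $(D_{1,\Comm}+D_{2,\Comm})\L_\kappa\equiv 0$ on all of $\F\times\F$, applying any $\nabla_{i,\v}$ gives $\nabla_{i,\v}D_{2,\Comm}\L_\kappa = -\nabla_{i,\v}D_{1,\Comm}\L_\kappa$. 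Substituting,
\[
\big(\nabla_{1,\v}+\nabla_{2,\v}\big)(D_{1,\Comm}-D_{2,\Comm})\L_\kappa \;=\; 2\big(\nabla_{1,\v}+\nabla_{2,\v}\big)D_{1,\Comm}\L_\kappa.
\]
On the other hand, starting from~\eqref{I2asymm}, using $\nabla_\Comm=D_\Comm$ (trivial scalar part), commuting the jet derivatives acting on different arguments, and again replacing $D_{2,\Comm}\L_\kappa$ by $-D_{1,\Comm}\L_\kappa$, one rewrites
\[
\sigma^\Omega_{\tilde\rho_\tau}(\Comm,\v_\tau) \;=\; \int\!\!\int\big(\nabla_{1,\v_\tau}+\nabla_{2,\v_\tau}\big)D_{1,\Comm}\L_\kappa.
\]
Hence $\frac{d}{d\tau}\gamma^\Omega_{\tilde\rho_\tau}(\Comm(u)) = 2\,\sigma^\Omega_{\tilde\rho_\tau}(\Comm,\v_\tau)$, and constancy in~$\tau$ is equivalent to the symplectic condition~\eqref{sigmapreserve}; the bilinear statement~\eqref{conscond} follows by polarization.

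The main obstacles will be technical justifications rather than conceptual: verifying that differentiation under the double integral is legitimate (using the integrability conditions from commutator regularity and the continuity assumption on $\int\norm Q^\reg(x,y)\norm d\rho(y)$), checking that no spurious boundary contribution arises because~$\Omega$ itself is pulled back along~$F_\tau$ rather than held fixed in~$\F$, and tracking the relabellings so that the sign and factor of~$2$ come out correctly after using the symmetry $\L_\kappa(x,y)=\L_\kappa(y,x)$ and the commutator identity~\eqref{D12zero}.
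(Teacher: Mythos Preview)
Your plan is correct and rests on the same idea as the paper: both reduce to the diagonal case by polarization, pull the integral back to~$M$ via~$F_\tau$, and use the unitary-invariance identity~\eqref{D12zero} to turn the $\tau$-derivative of~$\gamma^\Omega$ into the symplectic form. The organisational difference is that the paper, instead of differentiating~\eqref{D12zero}, subtracts from the integrand the auxiliary quantity
\[
f_\tau(x)\,f_{2\tau_0-\tau}(y)\:\L_\kappa\big(\scrU_s F_\tau(x)\scrU_s^{-1},\,\scrU_s F_{2\tau_0-\tau}(y)\scrU_s^{-1}\big)\:,
\]
which is identically $s$-independent by unitary invariance. After this subtraction the ``diagonal'' pieces (where both the $s$- and the $\tau$-derivative act on the same argument) cancel between the two integrands, and only the cross terms $D_{1,\Comm}\nabla_{2,\v}-D_{2,\Comm}\nabla_{1,\v}$ of~$\sigma^\Omega$ survive. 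The gain is that this trick automatically absorbs the ``jets are never differentiated'' convention: in your step ``applying any $\nabla_{i,\v}$ gives $\nabla_{i,\v}D_{2,\Comm}\L_\kappa=-\nabla_{i,\v}D_{1,\Comm}\L_\kappa$'' the frozen-jet convention actually produces an extra term $D_{i,D_{\bv}\Comm}\L_\kappa$ (since $\Comm(x)=i[\scrA,x]$ depends on~$x$), and a matching extra term appears in your formula for $\frac{d}{d\tau}\gamma$. These corrections cancel, so your final identity $\frac{d}{d\tau}\gamma^\Omega_{\tilde\rho_\tau}(\Comm(u))=2\,\sigma^\Omega_{\tilde\rho_\tau}(\Comm,\v_\tau)$ is right; but you should track them explicitly, whereas the paper's subtraction device makes the cancellation automatic.
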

\Proof In view of the polarization identity, in~\eqref{conscond} it suffices to consider the case~$v=u$
for any given~$u \in \H$. Then, by the fundamental theorem of calculus, \eqref{conscond} is equivalent to the condition
\[ \label{conscond2}
\frac{d}{d\tau} \la u | u \ra^\Omega_{\tilde{\rho}_\tau} = 0 \qquad \text{for all~$\tau \in [0, \delta]$}\:. \]

Setting $\scrU_s = e^{i s \scrA}$, the commutator jet can be written as
\begin{align*}
\la u | u \ra^\Omega_{\tilde{\rho}_\tau} &= \int_{F_\tau(\Omega)} d\tilde{\rho}_\tau(x)
\int_{F_\tau(M \setminus \Omega)} d\tilde{\rho}_\tau(y)\: \big(D_{1,\Comm} - D_{2,\Comm} \big) \L_\kappa(x,y) \\
&= \frac{d}{ds} \int_{F_\tau(\Omega)} d\tilde{\rho}_\tau(x)
\int_{F_\tau(M \setminus \Omega)} d\tilde{\rho}_\tau(y)\: \L_\kappa\big( \scrU_s x \scrU_s^{-1},
\scrU_s^{-1} y \scrU_s \big) \bigg|_{s=0} \:.
\end{align*}
Using the definition of the push-forward measure, we obtain
\beq \label{uufF}
\begin{split}
&\la u | u \ra^\Omega_{\tilde{\rho}_\tau} \\
&= \frac{d}{ds} \int_{\Omega} d\rho(x)
\int_{M \setminus \Omega} d\rho(y)\: f_\tau(x)\: f_\tau(y)\: \L_\kappa\big( \scrU_s \,F_\tau(x)\, \scrU_s^{-1},
\scrU_s^{-1} \,F_\tau(y)\, \scrU_s \big) \bigg|_{s=0} \:.
\end{split}
\eeq
Moreover, the unitary invariance of the Lagrangian implies that for all~$\hat{x}, \hat{y} \in \F$,
\[ 
\L_\kappa(\hat{x},\hat{y}) = \L_\kappa\big( \scrU_s \hat{x} \scrU_s^{-1}, \scrU_s \hat{y} \scrU_s^{-1} \big) \:. \]
Choosing~$\hat{x}=F_\tau(x)$ and~$\hat{y}=F_{2 \tau_0-\tau}(y)$ (for~$\tau_0 \in [0, \delta]$),
multiplying by the functions~$f_\tau(x)$ and~$f_{2 \tau_0-\tau}(y)$
and subtracting the resulting expression from the integrand in~\eqref{uufF}, we conclude that
\begin{align*}
\la u | u \ra^\Omega_{\tilde{\rho}_\tau} &= \frac{d}{ds} \int_{\Omega} d\rho(x)
\int_{M \setminus \Omega} d\rho(y)\:  \\
&\qquad \times
\Big( f_\tau(x)\: f_\tau(y)\: \L_\kappa\big( \scrU_s \,F_\tau(x)\, \scrU_s^{-1}, \scrU_s^{-1} \,F_\tau(y)\, \scrU_s \big) \\
&\qquad \quad - f_\tau(x)\: f_{2 \tau_0-\tau}(y)\: \L_\kappa\big( \scrU_s \,F_\tau(x)\, \scrU_s^{-1}, \scrU_s \,F_{2 \tau_0-\tau}(y)\, \scrU_s^{-1} \big) \Big) \bigg|_{s=0} \:.
\end{align*}
We now take the $\tau$-derivative at~$\tau=\tau_0$. Using that
\[ \frac{d}{d\tau} \big( f_{2 \tau_0-\tau}(y), F_{2 \tau_0-\tau}(y) \big|_{\tau=\tau_0} \big) = - \frac{d}{d\tau} 
\big(f_\tau, F_\tau(y) \big) \:, \]
the terms cancel whenever derivatives act either both at~$x$ or both at~$y$.
A straightforward computation yields
\begin{align*}
&\frac{d}{d\tau} \la u | u \ra^\Omega_{\tilde{\rho}_\tau} \Big|_{\tau=\tau_0} \\
&= \int_{\Omega} d\rho(x) \int_{M \setminus \Omega} \!\!\!\!\!\!d\rho(y)\:
f_{\tau_0}(x)\: f_{\tau_0}(y) \: \big(D_{1,\Comm} \nabla_{2,\v} - D_{2,\Comm} \nabla_{1,\v} \big)
\L_\kappa\big(F_{\tau_0}(x),F_{\tau_0}(y) \big) \\
&=\sigma^\Omega_{\tilde{\rho}_{\tau_0}}\big( \v_{\tau_0}, \Comm \big) \:.
\end{align*}
This concludes the proof.
\QED

\subsection{Transformation of the Varied Surface Layer Integral} \label{secositransform}
In view of our assumption~\eqref{conscond}, the commutator inner product is preserved in~$\tau$.
However, when~$\tau$ is varied, both the physical wave functions and the surface layer inner product
change (note that the kernel~$Q^\reg(x,y)$ and the measure~$\rho$ in~\eqref{Vtinner}
also depend on~$\tau$). In order to work for all~$\tau$ in the same inner product space~$\scrW^\Omega_\rho$,
we need to transform the surface layer inner product for~$\tau \neq 0$ back to the surface layer
inner product for~$\tau=0$. To this end, given a measure~$\tilde{\rho}$,
in this section we shall construct a densely defined
isometry between the indefinite inner product spaces, i.e.\
\[ {\mathscr{I}}^\Omega_{\rho, \tilde{\rho}} : \D\big( {\mathscr{I}}^\Omega_{\rho, \tilde{\rho}} \big) \subset \scrW^\Omega_{\tilde{\rho}} \rightarrow \scrW^\Omega_\rho \qquad \text{with} \qquad
\la \psi | \phi \ra^\Omega_{\tilde{\rho}} = \la {\mathscr{I}}^\Omega_{\rho, \tilde{\rho}}\, \psi \,|\, {\mathscr{I}}^\Omega_{\rho, \tilde{\rho}}\, \phi \ra^\Omega_\rho
\quad \forall\;\psi, \phi \in \D( {\mathscr{I}}^\Omega_{\rho, \tilde{\rho}} ) \:. \]
This mapping exists only under additional assumptions which we 
shall work out on the way. We collect all these assumptions in a condition for~$\tilde{\rho}$ (see Definition~\ref{defadmissible}).

A basic difficulty is that we need to relate wave functions
in the varied spacetime~$\tilde{M}$ to wave functions in the original
spacetime~$M$, making it necessary to identify the corresponding spin spaces
$S_x \leftrightarrow S_{F(x)}$. One way of making this identification canonical
is to fix the gauge as described in~\cite{gaugefix}. For our purposes, it seems sufficient to
use the somewhat simpler method of working with the orthogonal projection $S_x \rightarrow S_{F(x)}$
in~$\H$. Here we make essential use of the fact that the spin spaces are subspaces of~$\H$,
an observation which also lies at the heart of the gauge-fixing procedure in~\cite{gaugefix}.
In order for the construction to fit together with the norm on the respective spin spaces~\eqref{spinnorm}, we also
insert suitable factors of the operators
\[ |x| \big|_{S_x}^{\frac{1}{2}},\; |x| \big|_{S_x}^{-\frac{1}{2}} : S_x \rightarrow S_x \:. \]
This leads us to introduce the mapping
\beq \label{pirhodef}
\begin{split}
&\pi_{\rho, \tilde{\rho}} \::\: C^0_0(\tilde{M}, S\tilde{M}) \rightarrow C^0_0(M, SM) \:, \\
&\big(\pi_{\rho, \tilde{\rho}}\, \psi \big)(x) := |x| \big|_{S_x}^{-\frac{1}{2}} \,\pi_x
\, |F(x)| \big|_{S_{F(x)}}^{\frac{1}{2}} \, \psi \big(F(x) \big) \:.
\end{split}
\eeq
This mapping also gives rise to the positive definite sesquilinear
form~$\lla \pi_{\rho, \tilde{\rho}} \,.\, | \,\pi_{\rho, \tilde{\rho}}
\,.\, \rra^\Omega_\rho$ on~$C^0_0(\tilde{M}, S\tilde{M})$. 
In the next lemma we express this sesquilinear form with respect to the $L^2$-scalar product
in the surface layer.

\begin{Lemma} Let~$\D(\tilde{\T})$ be the space of all
bounded Borel wave functions~$\phi$ on~$\tilde{M}$ which are compactly supported
in the interior of the support of the surface layer measure~\eqref{muOrdef},
\beq \label{suppphi}
\supp \phi \Subset \text{\rm{int}}  \supp \mu^\Omega_{\tilde{\rho}} \:.
\eeq
Then there is a unique linear operator
\[ \tilde{\T} \::\: \D(\tilde{\T}) \rightarrow \scrW^\Omega_{\tilde{\rho}} \]
with the property that
\beq \label{Tdefine}
\lla \pi_{\rho, \tilde{\rho}} \,\psi \, | \,\pi_{\rho, \tilde{\rho}} \,\phi \rra^\Omega_\rho
= \lla \psi \,|\, \tilde{\T} \,\phi \rra^\Omega_{\tilde{\rho}}
\qquad \text{for all~$\psi \in \scrW^\Omega_{\tilde{\rho}}$
and~$\phi \in \D(\tilde{\T})$}\:.
\eeq
This operator is densely defined, symmetric and positive semi-definite.
\end{Lemma}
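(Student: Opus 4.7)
The plan is to realize $\tilde{\T}\phi$ via the Fr\'echet--Riesz theorem applied to the Hilbert space $(\scrW^\Omega_{\tilde{\rho}}, \lla .|. \rra^\Omega_{\tilde{\rho}})$. For fixed $\phi \in \D(\tilde{\T})$, the desired identity~\eqref{Tdefine} exhibits $\tilde{\T}\phi$ as the Riesz dual of the antilinear functional
\[
\Lambda_\phi \::\: \psi \;\longmapsto\; \lla \pi_{\rho,\tilde{\rho}}\,\psi \,|\, \pi_{\rho,\tilde{\rho}}\,\phi \rra^\Omega_\rho \:,
\]
so the task reduces to proving that $\Lambda_\phi$ extends to a bounded antilinear form on $\scrW^\Omega_{\tilde{\rho}}$. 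Uniqueness of $\tilde{\T}\phi$ is then automatic, and linearity of $\tilde{\T}$ follows at once from the linearity of $\pi_{\rho,\tilde{\rho}}$ together with the sesquilinearity of $\lla .|. \rra^\Omega_\rho$.

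The main work is the boundedness estimate. Unwinding~\eqref{pirhodef}, the operators $|x||_{S_x}^{\pm 1/2}$ are chosen precisely so that the factor $|x|$ in~\eqref{spinscalar} is absorbed, giving the pointwise identity
\[
\lla (\pi_{\rho,\tilde{\rho}}\,\psi)(x) \,|\, (\pi_{\rho,\tilde{\rho}}\,\phi)(x) \rra_x = \bigl\la \pi_x\,|F(x)|^{1/2}\,\psi(F(x))\,\big|\,\pi_x\,|F(x)|^{1/2}\,\phi(F(x))\bigr\ra_{\H}\:.
\]
Since $\pi_x$ is an orthogonal projection in $\H$, Cauchy--Schwarz in $\H$ combined with $\norm\!\chi\!\norm_{\tilde{x}}^2 = \la \chi |\,|\tilde{x}|\chi\ra_\H$ bounds this by $\norm\!\psi(F(x))\!\norm_{F(x)}\norm\!\phi(F(x))\!\norm_{F(x)}$. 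Integrating against $\mu^\Omega_\rho$ and changing variables via the homeomorphism~\eqref{Fbijective}, I obtain
\[
|\Lambda_\phi(\psi)| \;\leq\; \int_{\tilde{M}} \norm\!\psi(\tilde{x})\!\norm_{\tilde{x}}\,\norm\!\phi(\tilde{x})\!\norm_{\tilde{x}}\:d(F_*\mu^\Omega_\rho)(\tilde{x}) \:.
\]
Because $\phi$ is bounded and $\supp\phi \Subset \mathrm{int}\,\supp \mu^\Omega_{\tilde{\rho}}$, the push-forward $F_*\mu^\Omega_\rho$ is locally comparable to $\mu^\Omega_{\tilde{\rho}}$ on $\supp\phi$ with bounded Radon--Nikodym derivative there; one final application of Cauchy--Schwarz in $L^2(\mu^\Omega_{\tilde{\rho}})$ on $\supp \phi$ then yields $|\Lambda_\phi(\psi)| \leq C_\phi\, \norm\!\psi\!\norm^\Omega_{\tilde{\rho}}$ with a finite constant $C_\phi$ depending on $\phi$. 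This is what the forthcoming admissibility hypothesis of Definition~\ref{defadmissible} is designed to ensure.

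Granted the existence of $\tilde{\T}\phi$, the remaining algebraic properties follow cheaply. Symmetry is the direct computation
\[
\lla \phi \,|\, \tilde{\T}\psi \rra^\Omega_{\tilde{\rho}} = \lla \pi_{\rho,\tilde{\rho}}\phi \,|\, \pi_{\rho,\tilde{\rho}}\psi \rra^\Omega_\rho = \overline{\lla \pi_{\rho,\tilde{\rho}}\psi \,|\, \pi_{\rho,\tilde{\rho}}\phi \rra^\Omega_\rho} = \overline{\lla \psi \,|\, \tilde{\T}\phi \rra^\Omega_{\tilde{\rho}}} = \lla \tilde{\T}\phi \,|\, \psi \rra^\Omega_{\tilde{\rho}} \:,
\]
positive semi-definiteness is $\lla \phi\,|\,\tilde{\T}\phi\rra^\Omega_{\tilde{\rho}} = \lla \pi_{\rho,\tilde{\rho}}\phi\,|\,\pi_{\rho,\tilde{\rho}}\phi\rra^\Omega_\rho \geq 0$, and dense definedness reduces to the standard density of bounded Borel wave functions compactly supported in $\mathrm{int}\,\supp\mu^\Omega_{\tilde{\rho}}$ inside the $L^2$-type completion $\scrW^\Omega_{\tilde{\rho}}$.

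The hard part is the comparability of $F_*\mu^\Omega_\rho$ and $\mu^\Omega_{\tilde{\rho}}$ on compact subsets of $\mathrm{int}\,\supp\mu^\Omega_{\tilde{\rho}}$. Both measures share the double-integral template~\eqref{muOrdef}, but built from different Lagrangian kernels; the required Radon--Nikodym bound must come from joint smoothness of $f_\tau$, $F_\tau$ and of the regularized kernel $\tilde{Q}^\reg$ via the push-forward relation $\tilde{\rho} = F_*(f\rho)$, and I expect it to be exactly the technical content imposed by the admissibility notion to be introduced next.
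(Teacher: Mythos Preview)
Your overall strategy matches the paper's exactly: fix~$\phi$, show the antilinear functional~$\Lambda_\phi$ is bounded on~$\scrW^\Omega_{\tilde{\rho}}$, apply Fr\'echet--Riesz, and read off symmetry and positivity from the structure of the left side. Your pointwise estimate leading to
\[
|\Lambda_\phi(\psi)| \;\leq\; \int_{\tilde{M}} \norm\!\psi(\tilde{x})\!\norm_{\tilde{x}}\,\norm\!\phi(\tilde{x})\!\norm_{\tilde{x}}\:d(F_*\mu^\Omega_\rho)(\tilde{x})
\]
is exactly the paper's inequality~\eqref{step1}.

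The gap is in how you close the measure comparison. You defer the Radon--Nikodym bound to the forthcoming Definition~\ref{defadmissible}, but that definition does \emph{not} supply it; it concerns injectivity/surjectivity of~$\tilde{\Sig}$, $\hT$, $\pi_{\rho,\tilde{\rho}}$ and spectral properties of~$B$, none of which control the ratio of the two surface layer measures. The paper instead handles this directly and self-containedly at this point, using only the standing continuity assumption~\eqref{Qintegrate} and the support hypothesis~\eqref{suppphi}: writing
\[
d(F_*\mu^\Omega_\rho)(x) = \mathfrak{n}(x)\,d\tilde{\rho}(x)\:, \qquad
d\mu^\Omega_{\tilde{\rho}}(x) = \tilde{\mathfrak{n}}(x)\,d\tilde{\rho}(x)\:,
\]
the density~$\mathfrak{n}$ is continuous by~\eqref{Qintegrate} and~\eqref{Fbijective}, hence bounded on the compact set~$K=\supp\phi$, while~$\tilde{\mathfrak{n}}$ is continuous and strictly positive on~$K$ because~$K\subset\text{int}\,\supp\mu^\Omega_{\tilde{\rho}}$. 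This gives~$\mathfrak{n}\leq c(\phi)\,\tilde{\mathfrak{n}}$ on~$K$, and your Cauchy--Schwarz finishes the bound. So the missing ingredient is already available; you just need to invoke~\eqref{Qintegrate} rather than postpone to admissibility.

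For denseness, the paper is slightly more explicit than your ``standard'' remark: it approximates~$\psi\in C^0_0(\tilde{M},S\tilde{M})$ by~$\psi_k(x)=\chi_{(1/k,\infty)}(\tilde{\mathfrak{n}}(x))\,\psi(x)$ and appeals to dominated convergence, which is worth spelling out since~$\D(\tilde{\T})$ is defined via the interior-of-support condition~\eqref{suppphi}.
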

\Proof Let~$\phi \in \D(\tilde{\T})$. Consider the linear functional
\beq \label{functional}
\psi \mapsto \lla \pi_{\rho, \tilde{\rho}} \,\phi \, | \,\pi_{\rho, \tilde{\rho}}\, \psi \rra^\Omega_\rho \:.
\eeq
Let us verify that this functional is bounded. To this end, we begin with the estimate
\begin{align}
&\big| \lla \pi_{\rho, \tilde{\rho}} \,\phi \, | \,\pi_{\rho, \tilde{\rho}} \,\psi \rra^\Omega_\rho \big| \leq
\int_{M} \norm ( \pi_{\rho, \tilde{\rho}} \phi)(x) \norm_x\:
\norm ( \pi_{\rho, \tilde{\rho}} \psi)(x) \norm_x\: d\mu^\Omega_{\rho}(x) \notag \\
&= \int_{M} \Big\| \pi_x \,|F(x)| \big|_{S_{F(x)}}^{\frac{1}{2}}\,\phi \big(F(x)\big) \Big\|_\H\:
\Big\| \pi_x \,|F(x)| \big|_{S_{F(x)}}^{\frac{1}{2}}\,\psi \big(F(x)\big) \Big\|_\H
\: d\mu^\Omega_{\rho}(x) \notag \\
&\leq \int_{M} \Big\| |F(x)| \big|_{S_{F(x)}}^{\frac{1}{2}}\,\phi \big(F(x)\big) \Big\|_\H\:
\Big\| |F(x)| \big|_{S_{F(x)}}^{\frac{1}{2}}\,\psi \big(F(x)\big) \Big\|_\H
\: d\mu^\Omega_{\rho}(x) \notag \\
&= \int_{M} \norm \phi \big(F(x)\big) \norm_{F(x)}\:
\norm \psi \big(F(x)\big) \norm_{F(x)}\: d\mu^\Omega_{\rho}(x) \notag \\
&= \int_{\tilde{M}} \norm \phi(x)\norm_{x}\:
\norm \psi(x) \norm_{x}\: d\big(F_* \mu^\Omega_{\rho}\big)(x) \:. \label{step1}
\end{align}
In order to relate this integral to the norm~$\norm . \norm^\Omega_{\tilde{\rho}}$, we need to
compare the integration measure~$F_* \mu^\Omega_{\rho}$ with~$\mu^\Omega_{\tilde{\rho}}$.
Using the definition~\eqref{muOrdef}, we can write the measures as
\[ d(F_* \mu^\Omega_\rho)(x) = \mathfrak{n}(x)\: d\tilde{\rho}(x) \qquad \text{and} \qquad
d\mu^\Omega_{\tilde{\rho}}(x) = \tilde{\mathfrak{n}}(x)\: d\tilde{\rho}(x) \:, \]
where the functions~$\mathfrak{n}$ and~$\tilde{\mathfrak{n}}$ are given by
\begin{align*}
\mathfrak{n}(x) &= \chi_{\tilde{\Omega}}(x) \int_{M \setminus \Omega} \norm Q^\reg\big(F^{-1}(x),y \big) \norm \:d\rho(y)
+  \chi_{\tilde{M} \setminus \tilde{\Omega}}(x) \int_{\Omega} \norm Q^\reg\big(F^{-1}(x),y \big) \norm\:d\rho(y) \\
\tilde{\mathfrak{n}}(x) &= \chi_{\tilde{\Omega}}(x) \int_{\tilde{M} \setminus \tilde{\Omega}} \norm Q^\reg(x,y) \norm\:d\tilde{\rho}(y)
+  \chi_{\tilde{M} \setminus \tilde{\Omega}}(x) \int_{\tilde{\Omega}} \norm Q^\reg(x,y) \norm\:d\tilde{\rho}(y) \:.
\end{align*}
(here we used that~$F$ is bijective~\eqref{Fbijective}).
Using~\eqref{Qintegrate} and again~\eqref{Fbijective}, the function~$\mathfrak{n}$ is continuous.
Therefore, denoting the support of~$\phi$ by~$K \Subset \tilde{M}$, this function is bounded on~$K$. The function~$\tilde{\mathfrak{n}}$, on the other hand, is also continuous and strictly positive on~$K$
in view of~\eqref{suppphi}. Therefore, there is a constant~$c=c(\phi)>0$ such that
\[ \mathfrak{n}(x) \leq c(\phi)\: \tilde{\mathfrak{n}}(x) \qquad \text{for all~$x \in K$} \:. \]
Using this inequality in~\eqref{step1}, we conclude that
\beq \label{boundfunc}
\big| \lla \pi_{\rho, \tilde{\rho}} \,\phi \, | \,\pi_{\rho, \tilde{\rho}} \,\psi \rra^\Omega_\rho \big|
\leq c(\phi) \:\int_{K} \norm \phi(x)\norm_{x}\:
\norm \psi(x) \norm_{x}\: d\mu^\Omega_{\tilde{\rho}}(x)
\leq C(\phi) \:\|\psi\|^\Omega_{\tilde{\rho}}
\eeq
for a new constant~$C(\phi)>0$, where in the last step we used the Schwarz inequality.
The inequality~\eqref{boundfunc} shows that the functional~\eqref{functional} is indeed bounded.

Applying the Fr{\'e}chet-Riesz theorem, the functional~\eqref{functional} can be represented
unique\-ly by a vector~$\chi \in \scrW^\Omega_{\tilde{\rho}}$, i.e.
\[ \lla \pi_{\rho, \tilde{\rho}}\, \phi \, | \,\pi_{\rho, \tilde{\rho}}\, \psi \rra^\Omega_\rho
= \lla \chi | \psi \rra^\Omega_{\tilde{\rho}} \qquad \text{for all~$\psi \in \scrW^\Omega_{\tilde{\rho}}$}\:. \]
Setting~$\tilde{\T}(\phi)=\chi$ uniquely defines an operator with the property~\eqref{Tdefine}.

The denseness of~$\D(\tilde{\T})$ follows immediately by approximating
a wave function~$\psi \in C^0_0(\tilde{M}, S\tilde{M})$ by multiplication with characteristic functions,
\[ \psi_k(x) := \chi_{(\frac{1}{k}, \infty)}\big( \tilde{\mathfrak{n}}(x) \big)\: \psi(x) \:, \]
and taking the limit $k \rightarrow \infty$ with the help of Lebesgue's dominated convergence theorem.
The fact that the left side of~\eqref{Tdefine} is real and non-negative
for any~$\psi =\phi \in \D(\tilde{\T})$
yields that~$\tilde{\T}$ is symmetric and positive semi-definite.
\QED
Taking the Friedrichs extension for semi-bounded operators (see for example~\cite[Section~33.3]{lax}), we obtain a
\beq \label{hatTdef}
\text{selfadjoint operator} \qquad\hT \::\: \D\big( \hT \big) \subset 
\scrW^\Omega_{\tilde{\rho}} \rightarrow \scrW^\Omega_{\tilde{\rho}} \:.
\eeq
We assume that this operator is {\em{injective}}. Then, similar as explained for the operator~$\Sig$
after~\eqref{gspectral}, denoting the spectral measure of~$\hT$ by~$\hat{E}$,
given any real-valued Borel function~$g$ on~$\sigma(\hT) \setminus \{0\}$ we
can again apply the spectral theorem to define the selfadjoint operator
\begin{align*}
g(\hT) &= \int_{\sigma(\hT)} g(\lambda)\: d\hat{E}_\lambda \::\: 
D\big( g(\hT) \big) \subset \scrW^\Omega_{\tilde{\rho}} \rightarrow \scrW^\Omega_{\tilde{\rho}} \qquad \text{with}
\\
\D\big( g(\hT) \big) &= \Big\{ \psi \in \scrW^\Omega_{\tilde{\rho}} \:\Big|\:
\int_{\sigma(\Sig)} |g(\lambda)|^2 \: d\lla \psi | \hat{E}_\lambda \psi \rra^\Omega_{\tilde{\rho}} < \infty \Big\} \:.
\end{align*}

Finally, we extend the domain of the operator~$\pi_{\rho, \tilde{\rho}}$ and define its inverse.
In preparation, we note that the
operator~$U_{\rho, \tilde{\rho}} := \pi_{\rho, \tilde{\rho}} \, \hT^{-\frac{1}{2}}$ is an isometry because
\[ \lla U_{\rho, \tilde{\rho}} \,\psi \,|\, U_{\rho, \tilde{\rho}} \,\phi \rra^\Omega_\rho = 
\lla \pi_{\rho, \tilde{\rho}} \, \hT^{-\frac{1}{2}} \,\psi | \pi_{\rho, \tilde{\rho}} \, \hT^{-\frac{1}{2}}\, \phi \rra^\Omega_\rho = \lla \hT^{-\frac{1}{2}} \,\psi | \, \hT \,\hT^{-\frac{1}{2}}\, \phi \rra^\Omega_{\tilde{\rho}} = \lla \psi | \phi \rra^\Omega_{\tilde{\rho}} \:. \]
Hence this operator can be extended continuously to all of~$\scrW^\Omega_{\tilde{\rho}}$.
This makes it possible to introduce the extension
\begin{align}
\pi_{\rho, \tilde{\rho}} = U_{\rho, \tilde{\rho}}\, \hT^{\frac{1}{2}} \::\: \D\big( \pi_{\rho, \tilde{\rho}} \big) \subset
\scrW^\Omega_{\tilde{\rho}} \rightarrow \scrW^\Omega_\rho \:,&&
\D\big( \pi_{\rho, \tilde{\rho}} \big) &:= \D\big( \hT^{\frac{1}{2}} \big) \:.
\label{piext}
\intertext{For technical simplicity, we assume that this operator is {\em{surjective}}. Then~$U_{\rho, \tilde{\rho}}$ is a unitary
operator, making it possible to introduce the operator}
\pi_{\rho, \tilde{\rho}}^{-1} = \hT^{-\frac{1}{2}}\, U_{\rho, \tilde{\rho}}^{-1}\::\: \D\big( \pi_{\rho, \tilde{\rho}}^{-1} \big) \subset \scrW^\Omega_\rho \rightarrow \scrW^\Omega_{\tilde{\rho}} \:,&&
\D\big( \pi_{\rho, \tilde{\rho}}^{-1} \big) &:= U_{\rho, \tilde{\rho}}\, \D\big( \hT^{-\frac{1}{2}} \big)\:. \notag
\end{align}

We now come to the main construction. For simplicity of presentation, we begin with the following formal computation,
\begin{align}
\la \psi \,|\, \phi \ra^\Omega_{\tilde{\rho}} &=
\lla \psi \,|\, \tilde{\Sig} \,\phi \rra^\Omega_{\tilde{\rho}}
= \lla \psi \,|\, \hT\:\hT^{-1}\, \tilde{\Sig} \,\phi \rra^\Omega_{\tilde{\rho}}
= \lla \pi_{\rho, \tilde{\rho}} \, \psi \,|\, \pi_{\rho, \tilde{\rho}} \,\hT^{-1}\, \tilde{\Sig}\,\phi \rra^\Omega_{\rho} \notag \\
&= \lla \pi_{\rho, \tilde{\rho}} \, \psi \,|\, \Sig\, \Sig^{-1}\,\pi_{\rho, \tilde{\rho}} \,\hT^{-1}\, \tilde{\Sig}\,\phi \rra^\Omega_{\rho} 
= \la \pi_{\rho, \tilde{\rho}} \,\psi \,|\, \Sig^{-1}\,\pi_{\rho, \tilde{\rho}} \,\hT^{-1}\, \tilde{\Sig}\,
\phi \ra^\Omega_{\rho} \notag \\
&= \la \pi_{\rho, \tilde{\rho}} \,\psi \,|\, B\,\pi_{\rho, \tilde{\rho}} \, \phi \ra^\Omega_{\rho} \:, \label{kreinrep}
\end{align}
where we introduced the abbreviation
\beq \label{Bdef}
B := \Sig^{-1}\,\pi_{\rho, \tilde{\rho}} \,\hT^{-1}\, \tilde{\Sig}\, \pi_{\rho, \tilde{\rho}}^{-1} \:.
\eeq
In order to give this computation a proper mathematical meaning, we must make sure that
when taking products of operators, the range of each operator is contained in the domain of the factor to its left.
We summarize all the resulting conditions in the following definition.

\begin{Def} \label{defadmissible} The measure~$\tilde{\rho}$
is {\bf{variation-admissible}} if the following conditions hold:
\bitem
\item[\rm{(i)}] There is a one-parameter family of measures~$(\tilde{\rho}_\tau)_{\tau \in [0,\delta]}$
of the form~\eqref{rhotau} with~$\tilde{\rho}_0=\rho$ and~$\tilde{\rho}_\delta=\tilde{\rho}$
which satisfies the EL equations for all~$\tau$ and preserves the commutator inner product~\eqref{conscond}.
\item[\rm{(ii)}] The operators~$\tilde{\Sig}$, $\hT$ (see~\eqref{Sigdef} and~\eqref{hatTdef})
are injective, and the operator~$\pi_{\rho, \tilde{\rho}}$ (see~\eqref{piext}) is surjective.
\item[\rm{(iii)}] The operator product in~\eqref{Bdef} is well-defined in the sense that
\[ \D \big(\hT^{-1} \big) \subset \text{\rm{Rg}} \big( \tilde{\Sig} \big)
\qquad \text{and} \qquad
\D \big( \Sig^{-1} \big) \subset \text{\rm{Rg}} \big( \pi_{\rho, \tilde{\rho}} \,\hT^{-1} \,\tilde{\Sig} \big) \:. \]
\item[\rm{(iv)}] The resulting operator~$B$ in~\eqref{Bdef} is bounded.
The spectrum of its extension~$B \in \Lin(\scrW^\Omega_{\rho})$ does not intersect the
negative real axis,
\[ \sigma(B) \cap \R^-_0 = \varnothing \:. \]
\eitem
\end{Def}
The last assumption is needed for the next step of the construction, where we choose a
closed contour~$\Gamma$ which does not intersect the negative real axis and encloses the spectrum of~$B$ 
with positive orientation and introduce the square root of~$B$ as the contour integral
\[ \sqrt{B} := -\frac{1}{2 \pi i} \ointctrclockwise_\Gamma \sqrt{z}\: \big(B-z)^{-1}\: dz \]
(with the sign of the square root chosen such that $\re \sqrt{z} >0$).
Note that the operator~$B$ is symmetric with respect to the inner product~$\la .|. \ra^\Omega_\rho$ because
\begin{align*}
\la \pi_{\rho, \tilde{\rho}} \,\psi \,|\, B\, \pi_{\rho, \tilde{\rho}} \,\phi \ra^\Omega_\rho &=  
\lla \pi_{\rho, \tilde{\rho}} \,\psi \,|\, \Sig B\, \pi_{\rho, \tilde{\rho}} \,\phi \rra^\Omega_\rho = 
\lla \pi_{\rho, \tilde{\rho}} \,\psi \,|\, \pi_{\rho, \tilde{\rho}} \,\hT^{-1}\, \tilde{\Sig}\, \phi \rra^\Omega_\rho = \lla \psi \,|\, \tilde{\Sig}\, \phi \rra^\Omega_\rho \\
&= \lla \tilde{\Sig}\,\psi \,|\, \phi \rra^\Omega_\rho
= \cdots = \la B\,\pi_{\rho, \tilde{\rho}} \,\psi \,|\, \pi_{\rho, \tilde{\rho}} \,\phi \ra^\Omega_\rho \:.
\end{align*}
As a consequence,
\begin{align*}
&\la \pi_{\rho, \tilde{\rho}}\, \psi \,|\, \sqrt{B}\, \pi_{\rho, \tilde{\rho}}\, \phi \ra^\Omega_\rho
= -\frac{1}{2 \pi i} \ointctrclockwise_\Gamma \sqrt{z}\; \la \pi_{\rho, \tilde{\rho}}\, \psi \,|\, \big(B-z)^{-1} \,\pi_{\rho, \tilde{\rho}}\, \phi \ra^\Omega_\rho \: dz \\
&= -\frac{1}{2 \pi i} \ointctrclockwise_\Gamma \sqrt{z}\; \la \big(B-\overline{z})^{-1} \,\pi_{\rho, \tilde{\rho}} \,\psi \,|\, \pi_{\rho, \tilde{\rho}} \,\phi \ra^\Omega_\rho \: dz \\
&= \la \bigg( \frac{1}{2 \pi i} \ointctrclockwise_{\Gamma} \sqrt{\overline{z}} \:\big(B-\overline{z})^{-1} \: d\overline{z} \bigg)\, \pi_{\rho, \tilde{\rho}}\, \psi \,|\, \pi_{\rho, \tilde{\rho}} \,\phi \ra^\Omega_\rho  \\
&= \la \bigg( -\frac{1}{2 \pi i} \ointctrclockwise_{\overline{\Gamma}} \sqrt{z} \:\big(B-z)^{-1}  \: dz \bigg)\, \pi_{\rho, \tilde{\rho}}\, \psi \,|\, \pi_{\rho, \tilde{\rho}}\, \phi \ra^\Omega_\rho
= \la \sqrt{B}\, \pi_{\rho, \tilde{\rho}}\,\psi \,|\, \pi_{\rho, \tilde{\rho}}\,\phi \ra^\Omega_\rho \:,
\end{align*}
where the minus sign in the last line comes about because the complex conjugate contour has the
opposite orientation. This shows that also the operator~$\sqrt{B}$ is symmetric with respect to~$\la .|. \ra^\Omega_\rho$. A similar computation yields
\begin{align*}
&\la \sqrt{B}\, \pi_{\rho, \tilde{\rho}} \,\psi \,|\, \sqrt{B}\, \pi_{\rho, \tilde{\rho}} \,\phi \ra^\Omega_\rho \\
&= -\frac{1}{4 \pi^2} \ointctrclockwise_{\Gamma} dz \ointctrclockwise_{\Gamma'} dz'\;
\sqrt{z}\:\sqrt{z'} \:\la \pi_{\rho, \tilde{\rho}} \,\psi \,|\, (B-z)^{-1}\: (B-z')^{-1}\, \pi_{\rho, \tilde{\rho}} \,\phi \ra^\Omega_\rho \:.
\end{align*}
In order to simplify the double integral, it is convenient to choose the contours such that~$\Gamma$
encloses~$\Gamma'$. Then~$z$ and~$z'$ are never equal. Applying the resolvent identity, we can carry out the integrals with residues,
\begin{align*}
&\la \sqrt{B}\, \pi_{\rho, \tilde{\rho}} \,\psi \,|\, \sqrt{B}\, \pi_{\rho, \tilde{\rho}} \,\phi \ra^\Omega_\rho \\
&= -\frac{1}{4 \pi^2} \ointctrclockwise_{\Gamma} dz \ointctrclockwise_{\Gamma'} dz'\;
\frac{\sqrt{z}\:\sqrt{z'}}{z-z'} \:\la \pi_{\rho, \tilde{\rho}} \,\psi \,|\, \Big( (B-z)^{-1} - (B-z')^{-1}\Big)\,
\pi_{\rho, \tilde{\rho}} \,\phi \ra^\Omega_\rho \\
&= \frac{1}{4 \pi^2} \ointctrclockwise_{\Gamma} dz \ointctrclockwise_{\Gamma'} dz'\;
\frac{\sqrt{z}\:\sqrt{z'}}{z-z'} \:\la \pi_{\rho, \tilde{\rho}} \,\psi \,|\, (B-z')^{-1}\,
\pi_{\rho, \tilde{\rho}} \,\phi \ra^\Omega_\rho \\
&= -\frac{1}{2 \pi i} \ointctrclockwise_{\Gamma'} dz'\; z'\:\la \pi_{\rho, \tilde{\rho}} \,\psi \,|\, (B-z')^{-1}\,
\pi_{\rho, \tilde{\rho}} \,\phi \ra^\Omega_\rho = 
\la \pi_{\rho, \tilde{\rho}} \,\psi \,|\, B\, \pi_{\rho, \tilde{\rho}} \,\phi \ra^\Omega_\rho \:.
\end{align*}
Using~\eqref{kreinrep}, it follows that
\[ \la \sqrt{B}\, \pi_{\rho, \tilde{\rho}} \,\psi \,|\, \sqrt{B}\, \pi_{\rho, \tilde{\rho}} \,\phi \ra^\Omega_\rho
= \la \psi | \phi \ra^\Omega_{\tilde{\rho}} \:. \]
Therefore, the densely defined operator
\beq \label{Idef}
{\mathscr{I}}^\Omega_{\rho, \tilde{\rho}} = \sqrt{B} \, \pi_{\rho, \tilde{\rho}} \::\: 
\D\big( {\mathscr{I}}^\Omega_{\rho, \tilde{\rho}} \big) \subset \scrW^\Omega_{\tilde{\rho}}
\rightarrow \scrW^\Omega_\rho
\eeq
with~$\D( {\mathscr{I}}^\Omega_{\rho, \tilde{\rho}} ) := \D(\hT^{\frac{1}{2}})$
has the property that
\beq \label{itrans}
\la \psi | \phi \ra^\Omega_{\tilde{\rho}} = \la {\mathscr{I}}^\Omega_{\rho, \tilde{\rho}}\, \psi \,|\, {\mathscr{I}}^\Omega_{\rho, \tilde{\rho}}\, \phi \ra^\Omega_\rho
\qquad \text{for all~$\psi, \phi \in \D( {\mathscr{I}}^\Omega_{\rho, \tilde{\rho}} )$} \:.
\eeq

For clarity, we finally explain how the above construction simplifies in the perturbative description.
For brevity, we do this to first order in an expansion parameter~$\tau \in [0,\delta]$ (the higher orders can be
worked out similarly).
If~$\tau=0$, the measures~$\tilde{\rho}$ and~$\rho$ coincide, implying that~$\pi_{\rho, \tilde{\rho}}$
and~$\hat{T}$ are the identity. Expanding to first order, after a suitable identification of the
spin spaces we obtain
\begin{align*}
\pi_{\rho, \tilde{\rho}} &= \1 + \tau\, \pi_{\rho, \tilde{\rho}}^{(1)} + \O\big(\tau^2 \big) \:,\qquad
\hT = \1 + \tau\, \hT^{(1)} + \O\big(\tau^2 \big) \\
\tilde{\Sig} &= \Sig + \tau\, \Sig^{(1)} + \O\big(\tau^2 \big) \:,\qquad
B = \1 + \tau\, B^{(1)} + \O\big(\tau^2 \big)\quad \text{with} \notag \\
B^{(1)} &= \Sig^{-1}\: \big(\pi_{\rho, \tilde{\rho}}^{(1)} - \hT^{(1)} \big)\: \Sig + \Sig^{-1}\, \Sig^{(1)} - \pi_{\rho, \tilde{\rho}}^{(1)} \notag \\
\sqrt{B} &= \1 + \frac{\tau}{2}\: B^{(1)} + \O\big(\tau^2 \big) \notag \\
{\mathscr{I}}^\Omega_{\rho, \tilde{\rho}} &= \sqrt{B} \, \pi_{\rho, \tilde{\rho}}
= \1 + \tau\,\pi_{\rho, \tilde{\rho}}^{(1)} + \frac{\tau}{2}\: B^{(1)} + \O\big(\tau^2 \big) \notag \\
&= \1 + \frac{\tau}{2} \, \Big ( \pi_{\rho, \tilde{\rho}}^{(1)} + \Sig^{-1}\: \big(\pi_{\rho, \tilde{\rho}}^{(1)} - \hT^{(1)} \big)\: \Sig + \Sig^{(-1)}\, \Sig^{(1)} \Big) + \O\big(\tau^2 \big)  \:.
\end{align*}
For these expression to be well-defined, we need the following assumptions:
\bitem
\item[(i)] $\Sig$ is injective, so that the operator~$\Sig^{-1}$ exists (as a densely defined selfadjoint operator).
\item[(ii)] The operator~$\pi_{\rho, \tilde{\rho}}^{(1)}-\hT^{(1)}$ maps the image of
the operator~$\Sig$ to the domain of~$\Sig^{-1}$.
\item[(iii)] The image of the operator~$\Sig^{(1)}$ lies in the domain of~$\Sig^{-1}$.
\eitem
These conditions need to be verified in the applications.

We finally remark that in Appendix~\ref{appinner} the connection between admissible variations
(see Definition~\ref{defadmissible}) and {\em{inner solutions}} is explained.

\subsection{The Extended Hilbert Space $\H^{\fermi, \Omega}_\rho$} \label{secextend}
We now let~${\mathfrak{M}}$ be a set of variation-admissible measures (see Definition~\ref{defadmissible}).
Then for every~$\tilde{\rho} \in {\mathfrak{M}}$, by combining~\eqref{itrans} with~\eqref{conscond},
we conclude that for any~$u,v \in \H^\fermi$
with~$\tilde{\psi}^u, \tilde{\psi}^v \in \D({\mathscr{I}}^\Omega_{\rho, \tilde{\rho}})$,
\beq \label{conserve}
\la u | v \ra_\H = \la {\mathscr{I}}^\Omega_{\rho, \tilde{\rho}} \,\tilde{\psi}^u \,|\, {\mathscr{I}}^\Omega_{\rho, \tilde{\rho}} \,\tilde{\psi}^v \ra^\Omega_\rho \:.
\eeq
We form the set of wave functions generated by all these measures,
\beq \label{HFextend}
{\mathfrak{G}}^\Omega_\rho := \bigcup_{\tilde{\rho} \in \mathfrak{M}}
\bigg\{ {\mathscr{I}}^\Omega_{\rho, \tilde{\rho}} \,\tilde{\psi}^u \:\bigg|\: u \in \H^\fermi \text{ with } \tilde{\psi}^u \in \D({\mathscr{I}}^\Omega_{\rho, \tilde{\rho}}) \bigg\}  \; \subset\; \scrW^\Omega_\rho
\eeq
(note that~${\mathfrak{G}}^\Omega_\rho$ is in general {\em{not}} a vector space).
We endow~$\text{span} \,{\mathfrak{G}}^\Omega_\rho$ with the inner product induced by~$\la .|. \ra^\Omega_\rho$.
A-priori, this
restriction merely is a sesquilinear form, but it need not be non-degenerate or even positive definite.
We take it as an additional assumption that the restriction is positive definite.

\begin{Def} Assume that~$(\text{\rm{span}}\, {\mathfrak{G}}^\Omega_\rho, \la .|. \ra^\Omega_\rho)$ is a scalar product space.
We refer to its completion as the {\bf{extended Hilbert space}}~$(\H^{\fermi, \Omega}_\rho,
\la .|. \ra^\Omega_\rho)$.
\end{Def} \noindent
Keeping in mind that the choices of~$Q^\reg(x,y)$ in~\eqref{Qreg} as well as~${\mathfrak{M}}$ leave us a lot of freedom, one can take the point of view that these choices should be made in such a way that the
resulting inner product~$\la .|. \ra^\Omega_\rho$ becomes positive definite.

The extended Hilbert space can be thought of as the generalization of the Hilbert space of all Dirac solutions
to the abstract setting of causal fermion systems.
According to~\eqref{conserve}, for any measure~$\tilde{\rho} \in {\mathfrak{M}}$
we have an isometric embedding
\[ \iota^\Omega_{\rho, \tilde{\rho}} := {\mathscr{I}}^\Omega_{\rho, \tilde{\rho}} \circ \Psi_{\tilde{\rho}}|_{\H^\fermi} \::\: \H^\fermi \hookrightarrow \H^{\fermi, \Omega}_\rho \:. \]

\section{A Linear Dynamics on the Extended Hilbert Space} \label{secdynamics}
\subsection{General Idea and Basic Construction}
We now consider two past sets~$\Omega, \Omega' \subset M$ with~$\Omega \subset \Omega'$
such that for both sets, the constructions in Section~\ref{secHextend} apply,
giving rise to the extended Hilbert spaces~$\H^{\fermi, \Omega}_\rho$ and~$\H^{\fermi, \Omega'}_\rho$.
In generalization of the linear dynamics described by the Dirac equation, it would be desirable
to have a linear time evolution on the extended Hilbert spaces, i.e.\ a
\beq \label{Utime}
\text{unitary mapping} \qquad U_\Omega^{\Omega'} : \H^{\fermi, \Omega}_\rho
\rightarrow \H^{\fermi, \Omega'}_\rho \:.
\eeq
We cannot expect to obtain such a linear time evolution in the setting of the previous section,
because the perturbed measure~$\tilde{\rho}$ may involve bosonic fields which
influence the dynamics. But we can hope that by restricting attention to specific
measures~$\mathfrak{M}^\gen \subset {\mathfrak{M}}$, the time evolution becomes unique.
The physical picture for the regularized Dirac sea vacuum
is that these variations describe the creation of
particle/anti-particle pairs surrounded by the linear electromagnetic field generated by them,
plus possible contributions to~$Q$ localized on the light cone which change the behavior of~$\hat{Q}$
on the upper mass shells (as explained in words in Section~\ref{secexmink}).

Given a measure~$\tilde{\rho} \in {\mathfrak{M}}$ and a vector~$u \in \H$, we consider the corresponding
wave functions in the extended Hilbert spaces denoted by
\[ \tilde{\psi}^{u,\Omega} := \iota^\Omega_{\rho, \tilde{\rho}} \,u \in \H^{\fermi, \Omega}_\rho \qquad \text{and} \qquad
\tilde{\psi}^{u,\Omega'} := \iota^{\Omega'}_{\rho, \tilde{\rho}} \,u \in \H^{\fermi, \Omega'}_\rho \:. \]
Note that these two wave functions are not defined globally in spacetime, but merely
as equivalence classes of wave functions in the corresponding surface layers.
We have the situation in mind that the two surface layers are separated by a sufficiently ``thick'' time strip
(see Figure~\ref{figevolve}).
\begin{figure}
\psscalebox{1.0 1.0} 
{
\begin{pspicture}(3,28.14666)(8.5560255,30.84999)
\definecolor{colour0}{rgb}{0.9019608,0.9019608,0.9019608}
\definecolor{colour1}{rgb}{0.7019608,0.7019608,0.7019608}
\pspolygon[linecolor=colour0, linewidth=0.02, fillstyle=solid,fillcolor=colour0](0.050922852,28.754166)(0.47592285,28.834166)(1.2209228,28.899164)(2.1859229,28.939165)(3.125923,28.944164)(3.7759228,28.929165)(5.120923,28.784164)(5.948423,28.699165)(6.8559227,28.624165)(7.680923,28.614164)(8.530923,28.659164)(8.535923,28.171665)(0.05592285,28.166664)
\pspolygon[linecolor=colour1, linewidth=0.02, fillstyle=solid,fillcolor=colour1](0.04092285,30.371666)(0.48592284,30.401665)(0.99592286,30.431665)(1.5859228,30.471664)(2.0459228,30.481665)(2.8509228,30.486666)(3.4159229,30.466665)(4.0659227,30.421665)(4.5409226,30.386665)(5.030923,30.381664)(5.5859227,30.396666)(6.100923,30.421665)(7.1484227,30.506664)(7.823423,30.581665)(8.278423,30.616665)(8.518423,30.631664)(8.523423,28.681665)(8.138423,28.651665)(7.343423,28.636665)(6.7734227,28.651665)(6.0434227,28.686665)(5.3109226,28.756664)(4.6859226,28.821665)(3.918423,28.906666)(3.468423,28.951666)(2.8634229,28.956665)(1.4259229,28.931665)(0.81592286,28.891665)(0.37092286,28.836664)(0.06092285,28.771666)
\psbezier[linecolor=black, linewidth=0.04](0.030922852,30.376665)(1.4753677,30.510399)(2.8820548,30.54923)(4.160923,30.4216650390625)(5.439791,30.2941)(7.7077,30.596653)(8.52759,30.63)
\psbezier[linecolor=black, linewidth=0.04](0.050922852,28.751665)(0.65092283,28.951666)(2.450923,28.951666)(3.250923,28.9516650390625)(4.050923,28.951666)(6.340923,28.471664)(8.540923,28.671665)
\psbezier[linecolor=black, linewidth=0.02](0.010922852,30.576666)(1.4553677,30.7104)(2.8620548,30.74923)(4.140923,30.6216650390625)(5.419791,30.4941)(7.6877,30.796654)(8.507589,30.829998)
\psbezier[linecolor=black, linewidth=0.02](0.030922852,30.196665)(1.4753677,30.330399)(2.8820548,30.36923)(4.160923,30.2416650390625)(5.439791,30.1141)(7.7077,30.416653)(8.52759,30.449999)
\psbezier[linecolor=black, linewidth=0.02](0.050922852,28.931665)(0.65092283,29.131664)(2.450923,29.131664)(3.250923,29.1316650390625)(4.050923,29.131664)(6.340923,28.651665)(8.540923,28.851665)
\psbezier[linecolor=black, linewidth=0.02](0.030922852,28.551664)(0.63092285,28.751665)(2.4309227,28.751665)(3.230923,28.7516650390625)(4.030923,28.751665)(6.320923,28.271666)(8.520923,28.471664)
\rput[bl](0.7,28.3){$\Omega$}
\rput[bl](0.3,30.8){$M \setminus \Omega'$}
\rput[bl](3.4,29.5){$L:= \Omega' \setminus \Omega'$}
\rput[bl](8.7,30.45){surface layer at~$\partial \Omega'$}
\rput[bl](8.7,28.5){surface layer at~$\partial \Omega$}
\end{pspicture}
}
\caption{Time evolution from~$\H^{\fermi, \Omega}_\rho$ to~$\H^{\fermi, \Omega'}_\rho$.}
\label{figevolve}
\end{figure}
In this case, it is sensible to assume that there is a global wave function~$\tilde{\psi}^u \in C^0(M, SM)$ which 
in the respective surface layers coincides with~$\tilde{\psi}^{u,\Omega}$ and~$\tilde{\psi}^{u,\Omega'}$
(but clearly, the wave function~$\tilde{\psi}^u$ is far from unique, because it can be changed arbitrarily
away from the surface layers).
We set~$L:= \Omega' \setminus \Omega$ and
introduce~$\K_L$ as the Krein space with indefinite inner product
\beq
\bra \psi | \phi \ket_{\K_L} := \int_L \Sl \psi(x) | \phi(x) \Sr_x\: d\mu_t(x) \label{kreinL}
\eeq
and the topology induced by the scalar product
\[ \lla .|. \rra_{L} := \int_L \lla \psi(x) | \phi(x) \rra_x\: d\mu_t(x) \]
(where we used again the notation~\eqref{spinscalar}).
Moreover, we denote the restriction operator to the set~$L$ by
\[ \chi_L : C^0(M, SM) \rightarrow \K_L\:,\qquad \psi \mapsto \psi|_L \]
(extending this restriction by zero to all of~$M$, the operator~$\chi_L$ can also be
regarded as the multiplication by a characteristic function).
Since the commutator inner product is preserved by the variation of the measure, we know that
\beq \label{omcons}
\la \tilde{\psi}^{u,\Omega} | \tilde{\psi}^{u,\Omega} \ra^{\Omega}_\rho
= \la \tilde{\psi}^{u,\Omega'} | \tilde{\psi}^{u,\Omega'} \ra^{\Omega'}_\rho = \la u | u \ra_\H \:.
\eeq
As a consequence, a direct computation using~\eqref{Vtinner} gives
\begin{align}
0 &= \la \tilde{\psi}^u | \tilde{\psi}^u \ra^{\Omega'}_\rho - \la \tilde{\psi}^u | \tilde{\psi}^u \ra^{\Omega}_\rho \notag \\
&= -2i \int_L \Big( \Sl (Q^\reg\, \tilde{\psi}^u)(x) \:|\: \tilde{\psi}^u(x) \Sr_x  - \Sl \tilde{\psi}^u(x) \:|\: (Q^\reg\, \tilde{\psi}^u)(x) \Sr_x \Big)\: d\mu_t \notag \\
&= -2i \,\Big( \bra \chi_L \,(Q^\reg\, \tilde{\psi}^u) \:|\: \chi_L \,\tilde{\psi}^u \ket_{\K_t}  - \bra \chi_L \,\tilde{\psi}^u \:|\: \chi_L \,(Q^\reg\, \tilde{\psi}^u) \ket_{\K_L} \Big) \:. \label{consL}
\end{align}

Clearly, this computation can be carried out for any~$u \in \H$.
Polarizing, we conclude that
\beq \label{Qsymm}
\bra \chi_L \,(Q^\reg\, \tilde{\psi}^u) \:|\: \chi_L \,\tilde{\psi}^v \ket_{\K_t}  = \bra \chi_L \,\tilde{\psi}^u \:|\: \chi_L \,(Q^\reg\, \tilde{\psi}^v) \ket_{\K_L} \qquad \text{for all~$u, v \in \H^\fermi$}\:.
\eeq
In order to bring this equation into a more familiar form, we make the simplifying assumption
\beq \label{kerL}
\ker \chi_L \big|_{\iota_{\rho, \tilde{\rho}}(\H^\fermi)} \subset \ker \chi_L \,Q^\reg \big|_{\iota_{\rho, \tilde{\rho}}(\H^\fermi)} \:,
\eeq
where~$\iota_{\rho, \tilde{\rho}}$ is.a mapping which to every~$u \in \H^\fermi$ associates the
corresponding extended wave function in spacetime,
\[ \iota_{\rho, \tilde{\rho}} \::\: \H^\fermi \rightarrow C^0(M, SM)\:,  \quad u \mapsto \tilde{\psi}^u \:. \]

\begin{Lemma} Under the assumption~\eqref{kerL}, there is a linear operator
\[ R_L \::\: \text{\rm{span}}\, \iota_{\rho, \tilde{\rho}}(\H^\fermi) \subset \K_L \rightarrow \K_L \]
with the property that
\beq \label{TQR}
\chi_L \,Q^\reg \big|_{\iota_{\rho, \tilde{\rho}}(\H^\fermi)} = R_L\, \chi_L \big|_{\iota_{\rho, \tilde{\rho}}(\H^\fermi)} \:.
\eeq
\end{Lemma}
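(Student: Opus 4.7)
The statement is a textbook factorization lemma from linear algebra, applied to the two linear maps $\chi_L|_{\iota_{\rho,\tilde{\rho}}(\H^\fermi)}$ and $\chi_L\,Q^\reg|_{\iota_{\rho,\tilde{\rho}}(\H^\fermi)}$, both defined on the vector space $\iota_{\rho,\tilde{\rho}}(\H^\fermi) \subset C^0(M,SM)$. The hypothesis~\eqref{kerL} is precisely the statement that the kernel of the first is contained in the kernel of the second, which is the standard condition guaranteeing that the second factors through the first.

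Concretely, my plan is to first note that the assignment $u \mapsto \tilde{\psi}^u = \iota_{\rho,\tilde{\rho}}(u)$ is linear, so that every element of $\iota_{\rho,\tilde{\rho}}(\H^\fermi)$ has the form $\tilde{\psi}^u$ for some $u \in \H^\fermi$. I then \emph{define} $R_L$ on the span of the image $\chi_L\bigl(\iota_{\rho,\tilde{\rho}}(\H^\fermi)\bigr)$ by the formula
\[
R_L\bigl(\chi_L\,\tilde{\psi}^u\bigr) := \chi_L\,Q^\reg\,\tilde{\psi}^u\,,
\]
extended by linearity to finite linear combinations. The identity~\eqref{TQR} then holds tautologically by construction, so the only real content is well-definedness of $R_L$ on the generators.

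For well-definedness, I would argue as follows. Suppose $u_1,u_2 \in \H^\fermi$ satisfy $\chi_L\,\tilde{\psi}^{u_1} = \chi_L\,\tilde{\psi}^{u_2}$. By linearity of $\iota_{\rho,\tilde{\rho}}$, we have $\tilde{\psi}^{u_1}-\tilde{\psi}^{u_2} = \tilde{\psi}^{u_1-u_2} \in \iota_{\rho,\tilde{\rho}}(\H^\fermi)$ and $\chi_L\,\tilde{\psi}^{u_1-u_2} = 0$, so this difference lies in $\ker\chi_L|_{\iota_{\rho,\tilde{\rho}}(\H^\fermi)}$. By the hypothesis~\eqref{kerL}, it also lies in $\ker\chi_L\,Q^\reg|_{\iota_{\rho,\tilde{\rho}}(\H^\fermi)}$, giving $\chi_L\,Q^\reg\,\tilde{\psi}^{u_1} = \chi_L\,Q^\reg\,\tilde{\psi}^{u_2}$. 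Hence the value of $R_L$ does not depend on the choice of representative $u$, and linearity in the extension to the span is automatic by the same argument applied to finite linear combinations.

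There is essentially no obstacle; the only point worth flagging is the mild abuse of notation in the statement, where $\text{span}\,\iota_{\rho,\tilde{\rho}}(\H^\fermi)$ is declared to sit inside $\K_L$. The consistent reading is that elements of $\iota_{\rho,\tilde{\rho}}(\H^\fermi)$ are identified with their restrictions to $L$ via $\chi_L$, so that the domain of $R_L$ is genuinely the subspace $\chi_L\bigl(\text{span}\,\iota_{\rho,\tilde{\rho}}(\H^\fermi)\bigr) \subset \K_L$. Once this identification is made, the construction above produces the desired operator.
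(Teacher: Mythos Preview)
Your proof is correct and is essentially the same as the paper's: both establish the standard linear-algebra factorization of $\chi_L Q^\reg$ through $\chi_L$ using the kernel inclusion~\eqref{kerL}. The paper phrases it by choosing a complement $E$ of $\ker\chi_L|_{\iota_{\rho,\tilde{\rho}}(\H^\fermi)}$ and setting $R_L := \chi_L Q^\reg \chi_L^{-1}$ on $\chi_L(E)$, whereas you define $R_L$ directly on generators and check well-definedness; these are the two usual presentations of the same argument, and your remark on the notational identification of the domain inside~$\K_L$ is apt.
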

\Proof On the kernel of~$\chi_L$, the equation~\eqref{TQR} is trivially satisfied in view of~\eqref{kerL}.
Therefore, it suffices to arrange~\eqref{TQR} on a complement~$E$ of~$\ker \chi_L \big|_{\iota_{\rho, \tilde{\rho}}(\H^\fermi)}$ in~$\iota_{\rho, \tilde{\rho}}(\H^\fermi)$. On this complement, the operator~$\chi_L$
is injective by construction, and thus the mapping~$\chi_L|_E \::\: E \rightarrow \chi_L(E)$ is a bijection.
We denote its inverse by~$\chi_L^{-1} \::\: \chi_L(E) \rightarrow E$. Multiplying~\eqref{TQR} 
from the right by this inverse gives
\[ R_L \big|_{\chi_L(E)} = \chi_L \,Q^\reg\, \chi_L^{-1} \big|_{\chi_L(E)}\:. \]
Taking this equation as the definition of~$R_L$ concludes the proof.
\QED
Using~\eqref{TQR} in~\eqref{Qsymm}, we obtain the equation
\[ 
\bra R_L\, \chi_L \, \tilde{\psi}^u \:|\: \chi_L \,\tilde{\psi}^v \ket_{\K_t}  = \bra \chi_L \,\tilde{\psi}^u \:|\: R_L\, \chi_L\, \tilde{\psi}^v \ket_{\K_L} \qquad \text{for all~$u, v \in \H^\fermi$}\:, \]
which states that~$R_L$ is a symmetric operator on the subspace~$\chi_L \iota_{\rho, \tilde{\rho}} \H^\fermi$
of the Krein space~$\K_L$.

The basic idea is to take~\eqref{TQR} as the starting point for formulating
\beq \label{diracgen}
\chi_L \,Q^\reg \,\psi = R_L\, \chi_L \, \psi
\eeq
as the equation describing the dynamics on the extended Hilbert space. Being a linear equation for~$\psi$,
it can be viewed as a generalization of the Dirac equation to the setting of causal fermion systems.
The symmetric operator~$R_L$ on~$\K_L$ plays the role of an (in general nonlocal) potential.

Before implementing this idea, we must overcome the following difficulties:
In~\eqref{TQR}, the equation~\eqref{diracgen} holds only on the subspace of wave
functions~$\iota_{\rho, \tilde{\rho}} \H^\fermi$ which clearly depends on our choice of~$\tilde{\rho}$.
If we choose another measure~$\tilde{\rho}' \in {\mathfrak{M}}$ 
(where~${\mathfrak{M}}$ is again the set of admissible measures introduced
at the beginning of Section~\ref{secextend}) we get~\eqref{diracgen}
on another subspace~$\iota_{\rho, \tilde{\rho}'} \H^\fermi$, possibly with a different potential~$R'_L$.
In order to give~\eqref{diracgen} a universal meaning, we must ensure that the different potentials~$R_L, R_L', \ldots$
are all compatible with each other, making it possible to ``lift'' all the equations~\eqref{TQR} to the
single equation~\eqref{diracgen}.

The resulting compatibility conditions can be analyzed most conveniently in a linear perturbation expansion.
To this end, we consider variations~$(\tilde{\rho}_\tau)_{\tau \in [0,\delta]}$ of measures in
a subset~${\mathfrak{M}}^\gen \subset {\mathfrak{M}}$.
Using~\eqref{rhotau}, the infinitesimal generators of these variations
\[ \v := \frac{d}{d\tau} (f_\tau, F_\tau) \Big|_{\tau=0} \]
generate a space of jets denoted by~$\J^\gen_\rho$. We denote the first variation of the corresponding
wave functions in the surface layers by
\[ D\Psi^\Omega(\v, u) := \frac{d}{d\tau} \iota^\Omega_{\rho, \tilde{\rho}_\tau} \,u \Big|_{\tau=0} \qquad \text{and} \qquad
D\Psi^{\Omega'}(\v, u) := \frac{d}{d\tau} \iota^{\Omega'}_{\rho, \tilde{\rho}_\tau} \,u \Big|_{\tau=0} \:. \]
We thus obtain mappings
\[ D\Psi^\Omega \::\: \J^\gen_\rho \times \H^\fermi \rightarrow \H^{\fermi, \Omega}_\rho \qquad \text{and} \qquad
D\Psi^{\Omega'} \::\: \J^\gen_\rho \times \H^\fermi \rightarrow \H^{\fermi, \Omega'}_\rho \:, \]
which are real-linear in the first and complex-linear in the second argument.
We want to achieve that~\eqref{diracgen} holds for all the wave functions~$\psi$ obtained by our
variation~$(\tilde{\rho}_\tau)$ for all~$\tau$. Since~\eqref{diracgen} is linear in~$\psi$,
this means that this equation must hold order by order in perturbation theory.
In particular, it must hold for the contribution linear in~$\tau$. Repeating the consideration
leading to~\eqref{TQR} backwards, we conclude that also the conservation law in~\eqref{omcons} should hold if
we replace the wave function by its linearizations, i.e.\
\[ \la D\Psi^\Omega(\v, u) \,|\, D\Psi^\Omega(\v, u) \ra^{\Omega}_\rho
= \la D\Psi^{\Omega'}(\v, u) \,|\, D\Psi^{\Omega'}(\v, u) \ra^{\Omega'}_\rho \:. \]
This conservation law is indeed a good starting point for the formulation of the compatibility conditions,
which can be regarded as a polarized version of this equation.
\begin{Def} \label{defgen}
The jet space~$\J^\gen_\rho \subset \Jlin_\rho$ is a {\bf{compatible generator}} of the extended Hilbert spaces~$\H^{\fermi, \Omega}_\rho$
and~$\H^{\fermi, \Omega'}_\rho$ if the following conditions hold:
\bitem
\item[\rm{(i)}] Every~$\v \in \J^\gen_\rho$ is the infinitesimal generator of a variation~$(\tilde{\rho}_\tau)_{\tau \in [0,\delta]}$ of measures in the variation-admissible set~${\mathfrak{M}}$.
\item[\rm{(ii)}] The images of~$\Psi$ and~$D\Psi^\Omega$ generate a dense
subspace of~$\H^{\fermi, \Omega}_\rho$.
Likewise, the images of~$\Psi$ and~$D\Psi^{\Omega'}$ are dense in~$\H^{\fermi, \Omega'}_\rho$.
\item[\rm{(iii)}] The corresponding scalar products are preserved by the time evolution, i.e.\
for all~$\v, \v' \in \J^\gen_\rho$ and all~$u,u' \in \H^\fermi$,
\begin{align}
\big\la D\Psi^\Omega(\v,u) \,\big|\, \Psi(u') \big\ra^\Omega_\rho &= 
\big\la D\Psi^{\Omega'}(\v,u) \,\big|\, \Psi(u') \big\ra^{\Omega'}_\rho \label{Apres0} \\
\big\la D\Psi^\Omega(\v,u) \,\big|\, D\Psi^\Omega(\v',u') \big\ra^\Omega_\rho &= 
\big\la D\Psi^{\Omega'}(\v,u) \,\big|\, D\Psi^{\Omega'}(\v',u') \big\ra^{\Omega'}_\rho \:. \label{Apres}
\end{align}
\end{itemize}
\end{Def} \noindent
We remark for clarity that~(i) entails that the measures~$\tilde{\rho}_\tau$ are variation-admissible
(see Definition~\ref{defadmissible}) and that every~$\v \in \J^\gen_\rho$ satisfies~\eqref{sigmapreserve}.
We also remark that the condition in~(iii) will be analyzed further in Section~\ref{secsolvcons} below.

This definition immediately gives rise to the desired unitary operator~\eqref{Utime}.
Moreover, the dynamical equation~\eqref{diracgen} can be established as follows.
Again extending the wave functions arbitrarily in the region away from the surface layers, we obtain a mapping
\[ D\Psi \::\: \J^\gen_\rho \times \H^\fermi \rightarrow \H^\fermi_\rho \]
with the property that the restrictions to the surface layers coincides with~$D\Psi^\Omega$ and~$D\Psi^{\Omega'}$,
where the scalar product~$\la .|. \ra_\rho$ is induced by~$\la .|. \ra^\Omega_\rho$ or~$\la .|. \ra^{\Omega'}_\rho$
(which coincide according to~\eqref{Apres0} and~\eqref{Apres}).
Proceeding as in~\eqref{consL}, it follows that~\eqref{Qsymm} holds on~$(\H^\fermi_\rho, \la .|. \ra_\rho)$, i.e.\
\[ 
\bra \chi_L \,(Q^\reg\, \psi) \:|\: \chi_L \,\psi' \ket_{\K_t}  = \bra \chi_L \,\psi \:|\: \chi_L \,(Q^\reg\, \psi') \ket_{\K_L} \qquad \text{for all~$\psi, \psi' \in \H^\fermi_\rho$}\:. \]
Assuming similar to~\eqref{kerL} that
\[ 
\ker \chi_L \big|_{\H^\fermi_\rho} \subset \ker \chi_L \,Q^\reg \big|_{\H^\fermi_\rho} \:, \]
we finally obtain the evolution equation~\eqref{diracgen} 
for all~$\H^\fermi_\rho$ with~$R_L$ a symmetric operator
on the Krein space~$\K_L$.

\subsection{Solving the Compatibility Conditions} \label{secsolvcons}
The above general construction has the shortcoming that it is not obvious 
in which situations the condition in Definition~\ref{defgen}~(iii) can be fulfilled. 
The basic difficulty is that~\eqref{Apres} involves pairs of jets~$\v, \v' \in \J^\gen_\rho$,
making it impossible to choose~$\J^\gen_\rho$ as a subspace of the set of all jets
with certain properties. We now improve the situation by giving a strategy
for satisfying the condition in Definition~\ref{defgen}~(iii).

Assume that~$\J^\gen_\rho$ has a complex structure. Thus every~$\v \in \J^\gen_\rho$ has the decomposition
into holomorphic and anti-holomorphic components
\[ \v = z + \overline{z} \:. \]
Setting
\[ D\Psi^\Omega(\alpha z + \beta \overline{z}, u) :=
\frac{1}{2}\: D\Psi^\Omega \big((\alpha+\beta)(z+\overline{z}), u \big) + 
\frac{1}{2i}\: D\Psi^\Omega \big(i (\alpha-\beta)(z-\overline{z}), u \big) \:, \]
we extend~$D\Psi^\Omega$ to a complex linear functional its first argument.
\begin{Lemma} \label{lemmaconsist}
Assume that for all~$\v \in \J^\gen_\rho$ and~$u \in \H^\fermi$,
\begin{align}
\big\la D\Psi^\Omega(z,u) \,\big|\,\Psi(u) \big\ra^\Omega_\rho &= 
\big\la D\Psi^{\Omega'}(z,u) \,\big|\, \Psi(u) \big\ra^{\Omega'}_\rho \label{c0} \\
\big\la D\Psi^\Omega(z,u) \,\big|\, D\Psi^\Omega(z,u) \big\ra^\Omega_\rho &= 
\big\la D\Psi^{\Omega'}(z,u) \,\big|\, D\Psi^{\Omega'}(z,u) \big\ra^{\Omega'}_\rho \label{c1} \\
\big\la D\Psi^\Omega(\overline{z},u) \,\big|\, D\Psi^\Omega(\overline{z},u) \big\ra^\Omega_\rho &= 
\big\la D\Psi^{\Omega'}(\overline{z},u) \,\big|\, D\Psi^{\Omega'}(\overline{z},u) \big\ra^{\Omega'}_\rho \\
\big\la D\Psi^\Omega(z,u) \,\big|\, D\Psi^\Omega(\overline{z},u) \big\ra^\Omega_\rho &= 
\big\la D\Psi^{\Omega'}(z,u) \,\big|\, D\Psi^{\Omega'}(\overline{z},u) \big\ra^{\Omega'}_\rho \:. \label{c3}
\end{align}
Then the condition in Definition~\ref{defgen}~{\rm{(iii)}} is satisfied.
\end{Lemma}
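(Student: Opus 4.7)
The plan is to deduce the compatibility conditions \eqref{Apres0} and \eqref{Apres} of Definition~\ref{defgen}~(iii) from the hypotheses \eqref{c0}--\eqref{c3} by two successive polarizations: first in the complex-linear second argument $u$ of $D\Psi^\Omega$ and $\Psi$, and then in the jet variable, using the complex linear extension of $D\Psi^\Omega(\cdot,u)$ provided by the complex structure on $\J^\gen_\rho$.

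First I would polarize in $u$. Each of \eqref{c0}--\eqref{c3} has the form $\la A(u)\,|\,B(u)\ra^\Omega_\rho = \la A'(u)\,|\,B'(u)\ra^{\Omega'}_\rho$ with $A, B, A', B'$ complex-linear in $u \in \H^\fermi$, so that the corresponding difference
\[
(u_1, u_2) \;\longmapsto\; \la A(u_1)\,|\,B(u_2)\ra^\Omega_\rho - \la A'(u_1)\,|\,B'(u_2)\ra^{\Omega'}_\rho
\]
is sesquilinear in $(u_1, u_2)$ and vanishes on the diagonal $u_1 = u_2$. The standard polarization identity therefore forces it to vanish for all $u_1, u_2$, upgrading each hypothesis to an equation with independent vectors in the two slots of the inner product.

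Next, I insert $\v = z + \bar z$ and $\v' = z' + \bar z'$ into \eqref{Apres0} and \eqref{Apres}. By the complex linearity of $D\Psi^\Omega(\cdot,u)$ (and analogously of $D\Psi^{\Omega'}$), the difference of the two sides of \eqref{Apres} expands into a sum of four bilinear contributions, indexed by the pairings $(z, z')$, $(z, \bar z')$, $(\bar z, z')$, $(\bar z, \bar z')$. The pure pairing $(z, z')$ vanishes by further polarizing \eqref{c1} in $z$: the difference is a sesquilinear form in $(z_1, z_2)$ ranging over the $+i$-eigenspace, and by the previous step it vanishes whenever $z_1 = z_2$, hence everywhere. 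The pairing $(\bar z, \bar z')$ vanishes analogously from the third (unlabeled) hypothesis. Likewise, the difference of the two sides of \eqref{Apres0} splits into contributions from $z$ and $\bar z$, killed respectively by the polarized \eqref{c0} and its complex conjugate.

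The essential step, which I expect to be the main obstacle, is the cancellation of the mixed contributions $(z, \bar z')$ and $(\bar z, z')$ in \eqref{Apres}. For this I would polarize \eqref{c3} in the jet variable by substituting $\v \mapsto \v + \lambda \v'$ for $\lambda \in \C$, interpreted via the complex structure on $\J^\gen_\rho$ so that multiplication by $\lambda = \alpha + i\beta$ corresponds to the real operator $\alpha\,\mathrm{Id} + \beta J$; under this substitution the holomorphic/antiholomorphic parts are $z + \lambda z'$ and $\bar z + \overline{\lambda}\,\bar z'$. Combining the resulting identity with its complex-conjugate counterpart (giving the analog of \eqref{c3} in which $z$ and $\bar z$ exchange roles) and invoking the Hermitian symmetry $\la \phi | \psi \ra^\Omega_\rho = \overline{\la \psi | \phi \ra^\Omega_\rho}$ should produce the required cancellation. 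The delicate point is that a naive polarization of \eqref{c3} yields only an antisymmetric constraint on the mixed terms; promoting this to the full vanishing demanded by \eqref{Apres} requires exploiting the complex structure on $\J^\gen_\rho$ independently in $\v$ and in $\v'$, together with conjugate symmetry, and keeping careful track throughout of which slots are complex-linear and which are conjugate-linear.
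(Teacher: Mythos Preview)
Your treatment of \eqref{Apres0} has a genuine gap. You propose to split
\[
\big\la D\Psi^\Omega(\v,u)\,\big|\,\Psi(u')\big\ra^\Omega_\rho
= \big\la D\Psi^\Omega(z,u)\,\big|\,\Psi(u')\big\ra^\Omega_\rho
+ \big\la D\Psi^\Omega(\overline z,u)\,\big|\,\Psi(u')\big\ra^\Omega_\rho
\]
and to kill the two summands by ``the polarized~\eqref{c0} and its complex conjugate.'' The first summand is indeed handled by~\eqref{c0} (after $u$-polarization). But the complex conjugate of~\eqref{c0} is
$\la \Psi(u)\,|\,D\Psi^\Omega(z,u)\ra^\Omega_\rho=(\Omega')$,
which is \emph{not} the second summand $\la D\Psi^\Omega(\overline z,u)\,|\,\Psi(u')\ra^\Omega_\rho$. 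Applying~\eqref{c0} to~$J\v$ instead of~$\v$ only multiplies~\eqref{c0} by a phase and gives nothing new. What is missing is an independent input: the paper uses the first-order expansion of the conservation law~\eqref{omcons} (guaranteed by Definition~\ref{defgen}(i) via~\eqref{conscond}) for the one-parameter family generated by $\v_\alpha=e^{i\alpha}z+e^{-i\alpha}\overline z=\cos\alpha\,\v+\sin\alpha\,J\v$. Varying~$\alpha$ forces
\[
\big\la D\Psi^\Omega(\overline z,u)\,\big|\,\Psi(u)\big\ra^\Omega_\rho
+ \big\la \Psi(u)\,\big|\,D\Psi^\Omega(z,u)\big\ra^\Omega_\rho = (\Omega')\:,
\]
and subtracting the conjugate of~\eqref{c0} isolates exactly the missing $\overline z$-identity~\eqref{c01}. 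Without invoking~\eqref{omcons} you cannot obtain~\eqref{c01}, and hence not~\eqref{Apres0}.

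For~\eqref{Apres} your outline coincides with the paper's: polarize in~$u$, then in the jet variable, and expand $\v=z+\overline z$, $\v'=z'+\overline z'$. You are right to flag that a naive polarization of~\eqref{c3} in the jet is bilinear rather than sesquilinear (under the $J$-structure, $\v\mapsto\la D\Psi^\Omega(z,u)\,|\,D\Psi^\Omega(\overline z,u')\ra$ scales like $\overline\lambda^{\,2}$), so on its own it yields only the antisymmetric combination of the two mixed terms. The paper handles this by the terse phrase ``complex polarization''; your proposal names the issue but does not actually close it. In either case, the essential new idea you are missing relative to the paper is the use of the conservation law~\eqref{omcons} as an extra equation, not merely conjugation and polarization of the stated hypotheses.
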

\Proof We first show that~\eqref{c0} implies~\eqref{Apres0}. To this end, we first consider
the conservation law~\eqref{omcons} for the variation generated by~$e^{i \alpha} z + e^{-i \alpha} \overline{z}$
to first order. We thus obtain
\begin{align*}
0 &= \frac{d}{d\tau} \Big( \la \tilde{\psi}^{u,\Omega} | \tilde{\psi}^{u,\Omega} \ra^{\Omega}_\rho -
\la \tilde{\psi}^{u,\Omega'} | \tilde{\psi}^{u,\Omega'} \ra^{\Omega'}_\rho \Big) \Big|_{\tau=0} \\
&= 2\, \re \bigg\{ e^{i \alpha} \Big(
\big\la D\Psi^\Omega(\overline{z},u) \,\big|\,\Psi(u) \big\ra^\Omega_\rho + 
\big\la \Psi(u) \,\big|\, D\Psi^\Omega(z,u) \big\ra^\Omega_\rho \\
&\qquad\qquad\quad - \big\la D\Psi^{\Omega'}(\overline{z},u) \,\big|\, \Psi(u) \big\ra^{\Omega'}_\rho
- \big\la \Psi(u) \,\big|\, D\Psi^{\Omega'}(z,u) \big\ra^{\Omega'}_\rho \Big) \bigg\} \:.
\end{align*}
Since this holds for all~$\alpha$, the term in the round brackets must vanish. Combining this
with~\eqref{c0}, we conclude that
\beq \label{c01}
\big\la D\Psi^\Omega(\overline{z},u) \,\big|\,\Psi(u) \big\ra^\Omega_\rho = 
\big\la D\Psi^{\Omega'}(\overline{z},u) \,\big|\, \Psi(u) \big\ra^{\Omega'}_\rho \:.
\eeq

By complex polarization in~$u$, it follows that~\eqref{c0} and~\eqref{c01} also hold
for general~$u,u' \in \H$, i.e.\
\beq \label{cpol}
\begin{split}
\big\la D\Psi^\Omega(z,u) \,\big|\,\Psi(u') \big\ra^\Omega_\rho &= 
\big\la D\Psi^{\Omega'}(z,u) \,\big|\, \Psi(u') \big\ra^{\Omega'}_\rho \\
\big\la D\Psi^\Omega(\overline{z},u) \,\big|\,\Psi(u') \big\ra^\Omega_\rho &= 
\big\la D\Psi^{\Omega'}(\overline{z},u) \,\big|\, \Psi(u') \big\ra^{\Omega'}_\rho \:.
\end{split}
\eeq
Rewriting the scalar products~\eqref{Apres0} according to
\[ \big\la D\Psi^\Omega(\v,u) \,\big|\, \Psi(u') \big\ra^\Omega_\rho =
\big\la D\Psi^\Omega(\v,z) \,\big|\, \Psi(u') \big\ra^\Omega_\rho
+ \big\la D\Psi^\Omega(\v,\overline{z}) \,\big|\, \Psi(u') \big\ra^\Omega_\rho \]
(and similarly for~$\Omega'$), multiplying out and applying~\eqref{cpol}
gives~\eqref{Apres0}.

In order to derive~\eqref{Apres}, we first note that, by complex polarization, 
the equations~\eqref{c1}--\eqref{c3} are
also satisfied if the arguments on the right side of the scalar products take more general values, i.e.\
for all~$\v,\v' \in \J^\gen_\rho$ and~$u,u' \in \H^\fermi$,
\begin{align}
\big\la D\Psi^\Omega(z,u) \,\big|\, D\Psi^\Omega(z',u') \big\ra^\Omega_\rho &= 
\big\la D\Psi^{\Omega'}(z,u) \,\big|\, D\Psi^{\Omega'}(z',u') \big\ra^{\Omega'}_\rho \label{cn1} \\
\big\la D\Psi^\Omega(\overline{z},u) \,\big|\, D\Psi^\Omega(\overline{z'},u') \big\ra^\Omega_\rho &= 
\big\la D\Psi^{\Omega'}(\overline{z},u) \,\big|\, D\Psi^{\Omega'}(\overline{z'},u') \big\ra^{\Omega'}_\rho \\
\big\la D\Psi^\Omega(z,u) \,\big|\, D\Psi^\Omega(\overline{z'},u') \big\ra^\Omega_\rho &= 
\big\la D\Psi^{\Omega'}(z,u) \,\big|\, D\Psi^{\Omega'}(\overline{z'},u') \big\ra^{\Omega'}_\rho \:. \label{cn3}
\end{align}
Rewriting the scalar products in~\eqref{Apres} according to
\[ \big\la D\Psi^\Omega(\v,u) \,\big|\, D\Psi^\Omega(\v',u') \big\ra^\Omega_\rho
= \big\la D\Psi^\Omega(z+\overline{z},u) \,\big|\, D\Psi^\Omega(z'+\overline{z}',u') \big\ra^\Omega_\rho \]
(and similarly for~$\Omega'$), we can multiply out and apply~\eqref{cn1}--\eqref{cn3}.
This gives the result.
\QED
We point out that this Lemma poses conditions for each jet~$\v \in \J^\gen_\rho$.
Therefore, choosing~${\mathfrak{C}}$ as the set of all jets~$\v$ which satisfy these conditions
as well as the condition in Definition~\ref{defgen}~(i),
one can choose~$\J^\gen_\rho$ as a maximal subspace of~${\mathfrak{C}}$.
After doing so, the remaining question is whether the condition
in Definition~\ref{defgen}~(ii) is satisfied. In other words, the remaining issue is whether
the class of wave functions generated by~$\J^\gen_\rho$ is sufficiently large to
include all the wave functions in~$\H^{\fermi, t}_\rho$. This question also depends
on how large the set of measures~${\mathfrak{M}}$ in~\eqref{HFextend} is chosen.
Indeed, by adapting~${\mathfrak{M}}$ one can always arrange that~$\J^\gen_\rho$ is a compatible
generator. Proceeding in this way, the question remains whether the resulting extended Hilbert
space~$\H^{\fermi, t}_\rho$ includes all the wave functions of physical interest. Clearly, this question
can be answered only in a case-by-case basis depending on the concrete applications being considered.

\begin{Remark} {\bf{(The compatibility conditions for commutator jets)}} \label{remrolecomm} {\em{
It is a natural question whether the commutator jets should be included in~$\J^\gen_\rho$.
We now explain why in general this is not possible.

Let~$\v$ be a commutator jet of the form~\eqref{lincomm}.
Clearly, this jet is the infinitesimal generator of a variation given by~\eqref{Urhodef}
with~$\scrU=\scrU_\tau=e^{i \tau \scrA}$.
Moreover, as is verified in detail in Appendix~\ref{appcommute},
this variation preserves the symplectic form~\eqref{sigmapreserve} (see Corollary~\ref{corollaryA}).
This suggests that commutator jets should be included in~$\J^\gen_\rho$.
However, in general commutator jets do {\em{not}} satisfy the conditions~\eqref{Apres0} and~\eqref{Apres}
(see Proposition~\ref{prpA} and the explanation thereafter, where the connection to local gauge transformations is made). Therefore, the commutator jet cannot be included in the jet space~$\J^\gen_\rho$.
One should think of~$\J^\gen_\rho$ as formed of jets which do change the physical system,
but in a way where the perturbations of the physical wave functions respect current conservation. }} \QEDrem
\end{Remark}

We finally remark that in Appendix~\ref{appinner}, the compatibility conditions 
in Definition~\ref{defgen}~{\rm{(iii)}} are discussed for {\em{inner solutions}}.

\subsection{The Dynamical Wave Equation} \label{secdwe}
In the previous section, we considered the time evolution from a surface layer near~$\partial \Omega$
to a surface layer near~$\partial \Omega'$ (see Figure~\ref{figevolve}).
For our constructions to apply, we had to assume that the surface layers were separated
by a sufficiently large time strip. Under this assumption, the extended wave functions in the
two surface layers could be matched to a global wave function~$\psi$, leaving us
with a freedom to modify the wave function in the intermediate time strip, away from the
surface layers. We now explain how this construction can be extended to obtain a unitary dynamics
on a globally defined Hilbert space denoted by~$(\H^\fermi_\rho, \la .|. \ra_{\rho})$.

There are two possible methods. We begin with the first method and mention the alternative
method at the end of this section. The first method is to proceed step by step in time
by joining time strips together, as shown on the left of Figure~\ref{figevolvecont}.
\begin{figure}
\psscalebox{1.0 1.0} 
{
\begin{pspicture}(0,28.102541)(12.0931835,30.176922)
\definecolor{colour0}{rgb}{0.5019608,0.5019608,0.5019608}
\pspolygon[linecolor=colour0, linewidth=0.02, fillstyle=solid,fillcolor=colour0](6.670642,29.602192)(6.670642,29.51719)(6.990642,29.502192)(7.470642,29.482191)(7.890642,29.482191)(8.370642,29.492191)(8.870642,29.522192)(9.440642,29.547192)(10.110642,29.547192)(10.640642,29.512192)(11.265642,29.457191)(12.058142,29.38219)(12.058142,29.457191)(11.613142,29.492191)(11.058142,29.547192)(10.618142,29.582191)(10.118142,29.617191)(9.573142,29.627192)(8.633142,29.592192)(8.043142,29.567192)(7.523142,29.562191)(6.988142,29.57719)
\pspolygon[linecolor=colour0, linewidth=0.02, fillstyle=solid,fillcolor=colour0](0.050642088,29.942192)(0.050642088,29.877192)(0.3756421,29.867191)(0.75064206,29.87219)(1.2506421,29.902191)(1.8406421,29.937191)(2.375642,29.942192)(2.940642,29.922192)(3.455642,29.89219)(4.150642,29.83719)(4.610642,29.80219)(5.088142,29.76719)(5.433142,29.73719)(5.433142,29.797192)(4.858142,29.83719)(4.2281423,29.88219)(3.7981422,29.912191)(3.4531422,29.942192)(2.938142,29.97219)(2.498142,29.992191)(1.9081421,29.992191)(1.463142,29.97219)(1.0331421,29.947191)(0.6431421,29.922192)(0.2681421,29.92719)
\pspolygon[linecolor=colour0, linewidth=0.02, fillstyle=solid,fillcolor=colour0](0.02064209,29.164692)(0.02564209,29.074692)(0.29564208,29.064692)(0.6706421,29.04969)(1.0806421,29.044691)(1.4906421,29.04969)(1.9806421,29.064692)(2.3606422,29.089691)(2.960642,29.119692)(3.385642,29.109692)(3.835642,29.084692)(4.328142,29.03469)(4.843142,28.984692)(5.413142,28.94469)(5.413142,29.019691)(4.828142,29.05969)(4.253142,29.119692)(3.738142,29.164692)(3.3181422,29.199692)(2.893142,29.199692)(2.4931421,29.17469)(1.890642,29.13969)(1.4156421,29.129692)(0.6456421,29.129692)(0.19564208,29.144691)
\pspolygon[linecolor=colour0, linewidth=0.02, fillstyle=solid,fillcolor=colour0](0.028142089,28.289692)(0.028142089,28.19469)(0.23814209,28.189692)(0.6531421,28.199692)(0.99314207,28.21969)(1.4081421,28.249691)(1.833142,28.28469)(2.178142,28.289692)(2.553142,28.294691)(2.9931421,28.279692)(3.4631422,28.254692)(3.9006422,28.22969)(4.350642,28.20469)(5.4181423,28.14219)(5.4181423,28.20219)(5.148142,28.207191)(4.7081423,28.23719)(4.358142,28.262192)(3.938142,28.307192)(3.478142,28.33719)(3.053142,28.36219)(2.503142,28.377192)(2.018142,28.367191)(1.5581421,28.34719)(0.9106421,28.29969)(0.3056421,28.27469)
\psbezier[linecolor=black, linewidth=0.02](0.020306535,28.290926)(0.9318769,28.276152)(1.42182,28.381847)(2.4717948,28.385490481591567)(3.5217695,28.389133)(4.3478827,28.24095)(5.429403,28.21346)
\psbezier[linecolor=black, linewidth=0.02](0.020589633,28.183907)(0.9318101,28.154648)(1.4433712,28.292543)(2.493271,28.279498168738066)(3.5431712,28.266453)(4.346825,28.17516)(5.4277716,28.130486)
\pspolygon[linecolor=colour0, linewidth=0.02, fillstyle=solid,fillcolor=colour0](6.6681423,28.289692)(6.6681423,28.19469)(6.878142,28.189692)(7.2931423,28.199692)(7.633142,28.21969)(8.048142,28.249691)(8.473142,28.28469)(8.818142,28.289692)(9.193142,28.294691)(9.633142,28.279692)(10.103142,28.254692)(10.540642,28.22969)(10.990643,28.20469)(12.058142,28.14219)(12.058142,28.20219)(11.788142,28.207191)(11.348142,28.23719)(10.998142,28.262192)(10.578142,28.307192)(10.118142,28.33719)(9.693142,28.36219)(9.143142,28.377192)(8.658142,28.367191)(8.198142,28.34719)(7.550642,28.29969)(6.945642,28.27469)
\psbezier[linecolor=black, linewidth=0.02](6.660674,28.186874)(7.5713143,28.143135)(8.175002,28.307884)(9.224563,28.27815524065594)(10.274122,28.248426)(10.971224,28.184372)(12.051324,28.122524)
\psbezier[linecolor=black, linewidth=0.02](6.6605897,28.293907)(7.5718102,28.264647)(8.068371,28.392544)(9.118271,28.379498168738063)(10.168171,28.366453)(11.001825,28.25016)(12.082771,28.205486)
\psbezier[linecolor=black, linewidth=0.02](0.020674055,29.066874)(0.9313144,29.023136)(1.4500024,29.012884)(2.5195625,29.08315524065594)(3.5891225,29.153425)(4.3462243,28.989372)(5.426324,28.927525)
\psbezier[linecolor=black, linewidth=0.02](0.020674055,29.166874)(0.9313144,29.123137)(1.7300024,29.127884)(2.6045625,29.19315524065594)(3.4791226,29.258427)(4.3462243,29.089373)(5.426324,29.027525)
\psbezier[linecolor=black, linewidth=0.02](0.040674053,29.946875)(0.9513144,29.903135)(1.4650024,30.032885)(2.5145626,30.003155240655943)(3.5641227,29.973427)(4.3662243,29.869371)(5.446324,29.807524)
\psbezier[linecolor=black, linewidth=0.02](0.040674053,29.866875)(0.9513144,29.823135)(1.4650024,29.952885)(2.5145626,29.92315524065594)(3.5641227,29.893427)(4.3662243,29.789371)(5.446324,29.727524)
\psbezier[linecolor=black, linewidth=0.02](6.6617537,29.509823)(7.666584,29.441618)(8.247269,29.473185)(9.236224,29.526779254836764)(10.225179,29.580374)(10.985526,29.448586)(12.064507,29.36958)
\psbezier[linecolor=black, linewidth=0.02](6.6617537,29.609823)(7.5715837,29.551617)(8.047269,29.578186)(9.136224,29.626779254836762)(10.225179,29.675373)(10.990525,29.548584)(12.069507,29.46958)
\psline[linecolor=black, linewidth=0.04, arrowsize=0.05291667cm 2.0,arrowlength=1.4,arrowinset=0.0]{<-}(9.060642,29.037191)(9.060642,29.437191)
\psline[linecolor=black, linewidth=0.04, arrowsize=0.05291667cm 2.0,arrowlength=1.4,arrowinset=0.0]{<-}(9.060642,30.137192)(9.060642,29.73719)
\rput[bl](2.3,28.55){$L_1$}
\rput[bl](2.3,29.4){$L_2$}
\rput[bl](2.3,30.2){$L_3$}
\rput[bl](8,28.8){$L$}
\rput[bl](12.2,29.28){$\partial \Omega$}
\rput[bl](12.2,28.03){$\partial \Omega'$}
\rput[bl](-0.75,28.1){$\partial \Omega_1$}
\rput[bl](-0.75,29){$\partial \Omega_2$}
\rput[bl](-0.75,29.8){$\partial \Omega_3$}
\end{pspicture}
}
\caption{Deriving the continuous time evolution by patching together time strips (left)
or by varying a surface layer (right).}
\label{figevolvecont}
\end{figure}
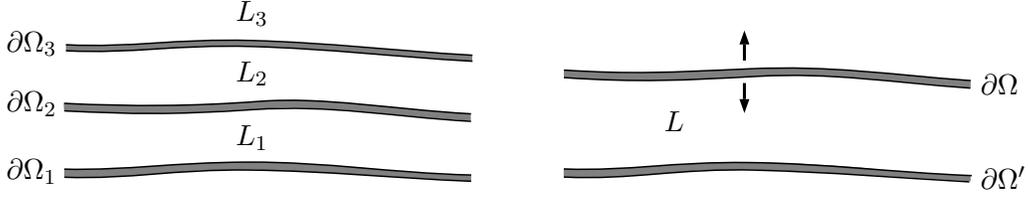
In order to get a good ``time resolution,'' it is clearly preferable to choose the time strips as ``thin'' as possible,
subject to the constraint that the wave functions in the adjacent surface layers must be compatible.
In this way, we obtain a Hilbert space of wave functions~$\H^\fermi_\rho$, where the
scalar product coincides with~$\la .|. \ra^{\Omega_\ell}_\rho$ for any of the
sets~$\Omega_1 \subset \Omega_2 \subset \ldots$. In order to combine the equations~\eqref{diracgen}
in the individual strips to a single equation, we write them as
\beq \label{diracgenalt}
\chi_{L_\ell} \,\big( Q^\reg - R_{L_\ell} \,\chi_{L_\ell} \big) \,\psi = 0 \:.
\eeq
Since the~$R_{L_\ell}$ map to~$\K_{L_\ell}$, extending by zero to~$\K_M$ and using orthogonality,
we can also write~\eqref{diracgenalt} as
\[ \chi_{L_\ell} \,\Big( Q^\reg - \sum_{\ell'} R_{L_{\ell'}} \,\chi_{L_{\ell'}} \Big) \,\psi = 0
\qquad \text{with} \qquad R_{L_{\ell'}} : \K_{L_{\ell'}} \rightarrow \K_M \:. \]
Carrying out the inner sum by setting
\[ R := \sum_{\ell'} R_{L_{\ell'}} \,\chi_{L_{\ell'}} \::\: \K_M \rightarrow \K_M \quad \text{symmetric} \:, \]
we conclude that for all~$\ell$ the equation
\beq \label{QL}
\chi_{L_\ell} \,\big( Q^\reg - R \big) \,\psi = 0
\eeq
holds. Moreover, the fact that~$R$ was constructed as a sum of operators acting on the
mutually orthogonal subspaces~$\K_{L_{\ell'}}$, we know that
\beq \label{Rinv}
\chi_{L_\ell} \,R = \chi_{L_\ell} \,R\, \chi_{L_\ell} \:.
\eeq
Writing~$R$ as an integral operator with kernel~$R(x,y)$, this kernel is symmetric in the sense that
\beq \label{Rsymm}
R(x,y)^* = R(y,x) \:.
\eeq
Moreover, the relation~\eqref{Rinv} means that the kernel
vanishes unless both arguments lie in the same time strip, i.e.
\[ R(x,y) = 0 \qquad \text{if~$x \in L_\ell$ and~$y \in L_{\ell'}$ with~$\ell \neq \ell'$}\:. \]
This in turn means that surface layer integrals formed of the kernel~$R(x,y)$ vanish
on any of the surface layers under consideration, i.e.\ symbolically
\[ \int_{\Omega_\ell} d\rho(x) \int_{M \setminus \Omega_\ell} d\rho(y)\: \cdots R(x,y) \cdots = 0 \:. \]
As a consequence, modifying the kernel~$Q^\reg(x,y)$ in our surface layer integrals
by the kernel~$R(x,y)$ does not change the values of the surface layer integrals.
This makes it possible to simplify our setting by introducing the abbreviation
\beq \label{Qdyndef}
Q^\dyn(x,y) := Q^\reg(x,y) - R(x,y) \:.
\eeq
and to change the definition~\eqref{Vtinner} to
\beq \label{OSIdyn}
\begin{split}
\la \psi | \phi \ra^\Omega_\rho = -2i \,\bigg( \int_{\Omega} \!d\rho(x) \int_{M \setminus \Omega} \!\!\!\!\!\!\!d\rho(y) 
&- \int_{M \setminus \Omega} \!\!\!\!\!\!\!d\rho(x) \int_{\Omega} \!d\rho(y) \bigg)\\
&\times\:
\Sl \psi(x) \:|\: Q^\dyn(x,y)\, \phi(y) \Sr_x \:.
\end{split}
\eeq
Since both~$Q^\reg(x,y)$ and~$R(x,y)$ are symmetric (see~\eqref{Qregsymm} and~\eqref{Rsymm}), so is~$Q^\dyn(x,y)$,
\beq \label{Qdynsymm}
Q^\dyn(x,y)^* = Q^\dyn(y,x) \:.
\eeq

We again point out that this redefinition does not change the value of any of our surface layer integrals.
Moreover, we write~\eqref{QL} in the shorter form
\beq \label{dynwave}
L \,Q^\dyn \,\psi = 0 \:,
\eeq
which holds for any set~$L$ of the form
\beq \label{Uform}
L \in {\mathfrak{A}} := \big\{ \Omega_\ell \setminus \Omega_{\ell-1} \:\big|\: \ell = 2,3,\ldots \big\} \:.
\eeq
The equation~\eqref{dynwave} is the {\em{dynamical wave equation}}
already discussed in the introduction. The sets in~${\mathfrak{A}}$ are referred to as
being {\em{con\-serva\-tion-admissible}}. The kernel~$Q^\dyn(x,y)$ is symmetric~\eqref{Qdynsymm}.
We note for clarity that we did not specify the regularity of the kernel~$R(x,y)$ in~\eqref{Qdyndef}.
For doing so, we would need to have more information on the operators~$R_{L_\ell}$ in~\eqref{diracgenalt}.
Consequently, also the new kernel~$Q^\reg(x,y)$ need not be continuous.
Not specifying the regularity of this kernel has the advantage that it became possible to
also absorb the right side in~\eqref{ELQreg} into this kernel. In particular, this kernel
may involve a $\delta$-contribution on the diagonal like~$-\mathfrak{r}\: \delta(x,y)$,
where the ``Dirac distribution'' merely is a convenient notation for the computation rule
\[ \int_M f(y)\: \delta(x,y)\: d\rho(y) = f(x) \qquad \text{for all~$f \in C^0(M, \R)$}\:. \]

The second, alternative strategy would be to work with one time strip, but to vary the set~$\Omega$, as
shown on the right of Figure~\ref{figevolvecont}. This procedure has the advantage that
the variation may involve arbitrarily ``thin'' time strips or even work with a continuous foliation
of spacetime. But there is the drawback that it is not obvious whether the operators~$R_L$
in~\eqref{diracgen} are compatible.
Here we do not decide for one or the other strategy. We rather take the point of view
that, as far as the macroscopic dynamics is concerned, both method give the same result.
Namely, the dynamics is described by an equation of the form~\eqref{dynwave},
where~$U$ can be chosen in a discrete or continuous family of conservation-admissible sets~${\mathfrak{A}}$,
which can be thought of as time strips between Cauchy surface layers.

\subsection{Example: The Regularized Minkowski Vacuum} \label{secexmink2}
We now return to the example in Section~\ref{secexmink}.
Choosing~$Q^\sing$ as in Proposition~\ref{prpQsing}, we arranged that~$Q^\reg$
is well-defined in the limit~$\varepsilon \searrow 0$.
The EL equations~\eqref{ELQreg} are satisfied for all Dirac solutions on the lower mass shell.
Moreover, it was shown in~\cite[Section~5]{noether} that the
surface layer integral~\eqref{OSIreg} coincides
(up to an irrelevant prefactor) with the usual scalar product on the Dirac solutions
induced by the conserved probability current.

Varying the system as explained in Section~\ref{secvary} generates physical wave functions
having contributions on the upper mass shell. After the transformations in Section~\ref{secositransform},
the scalar product of these wave functions is again given by the
surface layer integral~\eqref{Vtinner}. This surface layer integral can be computed again
with the methods in~\cite[Section~5]{noether}. It involves the $\omega$-derivative of~$\hat{Q}^\reg$
on the upper mass shell. For the more detailed explanation, we first consider the
simplified situation with one Dirac sea and discuss the case with several generations afterward.
In analogy to~\cite[eq.~(5.42)]{noether}, for a system of one Dirac sea
the surface layer integral is proportional to the momentum integral over the upper and lower
mass shells
\beq \label{OSIonesea}
\la \psi | \phi \ra^\Omega_\rho = c \sum_\pm
\int \frac{d^3k}{(2 \pi)^3} \, \Sl  \hat{\psi}(\vec k ) \: |\: 
\Big( (\partial^+_\omega + \partial^-_\omega) \hat{Q}^\reg \big( \pm \omega(\vec k) , \vec k \,\big) \Big)
 \hat{\phi}_\pm(\vec k ) \: \Sr \:,
\eeq
where~$c$ is a constant, and~$\omega(\vec{k}):= \sqrt{|\vec{k}|^2+m^2}$ and~$\phi_\pm$ are the Dirac solutions of
positive and negative frequency (and~$(\partial^+_\omega + \partial^-_\omega) \hat{Q}^\reg$ denote
directional derivatives).
Next, choosing the hypersurfaces on the left of Figure~\ref{figevolvecont} as the surface~$t=\text{const}$
and considering the limiting case where the ``time steps'' between two neighboring hypersurfaces is small,
the condition~\eqref{Rinv} means that the operator~$R$ is purely spatial.
Using that our system is translation invariant, this means that the Fourier transform of~$R$
is a multiplication operator depending on~$\vec{k}$. Hence, following~\eqref{Qdyndef},
\beq \label{Rhat}
\hat{Q}^\dyn(\omega, \vec{k}) = \hat{Q}^\reg(\omega, \vec{k}) + \hat{R}(\vec{k}) \:.
\eeq
Since~$\hat{R}$ does not depend on~$\omega$, we may replace the kernel~$\hat{Q}^\reg$
in~\eqref{OSIonesea} by~$\hat{Q}^\dyn$, giving agreement with~\eqref{OSIdyn}.
The dynamical wave equation~\eqref{dynwavenew} takes the form
\beq \label{Qdynonesea}
\hat{Q}^\dyn \big(\pm \omega(\vec{k}), \vec{k} \big) \, \hat{\psi}_\pm(\vec{k}) = 0 \:.
\eeq

In order to get agreement with Dirac theory, the dynamical wave equation~\eqref{Qdynonesea}
should give back the Dirac equation, whereas~\eqref{OSIonesea} should give Dirac current conservation.
Let us discuss how to get agreement, and how the construction depends on the choices of~$Q^\sing$
and~$R$: In order for~\eqref{OSIonesea} to reproduce current conservation, the
$\omega$-derivatives of~$\hat{Q}$ must combine to a constant times~$\gamma^0$,
\beq \label{currcorr}
(\partial^+_\omega + \partial^-_\omega) \hat{Q}^\reg \big( \pm \omega(\vec k) , \vec k \,\big)
= c'\, \gamma^0
\eeq
(with another constant~$c'$). On the lower mass shell, this identity was verified
by explicit computation in~\cite[Section~5]{noether}.
However, there is no reason why~\eqref{currcorr} should also hold on the upper mass shell.
Indeed, whether this relation holds or not may depend on the choice of~$\hat{Q}^\sing$.
Similarly, the relation~\eqref{Qdynonesea} holds on the lower mass shell
if we replace~$Q^\dyn$ by~$Q^\reg$ and assume that the
system is state stable (for details see~\cite[Section~5.6]{pfp} or again~\cite[Section~5]{noether}).
Again, there is no reason why~\eqref{currcorr} should hold on the upper mass shell.
Whether this equation holds or not depends on the choice of~$Q^\sing$.

Our task is to show that the operator~$\hat{R}$ in~\eqref{Rhat} can be chosen in such a way
that~\eqref{Qdynonesea} holds and to verify that replacing~$Q^\reg$ in~\eqref{OSIonesea}
by~$Q^\sing$ gives current conservation.
In order to satisfy~\eqref{currcorr}, we make the ansatz
\beq \label{Rk}
\hat{R}(\vec{k}) = h(\vec{k})\: (-\omega(k) \gamma^0 - \vec{k} \vec{\gamma} - m)
\eeq
with a real-valued function~$h$. Then~$\hat{R}$ vanishes on the Dirac solutions on the lower mass shell,
and hence~\eqref{Qdynonesea} holds under the assumption of state stability
even after the replacement~$Q^\reg \rightarrow Q^\sing$.
By choosing the function~$h(\vec{k})$ appropriately, one can arrange that~\eqref{Qdynonesea} also
holds for the Dirac solutions on the upper mass shell.
In this way, \eqref{currcorr} gives agreement with and generalizes the Dirac equation.

After this construction, the dynamics of the waves is the usual Dirac dynamics.
But the current integral~\eqref{OSIonesea} is different from the usual form, because~\eqref{currcorr}
may be violated even after the replacement~$Q^\reg \rightarrow Q^\sing$.
One strategy for dealing with this issue is to choose~$Q^\sing$ in such a way
that~\eqref{currcorr} holds. While this procedure seems most convenient for computational issues,
it is not compelling. Therefore, we prefer to take the point of view that the different form of the
conserved quantity merely corresponds to a different representation of the wave functions.
In order to understand how this comes about, we must return to the transformation
of the surface layer integrals in Section~\ref{secositransform},
where the operator~${\mathscr{I}}_{\rho, \tilde{\rho}}$ in~\eqref{Idef} was introduced
as an isometric embedding of Krein spaces~\eqref{itrans}.
The transformation of the wave functions by this operator ensures that the conservation law holds,
no matter how~$Q^\sing$ was chosen. More concretely, if 
the function~$(\partial^+_\omega + \partial^-_\omega) \hat{Q}^\reg$ in~\eqref{OSIonesea}
is multiplied at given~$(\omega(\vec{k}), \vec{k})$ by a constant~$\sigma$,
then the corresponding wave functions~$\hat{\psi}_+$ and~$\hat{\phi}_+$ are multiplied
by a factor~$\sigma^{-\frac{1}{2}}$, so that~\eqref{OSIonesea} is unchanged.
While this transformation clearly changes the wave functions, it has no effect on the
conserved current, nor on any other measurable quantity.
Thus, similar to a gauge transformation, the wave functions are changed, but the physics
remains the same, simply because the form of the scalar product, the currents and all
other quantities entering the interaction are transformed accordingly.

We now outline how the above arguments carry over to systems involving several generations.
In this case, the conditions~\eqref{currcorr} must be satisfied on each mass shell.
The ansatz~\eqref{Rk} does not involve enough degrees of freedom for satisfying all these conditions.
One way out is to also use the freedom in choosing~$Q^\sing$.
Alternatively, one can take the point of view that~$R(\vec{k})$ is not sufficient for
describing the dynamics in the ``time strips'' on the left of Figure~\ref{figevolvecont},
and that one should work instead with an operator which has a non-trivial dependence on the
generation index. Here we do not need to be specific, because each method gives enough
degrees of freedom for obtaining agreement with the Dirac dynamics.

We finally mention that, writing the dynamical wave equation in momentum space as
\[ \hat{Q}^\dyn(p)\: \hat{\psi}(p) = 0 \]
makes it possible to compute the Green's operators with Fourier methods.
Indeed, writing the Green's operator formally as the Fourier integral
\[ s(x,y) = \int \frac{d^4p}{(2 \pi)^4}\:  \big( \hat{Q}^\dyn(p) \big)^{-1}\: e^{-ip(x-y)}\: d^4p \:. \]
Provided that the integrand is meromorphic, one can deform
the complex $\omega$-contour so as to avoid the momenta 
on the upper and lower mass shell where~$\hat{Q}^\dyn(p)$
is not invertible, one gets the usual advanced and retarded Green's operators as well as the
Feynman propagator. 

\section{Analysis of the Dynamical Wave Equation} \label{secQdyn}
The considerations and constructions of the previous section led us to the
{\em{dynamical wave equation}} (see~\eqref{dynwave} and~\eqref{Uform})
\beq \label{dynwavenew}
L \,Q^\dyn \,\psi = 0 \qquad \text{for all $L \in {\mathfrak{A}}$}\:,
\eeq
where~${\mathfrak{A}}$ is a family of subsets of~$M$ referred to as the {\em{conservation-admissible sets}}.
We now turn attention to the mathematical analysis of this equation, with a focus
to the Cauchy problem and finite propagation speed.

\subsection{Finite Propagation Speed} \label{seccausal}
In order to analyze finite propagation speed, we need to be able to localize
wave functions to compact regions in spacetime. Since the resulting compactly supported
wave functions will not satisfy the dynamical wave equation, we need to extend~$\H^\fermi_\rho$.
To this end, we choose a set~$C^\vary(M, SM) \subset C^0(M, SM)$ of wave functions in spacetime.
We do not need to specify this set. The picture is that this
space should contain those wave functions with compact support which arise in the ``localization''
of the solutions.
Moreover, the wave functions in~$C^\vary(M, SM)$ should be thought of as being ``macroscopic'' in the
sense that they only vary on length scales which are large compared to the range
of the surface layer integrals.

\begin{Def} \label{defspacelike}
Let~$\Omega \subset M$ be a past set such that the commutator inner product~$\la .|. \ra^\Omega_\rho$
represents the scalar product (see Definition~\ref{defSLrep}). The set~$\partial \Omega$ is called {\bf{spacelike}} if
the extended commutator inner product~\eqref{OSIdyn} is positive
definite when restricted to~$\scrW^\vary_\rho \times \scrW^\vary_\rho$ with
\[ \scrW^\vary_\rho := \text{\rm{span}}\,\big(\H^\fermi_\rho, C^\vary(M, SM) \big) \:. \]
\end{Def}

We now work out in which sense the dynamics on~$\H^{\fermi, \Omega}_\rho$ respects causality.
One possible strategy would be to make use of corresponding results for the solutions
of the linearized field equations derived in~\cite{linhyp} and to analyze what they mean for the dynamics
of the physical wave functions. However, this strategy has the drawback that it works only under the
assumption that the jets~$\v \in \J^\gen_\rho$ generating the extended wave functions
satisfy the hyperbolicity conditions. This is not obvious because, as pointed out in~\cite[Section~6]{linhyp},
these hyperbolicity conditions are only satisfied by those degrees  of freedom which have a wave-like behavior in spacetime, and it is not at all obvious why the jets in~$\J^\gen_\rho$ should behave in this way.
This is the reason why we do not rely on hyperbolicity properties of~$\v$, but rather work
directly with the wave functions in the extended Hilbert space~$\H^\fermi_\rho$.

\begin{Def} Let~$\Omega, \Omega' \subset M$ be past sets such that the corresponding
commutator inner products represent the scalar product (see Definition~\ref{defSLrep}). 
A function~$\eta \in C^0(M, \R)$ {\bf{localizes to~$\partial \Omega \cap \partial \Omega'$}}
if for all~$\psi \in \scrW^\vary_\rho$, also~$\eta \psi \in \scrW^\vary_\rho$ and
\beq \label{localize}
\la \eta\, \psi \:|\: \phi \ra^{\Omega}_\rho = \la \eta\, \psi \:|\: \phi \ra^{\Omega'}_\rho \qquad
\text{for all~$\phi \in \scrW^\vary_\rho$}\:.
\eeq
The restriction~$\psi|_{\partial \Omega}$ of a wave function~$\psi \in \scrW^\vary_\rho$
to~$\partial \Omega$ is said to be {\bf{supported in~$\partial \Omega \cap \partial \Omega'$}} if there is~$\eta$
which localizes to~$\partial \Omega \cap \partial \Omega'$ such that~$\|(1-\eta) \psi\|^\Omega_\rho=0$. 
\end{Def}

\begin{Thm} \label{thmcausal} Let~$\Omega \subset \Omega'$ be past sets with spacelike boundaries.
Moreover assume that
\[ \qquad L := \Omega' \setminus \Omega
\quad \text{is conservation-admissible and relatively compact} \:. \]
Let~$\psi \in \H^\fermi_\rho$ be an extended solution whose restriction to~$\partial \Omega$
is supported in~$\partial \Omega \cap \partial \Omega'$. Then also its restriction to~$\partial \Omega'$
is supported in~$\partial \Omega \cap \partial \Omega'$.
\end{Thm}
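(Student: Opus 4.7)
The plan is to use the very localizing function $\eta$ that witnesses the hypothesis, and to show that the equality $\|(1-\eta)\psi\|^\Omega_\rho = 0$ transports to $\Omega'$. First I would expand sesquilinearly
\[
\la (1-\eta)\psi \,|\, (1-\eta)\psi \ra^\bullet_\rho = \la \psi|\psi\ra^\bullet_\rho - \la \psi|\eta\psi\ra^\bullet_\rho - \la \eta\psi|\psi\ra^\bullet_\rho + \la \eta\psi|\eta\psi\ra^\bullet_\rho
\]
for $\bullet \in \{\Omega, \Omega'\}$, and show term by term that the $\Omega$- and $\Omega'$-versions coincide. Three of the four pairs match immediately by the localization property \eqref{localize}: choosing $\phi = \psi$ in \eqref{localize} handles the $\la\eta\psi|\psi\ra$ pair, taking the complex conjugate and using sesquilinearity handles $\la\psi|\eta\psi\ra$, and choosing $\phi = \eta\psi$ (which lies in $\scrW^\vary_\rho$ by the very definition of a localizing function) handles the diagonal $\la\eta\psi|\eta\psi\ra$ pair.

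For the remaining pair I would establish conservation for the solution itself, $\la \psi|\psi\ra^{\Omega'}_\rho = \la \psi|\psi\ra^\Omega_\rho$, by the set-algebra reduction already used in \eqref{consL}. Writing $\Omega' = \Omega \cup L$, decomposing the double integrals in \eqref{OSIdyn} and using the symmetry $Q^\dyn(x,y)^* = Q^\dyn(y,x)$, the difference collapses to
\[
\la \psi|\psi\ra^{\Omega'}_\rho - \la \psi|\psi\ra^\Omega_\rho = -2i \int_L d\rho(x)\, \Sl \psi(x) \,|\, (Q^\dyn\psi)(x)\Sr_x + \text{c.c.} ,
\]
which vanishes because $L$ is conservation-admissible and $\psi$ satisfies the dynamical wave equation \eqref{dynwavenew}. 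Putting the four cancellations together yields $\la (1-\eta)\psi | (1-\eta)\psi \ra^{\Omega'}_\rho = \la (1-\eta)\psi | (1-\eta)\psi \ra^\Omega_\rho = 0$. Since $\partial\Omega'$ is spacelike, the commutator inner product is positive definite on $\scrW^\vary_\rho$ by Definition \ref{defspacelike}, whence $\|(1-\eta)\psi\|^{\Omega'}_\rho = 0$. This says precisely that $\psi|_{\partial\Omega'}$ is supported in $\partial\Omega \cap \partial\Omega'$, with the same $\eta$ serving as the witness.

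The step I expect to require most care is the set-algebra reduction leading to the conservation identity above. One must verify that splitting the double integral over $\Omega' \times (M\setminus\Omega')$ into four pieces over $\Omega$, $L$ and their complements, and then reassembling them by swapping $x \leftrightarrow y$ and invoking the symmetry of $Q^\dyn$, is legitimate as a manipulation of absolutely convergent Lebesgue integrals. This is exactly where the relative compactness of $L$ is decisive: together with the integrability assumption \eqref{Qintegrate} inherited by $Q^\dyn$, it guarantees that the surface layer integrals over $L \times M$ and $M \times L$ exist and that Fubini applies, so that the cancellation is an honest identity rather than a formal one. A secondary bookkeeping point is to confirm that each of the four bilinear pieces appearing in the expansion lies in the domain where $\la\cdot|\cdot\ra^\bullet_\rho$ is defined on $\scrW^\vary_\rho \times \scrW^\vary_\rho$, so that the term-by-term cancellations are unambiguous.
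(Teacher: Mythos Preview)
Your proposal is correct and follows essentially the same approach as the paper. The paper first establishes $\la\psi|\psi\ra^\Omega_\rho = \la\psi|\psi\ra^{\Omega'}_\rho$ via the conservation-admissibility of $L$ and the dynamical wave equation, then writes $\psi = \eta\psi + (1-\eta)\psi$ and expands the difference $\la\psi|\psi\ra^\Omega_\rho - \la\psi|\psi\ra^{\Omega'}_\rho = 0$ sesquilinearly; the $\eta\psi$--$\eta\psi$ and $\eta\psi$--$(1-\eta)\psi$ cross terms cancel by the localization property~\eqref{localize}, leaving exactly $\|(1-\eta)\psi\|^\Omega_\rho = \|(1-\eta)\psi\|^{\Omega'}_\rho$ --- which is the same identity you reach by expanding $\la(1-\eta)\psi|(1-\eta)\psi\ra^\bullet_\rho$ directly and matching the four terms.
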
 \noindent
The set~$L$ is also referred to as a {\em{lens-shaped region}}. The
statement of the theorem is illustrated in Figure~\ref{figcausal}.
\begin{figure}
\psscalebox{1.0 1.0} 
{
\begin{pspicture}(0,28.000273)(10.337705,31.269781)
\definecolor{colour0}{rgb}{0.9019608,0.9019608,0.9019608}
\definecolor{colour1}{rgb}{0.7019608,0.7019608,0.7019608}
\definecolor{colour2}{rgb}{0.5019608,0.5019608,0.5019608}
\pspolygon[linecolor=colour0, linewidth=0.02, fillstyle=solid,fillcolor=colour0](0.022648925,29.059145)(0.4476489,29.139145)(1.1926489,29.204144)(2.157649,29.244144)(3.0976489,29.249144)(4.127649,29.184145)(5.962649,29.039145)(7.510149,28.944145)(8.607649,28.899145)(9.452649,28.889145)(10.312649,28.964144)(10.317649,28.061644)(0.027648926,28.066645)
\pspolygon[linecolor=colour1, linewidth=0.04, fillstyle=solid,fillcolor=colour1](7.660149,28.056644)(8.175149,28.506645)(8.480149,28.801643)(8.565149,28.896645)(9.475149,28.901644)(9.565149,28.746645)(9.770149,28.491644)(9.955149,28.296644)(10.080149,28.186644)(10.170149,28.086645)
\pspolygon[linecolor=colour1, linewidth=0.04, fillstyle=solid,fillcolor=colour1](6.512649,31.256645)(10.295149,31.261644)(10.295149,30.086645)(10.160149,29.856644)(10.035149,29.676643)(9.795149,29.371645)(9.605149,29.131645)(9.455149,28.941645)(8.570148,28.946644)(8.160149,29.306644)(7.6851487,29.811644)(7.260149,30.286644)(6.785149,30.886644)
\pspolygon[linecolor=colour1, linewidth=0.02, fillstyle=solid,fillcolor=colour1](1.1576489,29.246645)(1.7726489,29.501644)(2.412649,29.836645)(3.027649,30.186644)(3.567649,30.441645)(4.242649,30.616644)(4.827649,30.461645)(5.480149,30.061644)(5.970149,29.696644)(6.355149,29.401644)(6.875149,29.121645)(7.1851487,28.971643)(6.160149,29.011644)(4.805149,29.131645)(3.520149,29.231644)(2.155149,29.276644)(1.1251489,29.251644)
\psbezier[linecolor=black, linewidth=0.08](0.6726489,29.171644)(2.1170938,29.305378)(3.3337808,30.62421)(4.3326488,30.576644287109374)(5.331517,30.52908)(6.669426,28.921633)(7.4893155,28.954977)
\psbezier[linecolor=black, linewidth=0.08](0.022648925,29.056644)(0.62264895,29.256645)(2.422649,29.256645)(3.2226489,29.256644287109374)(4.022649,29.256645)(8.122649,28.756645)(10.322649,28.956644)
\psbezier[linecolor=black, linewidth=0.02](8.542649,28.941645)(7.9576488,29.426643)(7.042649,30.451645)(6.487649,31.271644287109375)
\psbezier[linecolor=black, linewidth=0.02](9.470149,28.926643)(9.800149,29.341644)(10.097649,29.699144)(10.312649,30.079144287109376)
\psbezier[linecolor=black, linewidth=0.02](9.470149,28.924145)(9.815149,28.359144)(10.085149,28.254145)(10.200149,28.069144287109374)
\psbezier[linecolor=black, linewidth=0.02](8.535149,28.894144)(8.230149,28.559145)(8.007648,28.434145)(7.6226487,28.039144287109377)
\psbezier[linecolor=colour2, linewidth=0.16](7.5826488,31.276644)(7.542649,30.586645)(7.777649,29.131645)(7.612649,28.026644287109374)
\psbezier[linecolor=colour2, linewidth=0.16](0.42264894,31.296644)(0.52264893,30.606644)(0.40264893,29.171644)(0.45264894,28.046644287109373)
\psbezier[linecolor=black, linewidth=0.02, arrowsize=0.05291667cm 2.0,arrowlength=1.4,arrowinset=0.0]{->}(1.0526489,30.726645)(0.8676489,30.8472)(0.7776489,30.8272)(0.47264892,30.701644287109374)
\rput[bl](1.3,28.3){$\Omega$}
\rput[bl](1,29.9){$M \setminus \Omega'$}
\rput[bl](3.4,29.5){$L:= \Omega' \setminus \Omega$}
\rput[bl](8.3,29.9){$\supp \psi$}
\rput[bl](1.2,30.55){$\{ 0 < \eta < 1\}$}
\rput[bl](4,30.9){$\eta \equiv 0$}
\rput[bl](8.5,30.9){$\eta \equiv 1$}
\end{pspicture}
}
\caption{A lens-shaped region.}
\label{figcausal}
\end{figure}
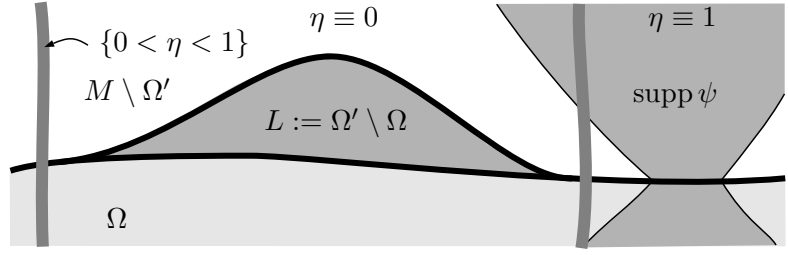
\Proof[Proof of Theorem~\ref{thmcausal}] Since~$L$ is relatively compact and conservation-admissible,
\[ \la \psi | \psi \ra^\Omega_\rho - \la \psi | \psi \ra^{\Omega'}_\rho
= \bra \psi \,|\, Q^\dyn\, \psi \ket_L -  \bra Q^\dyn\, \psi \,|\, \psi \ket_{\K_L} = 0 \:. \]
Setting~$\psi = \eta \psi + (1-\eta) \psi$ and multiplying out, we obtain
\begin{align}
0 &= \la \eta \psi \,|\, \eta \psi \ra^\Omega_\rho - \la \eta \psi \,|\, \eta \psi \ra^{\Omega'}_\rho 
+ 2 \re \Big( \big\la \eta \psi \,\big|\, (1-\eta) \psi \big\ra^\Omega_\rho
- \big\la \eta \psi \,\big|\, (1-\eta) \psi \big\ra^{\Omega'}_\rho \Big) \label{term1} \\
&\quad\, + \big\la (1-\eta) \psi \,\big|\, (1-\eta) \psi \big\ra^\Omega_\rho
- \big\la (1-\eta) \psi \,\big|\, (1-\eta) \psi \big\ra^{\Omega'}_\rho \:. \label{term2}
\end{align}
Using~\eqref{localize}, the summands~\eqref{term1} vanish, and thus
\[ \|(1-\eta) \psi\|^\Omega_\rho =  \|(1-\eta) \psi\|^{\Omega'}_\rho \:. \]
Since~$\psi|_{\partial \Omega}$ is supported in~$\partial \Omega \cap \partial \Omega'$,
the left side of this equation vanishes, and therefore also the right.
Hence also~$\psi|_{\partial \Omega'}$ is supported in~$\partial \Omega \cap \partial \Omega'$,
concluding the proof.
\QED

\subsection{Existence of Weak Solutions in Time Strips} \label{secexistence}
Characterizing spacelike hypersurfaces by the positivity of the commutator inner product
(see Definition~\ref{defspacelike}) also makes it possible to prove existence of solutions.
Our method is inspired by energy estimates for linear symmetric hyperbolic systems
as introduced by K.O.\ Friedrichs~\cite{friedrichs}. In the setting of causal variational principles,
similar methods were applied in~\cite{linhyp} to the linearized field equations.
We now adapt these energy methods to the dynamical wave equations.
For technical simplicity, we only consider the situation when spacetime admits
a global foliation by surface layers. Following the constructions in~\cite{linhyp},
our methods and results could be extended in a straightforward
way to lens-shaped regions.

We want to construct solutions of the dynamical wave equation~\eqref{dynwavenew}, without
the need to the restriction to conservation-admissible sets. Moreover, we want to
include an inhomogeneity~$w \in \K$. We thus consider the equation
\beq \label{dynfull}
Q^\dyn \psi = w \:.
\eeq
Similar as in~\cite[Section~3]{linhyp} we work with ``softened'' surface layer integrals
and foliations by surface layers, which we now introduce.
\begin{Def} \label{deflocfoliate}
Let~$\eta \in C^\infty(\R \times M, \R)$ be a function with~$0 \leq \eta \leq 1$ which for
all~$t \in \R$ has the following properties: 
\bitem
\item[{\rm{(i)}}] The support of the functions~$\eta(t,.)$ is a past set (see Definition~\ref{deffuturepast}).
\item[{\rm{(ii)}}] The function~$\theta(t,.) := \partial_t \eta(t,.)$ is non-negative.
\item[{\rm{(iii)}}] For every~$t \in \R$ there are~$t_0, t_1 \in \R$ such that the following implication holds,
\[ \eta(t,x) \neq 0 \qquad \Longrightarrow \qquad \eta(t_0,x) = 0  \quad \text{and} \quad \eta(t_1,x) = 1 \:. \] 
\item[{\rm{(iv)}}] The surface layers cover all of~$M$ in the sense that
\[ M = \bigcup_{t \in \R} \supp \,\theta(t,.) \:. \]
\eitem
We also write~$\eta(t,x)$ as~$\eta_t(x)$ and~$\theta(t,x)$ as~$\theta_t(x)$.
We refer to~$(\eta_t)_{t \in \R}$ as a {\bf{global foliation}} of~$M$.
\end{Def} \noindent
Assuming a global foliation can be understood as an implicit assumption on spacetime which
corresponds to global hyperbolicity of a Lorentzian manifold (for more details on this connection
in the context of causal variational principles see~\cite[Section~4]{linhyp}).

The ``softened'' version of the sesquilinear form~\eqref{OSIdyn} is defined for wave functions $\psi,\phi\in C_0^\vary(M,SM)$ by
\beq \label{OSIsoft}
\begin{split}
\la \psi | \phi \ra^t_\rho = -2i \int_M d\rho(x) \int_M d\rho(y) \:
\Big( \eta_t(x)\: \big( 1 &-\eta_t(y) \big) - \big(1-\eta_t(x) \big)\:\eta_t(y) \Big) \\
&\times \Sl \psi(x) \:|\: Q^\dyn(x,y)\, \phi(y) \Sr_x
\end{split}
\eeq
(where~$C^\vary_0(M, SM)$ denote the wave functions in~$C^\vary(M, SM)$ with compact support).
We also introduce the measure
\[ 
d\rho_t(x) := \theta_t(x)\: d\rho(x) \]
with~$t \in \R$. It is supported in the surface layer at time~$t$. 
Next, we introduce the scalar product in the time strip 
\begin{equation} \label{L2prod}
L:=\bigcup_{t\in [t_0,\tmax]} \mathrm{supp}\,\theta_t\qquad\text{by}\qquad \la \psi | \phi \ra_{L^2(L)} := \int_M  \lla \psi | \phi \rra_x \: \eta_I(x) \, d\rho(x) \:,
\end{equation}
where~$\eta_I = \eta_{\tmax} - \eta_{t_0}$, and~$\lla .|. \rra_x$ is the scalar product on the
spinors~\eqref{spinscalar}.
The corresponding Hilbert space is denoted by
$$
L^2(L,SM):=\left\{\psi\in L_{\loc}^2(L,SM)\:|\: \la \psi|\psi \ra_{L^2(L)} <\infty \ \right\}.
$$
In preparation of the construction of weak solutions, we introduce suitable function spaces.
The wave functions which are square integrable in every time strip are denoted by
\[ 
L^2_{\mathrm{loc,t}}(M,SM):=\{\psi\in L^2_{\loc}(M,SM)\:|\: \psi\in L^2(L,SM)\mbox{ for any } L\subset M \} \:. \]
Next, we want to introduce a concept similar to the notion of future- and past-compactness in Lorentzian geometry.
However, for technical simplicity we do not want to introduce the notion of spatial compactness
(this would make it necessary to invoke constructions similar to~\cite[Section~5.1]{linhyp}).
Therefore, we refer to a wave function as being future (past) compact if it vanishes in the future (past)
of some surface layer $\theta_{t_0}$ and if it is square integrable on every time strip $L$ as in~\eqref{L2prod}. 

\begin{Def}
A measurable wave function $\psi\in L^2_{\mathrm{loc,t}}(M,SM)$ is said to be
{\bf{past}} or {\bf{future compact}} if there exists $t_0\in\R$ such that
$$
\eta_{t_{0}}\psi=0\quad\mbox{or}\quad (1-\eta_{t_0})\,\psi=0 \:,
$$ 
respectively. A wave function which is both future and past compact is called
{\bf{timelike compact}}. The spaces of future, past and timelike compact wave functions are denoted by
$$
L_{\mathrm{fc}}^2(M,SM),\quad L_{\mathrm{pc}}^2(M,SM)\quad\mbox{and}\quad L_{\mathrm{tc}}^2(M,SM):=L_{\mathrm{fc}}^2(M,SM)\cap L_{\mathrm{pc}}^2(M.SM) \:,
$$
respectively. We also set
\begin{align*}
C^{\vary}_{\mathrm{tc}}(M,SM) &= C^\vary(M, SM) \cap L^2_{\mathrm{tc}}(M,SM) \\
C^{\vary}_{\mathrm{loc,t}}(M,SM) &=C^\vary(M,SM)\cap L^2_{\mathrm{loc,t}}(M,SM) \:.
\end{align*}
\end{Def} \noindent
It follows immediately from the definition that a wave function belongs to the space  $L^2_{\mathrm{tc}}(M,SM)$ if and only it is supported in some time strip $L$ and square integrable therein. In particular, by choosing $L$ large enough, it follows that any compactly supported square integrable wave function belongs to
$L_{\mathrm{tc}}^2(M,SM)$.

The following notion will serve as a technical simplification.
\begin{Def}\label{finitetimerange}
The operator $Q^\dyn$ is said to have {\bf{finite time range}} if the following properties are satisfied:\\[-0.9em]
\bitem
\item[{\rm{(i)}}] There is $r>0$ such that
\begin{equation}\label{deltarange}
Q^\dyn(x,y)=0\quad\mbox{whenever}\ x\in\supp\theta_{t_0},\ y\in \supp\theta_{t_1}\ \text{ and }\ |t_1-t_0|>r \:.
\end{equation}
\item[{\rm{(ii)}}] $Q^\dyn\big(L^2_{\mathrm{loc,t}}(M,SM)\big)\subset L^2_{\mathrm{loc,t}}(M,SM).$
\eitem
\end{Def} \noindent
In the applications like the example of the regularized Minkowski vacuum discussed in Section \ref{secexmink},
the operator~$Q^\dyn$ will in general {\em{not}} have this property. However, 
the kernel~$Q^\dyn(x,y)$ typically has good decay properties, making it possible to approximate it
by a kernel of finite time range. With this in mind, from now on we always assume that~$Q^\dyn$ has finite time range.
Under this assumption, given any $\psi\in L^2_{\mathrm{tc}}(M,SM)$ and choosing~$L$ such that~$\supp \psi \subset L$, the only contribution to the integral
$$
Q^\dyn\psi(x)=\int_M Q^\dyn(x,y)\:\psi(y)\,d\rho(y)=\int_L Q^\dyn(x,y)\:\psi(y)\,d\rho(y)
$$ 
comes from the points $y$ which lie in an $r$-neighborhood of $L$. Combining this fact
with Definition~\ref{finitetimerange}~(ii), we obtain the inclusion
$$
Q^\dyn\big(L^2_{\mathrm{tc}}(M,SM)\big)\subset L_{\mathrm{tc}}^2(M,SM) \:.
$$
Using this relation, for wave functions
\[ \psi \in L^2_{\mathrm{loc,t}}(M,SM) \qquad \text{and} \qquad \phi \in L^2_{\mathrm{tc}}(M,SM) \:, \]
the softened surface layer integral~\eqref{OSIsoft} can be rewritten as
\begin{equation}\label{OSIsoft2}
\la \psi | \phi \ra^t_\rho = -2i\,\int_L \Big( \Sl \phi(x) \,|\, (Q^\dyn \psi)(x) \Sr_x
- \Sl (Q^\dyn \phi)(x) \,|\, \psi(x) \Sr_x \Big)\,\eta_I(x)\,d\rho(x) \:,
\end{equation}
where~$L$ is the time strip corresponding to~$I=[t_0,t]$ and~$t_0$ sufficiently small.
Indeed, if we choose~$L$ such that
it contains the support of both wave functions~$\psi$ and~$Q^\dyn \psi$, then~\eqref{OSIsoft2}
is obtained from~\eqref{OSIsoft} by adding and subtracting the double integral
\[ \la \psi | \phi \ra^t_\rho = -2i \int_L d\rho(x) \int_L d\rho(y) \:
\eta_t(x)\: \eta_t(y)\: \Sl \psi(x) \:|\: Q^\dyn(x,y)\, \phi(y) \Sr_x \:, \]
which exists and is finite in view of the assumption in Definition~\ref{finitetimerange}~(ii).

We are now ready to enter the analysis of the Cauchy problem.
We begin with the following result, which can be proved by direct computation. 
\begin{Lemma} {\bf{(energy identity)}} \label{lemmaenid}
	For any wave functions~$\psi, \phi \in L^2_{\mathrm{loc,t}}(M,SM)$, 
	\beq 
	\frac{d}{dt} \la \psi | \phi \ra^t_\rho = -2i \int_M  \Big( \Sl \phi(x) \,|\, (Q^\dyn \psi)(x) \Sr_x
	- \Sl (Q^\dyn \phi)(x) \,|\, \psi(x) \Sr_x \Big) \: d\rho_t(x) \:. \label{enid}
	\eeq
\end{Lemma}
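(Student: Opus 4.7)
The plan is a direct calculation based on~\eqref{OSIsoft}. I would differentiate the defining double integral under the integral sign, simplify the combinatorial weight, and then use the symmetry~\eqref{Qdynsymm} of $Q^\dyn$ to collapse each of the two resulting terms into a single integral against $d\rho_t$.

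First I would simplify the weight in~\eqref{OSIsoft}. A direct expansion gives
\[
\eta_t(x)\big(1-\eta_t(y)\big) - \big(1-\eta_t(x)\big)\eta_t(y) = \eta_t(x) - \eta_t(y) \:,
\]
so that
\[
\la \psi | \phi \ra^t_\rho = -2i \int_M \int_M \big(\eta_t(x)-\eta_t(y)\big)\;\Sl \psi(x)\,|\,Q^\dyn(x,y)\phi(y)\Sr_x\;d\rho(x)\,d\rho(y) \:.
\]
Differentiating in $t$ and using $\partial_t\eta_t = \theta_t$ replaces the weight by $\theta_t(x)-\theta_t(y)$. I would then split the resulting integral into its two natural pieces. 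For the first piece, Fubini together with the definition $(Q^\dyn\phi)(x)=\int_M Q^\dyn(x,y)\phi(y)\,d\rho(y)$ collapses the $y$-integration to give
\[
-2i \int_M \Sl \psi(x)\,|\,(Q^\dyn\phi)(x)\Sr_x\,d\rho_t(x) \:.
\]
For the second piece I would first invoke the adjoint identity provided by~\eqref{Qdynsymm}, namely
\[
\Sl \psi(x)\,|\,Q^\dyn(x,y)\phi(y)\Sr_x = \Sl Q^\dyn(y,x)\psi(x)\,|\,\phi(y)\Sr_y\:,
\]
and then perform the $x$-integration first, yielding
\[
+2i \int_M \Sl (Q^\dyn\psi)(x)\,|\,\phi(x)\Sr_x\,d\rho_t(x) \:.
\]
Combining the two pieces produces~\eqref{enid}; any remaining cosmetic discrepancy in the ordering of the arguments of the spin inner product is absorbed using Hermitian symmetry $\Sl u|v\Sr_x = \overline{\Sl v|u\Sr_x}$ together with the explicit factor $-2i$.

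The main obstacle is the justification of the interchange of differentiation and integration. This is precisely where the hypotheses on $Q^\dyn$ enter: by the finite-time-range assumption (Definition~\ref{finitetimerange}~(i)), only pairs $(x,y)$ lying in a bounded time strip contribute to the integrand, and by~(ii) the image $Q^\dyn\phi$ lies again in $L^2_{\mathrm{loc,t}}(M,SM)$. Combined with the $L^2_{\mathrm{loc,t}}$-regularity of $\psi$ and $\phi$ and the smooth dependence of $\eta_t$ and $\theta_t$ on $t$, these properties yield absolute convergence of the double integral uniformly for $t$ in a compact interval, thereby legitimizing differentiation under the integral sign by dominated convergence. The alternative representation~\eqref{OSIsoft2} makes this integrability concrete, as it already expresses $\la\psi|\phi\ra^t_\rho$ as a single integral over a bounded time strip with integrable integrand. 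Once this analytic step is justified, the remainder of the argument is purely algebraic.
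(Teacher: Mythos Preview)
Your proposal is correct and is precisely the ``direct computation'' the paper alludes to; the paper itself gives no proof beyond that phrase. One small caveat: your closing remark that the ordering discrepancy is ``absorbed'' by Hermitian symmetry together with the factor $-2i$ does not quite hold, since that manipulation produces the complex conjugate rather than the identical expression. What your computation actually yields is
\[
\frac{d}{dt}\la\psi|\phi\ra^t_\rho = -2i\int_M\Big(\Sl\psi(x)\,|\,(Q^\dyn\phi)(x)\Sr_x - \Sl(Q^\dyn\psi)(x)\,|\,\phi(x)\Sr_x\Big)\,d\rho_t(x)\:,
\]
i.e.\ the displayed identity~\eqref{enid} with the roles of $\psi$ and $\phi$ interchanged; this appears to be a typographical slip in the paper's displayed formula (and in~\eqref{OSIsoft2}), and it is harmless since the lemma is only invoked with $\phi=\psi$ in Proposition~\ref{prpenergy}.
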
 \noindent
We again note that the right-hand side is well defined, because the support of the measure $d\rho_t$ can always be included in a sufficiently large time strip $L$.

\begin{Def} \label{defhypcond}
A global foliation~$(\eta_t)_{t \in \R}$ satisfies the {\bf{hyperbolicity condition}}
if for any compact interval~$I=[t_0, \tmax]$ there is a constant~$C(I)>0$ such that for all~$t \in I$
and all~$\psi \in C^\vary_\text{\rm{loc,t}}(M, SM)$,
\beq
\la \psi | \psi \ra^t_\rho \geq \frac{1}{C^2} \int_M \norm \psi(x)\norm_x^2 \: d\rho_t(x) \:. \label{hypcond}
\eeq
\end{Def} \noindent
This lower bound is a stronger and more quantitative version of positivity.
This positivity statement plays a similar role as the positivity of the energy
in the theory of linear symmetric hyperbolic systems (see~\cite{friedrichs} or~\cite[Section~5.3]{john}).

\begin{Prp} \label{prpenergy} Let~$\psi \in C^\vary_\text{\rm{loc,t}}(M, SM)$ be a wave function which vanishes initially,
	\beq \label{zeroinit}
	\|\psi\|^{t_0}_\rho=0 \:.
	\eeq
	Then, choosing
	\beq \label{Gammachoose}
	\Gamma = 2C^2\, (\tmax - t_0) \:,
	\eeq
	the following a-priori estimates hold,
	\begin{gather}
	\|\psi \|^t_\rho \leq 2C\, \sqrt{\tmax - t_0} \;\| Q^\dyn \psi \|_{L^2(L)} \qquad \text{for all~$t \in I$} \label{ees1} \\
	\| \psi \|_{L^2(L)} \leq \Gamma\, \| Q^\dyn \psi \|_{L^2(L)} \:. \label{ees2}
	\end{gather}
\end{Prp}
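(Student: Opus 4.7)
The plan is to carry out a standard energy-estimate argument, using the energy identity of Lemma~\ref{lemmaenid} together with the hyperbolicity condition~\eqref{hypcond} to obtain a differential inequality for $\|\psi\|^t_\rho$, which then integrates to~\eqref{ees1}; the bound~\eqref{ees2} follows by integrating~\eqref{ees1} in $t$ and invoking hyperbolicity once more.

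Concretely, I would first apply Lemma~\ref{lemmaenid} with $\phi=\psi$ to obtain
\[
\frac{d}{dt}\,\bigl(\|\psi\|^t_\rho\bigr)^{2}
= 4\,\Im \int_M \Sl (Q^\dyn \psi)(x) \,|\, \psi(x) \Sr_x \: d\rho_t(x) \:,
\]
and estimate the right-hand side by Cauchy--Schwarz with respect to the spinor norm $\norm\cdot\norm_x$ and the measure $d\rho_t$. Using the hyperbolicity condition~\eqref{hypcond} to control $\int_M \norm\psi\norm_x^2\, d\rho_t \leq C^2\,(\|\psi\|^t_\rho)^2$, this yields
\[
\Bigl|\tfrac{d}{dt}\,(\|\psi\|^t_\rho)^{2}\Bigr|
\leq 4C\, \|\psi\|^t_\rho\, \Bigl(\textstyle\int_M \norm (Q^\dyn \psi)(x)\norm_x^2\: d\rho_t(x)\Bigr)^{\!1/2} \:.
\]
Dividing by $2\|\psi\|^t_\rho$ (and treating the zero set by a standard $\varepsilon$-regularization $\sqrt{(\|\psi\|^t_\rho)^2+\varepsilon}$ at the end) gives
\[
\Bigl|\tfrac{d}{dt}\|\psi\|^t_\rho\Bigr| \leq 2C\, \Bigl(\textstyle\int_M \norm (Q^\dyn \psi)(x)\norm_x^2 \, d\rho_t(x)\Bigr)^{\!1/2} \:.
\]

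Next I integrate from $t_0$ to $t$, using the initial condition~\eqref{zeroinit}, and apply the Cauchy--Schwarz inequality in the time variable to obtain
\[
\|\psi\|^t_\rho \leq 2C \sqrt{t-t_0}\;\Bigl(\textstyle\int_{t_0}^{t} \!\!\int_M \norm (Q^\dyn\psi)(x)\norm_x^2 \, d\rho_s(x)\, ds\Bigr)^{\!1/2} \:.
\]
Since $d\rho_s = \theta_s\, d\rho$ and $\int_{t_0}^{t}\theta_s(x)\,ds = \eta_t(x)-\eta_{t_0}(x) \leq \eta_I(x)$ by Definition~\ref{deflocfoliate}(ii)--(iii), the double integral is bounded by $\|Q^\dyn\psi\|_{L^2(L)}^2$ as defined in~\eqref{L2prod}, giving~\eqref{ees1}. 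For~\eqref{ees2}, I write
\[
\|\psi\|_{L^2(L)}^{2} = \int_{t_0}^{\tmax}\!\int_M \norm\psi(x)\norm_x^2\, d\rho_s(x)\, ds \:,
\]
apply~\eqref{hypcond} pointwise in $s$ to bound the inner integral by $C^2(\|\psi\|^s_\rho)^2$, and insert the estimate~\eqref{ees1} already proven; integrating in $s$ over $[t_0,\tmax]$ yields the factor $(\tmax-t_0)$ and hence $\Gamma=2C^2(\tmax-t_0)$.

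The only subtle point is the passage from the differential inequality for $(\|\psi\|^t_\rho)^2$ to one for $\|\psi\|^t_\rho$ itself, which is not entirely routine because the square root is not differentiable at zero. The standard remedy is to work with $\sqrt{(\|\psi\|^t_\rho)^2+\varepsilon}$, integrate, and then send $\varepsilon\searrow 0$. Apart from this mild technicality, the argument is essentially a direct consequence of combining Lemma~\ref{lemmaenid} with Definition~\ref{defhypcond} and two applications of the Cauchy--Schwarz inequality.
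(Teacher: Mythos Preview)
Your proposal is correct and follows essentially the same approach as the paper: combine the energy identity with the hyperbolicity condition and Cauchy--Schwarz to get the differential inequality $\frac{d}{dt}\|\psi\|^t_\rho \leq 2C\,\|Q^\dyn\psi\|_{L^2(d\rho_t)}$, integrate in~$t$ (with Cauchy--Schwarz in time) to obtain~\eqref{ees1}, then apply hyperbolicity once more and integrate again for~\eqref{ees2}. Your write-up is in fact slightly more explicit than the paper's, spelling out the $\varepsilon$-regularization for the square root and the identity $\int_{t_0}^{t}\theta_s\,ds \leq \eta_I$ used to pass to the $L^2(L)$-norm.
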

\Proof Applying the Schwarz inequality in~\eqref{enid} for~$\phi=\psi$ and using~\eqref{hypcond} gives
\[ \frac{d}{dt} \la \psi | \psi \ra^t_\rho \leq 4\,\| \psi \|_{L^2(d\rho_t)} \: \| Q^\dyn \psi \|_{L^2(d\rho_t)} 
\leq 4C\, \|\psi \|^t_\rho \: \| Q^\dyn \psi \|_{L^2(d\rho_t)} \]
and thus
\[ \frac{d}{dt} \| \psi \|^t_\rho \leq 2C\, \| Q^\dyn \psi \|_{L^2(d\rho_t)} \:. \]
Integrating over~$t$ and using~\eqref{zeroinit} gives~\eqref{ees1}.
Using again the hyperbolicity condition~\eqref{hypcond}, we obtain
\[ \| \psi \|_{L^2(d\rho_t)} \leq 2C^2\, \sqrt{\tmax - t_0} \:\| Q^\dyn \psi \|_{L^2(L)} \:. \]
Now we take the square, integrate again over~$t$ and take the square root. This gives
\[ \| \psi \|_{L^2(L)} \leq 2C^2\, (\tmax - t_0) \| Q^\dyn \psi \|_{L^2(L)} \:, \]
concluding the proof.
\QED

The estimate~\eqref{ees1} immediately gives the following result.
\begin{Corollary} {\bf{(Uniqueness of strong solutions)}}\label{uniqueness}
Let~$\psi, \psi' \in C^\vary_\text{\rm{loc, t}}(L,SM)$ be two solutions of the dynamical wave equation~\eqref{dynfull}
with zero initial data~\eqref{zeroinit} and inhomogeneity~$w \in L^2(L,SM)$. Then the solutions
coincide in~$L^2(L,SM)$.
\end{Corollary}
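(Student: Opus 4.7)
The plan is to reduce uniqueness to the a priori estimate \eqref{ees2} applied to the difference of the two solutions. I would proceed as follows.

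First, set $\phi := \psi - \psi'$. Since $C^\vary_\text{\rm{loc,t}}(L, SM)$ is a linear space and $Q^\dyn$ is a linear operator, we have $\phi \in C^\vary_\text{\rm{loc,t}}(L, SM)$ and
\[
Q^\dyn \phi = Q^\dyn \psi - Q^\dyn \psi' = w - w = 0 \:.
\]
Thus $\phi$ is a homogeneous solution of the dynamical wave equation.

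Second, I need to verify that $\phi$ has vanishing initial data in the sense of \eqref{zeroinit}, i.e.\ $\|\phi\|^{t_0}_\rho = 0$. Since $\la \cdot | \cdot \ra^{t_0}_\rho$ is a positive semi-definite sesquilinear form on $C^\vary_\text{\rm{loc,t}}(M, SM)$ (by Definition \ref{defspacelike} in conjunction with the hyperbolicity condition \eqref{hypcond}), the Minkowski/triangle inequality for $\|\cdot\|^{t_0}_\rho$ yields
\[
\|\phi\|^{t_0}_\rho \;\leq\; \|\psi\|^{t_0}_\rho + \|\psi'\|^{t_0}_\rho \;=\; 0 \:,
\]
so the hypothesis \eqref{zeroinit} is satisfied by $\phi$.

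Third, apply Proposition \ref{prpenergy} to $\phi$. The estimate \eqref{ees2} gives
\[
\|\phi\|_{L^2(L)} \;\leq\; \Gamma\, \|Q^\dyn \phi\|_{L^2(L)} \;=\; 0 \:,
\]
so $\phi = 0$ in $L^2(L, SM)$, i.e.\ $\psi = \psi'$ in $L^2(L, SM)$, as claimed. I do not anticipate a serious obstacle: the only subtle point is the triangle inequality for $\|\cdot\|^{t_0}_\rho$, which is immediate once one knows this quantity comes from a positive semi-definite sesquilinear form on the relevant space; everything else is a direct substitution into the already proven energy estimate.
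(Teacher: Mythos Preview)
Your proposal is correct and takes essentially the same approach as the paper: the paper simply states that the corollary follows immediately from the a priori estimate (the paper cites~\eqref{ees1}, while you invoke~\eqref{ees2}, which is the more direct choice for the $L^2(L)$ conclusion), and your argument spells out the obvious steps of applying that estimate to the difference $\phi=\psi-\psi'$.
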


The estimate~\eqref{ees2}, on the other hand, yields the existence of weak solutions.
We closely follow the method in~\cite[Section~3]{linhyp}.
In preparation, we need to implement the initial data in a weak formulation.
To this end, we introduce the Krein space~$(\K_L, \bra .|. \ket_{\K_L})$ with inner product defined by
\beq \label{L2krein}
\bra \psi | \phi \ket_{\K_L} := \int_L  \Sl \psi | \phi \Sr_x \: \eta_I(x) \, d\rho(x)
\eeq
and the topology induced by the scalar product~\eqref{L2prod}.
\begin{Lemma} {\bf{(Green's formula)}} \label{lemmagreen}
For all~$\psi, \phi \in L^2_{\mathrm{loc,t}}(M, SM)$,
\[ 
\bra Q^\dyn \psi \,|\, \phi \ket_{\K_L} - 
\bra \psi \,|\, Q^\dyn \phi \ket_{\K_L} = - \la \psi | \phi \ra^{\tmax}_\rho + \la \psi | \phi \ra^{t_0}_\rho \:. \]
\end{Lemma}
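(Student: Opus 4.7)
The plan is to derive Green's formula as the time-integrated form of the energy identity of Lemma~\ref{lemmaenid}. By the fundamental theorem of calculus,
\[
\la \psi | \phi \ra^{\tmax}_\rho - \la \psi | \phi \ra^{t_0}_\rho \;=\; \int_{t_0}^{\tmax} \frac{d}{dt} \la \psi | \phi \ra^t_\rho \, dt \:,
\]
and substituting~\eqref{enid} on the right-hand side turns this into a double integral in $t$ and $x$ with integrand weighted by $d\rho_t(x) = \theta_t(x)\,d\rho(x)$ and the spin-inner-product expressions appearing in Lemma~\ref{lemmaenid}.

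Next I would apply Fubini's theorem to interchange the $t$- and $x$-integrations. The inner $t$-integral then collapses via
\[
\int_{t_0}^{\tmax} \theta_t(x)\,dt \;=\; \eta_{\tmax}(x) - \eta_{t_0}(x) \;=\; \eta_I(x) \:,
\]
which is non-negative and supported in $L$ by the very construction of the global foliation. The spatial integrals that remain are precisely the Krein pairings $\bra \phi \,|\, Q^\dyn \psi \ket_{\K_L}$ and $\bra Q^\dyn \phi \,|\, \psi \ket_{\K_L}$ according to the definition~\eqref{L2krein}. A final reordering that uses the conjugate-symmetry of $\bra \cdot | \cdot \ket_{\K_L}$ (inherited from the spin inner product) to bring the arguments into the order stated in the lemma produces the claimed identity.

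Algebraically the proof is therefore essentially a one-liner given the energy identity. The only genuine work lies in justifying Fubini and showing that $t \mapsto \la \psi | \phi \ra^t_\rho$ is absolutely continuous for $\psi, \phi \in L^2_{\mathrm{loc,t}}(M,SM)$, so that the fundamental theorem of calculus applies. This is the point at which the finite-time-range hypothesis on $Q^\dyn$ (Definition~\ref{finitetimerange}) enters essentially: it confines the $y$-integration in the double integral defining $\la \psi | \phi \ra^t_\rho$ to an $r$-neighborhood of $\supp \theta_t$, so that for $t$ ranging over the compact interval $[t_0,\tmax]$ only a single slightly enlarged time strip is involved, on which both $\psi$ and $\phi$ are square-integrable and the integrand is dominated by an $L^1$-function. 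I expect this integrability bookkeeping, rather than any algebraic step, to be the main obstacle.
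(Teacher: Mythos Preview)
Your approach is correct and arrives at the same identity, but it differs from the paper's. The paper does \emph{not} integrate the energy identity in $t$; instead it expands each Krein pairing directly as a double integral over~$M \times M$, uses the symmetry $Q^\dyn(x,y)^* = Q^\dyn(y,x)$ to combine the two, and obtains the integrand weighted by~$\eta_I(y)-\eta_I(x)$. The elementary algebraic identity
\[
\eta_I(y)-\eta_I(x) \;=\; \big(1-\eta_I(x)\big)\,\eta_I(y) - \eta_I(x)\,\big(1-\eta_I(y)\big)
\]
then matches this to the difference of the softened surface layer integrals at~$\tmax$ and~$t_0$ via~\eqref{OSIsoft}.

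The practical difference is that the paper's route avoids the detour through the $t$-derivative and the subsequent re-integration. In particular, it does not need absolute continuity of $t \mapsto \la \psi|\phi\ra^t_\rho$ or a Fubini argument in the pair~$(t,x)$; the only interchange of integration is in~$(x,y)$, which is already implicit in the statement of the energy identity you are invoking. Your version is conceptually appealing because it exhibits Green's formula as the integrated form of the energy identity, but it incurs exactly the ``integrability bookkeeping'' you flag, whereas the paper's purely algebraic manipulation sidesteps it.
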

\Proof Using the definition~\eqref{L2krein} and the symmetry of the kernel~$Q^\dyn(x,y)$, we obtain
\begin{align*}
&\bra Q^\dyn \psi \,|\, \phi \ket_{\K_L} -  \bra \psi \,|\, Q^\dyn \phi \ket_{\K_L} \\
&= \int_M d\rho(x) \int_M d\rho(y) \:\big( \eta_I(y) - \eta_I(x) \big)\: \Sl  \psi(x) \,|\, Q^\dyn(x,y)\, \phi(y) \Sr_x \:.
\end{align*}
Using the identity
\[ \eta_I(y) - \eta_I(x) =  \big(1-\eta_I(x) \big)\, \eta_I(y) - \eta_I(x)\, \big(1-\eta_I(y) \big) \:, \]
we can apply the definition of the ``softened'' scalar product~\eqref{OSIsoft} to obtain the result.	
\QED

For the weak formulation of the Cauchy problem, we need a space of wave functions
which vanish at and in the future of the time~$t$. For technical convenience, this space is
defined as
\beq \label{Cbardef}
\begin{split}
\overline{C}_0^t(M,SM):=&\left\{\psi\in  C^\vary_0(M, SM) \:|\: (1-\eta_t) \,\psi=0\right.\\
&\left.\hspace{3.7cm} \mbox{and}\  \|\psi\|^{t'}_\rho=0\ \mbox{for all}\ t'\ge t \right\} \:.
\end{split}
\eeq
Consider a strong solution of $Q^\dyn\psi=w$ with zero initial data as in~\eqref{dynfull}.
Taking the inner product of this identity with a vector~$\phi \in \overline{C}_0^\tmax$
and ``integrating by parts'' with the above Green's formula, we obtain
\[ \bra Q^\dyn \phi \,|\, \psi \ket_{\K_L} = \bra \phi \,|\, w \ket_{\K_L} - \la \psi | \phi \ra^{t_0}_\rho \:. \]
This makes it possible to implement boundary conditions by formulating
the weak equation as follows.
\begin{Def}
Let $w\in L^2(L,SM)$. A wave function $\psi\in L^2(L,SM)$ is said to be a {\bf{weak solution}}
of the dynamical wave equation in the time strip $L$  {\bf{with zero initial data}} if
\beq \label{weak}
\bra Q^\dyn \phi \,|\, \psi \ket_{\K_L} = \bra \phi \,|\, w \ket_{\K_L} \qquad
\text{for all~$\phi \in \overline{C}_0^\tmax(M,SM)$}\:.
\eeq
\end{Def}

\begin{Thm} {\bf{(existence of weak solutions)}} \label{thmexist}
	For every~$w \in L^2(L,SM)$ there is a solution~$\psi \in L^2(L,SM)$
	of the weak dynamical wave equation with zero initial data~\eqref{weak}. This solution is bounded by
	\beq \label{vbound}
	\|\psi\|_{L^2(L)} \leq \Gamma\, \|w\|_{L^2(L)} \:.
	\eeq
\end{Thm}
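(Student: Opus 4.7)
The plan is a classical existence-by-duality argument: the a-priori estimate in Proposition \ref{prpenergy} is used to define a bounded linear functional on $Q^\dyn\bigl(\overline{C}_0^{\tmax}(M,SM)\bigr)$, which is then extended by Hahn--Banach and represented by a vector $\psi \in L^2(L,SM)$ via Riesz.

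First, I would establish a backward energy estimate mirroring \eqref{ees2}: for every $\phi \in \overline{C}_0^{\tmax}(M,SM)$ one has
\[
\|\phi\|_{L^2(L)} \;\leq\; \Gamma \, \|Q^\dyn\phi\|_{L^2(L)}.
\]
The argument is the time-reversed analogue of Proposition \ref{prpenergy}: start from the energy identity of Lemma \ref{lemmaenid} with $\phi=\psi$, exploit the hyperbolicity condition \eqref{hypcond}, and integrate from $t=\tmax$ downward using the vanishing condition $\|\phi\|^{t}_\rho = 0$ for $t\geq\tmax$ built into \eqref{Cbardef}.

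Next I would define a linear map
\[
T \::\: Q^\dyn\bigl(\overline{C}_0^{\tmax}(M,SM)\bigr) \;\rightarrow\; \C, \qquad T(Q^\dyn\phi) := \bra \phi \,|\, w \ket_{\K_L}.
\]
Well-definedness is immediate: if $Q^\dyn\phi = 0$, the backward estimate forces $\phi=0$ in $L^2(L,SM)$, hence $\bra \phi | w \ket_{\K_L}=0$. Boundedness follows by Cauchy--Schwarz applied to the Krein inner product (which is bounded by the $L^2$-norm since $\bra\cdot|\cdot\ket_{\K_L}$ is just integration of the spin inner product against $\eta_I$, and the spin norm $\norm\cdot\norm_x$ controls $|\Sl\cdot|\cdot\Sr_x|$) combined with the backward estimate:
\[
|T(Q^\dyn\phi)| \;\leq\; \|\phi\|_{L^2(L)}\, \|w\|_{L^2(L)} \;\leq\; \Gamma\,\|w\|_{L^2(L)}\, \|Q^\dyn\phi\|_{L^2(L)}.
\]
Thus $T$ is a bounded linear functional on the subspace $Q^\dyn(\overline{C}_0^{\tmax}) \subset L^2(L,SM)$ with norm at most $\Gamma\|w\|_{L^2(L)}$.

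By the Hahn--Banach theorem, $T$ extends to a bounded linear functional on all of $L^2(L,SM)$ with the same norm. The Riesz representation theorem then yields a vector $\psi \in L^2(L,SM)$ such that $T(\xi) = \langle \xi | \psi\rangle_{L^2(L)}$ for all $\xi$, or equivalently (using the self-adjoint ``sign'' operator relating the $L^2$ product to the Krein product) a $\psi$ such that $T(\xi) = \bra \xi | \psi \ket_{\K_L}$, with the norm bound $\|\psi\|_{L^2(L)} \leq \Gamma\,\|w\|_{L^2(L)}$, which is exactly \eqref{vbound}. Restricting the extended functional back to the original subspace reproduces the identity $\bra Q^\dyn\phi | \psi\ket_{\K_L} = \bra \phi | w \ket_{\K_L}$ for every $\phi \in \overline{C}_0^{\tmax}(M,SM)$, establishing \eqref{weak}.

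The main obstacle is Step~1, the backward energy estimate for wave functions in $\overline{C}_0^{\tmax}$. The issue is that the zero-initial-data estimate \eqref{ees2} was stated for functions vanishing at $t_0$, whereas we now need to run the argument backward from $\tmax$. One must check that the hyperbolicity condition \eqref{hypcond} and the energy identity \eqref{enid} are genuinely symmetric under this reversal — Lemma \ref{lemmaenid} is symmetric in the sense that $d/dt\,\la\psi|\psi\ra^t_\rho$ is still controlled by $\|Q^\dyn\psi\|$, and integrating from $t$ up to $\tmax$ (rather than from $t_0$ down to $t$) gives the required bound. A secondary subtlety is converting between the Krein inner product $\bra\cdot|\cdot\ket_{\K_L}$ and the Hilbert inner product $\langle\cdot|\cdot\rangle_{L^2(L)}$ when applying Riesz, but this is harmless because the two differ only by the bounded pointwise operator $\operatorname{sign}(x)$ of norm one.
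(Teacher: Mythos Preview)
Your argument is correct and is the classical duality route for symmetric hyperbolic systems: backward energy estimate, bounded functional on the range of~$Q^\dyn$, Hahn--Banach extension, Riesz representation. The paper instead avoids Hahn--Banach by completing~$\overline{C}_0^{\tmax}(M,SM)$ with respect to the energy scalar product~$\llla \phi|\psi\rrra := \la Q^\dyn\phi \,|\, Q^\dyn\psi\ra_{L^2(L)}$, applying Riesz directly on this completion~$\mathcal{H}$ to obtain a vector~$V\in\mathcal{H}$, and then setting~$\psi := \mathscr{E}\,Q^\dyn V|_L$.

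The practical difference is uniqueness. Your Hahn--Banach extension is in general non-unique, so you produce \emph{some} weak solution satisfying the bound~\eqref{vbound}, which is all the theorem claims. The paper's construction, by contrast, singles out a \emph{canonical} solution of the explicit form~$\mathscr{E}\,Q^\dyn V$, and this specific representation is used essentially afterwards: in Lemma~\ref{lemmaquotient} (to show the Green's operators vanish on~$(C^\vary_0)^\perp$), in Definition~\ref{defshieldnew} (the shielding condition is formulated for vectors in the completion~$\mathcal{H}$), and in the proof of Proposition~\ref{lemmafuturepast} (where solutions are written as~$\mathscr{E} Q V$). So your proof closes the theorem but would leave a gap downstream; the paper's detour through the energy Hilbert space buys the structural information needed for the Green's operator construction.
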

\Proof The weak dynamical wave equation~\eqref{weak} involves the indefinite Krein inner product.
The first step is to rewrite this equation in terms of the scalar product~$\la .|. \ra_{L^2(L)}$.
To this end, we make use of the Euclidean sign operator~$s_x \in \Lin(S_x)$ defined by
(for more details see~\cite[\S1.1.6]{cfs})
$$
\lla \chi|\xi\rra_x=\Sl \chi|s_x\,\xi \Sr_x \qquad \text{for all $\chi,\xi \in S_x$}\:.
$$
It has the properties~$s_x^*=s_x$ and $s_x^2=\1$. Introducing
the {\em{Euclidean operator}}~$\mathscr{E}$ as the operator which multiplies wave functions
pointwise by the Euclidean sign operator,
\[ \big( \mathscr{E} \psi \big)(x) := s_x\, \psi(x) \:, \]
the Krein inner product and the $L^2$-scalar product are related by
$$
\la \psi|\phi\ra_{L^2(L)}=\bra \psi \,|\, \mathscr{E} \phi\ket_{\K_L}\qquad\mbox{for all }\phi,\psi\in L_{\mathrm{loc,t}}^2(M,SM).
$$
Therefore, the weak equation~\eqref{weak} can be rewritten equivalently as
\[ 
\la Q^\dyn \phi \,|\, \mathscr{E} \psi \ra_{L^2(L)} = \la \phi \,|\, \mathscr{E} w \ra_{L^2(L)} \qquad \text{for all~$\phi \in \overline{C}_0^\tmax(M,SM)$} \]
(note that $\mathscr{E}^2=\1$ and $\la \mathscr{E} v| \mathscr{E}u\ra_{L^2(L)}=\la v|u\ra_{L^2(L)}$).

Clearly, the energy estimates of Proposition~\ref{prpenergy} also holds
if we exchange the roles of~$\tmax$ and~$t_0$, i.e.
\beq \label{hyprev}
\| \phi\|_{L^2(L)} \leq \Gamma\: \| Q^\dyn\phi\|_{L^2(L)} \qquad \text{for all~$\phi \in \overline{C}_0^\tmax(M,SM)$}
\eeq
(where ~$\Gamma$ is again the constant~\eqref{Gammachoose}).

We introduce the bilinear form
\begin{equation*}
\begin{split}
\llla .|. \rrra \::\:  &\overline{C}_0^\tmax(M,SM) \times \overline{C}_0^\tmax(M,SM) \rightarrow \R\\
&\llla \phi, v \rrra := \la Q^\dyn \phi \,|\, Q^\dyn v \ra_{L^2(L)} \:.
\end{split}
\end{equation*}
This is positive definitive, as follows from \eqref{hyprev}. Indeed, for any $\phi\neq 0$,
$$
\llla \phi|\phi \rrra = \| Q^\dyn\phi\|_{L^2(L)}^2\ge \Gamma^{-2} \:\| \phi\|_{L^2(L)}^2>0.
$$ 
Taking the completion with respect to this scalar product we obtain a Hilbert space
$(\mathcal{H}, \llla .,. \rrra)$. The corresponding norm is denoted by~$\norm . \norm$.

By construction, every  $u\in \mathcal{H}$ is the limit of a Cauchy sequence $(u_n)_{n \in \N}$ in
the normed space~$\overline{C}_0^\tmax(M,SM)$ with the norm $\norm . \norm$. The identity~\eqref{hyprev} gives for any $n,m\in\N$
$$
\| u_n-u_m\|_{L^2(L)}\le \Gamma\: \| Q^\dyn u_n-Q^\dyn u_m\|_{L^2(L)} = \Gamma\, \norm\! u_n-u_m\! \norm \:.
$$ 
In particular, $u_n$ and $Q^\dyn u_n$ are Cauchy in $L^2(L,SM)$. By completeness of $L^2(L,SM)$, we conclude that there is $\bar{u}\in L^2(L,SM)$ and $v\in L^2(L,SM)$ such that
$$
u_n\rightarrow \bar{u}\quad\mbox{ and }\quad Q^\dyn u_n\rightarrow v\quad\mbox{ in $L^2(L,SM)$}.
$$
Bearing in mind the definition of completion space, one can identify $u$ with $\bar{u}$. 
In view of Definition \ref{finitetimerange}~(ii), by extending $u$ to zero on the complement of $L$, we note  that $Q^\dyn u$ is a well-defined function in~$L^2_{\mathrm{tc}}(M,SM)$. It follows that~$v=(Q^\dyn u)|_L$.
Indeed,
\begin{equation*}
\begin{split}
\| v-Q^\dyn u\|_{L^2(L)}=\lim_{n\to \infty}\| Q^\dyn(u_n-u)\|_{L^2(L)}=\lim_{n\to\infty}\norm\! u_n-u \!\norm=0
\end{split}
\end{equation*}
In particular, 
$\H\subset L^2(L,SM)$ and $Q^\dyn(\H)\subset L^2(L,SM)$.
Moreover, by continuity, 
\begin{equation}\label{energyident2}
\| u\|_{L^2(L)}\le \Gamma\, \| Q^\dyn u\|_{L^2(L)}=\Gamma \norm u \norm\quad\mbox{for all $u\in \H$}.
\end{equation}

We now consider the linear functional~$\la \mathscr{E} w \,|\, . \ra_{L^2(L)}$ on $\H$.
Applying the Schwarz inequality and~\eqref{hyprev}, we obtain
\[ \big| \la \mathscr{E} w \,|\, u \ra_{L^2(L)} \big| \leq \| \mathscr{E} w \|_{L^2(L)} \:  \| u\|_{L^2(L)} 
\leq \Gamma\:\| \mathscr{E} w\|_{L^2(L)} \:  \norm u \norm \:, \]
proving that the linear functional~$\la \mathscr{E} w \,|\, . \ra_{L^2(L)}$ is bounded on~$\mathcal{H}$.
By the Fr{\'e}chet-Riesz theorem there is a unique vector~$V \in {\mathcal{H}}$ with
\[ 
\la \mathscr{E} w \,|\, u\ra_{L^2(L)} = \llla V | u \rrra = 
\la Q^\dyn V \,|\, Q^\dyn u \ra_{L^2(L)}\qquad \text{for all~$u \in {\mathcal{H}}$}\:. \] 
Hence
\beq \label{solform}
v := \mathscr{E}\,Q^\dyn V \big|_L \in L^2(L,SM)
\eeq
is the desired weak solution. 

It remains to prove the estimate~\eqref{vbound}. To this end,
we use that the Fr{\'e}chet-Riesz theorem also yields that the norm of~$v$
equals the sup-norm of the linear functional. Hence, using \eqref{energyident2}, it follows that
\begin{equation*}
\begin{split}
\| v\| _{L^2(L)} &= \| Q^\dyn V \|_{L^2(L)} = \norm V \norm = | \la \mathscr{E} w \,|\, . \ra_{L^2(L)} |_{\H^*}  \\
&\le \Gamma \norm \mathscr{E} w\norm_{L^2(L)}= \Gamma \norm w\norm_{L^2(L)}\:, 
\end{split}
\end{equation*}
This concludes the proof.
\QED

Before going on, we comment on the question of {\em{uniqueness}} of weak solutions in the time strip~$L$
with zero initial data. 
Given~$w\in L^2(L,SM)$, let~$\psi$ and~$\tilde{\psi}$ be two weak solutions to~\eqref{dynfull}.
Subtracting the corresponding weak equations~\eqref{weak}, it follows that
\[ \bra Q^\dyn \phi \,|\, \psi-\tilde{\psi} \ket_{\K_L} = 0 \qquad
\text{for all~$\phi \in \overline{C}_0^\tmax(M,SM)$}\:. \]
One should keep in mind that wave functions~$\psi$ and~$\tilde{\psi}$ satisfy the initial conditions only in the
weak sense. If we assume that these wave functions coincide strongly at initial time in the sense that
\[ \|\psi- \tilde{\psi} \|_\rho^{t_0}=0 \:, \]
the Green's formula of Lemma~\ref{lemmagreen} yields
\[ \bra \phi \,|\, Q^\dyn (\psi-\tilde{\psi}) \ket_{\K_L} = 0 \qquad
\text{for all~$\phi \in \overline{C}_0^\tmax(M,SM)$}\:. \]
Now, \textit{if the vector space~$\overline{C}_0^{\tmax}(M,SM)$ were dense in $L^2(L,SM)$}, 
it would follow that~$Q^\dyn (\psi-\tilde{\psi})=0$. If we knew
in addition that~$\psi-\tilde{\psi} \in C^\vary_{\text{loc, t}}(M, SM)$, Proposition~\ref{prpenergy} would yield
\[ \big\|\psi - \tilde{\psi} \big\|_{L^2(L)} \le \Gamma \,\big\|Q^\dyn (\psi-\tilde{\psi}) \big\|_{L^2(L)}=0 \:, \]
implying that~$\psi=\tilde{\psi}$ almost everywhere.
In most applications, however, the vector space~$\overline{C}_0^{\tmax}(M,SM)$
can{\em{not}} be chosen to be dense, because the hyperbolicity conditions of Definition~\ref{defhypcond}
typically holds only for a space $C^\vary_0(M, SM)$ of wave functions which are
sufficiently ``nice'' in the sense that they vary only on macroscopic scales.
In this case, the weak equation~\eqref{weak} determines the solutions only up to microscopic fluctuations
on length scales which are not accessible to measurements.
Nevertheless, the construction of Theorem~\ref{thmexist} gives a canonical solution
of a particular form (see~\eqref{solform}).

\subsection{Construction of Global Retarded Weak Solutions}
We now explain how to construct global retarded weak solutions.
Following~\cite[Sections~3.10 and~4.3]{linhyp} we work with the concept of {\em{shielding}},
adapted and simplified to our setting. We remark that a more general 
construction of global solutions based on an iteration scheme is given in~\cite{localize}.

\begin{Def} \label{defshieldnew}
The dynamical wave equation is {\bf{shielded in time strips}} if the following condition holds.
For every~$t_0<t_1$ there are~$t'_1>t_1$ and~$t'_0<t_0$ such that for all~$t_{\max} \geq t'_1$
and all sufficiently small~$t_{\min} < t'_0$, in the time strips~$L=L_{t_{\min}}^{t_1}$ and~$L'=L_{t_{\min}}^{t'_1}$
the following implication holds for all~$\phi_1\in \H_{t_{\min}}^{t_{\max}}$ and~$\phi_2 \in \H_{t'_0}^{t_{\max}}$
\beq \label{shieldcond2}
\begin{split}
\la Q^\dyn \phi_1 + (1-\eta_{t'_0}) \,Q^\dyn &\phi_2 \,|\, Q^\dyn \psi \ra_{L^2(L', d\rho)} = 0 \quad \forall\, \psi \in \overline{C}^{t'_1}_0 \\
&\Longrightarrow \qquad \big(Q^\dyn \phi_1 + (1-\eta_{t'_0}) \,Q^\dyn \phi_2 \big) \big|_L = 0\:.
\end{split}
\eeq
\end{Def} \noindent
We point out that this condition depends on the choice of the functions~$\eta_t$;
it can be understood as an implicit condition on these functions for large negative~$t$.
Alternatively, one could work with the weaker condition where~$\eta_{t'_0}$ is
replaced by a convolution, i.e.\
\[ \eta_{t'_0} \;\rightarrow\; \int_{-\infty}^\infty \Xi(\tau)\: \eta_{t'_0-\tau}\: d\tau \]
for a suitable test function~$\Xi$. This generalization is a direct consequence of the linearity of
the equation. For notational simplicity, we shall prove our results only for the stronger condition~\eqref{shieldcond2}.

Our strategy is to consider the weak solution constructed in Theorem~\ref{thmexist} in
time strips~$L_{t_0}^{t_1}$ and to take the limits~$t_1 \rightarrow \infty$ and~$t_0 \rightarrow -\infty$.
Similar to~\eqref{kreinL}, the Krein inner product in the whole spacetime is defined by
\beq \label{krein}
\bra \eta | \eta' \ket_\K := \int_M \Sl \eta(x) \,|\, \eta'(x) \Sr_x\: d\rho(x) \:.
\eeq

\begin{Prp} \label{lemmafuturepast}
Let~$w \in L_{\mathrm{pc}}^2(M,SM)$ be a past compact inhomogeneity.
Moreover, let~$\psi^{t_{\min}}_{t_{\max}}$ be the corresponding weak solution of Theorem~\ref{thmexist} in the time
strip~$L_{t_{\min}}^{t_{\max}}$. Then the following limit exists,
\beq \label{limL2loc}
\lim_{t_{\min} \rightarrow -\infty} \:\lim_{t_{\max} \rightarrow \infty} \psi_{t_{\min}}^{t_{\max}}
= \psi \qquad \text{with convergence in~$L^2_{\mathrm{loc,t}}(M, SM)$} \:.
\eeq
The resulting wave function is past compact, $\psi \in L^2_{\mathrm{pc}}(M, SM)$.
Moreover, it is a {\bf{global weak solution}}, i.e.\
\beq \label{globweak}
\bra Q^\dyn \phi \,|\, \psi \ket_{\K} = \bra \phi \,|\, w \ket_{\K} \qquad
\text{for all~$\phi \in C^\vary_0(M,SM)$} \:.
\eeq
\end{Prp}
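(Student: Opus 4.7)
The plan is to realize $\psi$ as the iterated limit of the local weak solutions provided by Theorem~\ref{thmexist}, verifying in turn that the limit exists in $L^2_{\mathrm{loc,t}}$, that it is past-compact, and that it satisfies~\eqref{globweak}.

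For the inner limit, I would fix $t_{\min}$ and $t_1>t_{\min}$, and let $t'_1,t'_0$ be the values supplied by Definition~\ref{defshieldnew}. For $t_{\max},t'_{\max}\ge t'_1$, I compare $\psi^{t_{\min}}_{t_{\max}}$ and $\psi^{t_{\min}}_{t'_{\max}}$ after extending by zero to a common larger strip, and rewrite the difference using the Fr\'echet--Riesz representatives $V,V'$ from the proof of Theorem~\ref{thmexist} as an expression of the form $Q^\dyn\phi_1+(1-\eta_{t'_0})\,Q^\dyn\phi_2$. Subtracting the two weak equations~\eqref{weak} tested against arbitrary $\psi\in\overline{C}_0^{t'_1}(M,SM)$ establishes the hypothesis of the shielding implication~\eqref{shieldcond2}, whose conclusion forces $(Q^\dyn\phi_1+(1-\eta_{t'_0})\,Q^\dyn\phi_2)\big|_{L_{t_{\min}}^{t_1}}=0$ and hence, through the formula $v=\mathscr{E}\,Q^\dyn V|_L$ defining the canonical weak solution in~\eqref{solform}, the vanishing of $\psi^{t_{\min}}_{t_{\max}}-\psi^{t_{\min}}_{t'_{\max}}$ on $L_{t_{\min}}^{t_1}$. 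Thus $(\psi^{t_{\min}}_{t_{\max}})_{t_{\max}}$ is eventually constant on every bounded strip, so the limit $\psi_{t_{\min}}$ exists in $L^2_{\mathrm{loc,t}}(M,SM)$ and itself satisfies the local weak equation.

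For the outer limit and past-compactness, choose $t^*$ with $\eta_{t^*}w=0$ from past-compactness of $w$. For $t_{\min}<t^*$ and any $\phi\in\overline{C}_0^{t^*}(M,SM)$, the test function is supported where $\eta_{t^*}=1$ and $w$ vanishes there, so $\bra \phi\,|\,w\ket_{\K}=0$; repeating the shielding argument then forces $\eta_{t^*}\,\psi_{t_{\min}}=0$ in $L^2_{\mathrm{loc,t}}$. In particular $\psi_{t_{\min}}$ no longer depends on $t_{\min}$ once $t_{\min}<t^*$, so $\psi:=\lim_{t_{\min}\to-\infty}\psi_{t_{\min}}$ exists trivially and lies in $L^2_{\mathrm{pc}}(M,SM)$. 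For the global weak equation, given $\phi\in C_0^\vary(M,SM)$ I would use the finite time range of $Q^\dyn$ to pick $t_{\min},t_{\max}$ so extreme that $\supp\phi\cup\supp(Q^\dyn\phi)\Subset L_{t_{\min}}^{t_{\max}}$ and $\phi\in\overline{C}_0^{t_{\max}}(M,SM)$; the local weak equation~\eqref{weak} then coincides with the global Krein pairing, and passing to the iterated limit using $L^2_{\mathrm{loc,t}}$-convergence yields~\eqref{globweak}.

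The hard part is the inner limit. Since the remark after Theorem~\ref{thmexist} stresses that weak solutions are not unique in general, the difference of two such solutions is only known a priori to be annihilated by $Q^\dyn$ in the dual sense provided by the weak equation. The shielding condition~\eqref{shieldcond2} is tailored precisely to upgrade this to an $L^2$-vanishing on the relevant strip, but the delicate point will be to identify the Fr\'echet--Riesz representatives of the two canonical solutions of Theorem~\ref{thmexist} with the specific pair $(\phi_1,\phi_2)$ appearing in~\eqref{shieldcond2}, which will require careful bookkeeping of how the extension by zero and the Euclidean operator $\mathscr{E}$ interact with the change of strip.
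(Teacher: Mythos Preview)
Your overall plan---fix a bounded strip, use the shielding implication~\eqref{shieldcond2} to show the finite-strip solutions stabilize there, then read off past compactness and the global weak equation---is the right framework, but the specific comparison you set up does not fit the shielding hypothesis, and this is a genuine gap.

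Concretely, for the inner limit you propose to compare two big-strip solutions~$\psi_{t_{\min}}^{t_{\max}}$ and~$\psi_{t_{\min}}^{t'_{\max}}$ with the \emph{same} lower bound~$t_{\min}$. Their difference is~$\mathscr{E}\,Q^\dyn(V-V')$ with~$V\in\H_{t_{\min}}^{t_{\max}}$, $V'\in\H_{t_{\min}}^{t'_{\max}}$. To invoke~\eqref{shieldcond2} you need the form~$Q^\dyn\phi_1+(1-\eta_{t'_0})\,Q^\dyn\phi_2$ with~$\phi_2\in\H_{t'_0}^{t_{\max}}$ and~$t_{\min}<t'_0$; there is no natural way to manufacture the~$(1-\eta_{t'_0})$ cutoff here. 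You could try~$\phi_2=0$, but then you must reduce the subtracted Riesz identities, which are~$L^2$-pairings on the large strips~$L_{t_{\min}}^{t_{\max}}$ and~$L_{t_{\min}}^{t'_{\max}}$ with different weights~$\eta_I$, to a single pairing on~$L'=L_{t_{\min}}^{t'_1}$. Since~$Q^\dyn\psi$ for~$\psi\in\overline{C}_0^{t'_1}$ can leak beyond~$\{\eta_{t'_1}=1\}$ by the finite time range~$r$, the weights do not all agree on its support, and your sketch does not address this. The same issue recurs in your past-compactness step (``repeating the shielding argument then forces~$\eta_{t^*}\psi_{t_{\min}}=0$''), which remains a bare assertion.

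The paper avoids both difficulties by making a single, different comparison: it fixes the \emph{reference} solution~$\psi_{t'_0}^{t'_1}$ on the small strip~$[t'_0,t'_1]$, extends it by zero to the past via the factor~$(1-\eta_{t'_0})$, and compares~$\psi_{t_{\min}}^{t_{\max}}$ with~$(1-\eta_{t'_0})\,\psi_{t'_0}^{t'_1}$. This is exactly what produces the shape required by~\eqref{shieldcond2}, with~$\phi_1=V_{t_{\min}}^{t_{\max}}$ and~$\phi_2=-V_{t'_0}^{t'_1}\in\H_{t'_0}^{t'_1}\subset\H_{t'_0}^{t_{\max}}$. The conclusion~$\big(\psi_{t_{\min}}^{t_{\max}}-(1-\eta_{t'_0})\,\psi_{t'_0}^{t'_1}\big)\big|_{L_{t_{\min}}^{t_1}}=0$ then gives two things at once: restricting to~$L_{t_0}^{t_1}$ (where~$\eta_{t'_0}=0$) shows that~$\psi_{t_{\min}}^{t_{\max}}$ coincides there with the fixed~$\psi_{t'_0}^{t'_1}$, so both limits exist simultaneously; restricting further into the past (where the extended reference vanishes identically) shows directly that~$\psi_{t_{\min}}^{t_{\max}}$ vanishes there, giving past compactness without a second invocation of shielding. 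Your argument for the global weak equation is fine once the limit is in hand.
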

\Proof Given~$t_0<t_1$, we choose~$t'_0$ and~$t'_1$ as
in Definition~\ref{defshieldnew}. 
Our first step is to extend the weak solution~$\psi_{t'_0}^{t'_1}$ by zero to the past.
To this end, we write out the spacetime integrals in the weak equation to obtain 
\[ \int_M \Sl (Q^\dyn \phi)(x) \,|\, \psi_{t'_0}^{t'_1}(x) \Sr_x\: \eta_{[t'_0, t'_1]}(x) \: d\rho(x)
= \int_M \Sl \phi(x) \,|\, w(x) \Sr_x\: \eta_{[t'_0, t'_1]}(x) \: \: d\rho(x) \:, \] 
valid for all~$\phi \in \overline{C}_0^{t'_1}(M,SM)$.
By increasing~$t'_1$ we can arrange that~$\eta_{[t'_0, t'_1]} = \eta_{t'_1}\, (1-\eta_{t'_0})$.
Since~$w$ is supported in the future of~$t'_0$, we obtain
\[ \int_M \Sl (Q^\dyn \phi) \,|\, (1-\eta_{t'_0})\, \psi_{t'_0}^{t'_1} \Sr_x\: \eta_{t'_1}(x)\: d\rho(x)
= \int_M \Sl \phi \,|\, w \Sr_x\: \eta_{t'_1}(x) \: d\rho(x) \:, \]
where we extended~$\psi_{t'_0}^{t'_1}$ by zero to the past of~$t'_0$.
Choosing~$t_{\min}<t'_0$ sufficiently small, both integrands vanish in the past of~$t_{\min}$. We thus obtain
the weak equation
\[ \int_M \Sl (Q^\dyn \phi) \,|\, (1-\eta_{t'_0})\, \psi_{t'_0}^{t'_1} \Sr_x\: \eta_{[t_{\min}, t'_1]}(x)\: d\rho(x)
= \int_M \Sl \phi \,|\, w \Sr_x\: \eta_{[t_{\min}, t'_1]}(x) \: d\rho(x) \:. \]
In the next step we choose~$t_{\min}<t_1$ and
subtract the weak equation for~$\psi_{t_{\min}}^{t_{\max}}$
restricted to the time strip~$[t_{\min}, t'_1]$.
Using that~$\overline{C}_0^{t_{\max}} \supset \overline{C}_0^{t'_1}$, we obtain
\[ 
\bra \psi_{t_{\min}}^{t_{\max}} -
(1-\eta_{t'_0})\, \psi_{t'_0}^{t'_1} \,\big|\, Q^\dyn \psi \ket_{\K_{L'}}
= 0 \quad \forall\, \psi \in \overline{C}^{t'_1}_0 \:. \]
Representing these weak solutions as~$\psi_{t_{\min}}^{t_{\max}}= \mathscr{E} Q V_{t_{\min}}^{t_{\max}}$
and~$\psi_{t'_0}^{t'_1}= \mathscr{E} Q V_{t'_0}^{t'_1}$ with
\[ V_{t_{\min}}^{t_{\max}} \in \H_{t_{\min}}^{t_{\max}} \qquad \text{and} \qquad V_{t'_0}^{t'_1}
\in \H_{t'_0}^{t'_1} \subset \H_{t'_0}^{t_{\max}} \]
(where in the last inclusion we extend the wave functions by zero),
we can apply~\eqref{shieldcond2} to conclude that
\[ \big( \psi_{t_{\min}}^{t_{\max}} - (1-\eta_{t'_0})\, \psi_{t'_0}^{t'_1} \big) \big|_{L_{t_{\min}}^{t_1}}=0\:.  \]
If $t_0'$ and $t_0''\in (t_{\min},t_0')$ are sufficiently small, the above identity has two consequences:
\bitem
\item[(a)] $\quad\displaystyle 0 = \big( \psi_{t_{\min}}^{t_{\max}} - (1-\eta_{t'_0})\, \psi_{t'_0}^{t'_1} \big) \big|_{L_{t_{0}}^{t_1}}=\big( \psi_{t_{\min}}^{t_{\max}} - \, \psi_{t'_0}^{t'_1} \big) \big|_{L_{t_{0}}^{t_1}} $
\item[(b)] $\quad \displaystyle 0 = \big( \psi_{t_{\min}}^{t_{\max}} - (1-\eta_{t'_0})\, \psi_{t'_0}^{t'_1} \big) \big|_{L_{t_{\min}}^{t_0''}}= \psi_{t_{\min}}^{t_{\max}}  \big|_{L_{t_{\min}}^{t''_0}}$
\eitem
From~(a) we see that the solution does not change on $L_{t_0}^{t_1}$ if $t_{\min}$ is further decreased or $\tmax$ further increased. This shows that the limit~\eqref{limL2loc}
exists. Next, (b) and the arbitrariness of $t_{\min}$ show that the global solution~\eqref{limL2loc} is past compact,
because it vanishes in the past of~$t_0''$.
\QED

\subsection{Causal Green's Operators} \label{secgreenretarded} 
\begin{Def} \label{defretarded}   The {\bf{retarded Green's operator}} $s^\wedge$ is defined as the mapping
\[ s^\wedge \:: L^2_{\mathrm{pc}}(M,SM) \rightarrow L^2_{\mathrm{pc}}(M,SM)\ , \qquad
s^\wedge(w) = -\lim_{t_{\min} \rightarrow -\infty} \:\lim_{t_{\max} \rightarrow \infty} \psi^{t_{\max}}_{t_{\min}} \:. \]
\end{Def} \noindent
Reverting the time direction and adapting the conditions in Definition~\ref{defshieldnew}
in an obvious manner, one obtains similarly the {\em{advanced Green's operator}}
\[ s^\vee \:: L^2_{\mathrm{fc}}(M,SM) \rightarrow L^2_{\mathrm{fc}}(M,SM)\:. \]

\begin{Lemma}
The Green's operators~$s^\wedge$ and~$s^\vee$ are linear.
\end{Lemma}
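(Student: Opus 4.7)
The plan is to trace through the explicit construction of $s^\wedge$ in Theorem~\ref{thmexist} and Proposition~\ref{lemmafuturepast} and observe that linearity in the inhomogeneity $w$ is preserved at every stage. First I would fix a time strip $L = L_{t_{\min}}^{t_{\max}}$ and unpack the canonical weak solution $\psi^{t_{\max}}_{t_{\min}}$ produced by Theorem~\ref{thmexist}. Recall that this solution has the explicit form $\psi^{t_{\max}}_{t_{\min}} = \mathscr{E}\, Q^\dyn V\big|_L$, where $V \in \mathcal{H}$ is determined by the Fr{\'e}chet--Riesz representation of the bounded linear functional $u \mapsto \la \mathscr{E} w \,|\, u \ra_{L^2(L)}$ on $\mathcal{H}$. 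The key observation is that the assignment $w \mapsto \la \mathscr{E} w \,|\, \cdot \ra_{L^2(L)}$ from $L^2(L,SM)$ into $\mathcal{H}^*$ is linear (since $\mathscr{E}$ is linear and the inner product $\la .|. \ra_{L^2(L)}$ is linear in the first slot after pairing with $\mathscr{E} w$), the Fr{\'e}chet--Riesz correspondence $\mathcal{H}^* \to \mathcal{H}$ respects this linearity, and $\mathscr{E}$ and $Q^\dyn$ themselves are linear. Composing these facts, the map $w \mapsto \psi^{t_{\max}}_{t_{\min}}$ is linear for every fixed choice of $t_{\min}, t_{\max}$.

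Second, I would push this pointwise linearity through the two successive limits in Definition~\ref{defretarded}. Given $w_1, w_2 \in L^2_{\mathrm{pc}}(M,SM)$ and scalars $\alpha, \beta \in \C$, the combination $\alpha w_1 + \beta w_2$ is again past compact, so Proposition~\ref{lemmafuturepast} applies to it. For every admissible pair $(t_{\min}, t_{\max})$ the identity
\[
\psi^{t_{\max}}_{t_{\min}}(\alpha w_1 + \beta w_2) = \alpha\, \psi^{t_{\max}}_{t_{\min}}(w_1) + \beta\, \psi^{t_{\max}}_{t_{\min}}(w_2)
\]
holds by the previous paragraph. Since the limits are taken in the topology of $L^2_{\mathrm{loc,t}}(M,SM)$, in which scalar multiplication and addition are continuous, the identity passes to the limit and yields $s^\wedge(\alpha w_1 + \beta w_2) = \alpha\, s^\wedge(w_1) + \beta\, s^\wedge(w_2)$.

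The advanced Green's operator $s^\vee$ is handled by the identical argument after reversing the time direction and interchanging the roles of past and future compactness. I do not anticipate a substantial obstacle: the only point requiring minor care is to verify that the Fr{\'e}chet--Riesz representation respects linearity with the paper's inner product conventions (which it does, whether one adopts linearity or conjugate-linearity in the first slot, since any conjugation cancels when the same convention is applied to both $\llla .|. \rrra$ and $\la .|. \ra_{L^2(L)}$). The remainder is routine continuity and linearity bookkeeping.
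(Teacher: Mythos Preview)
Your proposal is correct and follows essentially the same route as the paper: both arguments reduce to the linearity of the map $w \mapsto \psi^{t_{\max}}_{t_{\min}}$ via the explicit Fr{\'e}chet--Riesz representation $\psi = \mathscr{E}\,Q^\dyn V$, and then pass from the time-strip solutions to $s^\wedge$. The only cosmetic difference is that the paper exploits the stabilization property established in the proof of Proposition~\ref{lemmafuturepast} (on any fixed strip the approximants eventually coincide with the limit) to avoid explicitly invoking continuity of the limit, whereas you pass the identity through the $L^2_{\mathrm{loc,t}}$ limits directly; both are valid.
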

\begin{proof}
We only consider the retarded Green's operator (the proof for the advanced Green's operator is similar).
We choose a time strip~$L$. Let $w,v\in L^2_{\mathrm{pc}}(M,SM)$. From the proof of Proposition~\ref{lemmafuturepast} we know that, for sufficiently large $\tmax$ and sufficiently small $t_{\min}$,
$$
s^\wedge w=-\psi(w)_{t_{\min}}^\tmax,\quad s^\wedge v=-\psi(v)_{t_{\min}}^\tmax,\quad s^\wedge(v+\lambda w)=-\psi(v+\lambda w)_{t_{\min}}^\tmax\quad\mbox{on $L$} \:,
$$
where $\psi(u)_{t_{\min}}^{\tmax}$ is the weak solution on $L_{t_{\min}}^{\tmax}$ with inhomogeneity $u|_{L_{t_{\min}}^{\tmax}}$ as constructed in Theorem \ref{thmexist}.
Following the proof of Theorem \ref{thmexist}, one sees that 
\begin{equation}\label{formsolution}
\psi(u)_{t_{\min}}^{\tmax}=\mathcal{E}\,Q^\dyn V_u|_{L_{t_{\min}}^{\tmax}} \:,
\end{equation}
where $V_u$ is the unique vector $\H_{t_{\min}}^{\tmax}$ satisfying
$$
\bra u|\phi\ket_{\K_{t_{\min}}^{\tmax}}=\llla V_u|\phi\rrra\quad\mbox{for all }\phi\in \H_{t_{\min}}^{\tmax}.
$$
By uniqueness, one sees that $V_{u+\lambda u'}=V_u+\lambda V_{u'}$. 
Combining this result with~\eqref{formsolution} concludes the proof.
\end{proof}

We next analyze the kernel of the Green's operators. Our starting point is the observation
that in the global weak equation~\eqref{globweak},
the inhomogeneity~$w$ can be changed arbitrarily by wave functions in the orthogonal
complement of the test wave functions. More precisely, denoting the orthogonal complement
of~$C^\vary_0(M, SM)$ with respect to the Krein inner product~$\bra .|. \ket_\K$ by~$(C^\vary_0(M, SM))^\perp$, 
the right side of~\eqref{globweak} vanishes identically for any~$w \in (C^\vary_0(M, SM))^\perp$.
This suggests that also the Green's operators should vanish on such wave functions.
This is indeed the case, as is shown in the next lemma.

\begin{Lemma}\label{lemmaquotient}
The Green's operators vanish on~$\big( C^\vary_0(M,SM) \big)^\perp$ in the sense that
\begin{align}
\big( C^\vary_0(M,SM) \big)^\perp \cap L^2_{\mathrm{pc}}(M,SM) &\subset \ker s^\wedge \label{perp1} \\
\big( C^\vary_0(M,SM) \big)^\perp \cap L^2_{\mathrm{fc}}(M,SM) &\subset \ker s^\vee \label{perp2} \:.
\end{align}
\end{Lemma}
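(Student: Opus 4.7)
The plan is to prove \eqref{perp1} directly from the explicit construction in Theorem~\ref{thmexist} together with the definition of $s^\wedge$ in Definition~\ref{defretarded}; the statement \eqref{perp2} then follows by reversing the time orientation. Let $w\in (C^\vary_0(M,SM))^\perp\cap L^2_{\mathrm{pc}}(M,SM)$. By Proposition~\ref{lemmafuturepast} and Definition~\ref{defretarded}, the wave function $s^\wedge w$ is obtained, in the $L^2_{\mathrm{loc,t}}$-sense, as the double limit of $-\psi_{t_{\min}}^{t_{\max}}$, where $\psi_{t_{\min}}^{t_{\max}}$ is the weak solution in the time strip $L=L_{t_{\min}}^{t_{\max}}$ constructed in Theorem~\ref{thmexist}. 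It therefore suffices to show that $\psi_{t_{\min}}^{t_{\max}}=0$ for all sufficiently small $t_{\min}$ and arbitrary $t_{\max}$.

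Recall from the proof of Theorem~\ref{thmexist} that this weak solution has the explicit form $\psi_{t_{\min}}^{t_{\max}}=\mathscr{E}\,Q^\dyn V\big|_L$, where $V\in\mathcal{H}$ is the Fr\'echet--Riesz representative of the linear functional $u\mapsto \la \mathscr{E} w\,|\,u\ra_{L^2(L)}$ on the Hilbert space $\mathcal{H}$, which is the completion of $\overline{C}_0^{\tmax}(M,SM)$ in the norm $\norm u\norm=\|Q^\dyn u\|_{L^2(L)}$. Hence it is enough to show that this functional vanishes on the dense subspace $\overline{C}_0^{\tmax}(M,SM)$. Using the identity $\la\mathscr{E} w\,|\,u\ra_{L^2(L)}=\bra w\,|\,u\ket_{\K_L}$ (which follows from $\mathscr{E}^2=\1$ together with $\lla\chi|\xi\rra_x=\Sl\chi|s_x\xi\Sr_x$), the task reduces to proving $\bra w\,|\,\phi\ket_{\K_L}=0$ for every $\phi\in\overline{C}_0^{\tmax}(M,SM)$.

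The main step is a support decomposition that splits this Krein pairing into a global piece and a past-compact remainder. Since $w\in L^2_{\mathrm{pc}}(M,SM)$, there is $t_w$ with $\eta_{t_w}w=0$; by the monotonicity property $\partial_t\eta\ge 0$ of Definition~\ref{deflocfoliate}, this gives $\eta_{t_{\min}}w=0$ for every $t_{\min}\le t_w$. On the other hand, $\phi\in\overline{C}_0^{\tmax}(M,SM)$ satisfies $(1-\eta_{\tmax})\phi=0$, so $\supp\phi\subset\{\eta_{\tmax}=1\}$, where $\eta_I=\eta_{\tmax}-\eta_{t_{\min}}=1-\eta_{t_{\min}}$. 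Consequently,
\[
\bra w\,|\,\phi\ket_{\K_L}
=\int_M \Sl w\,|\,\phi\Sr_x\,\eta_I(x)\,d\rho(x)
=\bra w\,|\,\phi\ket_\K - \int_M \Sl w\,|\,\phi\Sr_x\,\eta_{t_{\min}}(x)\,d\rho(x).
\]
The first term vanishes because $\phi\in\overline{C}_0^{\tmax}(M,SM)\subset C^\vary_0(M,SM)$ and $w$ was assumed to lie in $(C^\vary_0(M,SM))^\perp$. The second term vanishes because $\eta_{t_{\min}}w=0$ pointwise. Therefore the functional vanishes on $\overline{C}_0^{\tmax}(M,SM)$, hence on $\mathcal{H}$ by continuity, yielding $V=0$ and $\psi_{t_{\min}}^{t_{\max}}=0$. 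Passing to the limit gives $s^\wedge w=0$, and reversing the time orientation proves the corresponding inclusion for $s^\vee$.

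The only potential difficulty is purely bookkeeping: one must match the softened cut-offs $\eta_I$ appearing in $\bra\cdot|\cdot\ket_{\K_L}$ with the global Krein pairing $\bra\cdot|\cdot\ket_\K$, which is precisely what the two conditions $(1-\eta_{\tmax})\phi=0$ and $\eta_{t_{\min}}w=0$ accomplish. Once this is set up, the argument is a one-line application of Fr\'echet--Riesz to the representation formula \eqref{solform} from Theorem~\ref{thmexist}, and no uniqueness statement (which does not hold in general) is needed.
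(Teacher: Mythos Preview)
Your proof is correct and follows essentially the same approach as the paper's: both arguments reduce to showing that the linear functional $u\mapsto\la\mathscr{E}w\,|\,u\ra_{L^2(L)}$ vanishes on $\overline{C}_0^{\tmax}(M,SM)$, convert this via $\mathscr{E}$ to the Krein pairing, and then use the support conditions $(1-\eta_{\tmax})\phi=0$ and $\eta_{t_{\min}}w=0$ to identify the pairing on $L$ with the global pairing $\bra w\,|\,\phi\ket_\K$, which vanishes by the orthogonality hypothesis. The only difference is cosmetic: you spell out the splitting $\eta_I=1-\eta_{t_{\min}}$ on $\supp\phi$ as two separate terms, whereas the paper compresses this into the single chain $\la\mathscr{E}w\,|\,u\ra_{L^2(L)}=\la\mathscr{E}w\,|\,u\ra_{L^2(M,d\rho)}=\bra w\,|\,u\ket_\K=0$.
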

\begin{proof} We only prove~\eqref{perp1}, because the proof of~\eqref{perp2} is similar.
Thus let~$w \in \big( C^\vary_0(M,SM) \big)^\perp \cap L^2_{\mathrm{pc}}(M,SM)$.
In view of the definition of~$s^\wedge w$ as a limit (see Definition~\ref{defretarded}),
it clearly suffices to show that the weak solution~$\psi^{t_{\max}}_{t_{\min}}$ vanishes in
sufficiently large time strips. To this end, we return to the existence proof of Theorem~\ref{thmexist}.
We choose~$L$ so large that~$w$ vanishes in the initial time surface layer as well as in its past.
Then for any~$u \in \overline{C}_0^\tmax(M,SM)$,
\[ \la {\mathscr{E}} w | u \ra_{L^2(L)} = \la {\mathscr{E}} w | u \ra_{L^2(M, d\rho)}
= \bra w | u \ket_\K = 0 \]
(here we made use of the fact that~$u$ vanishes in the surface layer at time~$\tmax$; see~\eqref{Cbardef}).
Therefore, the linear functional~$\la \mathscr{E} w \,|\, . \ra_{L^2(L)}$ vanishes on a dense subset
of the Hilbert space~${\mathcal{H}}$, implying that it is represented by the zero vector~$0=V \in {\mathcal{H}}$.
As a consequence, also the weak solution~$v$ in~\eqref{solform} vanishes.
\QED

\subsection{The Causal Fundamental Solution and its Properties} \label{secfundamental}
The fermionic {\em{causal fundamental solution}} is introduced as the mapping
\beq \label{kdef}
k:= \frac{i}{2} \big( s^\vee - s^\wedge \big) \::\: L^2_{\text{tc}}(M, SM) \rightarrow L^2_{\text{loc,t}}(M, SM) \:.
\eeq
For the applications,  it is convenient to restrict attention to a smaller domain of ``nice''
wave functions. To this end, we denote the wave functions in~$C^\vary(M, SM)$ with spatially
compact support by~$C^\vary_\sc(M,SM)$ (meaning that the wave function has compact support
in any time strip). We define
\begin{align*}
\scrW^{**}_0(M,SM) &:= \big\{ \psi \in C^\vary_0(M,SM) \:\big|\: s^\vee Q^\dyn \psi = s^\wedge Q^\dyn \psi = -\psi
\big\} \\
\scrW^*_{\mathrm{tc}}(M,SM) &:= \big\{ \psi \in L^2_{\mathrm{tc}}(M,SM) \:\big|\: s^\vee \psi, s^\wedge \psi \in C^\vary_\sc(M, SM)\big\} \Big/ \big(C^\vary_0(M, SM) \big)^\perp \\
\scrW_{\mathrm{E}}(M,SM) &:= \big\{ s^\vee \psi_1 + s^\wedge \psi_2 \:\big|\: \psi_1 \in L^2_{\mathrm{fc}}(M,SM)
\text{ and }\  s^\vee \psi_1 \in C^\vary_\sc(M, SM), \\
&\qquad\qquad\qquad\qquad\;\: \psi_2 \in  L^2_{\mathrm{pc}}(M,SM) \,\text{ and } s^\wedge \psi_2 \in C^\vary_\sc(M, SM) \big\} \\
\scrW^*_{\mathrm{E}}(M,SM)&:= \big\{ \psi_1 + \psi_2 \:\big|\: \psi_1 \in L^2_{\mathrm{fc}}(M,SM) \text{ and } s^\vee \psi_1 \in C^\vary_\sc(M, SM), \\
&\qquad \qquad \qquad\:\, \psi_2 \in L^2_{\mathrm{pc}}(M,SM) \,\text{ and } s^\wedge \psi_2 \in C^\vary_\sc(M, SM) \big\}  \Big/ \\
&\!\!\!\!\!\!\!\!\!\!\!\! \Big( \big(C^\vary_0(M, SM) \big)^\perp \cap L^2_{\mathrm{fc}}(M,SM) \Big) +
\Big( \big(C^\vary_0(M, SM) \big)^\perp \cap L^2_{\mathrm{pc}}(M,SM) \Big) \:,
\end{align*}
where~$(C^\vary_0(M, SM))^\perp$ again denotes the orthogonal complement of~$C^\vary_0(M, SM)$ in~$L_{\mathrm{loc,t}}^2(M,SM)$ with respect to the Krein inner product~$\bra .|. \ket_\K$.
As explained before Lemma~\ref{lemmaquotient}, dividing out such wave functions reflects the
general structure of the global weak equation~\eqref{globweak}.
Working modulo such wave functions, in what follows we do not need to distinguish between weak and strong
solutions of the dynamical wave equation, making it possible to work with the relations~$Q^\dyn s^\vee=
Q^\dyn s^\wedge=-\1$. Moreover, in view of Lemma~\ref{lemmaquotient}, the Green's operators
are compatible with the quotient linear structure of $\scrW_{\mathrm{tc}}^*(M,SM)$,
giving rise to well-defined operators on the equivalence classes,
$$
s^\vee, s^\wedge \::\: \scrW_{\mathrm{tc}}^*(M,SM)\rightarrow \scrW_{\mathrm{E}}(M,SM) \:.
$$
Using~\eqref{kdef}, also the fundamental solution is well-defined on~$\scrW_{\mathrm{tc}}^*(M,SM)$.
In the definition of~$\scrW^*_{\mathrm{E}}(M,SM)$ we again mod out~$(C^\vary_0(M, SM))^\perp$,
but this time also respecting the decomposition into the sum of a future and a past compact wave function.
The index~``E'' indicates that the wave functions have finite {\em{energy}} in the sense that 
their ``energy norm'' $\norm . \norm$ introduced in the proof of Theorem~\ref{thmexist} is finite.

After these preparations, we can state the main result of this section.
\begin{Thm} \label{thmexact}
The following sequence is exact:
\begin{equation*} \label{exact}
0 \rightarrow \scrW^{**}_0(M,SM) \stackrel{Q^\dyn}{\longrightarrow} \scrW^*_{\mathrm{tc}}(M,SM)
\stackrel{k}{\longrightarrow} \scrW_{\mathrm{E}}(M,SM)
\stackrel{Q^\dyn}{\longrightarrow} \scrW^*_{\mathrm{E}}(M,SM)\rightarrow 0 \:.
\end{equation*}
\end{Thm}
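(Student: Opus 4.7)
The plan is to verify exactness at each of the four positions by direct algebraic manipulation using the identities $Q^\dyn s^\vee = Q^\dyn s^\wedge = -\mathbf{1}$ (which, thanks to Lemma~\ref{lemmaquotient} and the fact that we work modulo $(C^\vary_0(M,SM))^\perp$, hold in the relevant quotient spaces), together with the definition $k = \tfrac{i}{2}(s^\vee-s^\wedge)$ and the mapping properties of $s^\vee$ and $s^\wedge$ on past- and future-compact wave functions.

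First, injectivity of $Q^\dyn$ on $\scrW^{**}_0(M,SM)$ is immediate: if $\psi \in \scrW^{**}_0$ with $Q^\dyn \psi =0$, then by the very definition of $\scrW^{**}_0$ we have $\psi=-s^\vee Q^\dyn\psi=0$. For exactness at $\scrW^*_{\mathrm{tc}}$, the inclusion $\operatorname{image}(Q^\dyn)\subset \ker k$ is a one-line computation using the defining property of $\scrW^{**}_0$. For the reverse inclusion, suppose $\psi\in\scrW^*_{\mathrm{tc}}$ with $k\psi=0$, i.e.\ $s^\vee\psi=s^\wedge\psi=:\phi$; since $\psi$ is temporally compact, $s^\vee\psi$ is future compact and $s^\wedge\psi$ is past compact, hence $\phi$ is temporally compact, and by the definition of $\scrW^*_{\mathrm{tc}}$ it is also spatially compact, so $\phi\in C^\vary_0(M,SM)$. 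The identities $Q^\dyn\phi=Q^\dyn s^\vee\psi=-\psi$ and $s^\vee Q^\dyn\phi = -s^\vee\psi=-\phi$ (and similarly with $s^\wedge$) show $\phi\in\scrW^{**}_0$ and $Q^\dyn\phi=-\psi$.

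For exactness at $\scrW_{\mathrm{E}}$, the inclusion $\operatorname{image}(k)\subset\ker Q^\dyn$ is again a one-line computation. Conversely, if $\eta=s^\vee\psi_1+s^\wedge\psi_2\in\scrW_{\mathrm{E}}$ satisfies $Q^\dyn\eta=0$, then $-\psi_1-\psi_2=0$, so $\psi_2=-\psi_1$ and thus $\eta=s^\vee\psi_1-s^\wedge\psi_1=-2ik\psi_1$. The key observation here is that this forces $\psi_1$ to be \emph{both} future compact (by its defining membership in $L^2_{\mathrm{fc}}$) and past compact (inherited through $\psi_2=-\psi_1\in L^2_{\mathrm{pc}}$), hence temporally compact, with $s^\vee\psi_1, s^\wedge\psi_1\in C^\vary_\sc(M,SM)$, placing $-2i\psi_1\in\scrW^*_{\mathrm{tc}}(M,SM)$. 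Finally, surjectivity of $Q^\dyn$ onto $\scrW^*_{\mathrm{E}}$ is witnessed by the preimage $-s^\vee\psi_1-s^\wedge\psi_2\in\scrW_{\mathrm{E}}$ of any $\psi_1+\psi_2\in\scrW^*_{\mathrm{E}}$, on which $Q^\dyn$ evaluates to $\psi_1+\psi_2$.

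The main obstacle, and the reason for the somewhat intricate definition of the four spaces in the sequence, is to ensure that all of the above identities descend correctly to the quotients by $(C^\vary_0(M,SM))^\perp$, and that the regularity/support properties (spatially compact, past/future compact) required of the preimages are actually forced by the hypothesis $\eta\in\ker Q^\dyn$ or $\eta\in\scrW^*_{\mathrm{E}}$. I would therefore present the argument by fixing representatives, performing the algebraic manipulation outlined above, and then verifying in each case that the resulting wave function lies in the target space (in particular, checking in step two that $\phi$ is compactly supported, and in step three that $\psi_1$ is temporally compact). Lemma~\ref{lemmaquotient}, asserting that the Green's operators annihilate the orthogonal complement, ensures that the construction is independent of the choice of representative.
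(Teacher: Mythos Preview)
Your proposal is correct and follows essentially the same approach as the paper: both proceed by verifying exactness at each node via the identities $Q^\dyn s^\vee = Q^\dyn s^\wedge = -\1$ (valid in the quotients by Lemma~\ref{lemmaquotient}), with the same key observations about support properties (e.g.\ that $\phi = s^\vee\psi = s^\wedge\psi$ is both future and past compact in the $\ker k$ step, and that $\psi_1 = -\psi_2$ forces temporal compactness in the $\ker Q^\dyn$ step). Your explicit emphasis on checking that the constructions descend to the quotients and land in the correct target spaces is a welcome clarification; the paper handles these points somewhat more tersely, and in one place (step~(vi)) your formulation is actually cleaner than the paper's, which contains a minor typo referring to the definition of $\scrW^*_{\mathrm{tc}}$ where $\scrW_{\mathrm{E}}$ is meant.
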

\begin{proof}
We proceed in several steps. For simplicity of notation, we omit the arguments~$M$ and $SM$.
\begin{itemize}[leftmargin=2.3em]
\item[(i)] $Q^\dyn$ maps~$\scrW^{**}_0$ to~$\scrW^*_{\mathrm{tc}}$:
Let~$\psi \in \scrW^{**}_0$. Then the assumption of finite time range implies that $Q^\dyn\psi\in L^2_{\mathrm{tc}}$.
Moreover, by definition of~$\scrW^{**}_0$, we know that~$s^\vee Q^\dyn \psi = s^\wedge Q^\dyn \psi = -\psi
\in C^\vary_0 \subset C^\vary_\sc$. Using the definition of~$\scrW_{\mathrm{tc}}^*$, we conclude that~$Q^\dyn\psi\in \scrW_{\mathrm{tc}}^*$.
\item[(ii)] The mapping~$Q^\dyn : \scrW^{**}_0 \rightarrow \scrW^*_{\mathrm{tc}}$
is injective: Let $\psi\in \scrW^{**}_0$ with $Q^\dyn\psi=0$. Multiplying by~$s^\vee$
and using again the definition of~$\scrW^{**}_0$, we conclude that~$\psi = -s^\vee Q^\dyn \psi = 0$.
\item[(iii)] $Q^\dyn(\scrW^{**}_0) \subset \ker k$: Let $\psi\in \scrW^{**}_0$, again by definition of~$\scrW^{**}_0$,
$$
k(Q^\dyn \psi)=\frac{i}{2} \big(s^\vee Q^\dyn \psi-s^\wedge Q^\dyn \psi\big)=\frac{i}{2} 
\big( \psi - \psi \big) = 0 \:.
$$
\item[(iv)] $Q^\dyn(\scrW^{**}_0) \supset \ker k$: Let $\psi\in \scrW_{\mathrm{tc}}^*$ such that $k\psi=0$. Then,
$$
\phi:=-s^\vee\psi=-s^\wedge\psi\in C^\vary_\sc\:.
$$
Since $s^\vee\psi$ is supported in the past of some $t_1$ and $s^\wedge\psi$ is supported in the future of some $t_0<t_1$, we conclude that $\phi\in C^\vary_0$. 
Moreover, using that~$Q^\dyn s^\vee=-\1$, it follows that
$$
s^\vee Q^\dyn \phi =  -s^\vee Q^\dyn s^\vee \psi = s^\vee \psi = -\phi \:,
$$
and similarly for the retarded Green's operator. We conclude that~$\phi \in \scrW^{**}_0$ as desired. Finally, $\psi=Q^\dyn\phi$ by construction.
\item[(v)] $k(\scrW^*_{\mathrm{tc}})\subset \ker Q^\dyn$: Let $\psi=k\phi$ for some $\phi\in \scrW^*_{\mathrm{tc}}$. 
Then
$$
-2 i\, Q^\dyn(k\psi)=Q^\dyn(s^\vee\psi-s^\wedge \psi)= -\psi+\psi =0 \:.
$$
\item[(vi)] $\ker Q^\dyn\subset k(\scrW^*_{\mathrm{tc}})$: Let $\psi\in \scrW_{\mathrm{E}}$ with~$Q^\dyn\psi=0$. 
Then there are~$\psi_1$ and~$\psi_2$ as in the definition of~$\scrW^*_{\mathrm{tc}}$ such that
$$
0=Q^\dyn \psi= Q^\dyn(s^\vee \psi_1 +s^\wedge \psi_2)= -\psi_1-\psi_2 \:.
$$
As a consequence,
\[ \psi = s^\vee \psi_1 - s^\wedge \psi_1 = -2i\, k \psi_1 \:. \]
Moreover, by definition of~$\scrW^*_{\mathrm{tc}}$ we know that~$\psi_1 \in L^2_{\mathrm{fc}}$
and~$s^\vee \psi, s^\wedge \psi \in C^\vary_\sc$. We conclude that~$\psi_1 \in \scrW^*_{\mathrm{tc}}$.
\item[(vii)] $Q^\dyn(\scrW_{\mathrm{E}})= \scrW_{\mathrm{E}}^*$: We represent any~$\psi\in \scrW_{\mathrm{E}}^*$ 
as in the definition of~$\scrW_{\mathrm{E}}^*$ as~$\psi=\psi_1 + \psi_2$. Using the definition of~$\scrW_{\mathrm{E}}$,
it follows that the wave function~$\phi:=s^\vee\psi_1 + s^\wedge \psi_2$ is in~$\scrW_{\mathrm{E}}$. 
Moreover, $Q^\dyn\phi = -\psi_1-\psi_2=-\psi$. Hence~$\psi$ lies in the image of~$Q^\dyn$.
\end{itemize}
This concludes the proof.
\end{proof}

\subsection{Current Conservation for Weak Solutions}
As in the exact sequence of Theorem~\ref{thmexact}, we now consider the causal fundamental
solution as a mapping
\[ k \,:\, \scrW^*_{\mathrm{tc}}(M,SM) \rightarrow \scrW_{\mathrm{E}}(M,SM) \subset C^\vary_\sc(M, SM) \:. \]
In the following construction we are facing the basic problem that the causal fundamental solution
maps to {\em{weak}} solutions of the dynamical wave equation. For the conservation of the
sesquilinear form~\eqref{OSIdyn} or its softened version~\eqref{OSIsoft}, however, we need
{\em{strong}} solutions. The way out is to ``soften'' the surface layer
integral with the help of cutoff operators, which we now introduce.
\begin{Def} \label{deffp}
A linear operator~$\check{\pi} : C^\vary_\sc(M, SM) \rightarrow C^\vary_\sc(M, SM)$ is
called {\bf{cutoff operator}} in a time strip~$[t_0, t_1]$ if
\[ 
\eta_{t_0}\:(1-\check{\pi}) \equiv 0 \equiv (1-\eta_{t_1})\:\check{\pi} \:. \]
Spacetime is {\bf{asymptotically strip partitioned}} if for every~$t \in \R$,
there is a cutoff operator in a time strip~$[t_0, t_1]$ with~$t_0>t$
and a cutoff operator in a time strip~$[t_0, t_1]$ with~$t_1<t$.
\end{Def} \noindent
The operator can be thought of as a smooth cutoff, which inside the time strip~$[t_0,t_1]$, however,
may be a nonlocal operator. Typically, $\check{\pi}$ is chosen as an idempotent operator
which decomposes~$C^\vary_\sc(M, SM)$ into the images of~$\check{\pi}$ and~$\1-\check{\pi}$.
Multiplying a wave function~$\psi \in C^\vary_\sc(M, SM)$ by a cutoff operator~$\check{\pi}$ gives a wave function
which is future compact. Likewise, multiplying by~$\1-\check{\pi}'$ (with~$\check{\pi}'$ another
cutoff operator) gives a past compact wave function.
As a consequence, $\check{\pi}\,(\1-\check{\pi}') \psi$ has compact support.
Likewise, the difference~$\check{\pi} \psi - \check{\pi}' \psi$ has compact support. To summarize,
\beq \label{compactgen}
\check{\pi}\,(\1-\check{\pi}'),\; (\check{\pi} - \check{\pi}')\::\: C^\vary_\sc(M, SM) \rightarrow C^\vary_0(M, SM) \:.
\eeq

In what follows, we assume that spacetime is asymptotically strip partitioned.
Replacing the cutoff function~$\eta$
by the operator~$\check{\pi}$ acting on the corresponding wave functions, we obtain the
surface layer integral 
\begin{align}
\la \psi | \phi \ra^{\check{\pi}}_\rho \,&\!:= -2i \int_M d\rho(x) \int_M d\rho(y) \Big( \Sl (\check{\pi} \psi)(x) \:|\: Q^\dyn(x,y)\, \big((\1-\check{\pi})\phi \big)(y) \Sr_x \notag \\
&\qquad\qquad\qquad\qquad\qquad\;\;\:- \Sl \big((\1-\check{\pi})\psi \big)(x) \:|\: Q^\dyn(x,y)\, (\check{\pi} \phi)(y) \Sr_x \Big) \notag \\
&= -2i \,\Big( \bra (\check{\pi} \psi) \,|\, Q^\dyn \big((\1-\check{\pi})\phi \big) \ket_\K
- \bra \big((\1-\check{\pi})\psi \,|\, Q^\dyn (\check{\pi} \phi) \ket_\K \Big) \:. \label{softpi}
\end{align}
Softening the surface layer integral this way, we obtain current conservation for weak
solutions:
\begin{Prp} \label{prpindepend}
Let~$\psi, \phi \in C^\vary(M, SM)$ be weak solutions of the dynamical wave equation. Then
the sesquilinear form~\eqref{softpi} does not depend on the choice
of the cutoff operator~$\check{\pi}$.
\end{Prp}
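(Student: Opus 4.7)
The plan is to reduce the apparent two-cutoff difference to the action of $Q^\dyn$ on compactly supported test wave functions and then invoke the weak equation. The first simplification is purely algebraic: expanding $(1-\check{\pi})$ in \eqref{softpi} and cancelling the two double terms $\bra \check{\pi}\psi\,|\,Q^\dyn\check{\pi}\phi\ket_\K$, one obtains the compact formula
\[
\la \psi | \phi \ra^{\check{\pi}}_\rho \;=\; -2i \,\Big( \bra \check{\pi}\psi \,\big|\, Q^\dyn \phi \ket_\K \;-\; \bra \psi \,\big|\, Q^\dyn \check{\pi}\phi \ket_\K \Big) \:.
\]
(For the individual terms to be well-defined one uses that $\check{\pi}\psi$ is future compact while $\check{\pi}\phi$ is future compact, combined with the finite time range of $Q^\dyn$ from Definition~\ref{finitetimerange}; since the two summands in the expansion agree on the overlap, this pairing is unambiguous.)

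Subtracting the analogous expression for a second cutoff~$\check{\pi}'$, the difference collapses to
\[
\la \psi | \phi \ra^{\check{\pi}}_\rho - \la \psi | \phi \ra^{\check{\pi}'}_\rho
= -2i \,\Big( \bra (\check{\pi}-\check{\pi}')\psi \,\big|\, Q^\dyn \phi \ket_\K
- \bra \psi \,\big|\, Q^\dyn (\check{\pi}-\check{\pi}')\phi \ket_\K \Big) \:.
\]
The decisive observation is~\eqref{compactgen}: the operator~$\check{\pi}-\check{\pi}'$ maps $C^\vary_\sc(M,SM)$ into~$C^\vary_0(M,SM)$. Hence both $(\check{\pi}-\check{\pi}')\psi$ and $(\check{\pi}-\check{\pi}')\phi$ are compactly supported test wave functions.

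Using the symmetry of the kernel $Q^\dyn(x,y)^*=Q^\dyn(y,x)$ from~\eqref{Qdynsymm} together with the Hermitian character of $\bra .|. \ket_\K$, I can move $Q^\dyn$ onto the compactly supported factor in each term:
\[
\bra (\check{\pi}-\check{\pi}')\psi \,\big|\, Q^\dyn \phi \ket_\K = \bra Q^\dyn (\check{\pi}-\check{\pi}')\psi \,\big|\, \phi \ket_\K\:,\qquad
\bra \psi \,\big|\, Q^\dyn (\check{\pi}-\check{\pi}')\phi \ket_\K = \overline{\bra Q^\dyn (\check{\pi}-\check{\pi}')\phi \,\big|\, \psi \ket_\K}\:.
\]
Since $(\check{\pi}-\check{\pi}')\psi, (\check{\pi}-\check{\pi}')\phi \in C^\vary_0(M,SM)$ qualify as test wave functions, the weak equation satisfied by $\phi$ and $\psi$ (i.e.\ $\bra Q^\dyn \chi\,|\,\phi\ket_\K =0$ and $\bra Q^\dyn\chi\,|\,\psi\ket_\K=0$ for every~$\chi\in C^\vary_0(M,SM)$, which is the global analogue of~\eqref{weak}) forces both terms to vanish, and the claim follows.

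The only genuine care needed, and what I expect to be the main obstacle to present cleanly, is justifying the manipulations above when $\psi$ and $\phi$ are only required to be weak and not strong solutions: the individual pairings must be shown to be absolutely convergent (via the finite-time-range property of $Q^\dyn$ applied to the future/past-compact functions $\check{\pi}\psi$, $(1-\check{\pi})\phi$, together with the spatial behaviour encoded in~$C^\vary_\sc$), and the transfer of $Q^\dyn$ from one slot to the other must be done with a compactly supported factor present so that no boundary contributions arise. Once these integrability bookkeeping steps are in place, the argument itself is a two-line application of symmetry and the weak formulation.
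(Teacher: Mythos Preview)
Your overall strategy---reduce the difference to pairings with a compactly supported factor $(\check{\pi}-\check{\pi}')\psi$ or $(\check{\pi}-\check{\pi}')\phi$, then invoke the weak equation via the symmetry of~$Q^\dyn$---is exactly right and matches the paper's argument. The endpoint formula
\[
\la \psi | \phi \ra^{\check{\pi}}_\rho - \la \psi | \phi \ra^{\check{\pi}'}_\rho
= -2i \,\Big( \bra (\check{\pi}-\check{\pi}')\psi \,\big|\, Q^\dyn \phi \ket_\K
- \bra \psi \,\big|\, Q^\dyn (\check{\pi}-\check{\pi}')\phi \ket_\K \Big)
\]
is correct and suffices.

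The gap is in your intermediate ``compact formula''
$\la \psi | \phi \ra^{\check{\pi}}_\rho = -2i\big( \bra \check{\pi}\psi \,|\, Q^\dyn \phi \ket_\K - \bra \psi \,|\, Q^\dyn \check{\pi}\phi \ket_\K \big)$.
To reach it you expand $(\1-\check{\pi})$ and cancel the two copies of $\bra \check{\pi}\psi\,|\,Q^\dyn\check{\pi}\phi\ket_\K$, but none of these individual terms is a surface-layer integral any more: $\check{\pi}\psi$ and $Q^\dyn\check{\pi}\phi$ are both merely future compact, so their pairing is an integral over an entire past half-space; likewise $\bra \check{\pi}\psi\,|\,Q^\dyn\phi\ket_\K$ integrates over a half-space with $Q^\dyn\phi$ not pointwise zero (only weakly zero). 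There is no reason these converge, so the cancellation is formal. Your parenthetical remark and final paragraph acknowledge that convergence needs checking, but the specific route you chose makes that check genuinely hard rather than routine. The paper sidesteps this neatly by decoupling the two occurrences of the cutoff: it introduces the two-parameter form
\[
{}^{\check{\pi}'}\!\la \psi | \phi \ra^{\check{\pi}}_\rho := -2i \,\Big( \bra \check{\pi}'\psi \,|\, Q^\dyn(\1-\check{\pi})\phi \ket_\K - \bra (\1-\check{\pi}')\psi \,|\, Q^\dyn\check{\pi}\phi \ket_\K \Big)
\]
and varies $\check{\pi}'$ and $\check{\pi}$ separately. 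In each variation the difference immediately carries the compactly supported factor $(\check{\pi}'-\check{\pi}'')\psi$ from the outset, so every term stays a manifestly convergent surface-layer or compact integral. You can repair your argument either by adopting this decoupling or by computing the difference $\la\psi|\phi\ra^{\check{\pi}}_\rho-\la\psi|\phi\ra^{\check{\pi}'}_\rho$ directly from~\eqref{softpi} without passing through the two-term formula; the algebra then reorganizes into exactly your final expression with all intermediate terms finite.
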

\Proof We generalize~\eqref{softpi} by working with two cutoff operators~$\check{\pi}$ and~$\check{\pi}'$,
\[ {}^{\check{\pi}'}\!\la \psi | \phi \ra^{\check{\pi}}_\rho :=
 -2i \,\Big( \bra (\check{\pi}' \psi) \,|\, Q^\dyn \big((\1-\check{\pi})\phi \big) \ket_\K
- \bra \big((\1-\check{\pi}')\psi \,|\, Q^\dyn (\check{\pi} \phi) \ket_\K \Big) \:. \]
If the cutoff operator~$\check{\pi}'$ is changed to~$\check{\pi}''$, this sesquilinear form is modified by
\begin{align*}
{}^{\check{\pi}'}\! &\la \psi | \phi \ra^{\check{\pi}}_\rho -  {}^{\check{\pi}''}\!\la \psi | \phi \ra^{\check{\pi}}_\rho \\
&=-2i \,\Big( \bra \big((\check{\pi}'-\check{\pi}'') \psi \big) \,|\, Q^\dyn \big((\1-\check{\pi})\phi \big) \ket_\K
+ \bra \big((\check{\pi}'-\check{\pi}'')\psi \big) \,|\, Q^\dyn (\check{\pi} \phi) \ket_\K \Big) \\
&=-2i \,\bra \big((\check{\pi}'-\check{\pi}'') \psi \big) \,|\, Q^\dyn \phi \ket_\K
\end{align*}
By~\eqref{compactgen}, the wave function~$(\check{\pi}'-\check{\pi}'') \psi$ is a test wave function, 
making it possible to apply the weak dynamical wave equation to obtain zero.
We conclude that the above sesquilinear form
is independent of the choice of~$\check{\pi}'$. The independence of the choice of~$\check{\pi}$ is
proved analogously.
\QED
For clarity, we point out that, by Definition~\ref{deffp}, there are sequences of cutoff operators
whose supports move to infinity either to the future or to the past. This picture agrees with the conservation laws
for a sequence of surface layers as constructed in Section~\ref{secdwe} (see
the left of Figure~\ref{figevolvecont}). The existence of a continuous family of cutoff operators
(giving rise to a continuous time evolution of the current integral), although desirable in the applications,
is not necessary for our constructions.

We finally make a connection between this current integral and the Krein inner product
in the whole spacetime.
\begin{Prp} \label{prpkrein} The following relation holds for
any cutoff operator~$\check{\pi}$ and for all~$\eta, \eta' \in \scrW^*_{\mathrm{tc}}(M,SM)$,
\beq \label{kkrel}
\la k \eta \,|\, k \eta' \ra^{\check{\pi}}_\rho = \bra \eta \,|\, k \,\eta' \ket_\K \:,
\eeq
where~$\bra .|. \ket_\K$ is the Krein inner product~\eqref{krein}.
\end{Prp}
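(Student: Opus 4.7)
The plan is to use the decomposition of $k\eta'$ into its advanced/retarded pieces and relate the softened surface layer integral to the Krein inner product of Proposition~\ref{lemmafuturepast}. Write
\[
k\eta' = \phi_1 + \phi_2 \qquad \text{with} \qquad \phi_1 := \tfrac{i}{2}\, s^\vee \eta' \in L^2_{\mathrm{fc}}\:,\quad \phi_2 := -\tfrac{i}{2}\, s^\wedge \eta' \in L^2_{\mathrm{pc}}\:,
\]
and, using Definition~\ref{deffp}, observe that $\eta_{t_0}(1-\check{\pi})\equiv 0$ and $(1-\eta_{t_1})\check{\pi}\equiv 0$ imply that
\[
f := \check{\pi}\,k\eta' - \phi_1 = -\bigl((1-\check{\pi})\,k\eta' - \phi_2\bigr) = \check{\pi}\,\phi_2 - (1-\check{\pi})\,\phi_1 \;\in\; C^\vary_0(M,SM)\:,
\]
because $\check{\pi}\,\phi_2$ is the product of a future-compact and a past-compact wave function, and similarly for $(1-\check{\pi})\,\phi_1$. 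Since $\phi_1,\phi_2$ are Green's-operator images, $Q^\dyn \phi_1 = -\tfrac{i}{2}\eta'$ and $Q^\dyn \phi_2 = \tfrac{i}{2}\eta'$ hold modulo~$\bigl(C^\vary_0\bigr)^\perp$ (cf.~Lemma~\ref{lemmaquotient} and the discussion before Theorem~\ref{thmexact}), so that
\[
Q^\dyn\bigl(\check{\pi}\, k\eta'\bigr) = -\tfrac{i}{2}\eta' + Q^\dyn f \:,\qquad Q^\dyn\bigl((1-\check{\pi})\, k\eta'\bigr) = \tfrac{i}{2}\eta' - Q^\dyn f \:.
\]

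Substituting these identities into the definition~\eqref{softpi} of $\la k\eta|k\eta'\ra^{\check{\pi}}_\rho$ and writing $\psi := k\eta = \check{\pi}\psi + (1-\check{\pi})\psi$, the two contributions proportional to $\eta'$ combine to
\[
-2i\cdot\tfrac{i}{2}\Bigl(\bra \check{\pi}\psi | \eta'\ket_\K + \bra (1-\check{\pi})\psi|\eta'\ket_\K \Bigr) = \bra k\eta | \eta'\ket_\K\:,
\]
while the contributions proportional to $Q^\dyn f$ combine to $2i\,\bra \psi | Q^\dyn f\ket_\K$. Because $f \in C^\vary_0$ and $\psi = k\eta$ is a global weak solution of the homogeneous dynamical wave equation (see Theorem~\ref{thmexact}~(v) and the global weak formulation~\eqref{globweak}), this remainder vanishes: using the symmetry $Q^\dyn(x,y)^* = Q^\dyn(y,x)$ of the kernel, $\bra \psi | Q^\dyn f\ket_\K = \overline{\bra Q^\dyn f | \psi\ket_\K} = 0$. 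Thus
\[
\la k\eta | k\eta'\ra^{\check{\pi}}_\rho = \bra k\eta | \eta'\ket_\K\:.
\]

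It remains to identify $\bra k\eta|\eta'\ket_\K$ with $\bra \eta|k\eta'\ket_\K$. The plan here is to use the adjoint relation $\bra s^\vee \eta | \eta'\ket_\K = \bra \eta | s^\wedge \eta'\ket_\K$ (and likewise with $s^\vee,s^\wedge$ exchanged), which follows from the symmetry of $Q^\dyn$ together with $Q^\dyn s^\wedge = Q^\dyn s^\vee = -\1$ via the computation
\[
\bra s^\vee\eta|\eta'\ket_\K = -\bra s^\vee\eta | Q^\dyn s^\wedge\eta'\ket_\K = -\bra Q^\dyn s^\vee\eta | s^\wedge\eta'\ket_\K = \bra\eta|s^\wedge\eta'\ket_\K\:,
\]
where the boundary contributions needed to move $Q^\dyn$ across the inner product are absent because the product of a future-compact and a past-compact wave function has compact temporal support. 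Substituting $k = \tfrac{i}{2}(s^\vee - s^\wedge)$ and tracking the single conjugation on the scalar factor yields $\bra k\eta|\eta'\ket_\K = \bra\eta|k\eta'\ket_\K$, completing the proof.

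The main obstacle I expect is the rigorous handling of the quotient structure built into $\scrW^*_{\mathrm{tc}}$ and $\scrW_{\mathrm{E}}$: the identities $Q^\dyn s^\vee = Q^\dyn s^\wedge = -\1$ only hold modulo $\bigl(C^\vary_0\bigr)^\perp$, so one must check that the correction terms produced at each step either lie in that orthogonal complement or pair trivially with $k\eta$ by virtue of its being a weak solution of the homogeneous equation. The cutoff-independence already established in Proposition~\ref{prpindepend} is helpful because it permits choosing $\check{\pi}$ freely, but the proof above does not rely on any specific choice beyond the abstract Definition~\ref{deffp}.
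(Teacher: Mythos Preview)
Your overall strategy is sound and genuinely different from the paper's: you work with an arbitrary cutoff~$\check{\pi}$ and decompose~$k\eta'=\phi_1+\phi_2$ into its advanced and retarded pieces, whereas the paper exploits Proposition~\ref{prpindepend} to push~$\check{\pi}$ far into the future so that the advanced pieces~$s^\vee\eta,\,s^\vee\eta'$ drop out entirely and one is left with~$\check{\pi}\,s^\wedge\eta\in C^\vary_0$, which can be fed directly into the global weak equation. Both routes end with the adjoint relation~$\bra s^\vee\eta\,|\,\eta'\ket_\K=\bra\eta\,|\,s^\wedge\eta'\ket_\K$ (the paper isolates this as Lemma~\ref{lemmaga}).

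There is, however, a real gap in your handling of the quotient errors. When you substitute~$Q^\dyn\phi_1=-\tfrac{i}{2}\eta'+\text{(error in }(C^\vary_0)^\perp\text{)}$ and similarly for~$\phi_2$ into~\eqref{softpi}, the error terms are paired not with~$k\eta$ itself but with~$\check{\pi}\,k\eta$ and~$(1-\check{\pi})\,k\eta$. These wave functions are only future- respectively past-compact; they are \emph{not} in~$C^\vary_0$, so membership of the error in~$(C^\vary_0)^\perp$ does not by itself kill the pairing, and they are not weak solutions either, so your second alternative (``pair trivially with~$k\eta$ by virtue of its being a weak solution'') does not apply as stated. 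Concretely, the extra contribution is
\[
-\bra \check{\pi}\,k\eta \,|\, e_2 \ket_\K - \bra (1-\check{\pi})\,k\eta \,|\, e_1 \ket_\K\:,\qquad
e_j := Q^\dyn s^{\vee/\wedge}\eta' + \eta' \in (C^\vary_0)^\perp\:,
\]
and you have not shown this vanishes.

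The fix is precisely the cutoff trick that the paper uses in Lemma~\ref{lemmaga}: introduce a second cutoff~$\check{\pi}'$ far in the past, write~$\check{\pi}\,k\eta=(1-\check{\pi}')\check{\pi}\,k\eta+\check{\pi}'\check{\pi}\,k\eta$, note that the first summand now lies in~$C^\vary_0$ (so its pairing with~$e_2$ vanishes), and that the second summand has support disjoint from that of the past-compact~$e_2$ once~$\check{\pi}'$ is pushed far enough. The same works for the~$e_1$ term. With this addition your argument goes through; alternatively, adopting the paper's choice of~$\check{\pi}$ far in the future avoids the issue because then~$\check{\pi}\,k\eta=\check{\pi}\,\phi_2$ is already compactly supported.
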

We begin with a preparatory lemma.
\begin{Lemma} \label{lemmaga} For all~$u, v \in \scrW^*_{\mathrm{tc}}(M,SM)$,
\[ \bra s^\wedge u \,|\, v \ket_\K = \bra u \,|\, s^\vee v \ket_\K \:. \]
\end{Lemma}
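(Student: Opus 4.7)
The plan is to exploit the defining relations $Q^\dyn s^\vee \eta = Q^\dyn s^\wedge \eta = -\eta$ (which, as noted in the paragraph preceding Theorem~\ref{thmexact}, hold on $\scrW^*_{\mathrm{tc}}(M,SM)$ working modulo $\big(C^\vary_0(M, SM)\big)^\perp$) together with the pointwise symmetry~\eqref{Qdynsymm} of the kernel~$Q^\dyn(x,y)$. Concretely, I would first rewrite $v = -Q^\dyn s^\vee v$ inside the left-hand side to obtain
\begin{equation*}
\bra s^\wedge u \,|\, v \ket_\K = -\bra s^\wedge u \,|\, Q^\dyn s^\vee v \ket_\K,
\end{equation*}
then use the symmetry of the integral kernel to move $Q^\dyn$ onto the first factor, yielding
\begin{equation*}
-\bra s^\wedge u \,|\, Q^\dyn s^\vee v \ket_\K = -\bra Q^\dyn s^\wedge u \,|\, s^\vee v \ket_\K = \bra u \,|\, s^\vee v \ket_\K,
\end{equation*}
where in the last equality I would apply $Q^\dyn s^\wedge u = -u$.

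The delicate step is the middle identity, which amounts to swapping the order of integration in the double integral
\begin{equation*}
\int_M \!\! d\rho(x) \int_M \!\! d\rho(y)\;\Sl (s^\wedge u)(x) \,\big|\, Q^\dyn(x,y)\, (s^\vee v)(y) \Sr_x.
\end{equation*}
To justify this via Fubini's theorem, I would rely on support considerations: $s^\wedge u$ lies in $\scrW_{\mathrm{E}}(M,SM) \subset C^\vary_\sc(M, SM)$ and is past compact, hence supported in the future of some surface layer $\supp \theta_{t_u}$ with spatially compact slices; symmetrically, $s^\vee v \in C^\vary_\sc(M, SM)$ is future compact, supported in the past of some $\supp \theta_{t_v}$. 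Combined with the finite time range of $Q^\dyn$ (Definition~\ref{finitetimerange}), only pairs $(x,y)$ with $|t(x)-t(y)| \leq r$ contribute, so the effective integration domain is confined to the intersection of two spatially compact families with a bounded time strip, giving a compact set on which the integrand is bounded. This ensures absolute integrability, and the pointwise relation~\eqref{Qdynsymm} then legitimates the transfer of $Q^\dyn$.

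A secondary check is that both sides of the claimed identity are well defined on the quotient by $\big(C^\vary_0(M, SM)\big)^\perp$. For the left-hand side this follows immediately from Lemma~\ref{lemmaquotient}, since replacing $u$ by $u + u'$ with $u' \in \big(C^\vary_0(M, SM)\big)^\perp$ leaves $s^\wedge u$ unchanged; the analogous observation applies to $v$ on the right. I expect the main obstacle to be analytic rather than conceptual, namely the careful verification of absolute integrability needed to swap $Q^\dyn$ across the Krein pairing; conceivably one has to approximate $u$ and $v$ by representatives with sharper support properties (for instance, by working first on finite time strips and then passing to the limit used to define $s^\vee, s^\wedge$ in Proposition~\ref{lemmafuturepast}) before invoking Fubini on a compact domain.
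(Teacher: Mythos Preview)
Your strategy reaches the same intermediate expression $-\bra Q^\dyn s^\wedge u \,|\, s^\vee v \ket_\K$ as the paper, and your Fubini justification for moving $Q^\dyn$ across the pairing is correct: past-compactness of $s^\wedge u$, future-compactness of $s^\vee v$, the finite time range of $Q^\dyn$, and spatial compactness indeed confine the double integral to a compact region.

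The genuine gap, however, lies not in the Fubini step but in your first and last steps. The relations $Q^\dyn s^\vee v = -v$ and $Q^\dyn s^\wedge u = -u$ hold only modulo $(C^\vary_0(M,SM))^\perp$. Writing $Q^\dyn s^\vee v = -v + e_v$ with $e_v \in (C^\vary_0(M,SM))^\perp$, your first equality actually reads
\[
\bra s^\wedge u \,|\, v \ket_\K \;=\; -\bra s^\wedge u \,|\, Q^\dyn s^\vee v \ket_\K \;+\; \bra s^\wedge u \,|\, e_v \ket_\K \:,
\]
and the extra term need not vanish, because $s^\wedge u$ is merely in $C^\vary_\sc(M,SM)$, not in $C^\vary_0(M,SM)$. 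The same objection applies symmetrically to your last step. Your ``secondary check'' only verifies independence of the representative of $u$ on the left and of $v$ on the right; it does not address the dependence of $\bra s^\wedge u \,|\, v\ket_\K$ on the representative of $v$, which is precisely where the error term $e_v$ enters.

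The paper's proof avoids this by never invoking the strong identities. It inserts a cutoff operator $\check\pi$ in a time strip to the future of $\supp v$, so that $\check\pi\, s^\wedge u \in C^\vary_0(M,SM)$, and then applies the \emph{weak} equation $\bra \phi \,|\, v \ket_\K = -\bra Q^\dyn \phi \,|\, s^\vee v \ket_\K$ with this compactly supported test function. The cutoff is then removed by a support argument (this is where your finite-time-range reasoning actually plays its role). A second cutoff on the other side handles $Q^\dyn s^\wedge u = -u$. Your suggested repair via limits of time strips does not supply the missing ingredient; what is needed is exactly this cutoff-operator device to manufacture compactly supported test functions.
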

\Proof We choose a cutoff operator~$\check{\pi}$ in a time strip~$L$ which lies in the future
of the support of~$v$. Then
\[ \bra s^\wedge u \,|\, v \ket_\K = \bra \check{\pi}\, s^\wedge u \,|\, v \ket_\K \:. \]
Since~$s^\wedge u$ is past compact, the wave function on the left is in~$C^\vary_0(M, SM)$.
Therefore, we may apply the inhomogeneous weak equation to obtain
\[ \bra s^\wedge u \,|\, v \ket_\K = -\bra Q^\dyn \,\check{\pi}\, s^\wedge u \,|\,s^\vee v \ket_\K \:. \]
Since~$s^\wedge v$ is future compact
and~$Q^\dyn$ has finite time range, the cutoff operator~$\check{\pi}$ can be omitted if we
choose~$L$ sufficiently far in the future. We thus obtain
\[ \bra s^\wedge u \,|\, v \ket_\K = -\bra Q^\dyn s^\wedge u \,|\,s^\vee v \ket_\K \:. \]
Next, we choose a cutoff operator~$\check{\pi}'$ in a time strip~$L'$ which lies in the
past of the support of the wave function~$Q^\dyn s^\wedge u$. Then
\[ \bra s^\wedge u \,|\, v \ket_\K = -\bra Q^\dyn s^\wedge u \,|\, \check{\pi}' \,s^\vee v \ket_\K 
= \bra u \,|\, \check{\pi}' \,s^\vee v \ket_\K \:. \]
Choosing~$L'$ such that it lies in the future of the support of~$u$ give the result.
\QED

\Proof[Proof of Proposition~\ref{prpkrein}]
Let~$\eta, \eta' \in C^\vary_0$. Since the left side in~\eqref{kkrel} does not depend
on the choice of the cutoff operator (see Proposition~\ref{prpindepend}), we may choose~$\check{\pi}$
in a time strip~$L$ in the future of the supports of~$\eta$ and~$\eta'$.
Moreover, we can arrange that~$s^\vee \eta$ and~$s^\vee \eta'$ vanish in this time strip.
Then, applying~\eqref{kdef} only the terms involving the retarded Green's operators remain,
\begin{align*}
\la k \eta \,|\, k \eta' \ra^{\check{\pi}}_\rho 
&= \frac{1}{4}\: \la s^\wedge \eta \,|\, s^\wedge \eta' \ra^{\check{\pi}}_\rho \\
&= -\frac{i}{2} \,\Big( \bra \check{\pi} \,s^\wedge \eta \,|\, Q^\dyn s^\wedge \eta' \ket_\K
- \bra s^\wedge \eta \,|\, Q^\dyn\, \check{\pi} \, s^\wedge \eta' \ket_\K \Big) \:,
\end{align*}
where we used~\eqref{softpi} together with the fact that the wave functions~$\check{\pi} \,s^\wedge \eta$
and~$\check{\pi} \,s^\wedge \eta'$ have compact support.
Using the symmetry of~$Q^\dyn$, we can apply the weak equation to obtain
\begin{align*}
\la k \eta \,|\, k \eta' \ra^{\check{\pi}}_\rho 
&= -\frac{i}{2} \,\Big( \bra \check{\pi} \,s^\wedge \eta \,|\, Q^\dyn s^\wedge \eta' \ket_\K
- \bra Q^\dyn s^\wedge \eta \,|\, (\check{\pi} \, s^\wedge \eta' \ket_\K \Big) \\
&= \frac{i}{2} \,\Big( \bra \check{\pi} \,s^\wedge \eta \,|\, \eta' \ket_\K
- \bra \eta \,|\, \check{\pi} \, s^\wedge \eta' \ket_\K \Big)
= \frac{i}{2} \,\Big( \bra s^\wedge \eta \,|\, \eta' \ket_\K
- \bra \eta \,|\, s^\wedge \eta' \ket_\K \Big) \:,
\end{align*}
where in the last step we used that~$L$ lies to the future of the supports of~$\eta$ and~$\eta'$.
Applying Lemma~\ref{lemmaga} and again~\eqref{kdef} gives the result.
\QED

\appendix
\section{Variations of Surface Layer Integrals by Commutator Jets} \label{appcommute}
In this appendix, we compute a various surface layer integrals which involve commutator jets.
The main goals are to verify that the condition~\eqref{sigmapreserve}
in Lemma~\ref{lemmagpreserve} is satisfied if~$v_\tau$ is a commutator jet
(see Corollary~\ref{corollaryA}) and to show that commutator jets in general cannot be included
in the jet space~$\J^\gen$ (see Proposition~\ref{prpA}). We begin with a preparatory lemma.
\begin{Lemma} For two symmetric operators~$\scrA, \scrB \in \Lin(\H^\fermi)$,
the corresponding commutator jets satisfy the relation
\beq \label{gcomm}
\gamma^\Omega_\rho\Big( \big[ \Comm(\scrA), \Comm(\scrB) \big] \Big)
= - \frac{1}{2}\: \sigma^\Omega_\rho\big( \Comm(\scrA), \Comm(\scrB) \big) \:,
\eeq
where~$\big[ \Comm(\scrA), \Comm(\scrB) \big]$ denotes the commutator of
vector fields on~$\F$.
\end{Lemma}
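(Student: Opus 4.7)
The plan is to reduce both sides of the identity to a common first-order expression, and then compare coefficients. I would proceed in three steps, each essentially a bookkeeping exercise once the right unitary-invariance identity is at hand.

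First, I would identify the Lie bracket as a commutator jet. Since $\Comm(\scrA)(x) = i[\scrA, x]$ is linear in $x$, one has $D\Comm(\scrA)(x)\cdot v = \Comm(\scrA)(v) = i[\scrA, v]$, and the standard formula $[X, Y](x) = DY(x)\cdot X(x) - DX(x)\cdot Y(x)$ combined with the Jacobi identity yields
\[ \bigl[\Comm(\scrA), \Comm(\scrB)\bigr](x) = i[\scrB, i[\scrA, x]] - i[\scrA, i[\scrB, x]] = [[\scrA, \scrB], x] = \Comm(\scrC)(x), \]
with $\scrC := -i[\scrA, \scrB]$. Since the commutator of self-adjoint operators is skew-adjoint, $\scrC$ is self-adjoint on $\H^\fermi$, so $\Comm(\scrC)$ is a bona fide commutator jet and $\gamma^\Omega_\rho(\Comm(\scrC))$ is well-defined.

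Second, I would re-express the symplectic-form integrand in terms of a first-order derivative. The first-order unitary invariance
\[ D_1 \L_\kappa(x, y)\cdot [\scrA, x] + D_2 \L_\kappa(x, y)\cdot [\scrA, y] = 0 \]
can be differentiated in the second slot in direction $[\scrB, y]$ by the ordinary chain rule, taking care that the jet factor $[\scrA, y]$ itself gets differentiated. This yields
\[ D_1 D_2 \L_\kappa \bigl[[\scrA, x], [\scrB, y]\bigr] + D_2^2 \L_\kappa\bigl[[\scrA, y], [\scrB, y]\bigr] = -D_2 \L_\kappa \cdot [\scrA, [\scrB, y]]. \]
Antisymmetrizing in $\scrA \leftrightarrow \scrB$ kills the $D_2^2\L_\kappa$ term by its symmetry, and the Jacobi identity $[\scrA, [\scrB, y]] - [\scrB, [\scrA, y]] = [[\scrA, \scrB], y]$ collapses the right-hand side into a multiple of $D_2 \L_\kappa \cdot \Comm(\scrC)(y) = D_{2, \Comm(\scrC)} \L_\kappa$. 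Translating back to the $\Comm$-notation (absorbing the $i$-factors of the commutator jets), this identifies the integrand of $\sigma^\Omega_\rho(\Comm(\scrA), \Comm(\scrB))$ with a numerical multiple of $D_{2, \Comm(\scrC)}\L_\kappa$.

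Third and finally, I would use the pointwise unitary invariance $(D_{1, \Comm(\scrC)} + D_{2, \Comm(\scrC)})\L_\kappa = 0$ applied to $\scrC$ itself to replace $D_{2, \Comm(\scrC)}\L_\kappa$ by $-D_{1, \Comm(\scrC)}\L_\kappa$ in the double integral, and then recognize that
\[ \gamma^\Omega_\rho(\Comm(\scrC)) = \int_\Omega d\rho(x) \int_{M\setminus\Omega} d\rho(y)\,(D_{1, \Comm(\scrC)} - D_{2, \Comm(\scrC)})\L_\kappa = 2\int_\Omega \int_{M\setminus\Omega} D_{1, \Comm(\scrC)} \L_\kappa. \]
Matching the numerical constants that arise in step two with this factor of $2$ then delivers the asserted proportionality $\gamma^\Omega_\rho([\Comm(\scrA), \Comm(\scrB)]) = -\tfrac{1}{2}\sigma^\Omega_\rho(\Comm(\scrA), \Comm(\scrB))$.

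The main obstacle is reconciling the differentiation of the pointwise invariance with the convention of Section~\ref{secEL} that jets are never differentiated. When the chain rule is applied honestly, extra terms of the form $D_1 \L_\kappa \cdot \Comm(\scrA)(v)$ appear that have no analogue in the formal calculus $\nabla_{1, \u}\nabla_{1, \v}\L_\kappa$; precisely these terms, after antisymmetrization, combine via Jacobi into the commutator $[[\scrA, \scrB], \cdot]$ on the right. I would therefore work with the ordinary chain rule on the first-order identity throughout and only recognize the final vector-field expression $[[\scrA, \scrB], x]$ as $\Comm(\scrC)(x)$ at the end, avoiding any ambiguity in the intermediate steps.
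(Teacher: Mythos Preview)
Your argument is correct and yields the same proportionality between~$\gamma^\Omega_\rho$ and~$\sigma^\Omega_\rho$ as the paper, but the route differs. The paper introduces a \emph{two-parameter} unitary family with an asymmetric conjugation,
\[
\L_\kappa\big( \scrU_\tau \scrW_s \,x\, \scrW_s^{-1} \scrU_\tau^{-1},\; \scrU_\tau \scrW_s^{-1} \,y\, \scrW_s \scrU_\tau^{-1} \big)
= \L_\kappa\big( \scrW_s \,x\, \scrW_s^{-1},\; \scrW_s^{-1} \,y\, \scrW_s \big),
\]
and differentiates in~$s$ and~$\tau$ simultaneously; the built-in sign flip on the~$y$-slot directly produces the combination~$(D_{1,\Comm(\scrA)} - D_{2,\Comm(\scrA)})$ characteristic of~$\gamma^\Omega_\rho$, so that after antisymmetrizing one reads off both the~$\gamma$- and~$\sigma$-pieces in a single stroke. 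You instead differentiate the already-established first-order identity~$(D_{1,\Comm(\scrA)} + D_{2,\Comm(\scrA)})\L_\kappa = 0$ in only one variable, antisymmetrize to obtain the~$\sigma$-integrand as~$D_{2,\Comm(\scrC)}\L_\kappa$, and then invoke the pointwise invariance a second time (for~$\scrC$) to convert this into the~$\gamma$-integrand. Your approach is more elementary in that it uses only the one-parameter identity~\eqref{D12zero} as input, at the cost of needing it twice; the paper's asymmetric two-parameter trick packages everything into one differentiation. Your identification~$[\Comm(\scrA),\Comm(\scrB)]=\Comm(-i[\scrA,\scrB])$ in Step~1 is exactly the content of the lemma immediately following this one in the paper, which the paper proves separately; folding it into the argument as you do is perfectly legitimate. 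Your closing remark about the ``jets are never differentiated'' convention is on point: the extra chain-rule terms you track are precisely what the paper encodes as~$D_{\Comm(\scrA)}\Comm(\scrB)$ in its displayed formula.
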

\Proof We set~$\scrW_s = e^{i s \scrA}$ and~$\scrU_\tau = e^{i \tau \scrB}$. Then,
due to unitary invariance of the Lagrangian,
\[ \L_\kappa\big( \scrU_\tau \scrW_s \,x\, \scrW_s^{-1} \scrU_\tau^{-1}, 
\scrU_\tau \scrW_s^{-1} \,y\, \scrW_s \scrU_\tau^{-1} \big) =
\L_\kappa\big( \scrW_s \,x\, \scrW_s^{-1},  \scrW_s^{-1} \,y\, \scrW_s \big) \quad \text{for all~$s \in \R$}\:. \]
Differentiating with respect to~$s$ and~$\tau$ gives
\begin{align*}
0 &= \frac{d}{ds} \frac{d}{d\tau} \int_\Omega d\rho(x) \int_{M \setminus \Omega} d\rho(y)\:
\L_\kappa\big( \scrU_\tau \scrW_s \,x\, \scrW_s^{-1} \scrU_\tau^{-1}, 
\scrU_\tau \scrW_s^{-1} \,y\, \scrW_s \scrU_\tau^{-1} \big) \Big|_{s=\tau=0} \\
&= \int_\Omega d\rho(x) \int_{M \setminus \Omega} d\rho(y)\:
\big( D_{1,D_{\Comm(\scrA)} \Comm(\scrB)} - D_{2,D_{\Comm(\scrA)} \Comm(\scrB)} \big)
\L_\kappa(x,y) \\
&\quad\: + \int_\Omega d\rho(x) \int_{M \setminus \Omega} d\rho(y)\:\big( D_{1,\Comm(\scrB)} + D_{2,\Comm(\scrB)} \big) \big( D_{1,\Comm(\scrA)} - D_{2,\Comm(\scrA)} \big) \L_\kappa(x,y) \Big) \:.
\end{align*}
Anti-symmetrizing in the two commutator jets gives the result.
\QED
Assuming again that the commutator inner product represents the scalar product (see Definition~\ref{defSLrep}),
the left side of~\eqref{gcomm} can be computed further. In preparation, we need to compute
the commutator on the left side of~\eqref{gcomm}.
\begin{Lemma} The commutator of the vector fields~$\Comm(\scrA)$ and~$\Comm(\scrB)$ is
again a commutator jet, namely
\beq \label{commcomm}
\big[ \Comm(\scrA), \Comm(\scrB) \big] = -\Comm\big(i [\scrA, \scrB] \big) \:.
\eeq
\end{Lemma}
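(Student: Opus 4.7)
The plan is to compute the Lie bracket of the vector fields $\Comm(\scrA)$ and $\Comm(\scrB)$ directly in a linear chart on $\F \subset \Lin(\H)$, using that both vector fields are defined globally on all of $\F$ (as noted after the definition~\eqref{jvdef}) and depend linearly on the base point $x$. The conclusion then follows from a single application of the Jacobi identity for the operator commutator.

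More concretely, I would first recall that for two smooth vector fields $V, W$ on a vector space expressed in a linear chart by $V(x) = v(x)$ and $W(x) = w(x)$, the Lie bracket is
\[
[V,W](x) = Dw(x)\bigl[v(x)\bigr] - Dv(x)\bigl[w(x)\bigr].
\]
Applying this to $v(x) = i[\scrA,x]$ and $w(x) = i[\scrB,x]$, whose differentials at $x$ are simply $h \mapsto i[\scrA,h]$ and $h \mapsto i[\scrB,h]$, yields
\[
\bigl[\Comm(\scrA),\Comm(\scrB)\bigr](x) = i\bigl[\scrB, i[\scrA,x]\bigr] - i\bigl[\scrA, i[\scrB,x]\bigr] = [\scrA,[\scrB,x]] - [\scrB,[\scrA,x]].
\]

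The next step is to invoke the Jacobi identity $[\scrA,[\scrB,x]] + [\scrB,[x,\scrA]] + [x,[\scrA,\scrB]] = 0$, which rearranges to $[\scrA,[\scrB,x]] - [\scrB,[\scrA,x]] = [[\scrA,\scrB],x]$. On the other hand, by the definition~\eqref{jvdef},
\[
\Comm\bigl(i[\scrA,\scrB]\bigr)(x) = i\bigl[i[\scrA,\scrB],\,x\bigr] = -\bigl[[\scrA,\scrB],x\bigr],
\]
and combining these two identities gives precisely~\eqref{commcomm}.

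There is no serious obstacle: the only subtlety is to keep straight the sign conventions, specifically the factor of $i$ in the definition of the commutator jet~\eqref{jvdef}, which is responsible for the minus sign on the right side of~\eqref{commcomm} (so that the mapping $\scrA \mapsto \Comm(\scrA)$ is an anti-homomorphism from $(\mathrm{i}\mathfrak{u}(\H),[\cdot,\cdot])$ to vector fields under the Lie bracket). Since the vector fields $\Comm(\scrA)$ and $\Comm(\scrB)$ extend smoothly to all of $\F$, and since $\F$ embeds linearly into $\Lin(\H)$, the computation above does not depend on any choice of chart beyond the ambient linear structure, so the result holds unconditionally.
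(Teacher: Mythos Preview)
Your proof is correct and follows essentially the same approach as the paper: both compute the Lie bracket directly, arrive at the expression $i[\scrB,i[\scrA,x]] - i[\scrA,i[\scrB,x]]$, and finish with the Jacobi identity. The only cosmetic difference is that the paper computes the bracket via its action on a test function $f\in C^\infty(\F)$ using the flows $e^{is\scrA}$ and $e^{i\tau\scrB}$, whereas you use the equivalent linear-chart formula $[V,W](x)=Dw(x)[v(x)]-Dv(x)[w(x)]$ directly.
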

\Proof As in the previous proof, we set~$\scrW_s = e^{i s \scrA}$ and~$\scrU_\tau = e^{i \tau \scrB}$.
Then for any function~$f \in C^\infty(\F)$,
\begin{align*}
D_{\Comm(B)} f(x) &= \frac{d}{d\tau} f\big( \scrU_\tau \,x\, \scrU_\tau^{-1} \big) \Big|_{\tau=0} \\
D_{\Comm(A)} \big( D_{\Comm(B)} f(x) \big) &= 
\frac{d^2}{ds\, d\tau} f\big( \scrU_\tau \scrW_s \,x\, \scrW_s^{-1} \scrU_\tau^{-1} \big) \Big|_{s=\tau=0} \\
&= D^2f|_x \big( i [\scrA, x], i [\scrB, x] \big) + Df|_x \Big( i \big[ \scrB, i [\scrA, x] \big] \Big) \\
D_{[\Comm(A), \Comm(B)]} f(x) \big) &= D_{\Comm(A)} \big( D_{\Comm(B)} f(x) \big)
- D_{\Comm(B)} \big( D_{\Comm(A)} f(x) \big) \\
&= Df|_x \Big( i \big[ \scrB, i [\scrA, x] \big] -  i \big[ \scrA, i [\scrB, x] \big] \Big) \\
&= Df|_x \Big( - \big[ [\scrB,\scrA], x \big]\Big) = D_{\Comm(-i [\scrA, \scrB])} f(x) \:,
\end{align*}
giving the result.
\QED
Combining the two previous lemmas with~\eqref{repc}, we immediately obtain the following result.
\begin{Corollary} \label{corollaryA} If the commutator inner product represents the scalar product, then
for all symmetric operators~$\scrA, \scrB \in \Lin(\H)$ which vanish on~$(\H^\fermi)^\perp$, the
corresponding commutator jets satisfy the relations
\[ \sigma^\Omega_\rho\big( \Comm(\scrA), \Comm(\scrB) \big) = 0 \:. \]
\end{Corollary}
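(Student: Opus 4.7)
The plan is to chain together the two preceding lemmas and invoke the trace-of-commutator identity together with the representation hypothesis. By the relation~\eqref{gcomm}, one has
\[ \gamma^\Omega_\rho\bigl( [\Comm(\scrA), \Comm(\scrB)] \bigr) = -\tfrac{1}{2}\, \sigma^\Omega_\rho\bigl( \Comm(\scrA), \Comm(\scrB) \bigr), \]
so it suffices to show that the left-hand side vanishes.

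First I would apply~\eqref{commcomm} to rewrite the Lie bracket of vector fields as a single commutator jet, namely $[\Comm(\scrA),\Comm(\scrB)] = -\Comm(i[\scrA,\scrB])$. The operator $i[\scrA,\scrB]$ is again symmetric, since
\[ (i[\scrA,\scrB])^* = -i[\scrB^*,\scrA^*] = -i[\scrB,\scrA] = i[\scrA,\scrB], \]
and it obviously vanishes on $(\H^\fermi)^\perp$ because $\scrA$ and $\scrB$ do. Hence $i[\scrA,\scrB]$ is an admissible operator to which Lemma~\ref{lemmaSLrep}, and in particular the identity~\eqref{repc}, applies.

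Next, I would use~\eqref{repc} to conclude that
\[ \gamma^\Omega_\rho\bigl( \Comm(i[\scrA,\scrB]) \bigr) = c\,\tr\bigl(i[\scrA,\scrB]\bigr), \]
and observe that the trace of a commutator of finite-rank operators is zero. Combined with the linearity (in the jet argument) of $\gamma^\Omega_\rho$ and the sign from~\eqref{commcomm}, this forces the left-hand side of~\eqref{gcomm} to vanish, and therefore $\sigma^\Omega_\rho(\Comm(\scrA),\Comm(\scrB))=0$.

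Since the two key lemmas and the representation hypothesis are already in place, I do not expect any real obstacle. The only point worth a moment's care is confirming that $i[\scrA,\scrB]$ meets the hypotheses under which~\eqref{repc} was stated (symmetry, support in $\H^\fermi$, finite rank), but this is immediate from the finite-dimensionality of $\H^\fermi$ and the symmetry computation above.
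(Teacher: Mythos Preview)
Your proposal is correct and follows essentially the same route as the paper: apply~\eqref{gcomm} to reduce the symplectic form to the conserved one-form of the Lie bracket, use~\eqref{commcomm} to identify that bracket with $-\Comm(i[\scrA,\scrB])$, and then invoke~\eqref{repc} together with $\tr(i[\scrA,\scrB])=0$. The paper's proof is the same chain of equalities written in a single line; your additional check that $i[\scrA,\scrB]$ is symmetric and supported on $\H^\fermi$ is a welcome detail that the paper leaves implicit.
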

\Proof The computation
\begin{align*}
\sigma^\Omega_\rho\big( \Comm(\scrA), \Comm(\scrB) \big) &\overset{\eqref{gcomm}}{=}
-2\,\gamma^\Omega_\rho\Big( \big[ \Comm(\scrA), \Comm(\scrB) \big] \Big) \\
&\overset{\eqref{commcomm}}{=} 2  \gamma^\Omega_\rho\Big( \Comm\big( i [\scrA, \scrB] \big) \Big)
\overset{\eqref{repc}}{=} 2 \, \tr \big(  i [\scrA, \scrB] \big) = 0
\end{align*}
gives the result.
\QED
We conclude that the condition~\eqref{sigmapreserve} is satisfied for commutator jets.

We next work out in more detail how a commutator jet~$\v$ of the form~\eqref{jvdef}
modifies the physical wave functions. Setting~$F(x) = \scrU_\tau x \scrU_\tau^{-1}$,
the definition~\eqref{pirhodef} becomes
\begin{align*}
\big(\pi_{\rho, \tilde{\rho}}\, \psi \big)(x) &= |x| \big|_{S_x}^{-\frac{1}{2}} \,\pi_x
\, |\scrU_\tau x \scrU_\tau^{-1}| \big|_{\scrU_\tau S_x}^{\frac{1}{2}} \, \psi \big(\scrU_\tau x \scrU_\tau^{-1} \big) \\
&= |x| \big|_{S_x}^{-\frac{1}{2}} \,\pi_x\,\scrU_\tau
\, | x | \big|_{\tau S_x}^{\frac{1}{2}} \, \scrU_\tau^{-1}\,\psi \big(\scrU_\tau x \scrU_\tau^{-1} \big) \:,
\end{align*}
where in the last line we used that the operators~$x$ and~$\scrU_\tau x \scrU_\tau^{-1}$
are unitarily equivalent. Choosing~$\psi$ as the physical wave function corresponding to
a vector~$u \in \H^\fermi$, we obtain
\begin{align*}
\big(\pi_{\rho, \tilde{\rho}}\, \psi^u \big)(x) &= 
 |x| \big|_{S_x}^{-\frac{1}{2}} \,\pi_x\,\scrU_\tau
\, | x | \big|_{\tau S_x}^{\frac{1}{2}} \, \scrU_\tau^{-1}\, \pi_{\scrU_\tau x \scrU_\tau^{-1}} \,u \\
&= |x| \big|_{S_x}^{-\frac{1}{2}} \,\big( \pi_x
\, \scrU_\tau \,\pi_x \big) \,|x| \big|_{S_x}^{\frac{1}{2}}\, \pi_{x} \,\big( \scrU_\tau^{-1} u \big) \:.
\end{align*}
This formula involves the unitary operator~$\scrU_\tau$ twice: First, it transforms the
vector~$u$ to~$\scrU^{-1} u$. This transformation means that the physical wave functions
are transformed among each other. Second, the unitary operator~$\scrU_\tau$
also appears in the combination~$\pi_x \scrU_\tau \pi$. This term describes a transformation
of the spin space~$S_x$. We take its polar decomposition
\[ \pi_x \,\scrU_\tau\, \pi_x = V\,P \]
into a symmetric operator~$P$ and a unitary operator~$V$ on~$S_x$ (both with respect to
the spin scalar product~$\Sl .|. \Sr_x$).
The operator~$V$ can be interpreted as a local gauge transformation of the spinors at~$x$.

The above polar decomposition is useful for understanding how the local transformation of the spinors
enters the functional analytic construction in Section~\ref{secositransform}.
Since the commutator inner product~\eqref{OSIreg} is gauge invariant,
the transformation~$V$ drops out of the definition~\eqref{kreinrep} of the operator~$B$.
As a consequence, the mapping~${\mathscr{I}}_{\rho, \tilde{\rho}}$ in~\eqref{Idef}
still involves the local gauge transformation~$V$.
The resulting gauge phases do not drop out of the extended commutator inner product~\eqref{Vtinner}.
In particular, this consideration gives the following result.

\begin{Prp} \label{prpA}
A commutator jet~$\v \in \J^\Comm$ in general violates the conditions~\eqref{Apres0} and~\eqref{Apres}.
\end{Prp}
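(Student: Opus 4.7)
The strategy is to make the polar decomposition discussion preceding the statement quantitative by computing the linearization of the mapping $\pi_{\rho, \tilde{\rho}} \circ \Psi_{\tilde{\rho}}$ at $\tau=0$ for a commutator jet $\v = \Comm(\scrA)$, and then inserting the result into both sides of~\eqref{Apres0} and~\eqref{Apres}. First I would differentiate the explicit formula
\[ \big(\pi_{\rho, \tilde{\rho}_\tau}\, \psi^u \big)(x) = |x| \big|_{S_x}^{-\frac{1}{2}} \,\big( \pi_x
\, \scrU_\tau \,\pi_x \big) \,|x| \big|_{S_x}^{\frac{1}{2}}\, \pi_{x} \,\big( \scrU_\tau^{-1} u \big) \]
at $\tau=0$, decomposing the result into a part which acts on $u$ in $\H$ (the ``vertical'' piece $-i\scrA u$) and a part which acts on the spinors at $x$ (the ``horizontal'' piece coming from the polar decomposition $\pi_x \scrU_\tau \pi_x = V P$). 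The horizontal piece generates an infinitesimal local gauge transformation $\dot V(x) \in \Lin(S_x)$ which, crucially, depends on $x$ and is not determined globally by $\scrA$ alone. One then has to trace this through the transformation $\mathscr{I}^\Omega_{\rho, \tilde{\rho}}=\sqrt{B}\, \pi_{\rho,\tilde\rho}$ in~\eqref{Idef} to obtain $D\Psi^\Omega(\v,u)$, using that $B$ is built from gauge-invariant surface layer integrals so that $\dot V$ survives linearization into $D\Psi^\Omega(\v,u)$.

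Next I would insert the resulting expression into~\eqref{Apres0}, which amounts to comparing the surface layer integrals~\eqref{OSIreg} evaluated at $\partial \Omega$ and at $\partial \Omega'$ of the pair $(D\Psi^\Omega(\v,u),\Psi(u'))$. Because the infinitesimal gauge phase $\dot V(x)$ varies with $x$, the surface layer integral localized near $\partial\Omega$ samples values of $\dot V$ along $\partial\Omega$, while the one near $\partial\Omega'$ samples values along $\partial\Omega'$; these two samplings differ unless $\dot V$ is constant along the time evolution, which generically it is not. This shows that~\eqref{Apres0} forces a nontrivial constraint linking $\scrA$ to the geometry of the spacetime foliation; for a generic choice of $\scrA \in \Lin(\H^\fermi)$ and of past sets $\Omega \subsetneq \Omega'$ the constraint fails, proving the claim for~\eqref{Apres0}. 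The analogous argument for~\eqref{Apres} uses bilinearity in $(\v,\v')$ and the same mechanism.

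To finish, I would exhibit a concrete counterexample in the setting of the regularized Minkowski vacuum of Section~\ref{secexmink}: choose $\scrA = |u\ra\la u|$ with $u \in \H^\fermi$ represented by a physical wave function which is not an eigenstate of the (local) Dirac operator, and choose two spacelike hyperplanes at different times as $\partial \Omega$ and $\partial \Omega'$. The infinitesimal gauge phase then corresponds to a nontrivial local $U(1)$-rotation of the Dirac wave function, and a direct computation of the two sides of~\eqref{Apres0} using the representation of the commutator inner product as the Dirac current integral (cf.~\cite[Section~5]{noether}) yields a difference proportional to the integrated variation of the gauge phase between the two hypersurfaces, which does not vanish. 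The main obstacle will be the bookkeeping of the various operators entering $\mathscr{I}^\Omega_{\rho,\tilde{\rho}}$: one must ensure that the gauge piece $\dot V$ is not absorbed by $\sqrt{B}$ or by the spectral calculus of $\Sig$ and $\hT$, which requires using the gauge invariance of $\la.|.\ra^\Omega_\rho$ on $\H^\fermi \times \H^\fermi$ but \emph{not} on mixed pairs involving the varied wave function $D\Psi^\Omega(\v,u)$. Once this separation is in place, the violation of~\eqref{Apres0} and~\eqref{Apres} for a generic commutator jet is manifest.
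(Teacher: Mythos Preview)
Your proposal is correct in spirit and follows exactly the mechanism the paper uses, but it is considerably more detailed than what the paper itself provides. In the paper there is no separate proof of Proposition~\ref{prpA}: it is stated as an immediate consequence of the preceding paragraph, which observes qualitatively that the local gauge transformation~$V$ in the polar decomposition~$\pi_x\scrU_\tau\pi_x = VP$ drops out of~$B$ (because the commutator inner product on~$\H^\fermi\times\H^\fermi$ is gauge invariant) but therefore survives in~${\mathscr{I}}^\Omega_{\rho,\tilde\rho}=\sqrt{B}\,\pi_{\rho,\tilde\rho}$, and hence contributes nontrivially to the extended inner product~\eqref{Vtinner} on mixed pairs. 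Your linearization at~$\tau=0$, the vertical/horizontal splitting, and the observation that the surface layer integrals near~$\partial\Omega$ and~$\partial\Omega'$ sample~$\dot V$ at different spacetime points are precisely a quantitative unpacking of that sentence.

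The concrete Minkowski counterexample you propose is not in the paper and would strengthen the statement, though you should be aware that carrying it out cleanly requires tracking the operators~$\Sig$, $\hT$ and the spectral square root~$\sqrt{B}$ to first order in~$\tau$ (cf.\ the perturbative expansion at the end of Section~\ref{secositransform}); this is the bookkeeping you flag as the ``main obstacle,'' and the paper simply does not attempt it. If you want to match the paper's level of rigor, the qualitative argument you outline in your first two paragraphs already suffices.
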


We finally remark that the above connection to gauge transformations suggests that
the definition~\eqref{pirhodef} should be modified to
\beq \label{pirhodefalt}
\big(\pi_{\rho, \tilde{\rho}}\, \psi \big)(x) := s_x\, (A_{xy})^{-\frac{1}{2}}\, P(x,y) \:.
\eeq
with~$y=F(x)$ (for the notation and related constructions see~\cite{lqg}).
This definition has the advantage that it describes a unitary mapping between from
the spin spaces~$S_y$ to~$S_x$, being geometrically more convincing than the
projection in the Hilbert space in~\eqref{pirhodef}. Working with this alternative definition,
a direct computation yields
\[ \big(\pi_{\rho, \tilde{\rho}}\, \psi^u \big)(x) 
= V(x)\, \pi_{x} \,\big( \scrU_\tau^{-1} u \big) \:, \]
where~$V(x)$ is a unitary transformation of~$S_x$ describing a local gauge transformation
of the physical wave functions. As a consequence, the relation~\eqref{kreinrep}
even implies that~$B$ is the identity, giving a cleaner argument.
Clearly, the definitions~\eqref{pirhodef} and~\eqref{pirhodefalt} have the same
physical content and merely modify the way the wave functions are represented in spacetime.
The reason why we prefer the first definition is that the projection in~\eqref{pirhodef}
seems more suitable for functional analytic constructions in Hilbert spaces.

\section{Variations of Surface Layer Integrals by Inner Solutions} \label{appinner}
A particular class of solutions of the linearized field equations are the
so-called {\em{inner solutions}} as introduced in~\cite[Section~3]{fockbosonic}.
They correspond to the symmetry of the causal action principle under diffeomorphisms
of~$M$. Such symmetry transformations can be used in order to arrange that the
scalar components of all linearized solutions vanish.
This procedure is carried out and explained in~\cite[Section~3.3]{fockbosonic} (see also~\cite[Section~2.1.4]{mass}
and~\cite[Section~2.9]{fockfermionic}). In the present paper, it is preferable {\em{not}} to
use this construction, but to allow for more flexibility by allowing
for linearized solutions with non-zero scalar components.
This is the reason why inner solutions have not been considered in this paper.
Nevertheless, we now explain how inner solutions fit into the picture.

The concept of inner solutions makes it necessary to assume that spacetime is 
has a smooth manifold structure, meaning that~$M$ is a smooth manifold and
that the measure~$d\rho$ is absolutely continuous with respect to the Lebesgue measure in a chart
with a smooth weight function, i.e.\
\[ d\rho = h(x)\: d^kx \qquad \text{with} \quad h \in C^\infty(\scrM, \R^+) \:. \]
Under these assumptions, a smooth vector field~$\bv \in \Gamma(M, TM)$ gives rise to a solution~$\v$ of the
linearized solution of the form (for details see~\cite[Section~3.1]{fockbosonic})
\[ \v = ( \div \bv, \bv ) \qquad \text{with} \qquad \div \bv := \frac{1}{h}\: \partial_j \big( h\, \bv^j \big) \:. \] 

It is a natural question
whether the variations generated inner solutions are admissible
in the sense of Definition~\ref{defadmissible}.
Indeed, inner solutions satisfy the conditions~\eqref{sigmapreserve} in Lemma~\ref{lemmagpreserve}
and can therefore be used to vary the commutator inner product (as is shown in~\cite[Proposition~3.5]{fockbosonic}).
Whether all the conditions in Definition~\ref{defadmissible} are satisfied is a rather subtle question.
But it is conceivable that at least special classes of such variations are admissible.
The resulting wave functions~${\mathscr{I}}^\Omega_{\rho, \tilde{\rho}_\tau} \tilde{\psi}^u$ in~\eqref{HFextend}
are unphysical in the sense that they do not correspond to changes of the physical system.
Instead, similar to a gauge freedom, they merely correspond to symmetry transformations
of the causal fermion system. This raises the question how to understand the resulting transformations.
The correct way to look at this freedom is to think of the transformations as unitary transformations~$U$ acting on
the whole extended Hilbert space,
\[ U \::\: \H^{\fermi, \Omega}_\rho \rightarrow \H^{\fermi, \Omega}_\rho \quad \text{unitary}\:. \]
Such a transformation changes the representation of the Hilbert space as wave functions in spacetime.
But, keeping in mind that diffeomorphisms and unitary transformations also change the form of the
Euler-Lagrange equations and the linearized field equations, the transformation as a whole does
not change the physical content of the causal fermion system. This can be seen in analogy to
Dirac theory, where a local phase transformation
\[ \psi(x) \rightarrow e^{i \Lambda(x)}\, \psi(x) \:,\qquad
i \Pdd + e \slashed{A} \rightarrow e^{i \Lambda(x)}\, (i \Pdd + e \slashed{A})\, e^{-i \Lambda(x)} \]
changes the form of the Dirac wave functions (and the subspaces of positive and negative energy are mixed),
but without an effect on any physical observables. 

Another related question is whether inner solutions can and should be included in
the jet space~$\J^\gen$ used for constructing a linear dynamics on the extended Hilbert space~$\H^\fermi_\rho$
(see Section~\ref{secdynamics}). Following up on our above conclusion that inner solutions do not change the
physical system but merely change the representation of the wave functions, inner solutions do not seem
suitable for extending the Hilbert space by wave functions having a new dynamics.
Moreover, it is not clear if and how the compatibility conditions in Definition~\ref{defgen}~(iii) could be satisfied
for inner solutions. Therefore, it seems best to choose~$\J^\gen$ as a space of jets which is disjoint from
the inner solutions. 

\Thanks{{{\em{Acknowledgments:}} We are grateful to the ``Universit\"atsstiftung Hans Vielberth'' for support.
N.K.'s research was also supported by the NSERC grant RGPIN~105490-2018.

\providecommand{\bysame}{\leavevmode\hbox to3em{\hrulefill}\thinspace}
\providecommand{\MR}{\relax\ifhmode\unskip\space\fi MR }
\providecommand{\MRhref}[2]{%
  \href{http://www.ams.org/mathscinet-getitem?mr=#1}{#2}
}
\providecommand{\href}[2]{#2}

\end{document}